\documentclass{amsart}
\usepackage[letterpaper,
            top=1in, bottom=1in,
            left=.65in, right=.65in,
            columnsep=1.75em]{geometry}

\usepackage[utf8]{inputenc}
\usepackage{etoolbox}
\usepackage{booktabs} %

\usepackage[table]{xcolor}
\usepackage{libertine}
\usepackage{amssymb}

\usepackage[libertine]{newtxmath}

\usepackage[font=small,labelfont=bf,margin=0pt,justification=justified,singlelinecheck=false]{caption}

\usepackage{hyperref}
\hypersetup{
	colorlinks,
	citecolor=black,
	filecolor=black,
	linkcolor=black,
	urlcolor=black,
	linktoc=all
}
\hypersetup{colorlinks=true}

\usepackage[square,numbers]{natbib}

\usepackage{picinpar,moresize,xfrac,graphpap,dcolumn,wrapfig,graphicx}

\DeclareGraphicsExtensions{.png,.pdf,.jpg}
\usepackage{subcaption}
\usepackage{stackengine}

\usepackage{tikz,pgfplots}
\usepackage{tikz-cd}
\usepackage{pgf}
\usepgfplotslibrary{colormaps,groupplots,dateplot}
\usetikzlibrary{pgfplots.colormaps}
\usetikzlibrary{arrows,automata,positioning,backgrounds}
\usepackage{rotating}
\pgfplotsset{compat=newest}
\pgfplotsset{plot coordinates/math parser=false}
\newlength\figureheight
\newlength\figurewidth

\usetikzlibrary{patterns,shapes.arrows}

\usepackage{soul,array,calc,url,ragged2e,graphpap}
\usepackage{booktabs} %
\usepackage{tabularx}
\usepackage{colortbl}
\usepackage{multirow}
\usepackage{hyphenat}
\usepackage{transparent}
\usepackage{enumerate}
\usepackage{mathrsfs}
\usepackage{mdframed}
\usepackage{acro}

\usepackage[figure]{hypcap}
\usepackage{mathtools}
\mathtoolsset{centercolon}
\usepackage{nicefrac}
\usepackage{units}
\usepackage[normalem]{ulem}
\usepackage{cancel}
\usepackage{pbox}
\usepackage{multicol}
\usepackage{cases}
\usepackage[all]{xy}
\usepackage{nicematrix}

\usepackage{adjustbox}
\usepackage{wrapfig}
\AfterEndEnvironment{wrapfigure}{\setlength{\intextsep}{0mm}}

\usepackage{accents} %

\usepackage{cleveref}
\crefmultiformat{subequation}%
{\edef\crefstripprefixinfo{#1}(#2#1#3}%
{,#2\crefstripprefix{\crefstripprefixinfo}{#1}#3)}%
{,#2\crefstripprefix{\crefstripprefixinfo}{#1}#3}%
{,#2\crefstripprefix{\crefstripprefixinfo}{#1}#3)}

\makeatletter
\DeclareFontFamily{OMX}{MnSymbolE}{}
\DeclareSymbolFont{MnLargeSymbols}{OMX}{MnSymbolE}{m}{n}
\SetSymbolFont{MnLargeSymbols}{bold}{OMX}{MnSymbolE}{b}{n}
\DeclareFontShape{OMX}{MnSymbolE}{m}{n}{
    <-6>  MnSymbolE5
   <6-7>  MnSymbolE6
   <7-8>  MnSymbolE7
   <8-9>  MnSymbolE8
   <9-10> MnSymbolE9
  <10-12> MnSymbolE10
  <12->   MnSymbolE12
}{}
\DeclareFontShape{OMX}{MnSymbolE}{b}{n}{
    <-6>  MnSymbolE-Bold5
   <6-7>  MnSymbolE-Bold6
   <7-8>  MnSymbolE-Bold7
   <8-9>  MnSymbolE-Bold8
   <9-10> MnSymbolE-Bold9
  <10-12> MnSymbolE-Bold10
  <12->   MnSymbolE-Bold12
}{}
\let\llangle\@undefined
\let\rrangle\@undefined
\DeclareMathDelimiter{\llangle}{\mathopen}%
                     {MnLargeSymbols}{'164}{MnLargeSymbols}{'164}
\DeclareMathDelimiter{\rrangle}{\mathclose}%
                     {MnLargeSymbols}{'171}{MnLargeSymbols}{'171}
\makeatother

\usepackage{fancyvrb,listings}
\usepackage{algorithm}
\usepackage{algorithmicx}
\usepackage[noend]{algpseudocode}

\algrenewcommand\alglinenumber[1]{\footnotesize #1:}
\makeatletter  
 \renewcommand{\ALG@name}{\small Algorithm} 
\makeatother 
\algdef{SE}[DOWHILE]{Do}{doWhile}{\algorithmicdo}[1]{\algorithmicwhile\ #1}%

\newtheoremstyle{mine}{3pt}{3pt}{\itshape}{}{\bfseries}{.}{.5em}{}
\theoremstyle{mine}
\newtheorem{theorem}{Theorem}[section]
\newtheorem{lemma}{Lemma}[section]
\newtheorem{proposition}{Proposition}[section]
\newtheorem{corollary}{Corollary}[section]
\newtheorem{definition}{Definition}[section]
\newtheorem{example}{Example}[section]

\newtheorem{remark}{Remark}[section]

\newcommand{\figref}[1]{\textup{Fig.~\ref{#1}}}
\newcommand{\tabref}[1]{\textup{Table~\ref{#1}}}
\newcommand{\teqref}[1]{\textup{Eq.~(\ref{#1})}}

\newcommand{\secref}[1]{\textup{Section~\ref{#1}}}

\renewcommand{\algref}[1]{\textup{Alg.~\ref{#1}}}

\newcommand{\lemref}[1]{\textup{Lemma~\ref{#1}}}

\def\cf{\emph{cf.}}
\def\ie{\emph{i.e.}}
\def\eg{\emph{e.g.}}

\def\CC{\mathbb{C}}

\def\RR{\mathbb{R}}

\def\TT{\mathbb{T}}

\def\ZZ{\mathbb{Z}}

\def\bA{\mathbf{A}}
\def\bB{\mathbf{B}}
\def\bC{\mathbf{C}}
\def\bD{\mathbf{D}}

\def\bF{\mathbf{F}}

\def\bI{\mathbf{I}}
\def\bJ{\mathbf{J}}
\def\bK{\mathbf{K}}
\def\bL{\mathbf{L}}
\def\bM{\mathbf{M}}

\def\bP{\mathbf{P}}

\def\bR{\mathbf{R}}
\def\bS{\mathbf{S}}
\def\bT{\mathbf{T}}
\def\bU{\mathbf{U}}
\def\bV{\mathbf{V}}
\def\bW{\mathbf{W}}
\def\bX{\mathbf{X}}
\def\bY{\mathbf{Y}}
\def\bZ{\mathbf{Z}}

\def\cA{\mathcal{A}}
\def\cB{\mathcal{B}}

\def\cD{\mathcal{D}}

\def\cF{\mathcal{F}}

\def\cH{\mathcal{H}}
\def\cI{\mathcal{I}}

\def\cL{\mathcal{L}}

\def\cP{\mathcal{P}}

\def\cR{\mathcal{R}}
\def\cS{\mathcal{S}}

\def\cV{\mathcal{V}}

\def\sfE{\mathsf{E}}
\def\sfF{\mathsf{F}}

\def\sfM{\mathsf{M}}

\def\sfP{\mathsf{P}}

\def\sfe{\mathsf{e}}
\def\sff{\mathsf{f}}

\def\sfp{\mathsf{p}}
\def\sfq{\mathsf{q}}

\def\ba{\mathbf{a}}

\def\bff{\mathbf{f}}
\def\bg{\mathbf{g}}
\def\bh{\mathbf{h}}

\def\bj{\mathbf{j}}
\def\bk{\mathbf{k}}

\def\bm{\mathbf{m}}
\def\bn{\mathbf{n}}

\def\bs{\mathbf{s}}

\def\bu{\mathbf{u}}
\def\bv{\mathbf{v}}
\def\bw{\mathbf{w}}
\def\bx{\mathbf{x}}
\def\by{\mathbf{y}}

\def\bdelta{\boldsymbol{\delta}}

\def\blambda{\boldsymbol{\lambda}}

\def\bmu{\boldsymbol{\mu}}

\def\brho{\boldsymbol{\rho}}

\def\bpsi{\boldsymbol{\psi}}

\def\rV{{\rm V}}

\def\tbf{\skew{2}\tilde{\mathbf{f}}}
\def\tbh{\vphantom{\tilde{\mathbf{f}}}\tilde{\mathbf{h}}}

\usepackage{bbm}
\DeclareSymbolFont{bbold}{U}{bbold}{m}{n}
\DeclareSymbolFontAlphabet{\mathbbold}{bbold}
\newcommand{\ii}{\mkern1.5mu\mathbbold{i}\mkern1.5mu}

\renewcommand{\Im}{\operatorname{Im}}
\renewcommand{\Re}{\operatorname{Re}}

\def\tr {\operatorname{tr}}
\def\det{\operatorname{det}}

\def\im {\operatorname{im}}
\def\ker{\operatorname{ker}}
\def\spanset{\operatorname{span}}
\def\id{\operatorname{id}}

\DeclareMathOperator{\GL}{GL}

\let\O\relax
\DeclareMathOperator{\O}{O}
\DeclareMathOperator{\SO}{SO}

\renewcommand{\so}{\mathfrak{so}} %
\DeclareMathOperator{\su}{\mathfrak{su}}

\DeclareMathOperator{\Diff}{Diff}
\DeclareMathOperator{\SDiff}{SDiff}

\DeclareMathOperator{\Ad}{Ad}

\DeclareMathOperator{\bad}{\mathbf{ad}}
\DeclareMathOperator{\diff}{\mathfrak{diff}}
\DeclareMathOperator{\sdiff}{\mathfrak{sdiff}}

\DeclareMathOperator{\Adv}{Adv}
\DeclareMathOperator{\adv}{adv}

\DeclarePairedDelimiter\parens{\lparen}{\rparen}

\DeclarePairedDelimiterX\braket[2]{\langle}{\rangle}{#1\,\delimsize\vert\,\mathopen{}#2}

\newcommand{\grad}{\mathop{\mathrm{grad}}\nolimits}

\renewcommand{\div}{\mathop{\mathrm{div}}\nolimits}
\newcommand{\LD}{\mathop{\mathscr{L}}\nolimits}

\newcommand{\rank}{\mathop{\mathrm{rank}}\nolimits}
\newcommand{\Cay}{\mathrm{Cay}}

\DeclareMathOperator{\Vol}{Vol}

\def\bpsi{\boldsymbol{\psi}}

\def\bigmathring{\accentset{\circ}}

\def\HD{\mathcal{HD}}

\numberwithin{equation}{section}

\begin{document}
\tolerance=9999
\emergencystretch=3em

\title{Vakonomic Fluids}

\author{Ritoban Roy-Chowdhury$^{1,2,*}$}
\author{Mohammad Sina Nabizadeh$^{2,1,*}$}
\author{Oliver Gross$^{1}$}
\author{Anthony Gruber$^{3,\dagger}$}
\author{Albert Chern$^{1,\dagger}$}

\subjclass[2020]{Primary XXXX; Secondary YYYY}
\date{Placeholder}
\keywords{Incompressible Euler equations, structure-preserving discretization, nonholonomic mechanics, symplectic reduction, Koopman representation, sub-Riemannian geodesics, Clebsch variables}

\makeatletter
\newcommand{\amsaffiliations}{%
  \vspace{0.6em}%
  {\centering\footnotesize
  $^{1}$University of California San Diego, La Jolla, CA 92093, USA\\
  $^{2}$Massachusetts Institute of Technology, Cambridge, MA 02139, USA\\
  $^{3}$Sandia National Laboratories, Albuquerque, NM 87123, USA\\[0.4em]
  \textit{Email:}
  \texttt{ritoban@mit.edu},
  \texttt{sinabiz@mit.edu},
  \texttt{ogross@ucsd.edu},
  \texttt{adgrube@sandia.gov},
  \texttt{alchern@ucsd.edu}\\[0.35em]
  $^{*}$Both authors contributed equally to this work.\quad
  $^{\dagger}$Both authors are corresponding authors.\par}%
  \vspace{0.4em}%
}
\begin{abstract}
We introduce a novel discretization of the incompressible Euler equations based on their interpretation as geodesic equations on the Lie group of volume-preserving diffeomorphisms.  It is well known that encoding diffeomorphisms and their infinitesimal generators through a discretized Koopman representation places a nonholonomic constraint on  discrete velocities, for which there is no consensus on a variational treatment.  We show that taking the vakonomic perspective, as opposed to the usual perspective of Lagrange--d’Alembert, yields discrete fluid trajectories that remain geodesics on a (sub-)Riemannian manifold.  In particular, the resulting vakonomic dynamics are Lie--Poisson and their solutions admit a discrete relabeling symmetry, leading to machine-precision satisfaction of Casimir invariants along with a discrete analogue of Kelvin’s Circulation Theorem.  Using an efficient momentum map representation based on low-rank Clebsch variables, we show that these vakonomic fluids behave stably and consistently even at low grid resolutions, leading to increased robustness and physical realism in the long term.
\end{abstract}
\twocolumn[{%
\begin{@twocolumnfalse}
\let\ams@setabstract@orig\@setabstract
\renewcommand{\@setabstract}{%
  \amsaffiliations
  \ams@setabstract@orig
}
\maketitle
\vspace{-0.35em}%
\end{@twocolumnfalse}
}]
\global\topskip\normaltopskip
\makeatother
\flushbottom

\setlength{\textfloatsep}{6pt plus 1pt minus 2pt}
\setlength{\floatsep}{8pt plus 2pt minus 2pt}
\setlength{\intextsep}{8pt plus 2pt minus 2pt}
\setlength{\dbltextfloatsep}{8pt plus 2pt minus 2pt}

\section{Introduction}

Simulating incompressible fluid motion remains a central topic in physics-based simulation, where the   goal is to construct discrete fluid motions that faithfully reproduce the dynamical phenomena of their continuous counterparts.  This task is notoriously challenging due to the information loss inherent in discretization, limiting the resolution at which complex multiscale structures in the flow can be resolved.  This has motivated a long line of work into \emph{structure-preserving} discretizations of the incompressible Euler equations \cite{Elcott:2007:SCP,Mullen:2009:EPI,Pavlov:2011:SPD,Nabizadeh:2022:CF,Nabizadeh:2024:FIP} that aim to account for subgrid-scale effects on the dynamics without explicitly resolving them. Often, this involves discretizing the geometric interpretation of Euler solutions as geodesics on an infinite-dimensional manifold of diffeomorphisms.

\begin{figure}[t]
    \centering
    \includegraphics[width=\linewidth]{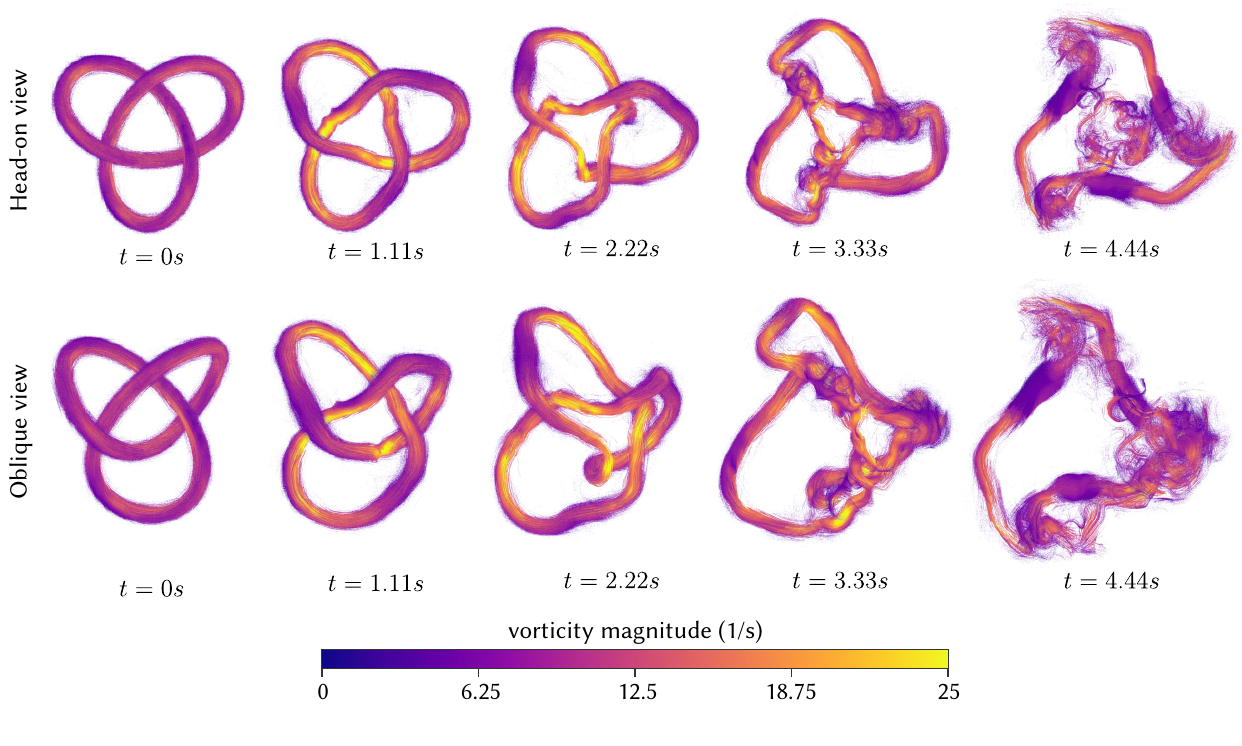}
    \caption{Trefoil-knot experiment computed with our Vakonomic Fluids formulation. Note the faithful capture of the reconnection: the knot advances and stretches before pinching off into a larger and a smaller ring \cite{Kleckner:2013:CDK}.  Prior approaches struggle with different aspects: 
    matrix hydrodynamics \cite{Modin:2025:MHD} is restricted to two-dimensional periodic or spherical domains, while Lagrange--d'Alembert finite element solvers \cite{Pavlov:2011:SPD} preserve only a weak notion of Kelvin's circulation theorem. Our method extends seamlessly to three dimensions on a standard prescription of finite element spaces, here an IGA--FEEC B-spline basis giving pointwise divergence-free velocity, while retaining a Lie--Poisson structure that preserves a discrete analogue of Kelvin's circulation theorem.}
    \label{fig:trefoilknot}
\end{figure}

This Lagrangian perspective on fluid motion is well understood in the continuous setting.  On a Riemannian manifold $M$, Arnold \cite{Arnold:1966:GDG} showed that solutions to the incompressible Euler equations are geodesics (\ie, locally shortest paths) on the space \(\SDiff(M)\) of volume-preserving diffeomorphisms of $M$. Importantly, this picture is fully self-consistent: the equations of incompressible fluid motion are the Euler-Lagrange equations of a certain variational problem posed on the Lie algebra $\sdiff(M)$, and no additional information from outside this particular tangent space to $\SDiff(M)$ is required to establish this fact.

As it turns out, the picture in the discrete setting is far less clear.  Powerful and convenient discretization methods based on, \eg, Koopman representation theory \cite{Koopman:1931:HST,Brunton:2022:MKT}, lead to discrete 
problems with \emph{non-holonomic constraints} for which there is no consensus on a variational treatment.  In short, these constraints arise from representing smooth objects in the Lie algebra $\sdiff(M)$ with discrete ones living in a Lie-algebraically open subspace of matrices \cite{Pavlov:2011:SPD,Gawlik:2011:GVD,Abanov:2025:IDN}.
Largely, there are two competing theories for handling this, known as \emph{Lagrange--d'Alembert} (LdA) \cite{Aalembert:1743:TDD} and \emph{vakonomic} (\textbf{v}ariational \textbf{a}xiomatic \textbf{k}ind of hol\textbf{onomic}) \cite{Kozlov:1983:DSN, Arnold:2006:MAC, Bloch:2015:NMC}, respectively.  While these theories agree when constraints are holonomic (\ie, integrable), and also when certain initial data are prescribed ~\cite{Favretti:1998:EDN, Fernandez:2008:EDN}, 
they produce quite different results in the general non-holonomic case confronted by discrete fluid simulation.  

Until now, the LdA principle has been the dominant approach by far due to its flexibility and strong performance in a variety of computational settings \cite{Marsden:2001:DMV,Pavlov:2011:SPD,Gawlik:2011:GVD,Desbrun:2014:VDR,Liu:2015:MVFS,Bauer:2017:TGV,Bauer:2017:VIA,Natale:2018:VFD,Brecht:2019:VIR,Gawlik:2020:CFE,Gawlik:2021:SPE,Gawlik:2021:VFD,Gawlik:2021:VFD,Gawlik:2022:FEM,Gawlik:2024:VTC,Gawlik:2025:SPT}.  However, LdA has a significant drawback: it is \emph{not} self-consistent, and metric information from outside the solution space is required to formulate well-defined equations of motion. This means the user must make an ad-hoc (non-variational) choice of ``ambient metric'' in order to close the LdA system, rendering its application not fully variational in practice and destroying the symplectic structure of the underlying geodesic equations.  This manifests as incorrect physical behavior in the long term, since violation of discrete Casimir invariants analogous to enstrophy and helicity causes incorrect energy cascading over time.  

The primary goal of this paper is to establish the vakonomic variational principle as an alternative to LdA for the treatment of incompressible fluid simulations.  While LdA allows variations to leave the constraint space,
the vakonomic principle makes a different choice: self-consistency will be maintained by allowing only those variations which obey the non-holonomic constraints. 
This has the enormous positive consequence of \emph{maintaining symplecticity}, so that all invariants of phase space (\ie, Casimirs) will be conserved to machine precision.  However, it also comes with the drawback of reduced flexibility in the discrete variational problem, which will be inherently non-dissipative and constrained to a non-integrable distribution of tangent subspaces where the (derivative of the) fluid is allowed to move.  Ultimately, the vakonomic perspective yields an interpretation of discrete incompressible fluids as \emph{sub-Riemannian geodesics} on a matrix manifold, maintaining a discrete picture consistent with the continuous case and providing practical benefits such as more accurate conserved quantities and more realistic long-term fluid behavior.

To summarize, the contributions of this work include:

\begin{itemize}
    \item  
    Recasting incompressible fluid simulation through the vakonomic variational principle, a self-consistent alternative to the standard Lagrange--d'Alembert approach for handling the non-holonomic constraints inherent to discrete geometric hydrodynamics.
    \item A formulation of the resulting Lax equation as a discrete Lie--Poisson system that preserves energy and Casimir invariants, along with a discrete analogue of Kelvin's Circulation Theorem, to solver tolerance.
    \item A low-rank, momentum map representation of the vakonomic dynamics that dramatically improves computational efficiency while 
    exactly retaining its structure-preserving properties.
    \item Comparisons with previous methods based on functional maps and matrix hydrodynamics, demonstrating how the proposed discretization extends naturally to three dimensions and arbitrary domains.
    \item Numerical experiments involving vortex evolution and turbulence formation that confirm the theory, demonstrating robust, physically faithful long-term behavior, even at low grid resolutions.
\end{itemize}

The remainder of the paper is structured as follows.  \Cref{sec:overview} provides a brief summary of the core ideas driving the proposed approach.  Then, \cref{sec:background} resp. \cref{sec:koopman} review the basics of geometric hydrodynamics resp. Koopman representation necessary for understanding the spatial discretization presented in \cref{sec:spatial-discretization}.  After this introduction to the discrete setup, \cref{sec:EquationsOfMotion} presents the semi-discrete vakonomic equations of motion in Euler--Poincar{\'e} (Lagrangian) and Lie--Poisson (Hamiltonian) forms, before comparing and contrasting them to other modern fluid simulation approaches (including LdA) in \cref{sec:comparison}.  
\Cref{sec:LowRankParameterization} then formulates a computationally efficient low-rank parameterization of the time-continuous dynamical variables analogous to the classical Clebsch representation, before \cref{sec:time-integration} discusses the fully discrete equations of motion along with properties of the time integration scheme employed. Following this, \cref{sec:algorithm} presents the end-to-end simulation algorithm, and a variety of numerical examples follow in \cref{sec:experiments}.  Finally, \cref{sec:conclusion} concludes by outlining some
directions for future work.

\section{Overview}\label{sec:overview}
We begin with a high-level overview of the topics and contributions just discussed. 

\subsection{Lagrangian dynamics with non-holonomic constraints}\label{subsec:non-holonomic}
Variational principles specify the dynamics of a physical system via the extremization of functional data.  Consider the energy functional 
\begin{equation}
\label{eq:GenericQuadraticForm}
\mathcal{E}(\gamma)=\int_0^1 \tfrac{1}{2}\lvert\dot\gamma\rvert^2\,dt,
\end{equation}
defined in terms of a path \(\gamma\colon[0,1]\to M\) on a manifold $M$ with velocity field $\dot\gamma\colon[0,1]\to TM$.  The unconstrained variational problem involves a one-parameter family of curves $\gamma_{\epsilon}\colon[0,1]\to M$ called variations, depending on $\varepsilon\geq 0$, whose associated infinitesimal variations (or variation vectors) are given by the derivative $\bigmathring{\gamma} = \tfrac{d}{d\epsilon}\big|_{\epsilon=0}\,\gamma_{\epsilon}$ (see \Cref{fig:VariationalCartoon}).  
The dynamics generated by $\mathcal{E}$ coincide with the conditions required for its stationarity 
under arbitrary variations $\gamma_{\epsilon}$
with compact support in the interior of \([0,1]\). 

In the presence of constraints, admissible trajectories for $\mathcal{E}$ are additionally required to satisfy
\(\dot\gamma(t)\in \cD_{\gamma(t)}\) for a given distribution \(\cD\subset TM\), \ie, a smooth assignment of
linear subspaces \(\cD_\gamma\subset T_\gamma M\) contained in each tangent space.  \emph{Holonomic} (\ie, integrable) constraints are those for which the
admissible velocities $\dot\gamma$ are tangent to a foliation of \(M\) by submanifolds, meaning that
the constrained variational problem reduces to an unconstrained problem on each leaf where these admissible trajectories lie.
Conversely, when the distribution \(\cD\) is \emph{non-holonomic} (non-integrable), admissible paths \(\gamma\) need not lie in any submanifold of $M$.  Instead, they are merely required to be tangent to the distribution $\cD$ at any given time, \ie, \(\dot\gamma\in\cD_\gamma\).  As mentioned previously, this forces a choice between different principles for imposing these constraints.

\begin{figure}
    \centering
    \includegraphics[width=\linewidth]{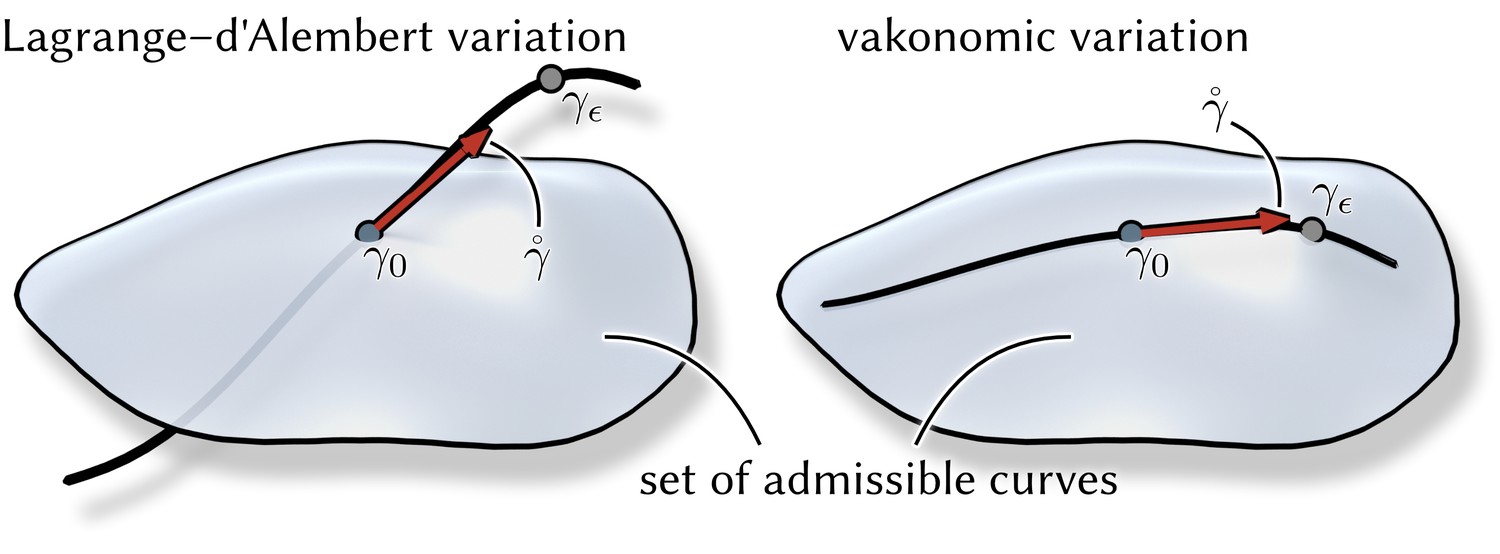}
    \caption{A cartoon of ``variation space'': each point is a path $\gamma$, the shaded region collects the \emph{admissible} paths ($\dot\gamma\in\cD_\gamma$), each black curve is a variation family $\epsilon\mapsto\gamma_{\epsilon}$, and the red arrow is the infinitesimal variation $\mathring\gamma$. Lagrange--d'Alembert (left) lets the family leave the admissible set, so the arrow escapes the shaded region; the vakonomic principle (right) keeps every $\gamma_{\epsilon}$ admissible, confining the arrow to it. The restrictions are complementary: pointwise on $M$ (see \Cref{fig:LdAvsVak}) the picture reverses, with LdA confining $\mathring\gamma(t)$ to $\cD_{\gamma(t)}$ and the vakonomic principle instead confining the velocities $\dot\gamma_{\epsilon}(t)$.
    }
    \label{fig:LdAvsVakAbstract}
\end{figure}

\begin{figure}
    \centering
    \includegraphics[width=\linewidth]{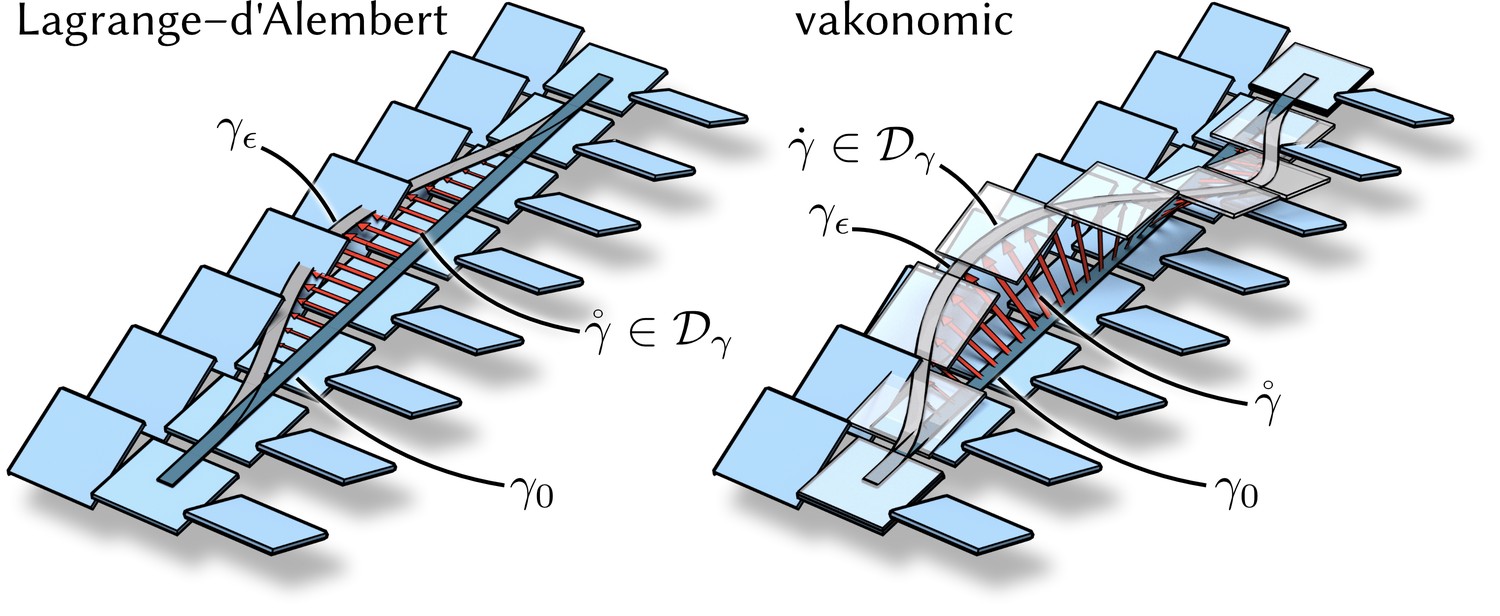}
    \caption{Under the vakonomic variational principle (right), variation curves $t\mapsto \gamma_{\epsilon}(t)$ must stay tangent to the constraint distribution. Conversely, the Lagrange–d'Alembert principle (left) is less restrictive, requiring only that the variation vector field $t\mapsto \mathring{\gamma}(t)$ belong to this distribution.}
    \label{fig:LdAvsVak}
\end{figure}

Of the two standard approaches to 
enforcing non-holonomic constraints, the most common choice in computational mechanics is the Lagrange--d'Alembert principle (see \Cref{sec:ComparisonToLdA}). LdA states that non-stationary variations $\gamma_\epsilon$ in the family should not necessarily be admissible themselves, provided that the ``virtual work'' done on them by the constraint forces is zero.  This means that LdA variations can leave the space of admissible paths (see \Cref{fig:LdAvsVakAbstract} left), \ie, that the velocity $\dot{\gamma}_{\epsilon}$ is not constrained to the distribution $\cD$ (see \Cref{fig:LdAvsVak} left), provided the variation vector field $\bigmathring\gamma$ remains constrained to $\cD$ at each point in time.  This condition is equivalent to the so-called ``principle of virtual work'', also known as Cetaev's condition~\cite{chetaev1932gauss}, which states that LdA-admissible variations $\gamma_{\epsilon}$ must satisfy \(\bigmathring\gamma\in \cD\).
Importantly, the unrestricted nature of the variations $\gamma_{\epsilon}$ renders this  process not intrinsic to the distribution $\cD$, and metric information from the ambient space $TM$ is required to carry out the variational program.  Consequently, the resulting dynamics are not generally the stationarity condition of an energy functional on the space of admissible curves.

The alternative to LdA, standard in control theory and natural from the viewpoint of constrained optimization, is the vakonomic variational principle (\Cref{sec:VakonomicVariationalPrinciple}).  This approach maintains self-consistency by requiring that all variations $\gamma_{\epsilon}$ remain tangent to the constraint distribution, \ie, $\dot{\gamma}_\epsilon(t) \in \cD_{\gamma_\epsilon(t)}$ for all $\epsilon$ in the variational family.  This restriction lends 
the vakonomic principle its fully variational character, since no additional information from outside the distribution $\cD$ is ever required.  By varying over all such admissible curves using, \eg, Lagrange multipliers to enforce the tangency requirement, the resulting vakonomic stationary conditions 
naturally embed notions of sub-Riemannian geometry (\Cref{sec:SubRiemannianStructure}), and the equations of motion describe sub-Riemannian geodesics.

\subsection{Discretizing Incompressible Fluid Motions}

Recall that incompressible fluid motion in the continuous setting is described by an unconstrained Lagrangian least action principle for curves \(\gamma\colon[0,1]\to\SDiff(M)\) in the volume-preserving diffeomorphism group (\Cref{sec:FluidsAsGeodesicEquation}).  Since discretizing $\SDiff(M)$ directly is challenging, many fluid simulation methods (\eg, \cite{Azencot:2014:FFS,Pavlov:2011:SPD,Modin:2024:TDF}) consider a discretized Koopman representation instead.  In the continous Koopman theory,  diffeomorphisms in $\SDiff(M)$ are represented as unitary or orthogonal linear operators acting on a space of smooth functions or half-densities (\Cref{sec:KoopmanRepresentation}). After choosing a finite-dimensional discretization of this space, these continuous linear operators become orthogonal matrices acting on a finite-dimensional vector space (\Cref{sec:spatial-discretization}).
Ultimately, this results in a simple matrix representation space $\SO(F)$ for some integer $F>0$
(\Cref{sec:FEECandVdiv}), along with a non-holonomically constrained variational problem for curves \(\bR\colon[0,1]\to \SO(F)\) of rotation matrices representing the action of fluid motions in $\SDiff(M)$.  Moreover, the velocities of these paths will be skew-symmetric matrices $\bX\colon[0,1]\to\so(F)$ representing skew-adjoint linear operators in the Lie algebra $\sdiff(M)$.

To understand how this non-holonomic constraint arises, consider that 
the continuous Koopman representation just mentioned is a Lie algebra homomorphism at the level of velocities (\Cref{sec:GroupAndAlgebraRepresentations}).  That is, there is a map from vector fields to linear transformations of half-densities, $\vec u \mapsto \adv_{\vec u}$, representing the infinitesimal advection (\ie, directional derivative) along $\vec u$, and the Lie bracket \([{\vec u}, {\vec v}]\) of two vector fields \(\vec u, \vec v\) is represented by the commutator of their associated linear operators: 
\begin{equation}
	\label{eq:LieAlgebraHom}
	\adv_{[\vec u,\vec v]} = [\adv_{\vec u}, \adv_{\vec v}].
\end{equation}
This can be understood as a ``smallness condition" on the image space $\im(\adv)$: not many skew-adjoint linear operators on half-densities represent the directional derivative along some vector field (\Cref{fig:HDLieAlgebraCartoon}).
However, after discretizing the half-densities being acted on, the target space becomes an algebra \(\so(F)\) of matrices, and the image $\im(\bA)$ of the associated discrete map $\bA\approx\adv$ is no longer closed under the action of the matrix commutator (\Cref{ex:LieNonsubalgebra}).  This means that the Lie brackets of elements contained there may not remain there, \ie, $[\im(\bA),\im(\bA)]\not\subset\im(\bA)$. Therefore, the homomorphism property is lost due to discretization of the target space, and admissible discrete motions $\bR$ at the group level must include a linear constraint on the body velocity, $\bX\in\im(\bA)$, at the algebra level.  Since this constraint defines a non-integrable distribution $\cD$ 
in the tangent bundle $T\SO(F)$, this places the discrete dynamics within the non-holonomic framework introduced in \Cref{subsec:non-holonomic}.

\subsubsection{Equations of motion}
Given the difference in interpretation between the LdA and vakonomic cases, it is not surprising that their resulting equations of motion are quite different as well.  In the case of LdA (\Cref{sec:ComparisonToLdA}), invoking the principle of virtual work leads to the following system on $\so(F)\cong\so(F)^*$ 
\begin{equation}\label{eq:LdA-overview}
\dot{\bZ} = \pi_{\cD}\!\big([\bX,\bZ]\big),
\end{equation}
where $\bX=\bX(\bZ)$ is an element of $\so(F)$ depending on $\bZ$ and  \(\pi_{\cD}\) denotes projection onto the constraint distribution.  This projection requires a choice of metric information from outside $\cD\subset\so(F)$, inducing a nontrivial design decision leading to infinitely many different LdA equations of motion.  While this offers a degree of flexibility in the LdA formulation, it is far from obvious what choice is the most appropriate in any given computational setting.

Conversely, the vakonomic equations of motion  (\Cref{sec:VakonomicEquationsOfMotion}) take the usual form of a matrix Lax equation~\cite{Lax:1968:INE}:
\begin{equation}\label{eq:vak-overview}
\dot{\bZ} = [\bX,\bZ].
\end{equation}
Since these equations are truly the stationarity condition of a discrete functional, they inherit the benefits of variational structure, including conservation laws implied by discrete symmetries, and, in particular, the isospectral (Casimir-preserving) character of the Lax evolution (\Cref{sec:CasimirsAndNoetherCharges}).  This is not the case for LdA, since projecting the Lie bracket action using $\pi_\cD$ violates skew-adjointness, leading to a loss of Hamiltonian structure and an un-physical drift in Casimirs over time.

\subsection{A structure-preserving low-rank approach}
The proposed vakonomic fluid discretization relies on an $F$-dimen\-sional subspace of scalar functions or half-densities, along with a $V$-dimensional subspace of vector fields.  However, the resulting equations of motion are largely independent of how these spaces are chosen.  This means that a variety of standard (or exotic) constructions are applicable, including those based on compatible finite elements, mimetic B-splines, or spectral methods.  Remarkably, and in contrast to previous work such as \cite{Pavlov:2011:SPD}, it is not strictly necessary to use a compatible or mimetic discretization, since the advection operators employed here are structure-preserving by design
(\Cref{sec:FEECandVdiv}).

With this said, the vakonomic equations \teqref{eq:vak-overview} are still an isospectral Lax-type evolution in a dense, skew-symmetric matrix variable $\bZ\in\mathbb{R}^{F\times F}$.  As such, na\"{i}ve simulation strategies incur a high cost due to the $O(F^2)$ degrees of freedom which must be managed at each iteration.  Fortunately, this can be mitigated without sacrificing structure-preservation by using a low-rank parameterization of $\bZ$ in terms of \emph{Clebsch variables} (\Cref{sec:LowRankParameterization}).  This Clebsch representation arises via a group-equivariant momentum map and expresses the full matrix variable in terms of a small number (say $m$) of rank-two skew-symmetric matrices.  Evolving the corresponding factor pairs yields a closed system whose reconstruction reproduces the same isospectral flow, forming an analytical reduced-order model 
for the vakonomic evolution.  This reduces the complexity at each iteration from $O(F^2)$ to $O(mF)$ while retaining the variational character of the scheme and its associated conservation properties.  

Ultimately, the incompressible fluid discretization presented in the remainder of the paper provides a ``proof of concept'' for the vakonomic point of view.  We hope that the flexibility, self-consistency, and long-term geometric accuracy of this perspective will inspire further work towards a fully viable and general-purpose fluid solver based on these ideas.

\section{Review: Geometric Fluid Dynamics}\label{sec:background}
This section briefly reviews the geometric theory of fluids necessary for the present approach.
A summary of the corresponding notational analogues is given in~\tabref{tab:Notation}.

\subsection{Flow Maps}
Fluid motion on a manifold \(M\) can be described geometrically in terms of flow maps that transport material points over time. Collectively, these maps form the diffeomorphism group \(\Diff(M)\) of the manifold, which provides a natural configuration space for fluid flows. Time-varying fluid motions on $M$ correspond to trajectories through this space, with trajectory velocities represented by time-varying vector fields on 
$M$.

\subsubsection{Flow maps and Eulerian velocities.}

In this work, the configuration space of the fluid is modeled as the infinite-dimensional Lie group of diffeomorphisms  
\begin{equation*}
\label{eq:Diff}
    \Diff(M)\coloneqq\left\{\phi\colon M\to M\mid \text{smooth}\,\phi^{-1}\,\text{exists}\right\}\subset C^\infty(M;M),
\end{equation*}
where \(M\) is a fluid domain with boundary $\partial M$. This group is a proper subset of the space $C^{\infty}(M;M)$ of smooth maps from $M$ to itself. 
Its Lie algebra,
\begin{equation*}
    \diff(M)\coloneqq T_{\id}\Diff(M) = \left\{ \vec u \in \Gamma(TM) \mid \vec u \,\,\text{tangent to}\, \partial M  \right\},
\end{equation*}
consists of smooth vector fields on \(M\) satisfying the no-through boundary condition on \(\partial M\). Right-translation \(dR_\phi: \diff(M) \to T_{\phi}\Diff(M)\) corresponds to pre-composition with a diffeomorphism \(\phi\): for a vector field \(\vec u\in \diff(M)\),
\begin{equation*}
dR_\phi(\vec u) = \vec u \circ \phi.
\end{equation*}

A time-dependent flow map is a smooth path \(t\mapsto\phi_t\in\Diff(M)\) that assigns to each material point its position at time \(t\). The time derivative \(\dot\phi_t = {d\over dt}\phi_t\) is a tangent vector along this path, and can be represented by a time-dependent vector field \(\vec u_t\in\diff(M)\) via right translation 
\begin{equation}
    \label{eq:RepresentPhiDotByU}
	\dot\phi_t=\vec u_t\circ\phi_t.
\end{equation}
This relation expresses the Lagrangian velocity $\dot\phi_t$ of the flow map in terms of an Eulerian velocity field $\vec u_t$ defined on the fixed spatial domain \(M\).  For simplicity of notation, we will omit the subscript indicating time dependence when it is clear from context.

\subsubsection{Incompressible fluid flows}
To discuss incompressible fluid flows on a manifold \(M\) equipped with a fixed volume form \(\Vol\),
we consider the distinguished subgroup \(\SDiff(M)\subset\Diff(M)\) of volume-preserving diffeomorphisms,
\begin{equation*}
\SDiff(M)\coloneqq\{\phi\in\Diff(M)\mid \det(d\phi)=1 \}. %
\end{equation*} 
Its Lie algebra consists of those vector fields in \(\diff(M)\) that are divergence-free, \ie,
\begin{equation*}
\sdiff(M)\coloneqq \{ \vec u\in\diff(M) \mid %
\div\vec u\coloneqq \LD_{\vec u}\Vol %
= 0\}
\subset\diff(M),
\end{equation*}
where $\LD_{\vec u}$ indicates the Lie derivative along $\vec u$.
Accordingly, the velocities of incompressible flow maps in \(\SDiff(M)\) admit a representation as in \eqref{eq:RepresentPhiDotByU} by right translations of divergence-free vector fields \(\vec u\in\sdiff(M)\).  This is illustrated in \Cref{fig:DivFreeCartoon}.

\begin{figure}[h]
    \centering
    \includegraphics[width=\linewidth]{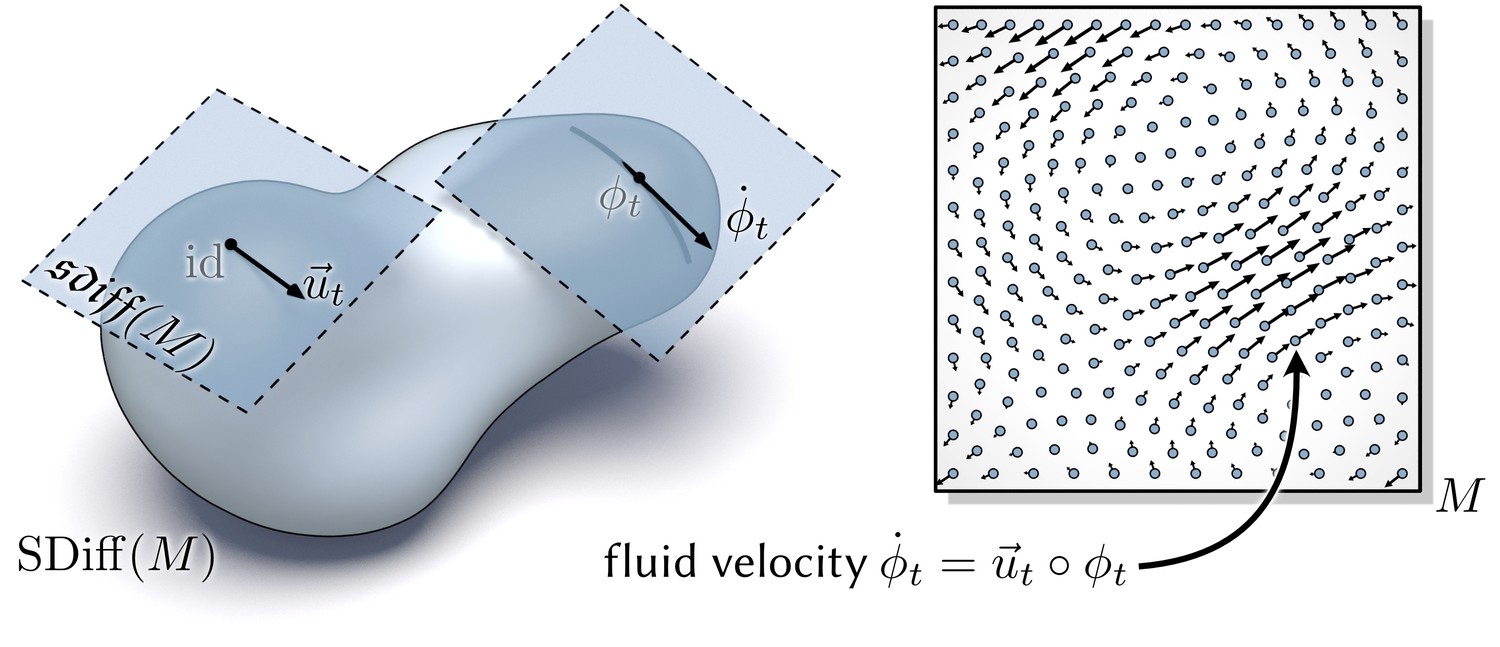}
    \caption{The geometry of the Lie group $\SDiff(M)$. Here, the Lagrangian velocity $\dot\varphi$ at $\varphi\in\SDiff(M)$ is right-translated to the identity $\id$, yielding a representation by the divergence-free Eulerian velocity $\vec u\in\sdiff(M)$. %
    }%
    \label{fig:DivFreeCartoon}
\end{figure}

\subsection{Fluids as Geodesics}
\label{sec:FluidsAsGeodesicEquation}
The variational formulation of the incompressible Euler equations follows Hamilton's standard least-action principle, also known as the stationary action principle, applied to the fluid body.  %
For free fluid motion, the action functional consists solely of the kinetic energy of the fluid, and the path of the least action corresponds to a geodesic in the space of flow maps.

\subsubsection{Fluid flows on a Riemannian manifold}
Recall that the kinetic energy of a vector field \(\vec u\in \sdiff(M)\) is defined as one-half of its \(L^2(M)\)-norm: 
\begin{align}
    {1\over 2}\|\vec u\|^2\coloneqq {1\over 2}\int_M |\vec u|^2 \Vol,
\end{align}
where $|\vec u|^2$ denotes the norm of $\vec u$ with respect to the metric on $M$.
This kinetic energy defines a Riemannian metric on the Lie group \(\SDiff(M)\) 
that is invariant under right translation, \ie,
\begin{align}
    \|\dot\phi\|^2\coloneqq \|\vec u\|^2,\quad\text{where \(\vec u\coloneqq\dot\phi\circ\phi^{-1}\) \,\,(\cf\@ \eqref{eq:RepresentPhiDotByU})}.
\end{align}
A geodesic in \(\SDiff(M)\) is a path \(\phi\colon [0,T]\to\SDiff(M)\) that is a stationary point of %
the squared metric length (\ie, action) functional
\begin{align}
\label{eq:ContinuousAction}
    \cS(\phi) \coloneqq \int_{0}^T {1\over 2}\|\dot\phi\|^2\, dt,
\end{align}
where the Lagrangian is simply the kinetic energy of the fluid.

\subsubsection{Variational characterization of incompressible Euler flows}
It turns out that stationary points of the action \eqref{eq:ContinuousAction} satisfy the incompressible Euler equation. 
To see this, consider a flow map \(\phi = \phi_{t,\epsilon} \subset \SDiff(M)\), depicted in \Cref{fig:VariationalCartoon}, that depends on both time \(t\) and a variational parameter \(\epsilon\).  Assume that $\varphi_{t,\epsilon}$ satisfies the fixed endpoint conditions $\varphi_{0,\epsilon} = \id$ and $\varphi_{T,\epsilon}=\varphi_{T,0}$ for all $\varepsilon$.

\begin{figure}[h]
    \centering
    \includegraphics[width=\linewidth]{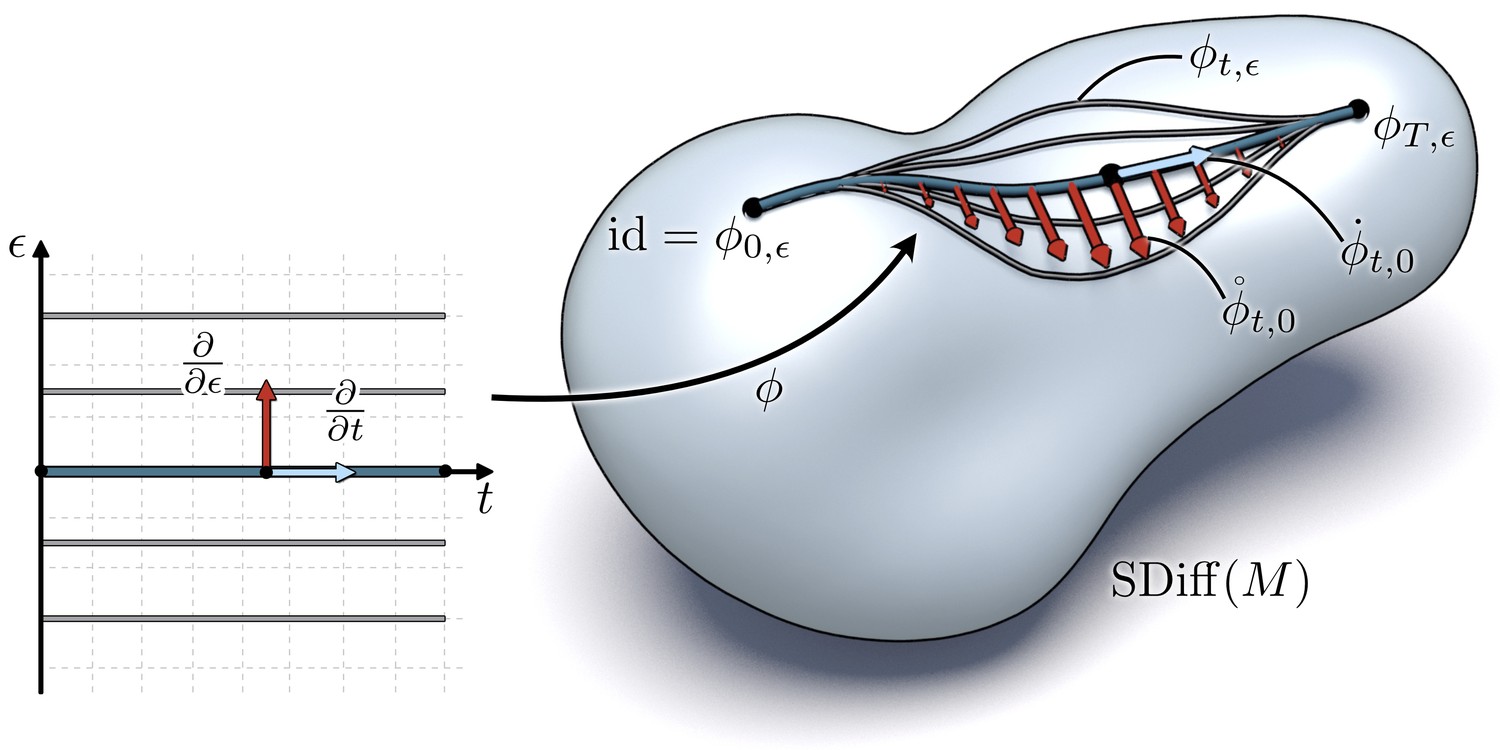}
    \caption{A depiction of the variational calculus on $\SDiff(M)$.  The flow map $\varphi=\varphi_{t,\varepsilon}$ depends on time $t\in[0,T]$ and a variational parameter $\epsilon\in[0,1]$. The $\SDiff(M)$  elements $\id = \varphi_{0,\epsilon}$ and $\varphi_{T,\varepsilon}$ denote fixed endpoints shared between all variations, while the vectors $\dot{\varphi}_{t,0},\mathring{\varphi}_{t,0}$ in the tangent space to $\SDiff(M)$ at $\varphi_{t,0}$ denote velocities of the flow map in the temporal and variational directions.} %
    \label{fig:VariationalCartoon}
\end{figure}
It is convenient to express the Euler--Lagrange equation for \(\cS\) in terms of velocity fields in Eulerian coordinates.  To that end, 
we express the partial derivatives
\(\dot\phi_{t,\epsilon}\coloneqq {\partial\over\partial t}\phi_{t,\epsilon}\) and
\(\bigmathring\phi_{t,\epsilon}\coloneqq {\partial\over\partial\epsilon}\phi_{t,\epsilon}\)
by Eulerian vector fields \(\vec u,\vec v\in \sdiff(M)\) via right translation as in \eqref{eq:RepresentPhiDotByU},
\begin{equation*}
\dot\phi_{t,\epsilon} = \vec u_{t,\epsilon}\circ\phi_{t,\epsilon},\quad
\bigmathring\phi_{t,\epsilon} = \vec v_{t,\epsilon}\circ\phi_{t,\epsilon}.
\end{equation*}
Computing the variation of the action \eqref{eq:ContinuousAction} with respect to the volume-preserving flow map $\varphi_{t,\epsilon}$ then yields the expression
\begin{equation}
	\label{eq:VariationWoLin}
    \bigmathring{\cS}=\left.{d\over d\epsilon}\right|_{\epsilon=0} \int_0^T{1\over 2}\|\dot\phi_{t,\epsilon}\|^2\, dt
    =\int_0^T\llangle\vec u,\bigmathring{\vec u}\rrangle\, dt,
\end{equation}
where \(\llangle\cdot,\cdot\rrangle\) denotes the \(L^2\) inner product between vector fields. 
Further manipulation of \eqref{eq:VariationWoLin} requires knowledge of the relationship between the vector fields associated with the \(t\) and \(\epsilon\) derivatives of the flow map $\phi_{t,\epsilon}$. Since the $t\epsilon$-coordinate vector fields \(\nicefrac{\partial}{\partial t}\) and \(\nicefrac{\partial}{\partial\epsilon}\) commute, 
it follows that
\begin{equation*}
\tfrac{\partial}{\partial\epsilon}\big(\vec u_{t,\epsilon}\circ\phi_{t,\epsilon}\big)
=
\tfrac{\partial}{\partial t}\big(\vec v_{t,\epsilon}\circ\phi_{t,\epsilon}\big).
\end{equation*}
This implies the Eulerian expression
\begin{equation*}
\bigmathring{\vec u} + \nabla_{\vec v}\vec u = \dot{\vec v} + \nabla_{\vec u}\vec v,
\end{equation*}
where \(\nabla\) is the Levi-Civita connection on the tangent bundle $TM$.
Since \(\nabla\) is torsion-free, the difference of directional derivatives \(\nabla_{\vec u}\vec v - \nabla_{\vec v}\vec u = [\vec u,\vec v]\) is the Lie bracket of the vector fields $\vec u$ and $\vec v$, yielding the relationship
\begin{equation}
\label{eq:ContinuousLinConstraint}
\bigmathring{\vec u} = \dot{\vec v} + [\vec u,\vec v].
\end{equation}
Equation \eqref{eq:ContinuousLinConstraint} is known as the \emph{Lin constraint}, and expresses the \(\epsilon\)-variation of the Eulerian velocity \(\vec u\) in terms of the time derivative of the Eulerian variation field \(\vec v\) and its Lie bracket with \(\vec u\).

Using \teqref{eq:ContinuousLinConstraint} and integrating by parts in time, 
the variation \eqref{eq:VariationWoLin} becomes
\begin{equation}
    \bigmathring{\cS}= \int_0^T\llangle \vec u^\flat\,|\,\dot{\vec v} + [\vec{u},\vec{v}]\rrangle\, dt = -\int_0^T\llangle \dot{\vec u}^\flat + \LD_{\vec u}\vec u^\flat \,|\,\vec v\rrangle\,dt,
    \label{eq:VariationOfContinuousAction}
\end{equation}
where it was used that \(\vec v_{0,\epsilon}\) and \(\vec v_{T,\epsilon}\) vanish for variations with fixed endpoints.
Here, \(\llangle\cdot|\cdot\rrangle\) denotes the dual pairing between 1-forms and vector fields: \(\llangle\alpha\,|\,\vec v\rrangle\coloneqq \int_M \langle \alpha\,|\,\vec v\rangle\,dV\).  The flat operator \(\flat\) maps a vector field \(\vec u\) to its corresponding 1-form \(\vec u^\flat\) obtained via Riesz representation, \ie, \(\langle\vec u^\flat \,|\,\vec v\rangle = \langle \vec u, \vec v\rangle\).
The operator \(\LD_{\vec u} = -[\vec u,\cdot]^*\) is the Lie derivative of 1-forms along the vector field \(\vec u\), which coincides with the negative adjoint of the Lie bracket 
on vector fields.

By the fundamental lemma of the calculus of variations \cite{huke1931historical}, a flow map is a stationary point of the action \eqref{eq:ContinuousAction} under 
volume-preserving variations with fixed endpoints if and only if \eqref{eq:VariationOfContinuousAction} vanishes for all divergence-free vector fields \(\vec v\in\sdiff(M)\). We conclude that the one-form $\dot{\vec u}^\flat + \LD_{\vec u}\vec u^\flat$
must vanish when paired with any divergence-free \(\vec v\in\ker(\div)\). Equivalently, its Riesz-associated vector field must lie in the \(L^2\)-orthogonal complement of \(\ker(\div)\), which is dual to the space of exact one-forms. In other words,
\begin{equation*}
\dot{\vec u}^\flat + \LD_{\vec u}\vec u^\flat \in \im(d).
\end{equation*}
Therefore, there exists a scalar function \(p_{\rm L}\), acting as a Lagrange multiplier enforcing incompressibility, such that the optimality condition for the least-action principle becomes the Euler equations in covector form
\begin{align}
\dot{\vec u}^\flat + \LD_{\vec u}\vec u^\flat = -\,dp_{\rm L}.
\end{align}
Using the fact that the Lie derivative \(\LD_{\vec u}\vec u^\flat = ( \nabla_{\vec u} \vec u + \frac{1}{2}\nabla|\vec u|^2)^\flat\), this is equivalent to the more familiar vector form of the incompressible Euler equations,
\begin{align}\label{eq:inc_euler}
\dot{\vec u} + \nabla_{\vec u}\vec u = -\nabla p,
\end{align}
where the pressure is given by \(p = p_{\rm L} + \tfrac12 |\vec u|^2\).

\section{Representation on half-densities}\label{sec:koopman}
Several options are available for 
discretizing the theory of \autoref{sec:FluidsAsGeodesicEquation}, each with its own benefits and drawbacks.
The present approach will implicitly express the flow maps of fluid motions as linear operators applied to a space of functions. The encoding of diffeomorphisms through linear operators on scalar functions is known as the \emph{Koopman representation}, which serves as a powerful tool for modeling and simulation.  It also appears in the functional maps methods in geometry processing \cite{Azencot:2013:OAT,Azencot:2015:DDV}, and is employed in the functional fluids method for fluid simulations \cite{Azencot:2014:FFS}.  

In this section, we introduce an adapted Koopman representation of the volume-preserving diffeomorphisms on the alternative space of \emph{half-densities}. After discretization, this has the advantage of yielding operators in the standard orthogonal group. 

\subsection{Koopman representation}
\label{sec:KoopmanRepresentation}
A (left) representation of a group \(G\) on a vector space \(V\) is a group homomorphism from \(G\) to \(\GL(V)\), the general linear group on \(V\). That is, a map \(\rho\colon G\to \GL(V)\) such that \(\rho(g_1g_2)=\rho(g_1)\rho(g_2)\) for all \(g_1,g_2\in G\). 
In the standard Koopman theory, the diffeomorphism $\phi\in\Diff(M)$ is represented via the linear pullback action $f\mapsto f\circ \phi^{-1}$ on smooth functions $f\in C^{\infty}(M)$, which is an element of the general linear group $\GL(C^{\infty}(M))$.  More precisely, the Koopman map
\[([\cdot]^{-1})^*\colon\Diff(M)\to \GL(C^{\infty}(M)), \quad (\phi^{-1})^*f = f\circ\phi^{-1},\] 
forms a left representation for the diffeomorphism group, since $((\psi\circ\phi)^{-1})^*f = (\psi^{-1})^*(\phi^{-1})^*f$ for smooth functions $f$ and diffeomorphisms $\phi,\psi$. 

\begin{remark}
    The map $f\mapsto f\circ\phi$ is an equally valid group action, leading to a right Koopman representation instead.
\end{remark}

While the setting just described is a perfectly valid way to encode diffeomorphisms as linear operators, it is advantageous 
to consider a representation on half-densities instead.  To that end, recall that the metric on an oriented Riemannian manifold $M$ induces a canonical volume form \(\Vol\).%
\footnote{The orientability assumption is not crucial to our theory. A non-orientable Riemannian manifold also gives rise to a canonical volume form \(\Vol\) regarded as a \emph{density} (a de Rham pseudo-differential form) rather than a differential form. 
} 
This defines an \(L^2\)-inner product on functions \(f,h\in C^\infty(M)\) through integration against $\Vol$,
\begin{equation*}
\llangle f,h \rrangle_{\Vol}=\int_M fh\Vol.
\end{equation*}
A half-density then satisfies the following definition.

\begin{definition}
On an oriented Riemannian manifold \(M\), a \emph{half-density} is an object with the local representation
\begin{equation*}
\tilde f = f\Vol^{\frac{1}{2}},
\quad f\in C^\infty(M),
\end{equation*}
where \(\Vol^{\frac{1}{2}}\) denotes a formal square root of the volume form, defined so that the product of two half-densities yields a density integrable over \(M\). 
\end{definition}
The space of half-densities, denoted \(\HD(M)\), is an infinite dimensional vector space similar to the smooth functions, but
carries a canonical pairing given by
\begin{equation}
	\label{eq:CanonicalPairing}
\llangle \tilde f,\tilde h \rrangle \coloneqq \int_M f h\Vol.
\end{equation}
Note that the canonical pairing of half-densities coincides with the \(L^2\)-inner product \(\llangle f,h \rrangle_{\Vol}\) of the associated functions, providing a useful link between these two distinct objects. In particular, \(\HD(M)\) admits a natural Hilbert space structure after \(L^2\)-completion without requiring an additional choice of inner product. Hence, the map 
\begin{equation}
	\label{eq:FunctionToHalfDensity}
	\Phi_{\Vol}\colon C^\infty(M)\to \HD(M),\quad f\mapsto f\Vol^{\frac{1}{2}} = \tilde f,
\end{equation}
sending a smooth function to its associated half-density, is an isometry.

\subsubsection{Linear operators on half-densities} 
Describing the proposed Koopman representation on half-densities requires understanding the linear transformations that act on them. 
The general linear group over \(\HD(M)\) is denoted by
\begin{equation*}
\GL(\HD(M)) \coloneqq
\big\{\tilde{\cR}\colon \HD(M)\xrightarrow{\rm linear}\HD(M)\mid \tilde{\cR} \text{ is invertible}\big\}.
\end{equation*}
The (sub-)group of linear isometries of \(\HD(M)\) preserving the canonical pairing is similarly
\begin{equation*}
\mathrm{O}(\HD(M)) \coloneqq
\big\{\tilde{\cR}\in\GL(\HD(M))\mid \llangle \tilde{\cR} f,\tilde{\cR} h\rrangle = \llangle f, h\rrangle\big\}.
\end{equation*}
Analogous to the Lie algebra \(\so(n)=\mathfrak{o}(n)\) of the group \(\SO(n)\) of special orthogonal transformations of \(\RR^n\), the Lie algebra \(\so(\HD(M))=\mathfrak{o}(\HD(M))\) of \(\O(\HD(M))\) consists of skew symmetric operators\footnote{
The distinction between $\SO(\HD(M))$ and $\O(\HD(M))$ in the infinite-dimensional setting is less apparent than in the finite-dimensional case of $\SO(n)\subsetneq \O(n)$, as there is no analogue of the determinant operation clearly separating $\O(\HD(M))$ into connected components \cite{Putnam:1952:OGH, Kuiper:1965:HTU}.  %
}.

\subsubsection{Koopman representation on half-densities}
Similar to the Koopman representation on functions, there is an analogous representation of the diffeomorphism group
\[([\cdot]^{-1})^*:\Diff(M)\to\O(\HD(M)),\quad \tilde{f}\mapsto (\phi^{-1})^*\tilde{f},\]
that acts on smooth half-densities $\tilde{f}\in\HD(M)$.
Using half-densities as the target space has the noteworthy advantage of representing diffeomorphisms $\phi\in\Diff(M)$ with \textit{orthogonal} operators in $\O(\HD(M))$. 
In particular, for any half-densities \(\tilde f,\tilde h\in\HD(M)\),
\begin{align*}
\llangle (\phi^{-1})^\ast\tilde f, (\phi^{-1})^\ast\tilde h \rrangle
&= \int_M (\phi^{-1})^\ast\tilde f\,\,(\phi^{-1})^\ast\tilde h \\
&= \int_M (f\circ\phi^{-1})(g\circ\phi^{-1})\det(d\phi^{-1})\Vol \\
&= \int_M (\phi^{-1})^\ast(fh\Vol) = \int_M fh\Vol
= \llangle \tilde f, \tilde h \rrangle,
\end{align*}
meaning that $(\phi^{-1})^*$ preserves the canonical pairing.  This differs from the usual case of smooth functions $C^{\infty}(M)$, where the Koopman representation yields only invertible operators in $\GL(C^{\infty}(M))$.

\subsection{Group and Algebra Representations}
\label{sec:GroupAndAlgebraRepresentations}

In the case of fluids, Koopman operators represent fluid motions through the advection of other mathematical objects.  To that end, we define the advection operator 
\begin{equation*}
	\Adv\colon \Diff(M)\to \O(\HD(M)),\quad \phi\mapsto \Adv_\phi,
\end{equation*}
which takes a 
diffeomorphism $\phi\in\Diff(M)$ 
to its Koopman representation on half-densities.  More precisely, for each $\phi\in\Diff(M)$ and $\tilde f\in\HD(M)$, 
\begin{equation*}
	\Adv_\phi \tilde f\coloneqq (\phi^{-1})^\ast\tilde f.
\end{equation*}
It follows immediately from the Koopman representation of functions that \(\Adv\) is a group homomorphism from diffeomorphisms to orthogonal operators on half-densities, \ie, for all \(\phi,\psi\in\Diff(M)\) it holds that 
\begin{equation*}
	\Adv_{\phi\circ\psi} = \Adv_{\phi}\circ\Adv_\psi.
\end{equation*}
The Lie group representation \(\Adv\) of $\Diff(M)$ induces a corresponding representation for the Lie algebra $\diff(M)$ through its differential at the identity: 
\begin{equation*}
    \adv \coloneqq (d\Adv)_{\id}\colon\diff(M)\xrightarrow{\rm hom}\so(\HD(M)), \quad \adv_{\vec u} = -\LD_{\vec u}. \label{eq:SmallAdvOperator}
\end{equation*}
That is, \(\adv_{\vec u}=-\LD_{\vec u}\) is the infinitesimal advection of half-densities $\tilde{f}\in\HD(M)$ along the vector field $\vec u$. 

\subsubsection{Making the advection explicit}
Just as the Lie derivative operators \(\LD_{\vec u}\) on \(C^\infty(M)\) are skew-adjoint with respect to the \(L^2\)-inner product for any vector field $\vec u \in \diff(M)$, the operators \(\adv_{\vec u}\) are skew-adjoint on half-densities \(\tilde f,\tilde h \in \HD(M)\) with respect to the canonical pairing.  To see this, recall the Leibniz rule for the Lie derivative,
\begin{equation*}
    \label{eq:LieDerLeibniz}
\LD_{\vec u}(\tilde f\tilde h) = (\LD_{\vec u}\tilde f)\,\tilde h + \tilde f\,(\LD_{\vec u}\tilde h).
\end{equation*}
Integrating over \(M\) and applying Cartan's formula $\LD_{\vec u} = d\iota_{\vec u} + \iota_{\vec u}d$ immediately yields 
\begin{align*}
\llangle \adv_{\vec u}\tilde f,\tilde h \rrangle + \llangle \tilde f,\adv_{\vec u}\tilde h \rrangle &= - \int_M \LD_{\vec u}(\tilde f\,\tilde h) \\
&= -\int_M d\iota_{\vec u}(fh\Vol) 
= -\int_{\partial M} fh\, \iota_{\vec u}\Vol,
\end{align*}
which vanishes, since either \(M\) has no boundary or $\vec u\in\diff(M)$ satisfies no-through boundary conditions. Hence,
\begin{equation*}
\llangle \adv_{\vec u}\tilde f,\tilde h \rrangle = -\,\llangle \tilde f,\adv_{\vec u}\tilde h \rrangle,
\end{equation*}
verifying that $\adv_{\vec u}\in\so(\HD(M))$ is skew-adjoint. 

In fact, it turns out that the infinitesimal advection 
is explicitly computable:
using the Leibniz rule and \(\LD_{\vec u}\Vol = \div(\vec u)\Vol\), observe that
\begin{equation*}
\LD_{\vec u}\tilde{f} = \LD_{\vec u}(f\Vol^{\frac{1}{2}}) = df(\vec u)\Vol^{\frac{1}{2}} + f\LD_{\vec u}\Vol^{\frac{1}{2}}.
\end{equation*}
Since differentiating $\Vol = (\Vol^{\frac{1}{2}})^2$ yields \(\LD_{\vec u}\Vol^{\frac{1}{2}} = \tfrac{1}{2}\div(\vec u)\Vol^{\frac{1}{2}}\), rearrangement in view of \(f\div(\vec u) = \div(f\vec u) - df(u)\) provides the concrete expression  
\begin{equation}\label{thm:AdvectHalfDensityFormula}
    \adv_{\vec u} \tilde{f} = -\LD_{\vec u}\tilde f = -\tfrac{1}{2}\big(df(\vec u) +  \div(f\vec u)\big)\Vol^{\frac{1}{2}}.
\end{equation}
In particular, when $\vec u\in\sdiff(M)$ is divergence free, it follows from \eqref{thm:AdvectHalfDensityFormula} that 
\[\adv_{\vec u}\tilde{f} = -df(\vec u)\Vol^{\frac{1}{2}} = (-\LD_{\vec u}f)\Vol^{\frac{1}{2}},\]
meaning half-densities are infinitesimally advected identically to scalar-valued functions.  

It will be important to note that the infinitesimal Koopman representation $\adv:\diff(M)\to\so(\HD(M))$ is not surjective, ie, not all skew-adjoint operators on half-densities are of the form $\adv_{\vec u}$ for some vector field $\vec u\in\diff(M)$.  This is illustrated by \Cref{fig:HDLieAlgebraCartoon} and the following example.

\begin{example}
\label{ex:NondiffeoSkewOperator}
Let \(\tilde f_1,\tilde f_2\in\HD(M)\) be two orthonormal half-densities, and define
\begin{equation*}
\cB(\tilde f)
\coloneqq
\llangle \tilde f,\tilde f_1\rrangle\,\tilde f_2
-
\llangle \tilde f,\tilde f_2\rrangle\,\tilde f_1.
\end{equation*}
Then \(\cB\) is skew-adjoint and hence lies in \(\so(\HD(M))\). However, \(\cB\) does
not arise from any vector field \(\vec u\in\diff(M)\), since operators of the form
\(-\LD_{\vec u}\) act locally on half-densities, whereas \(\cB\) depends on global inner
products and is therefore nonlocal.
\end{example}
\begin{figure}
    \centering
    \includegraphics[width=\linewidth]{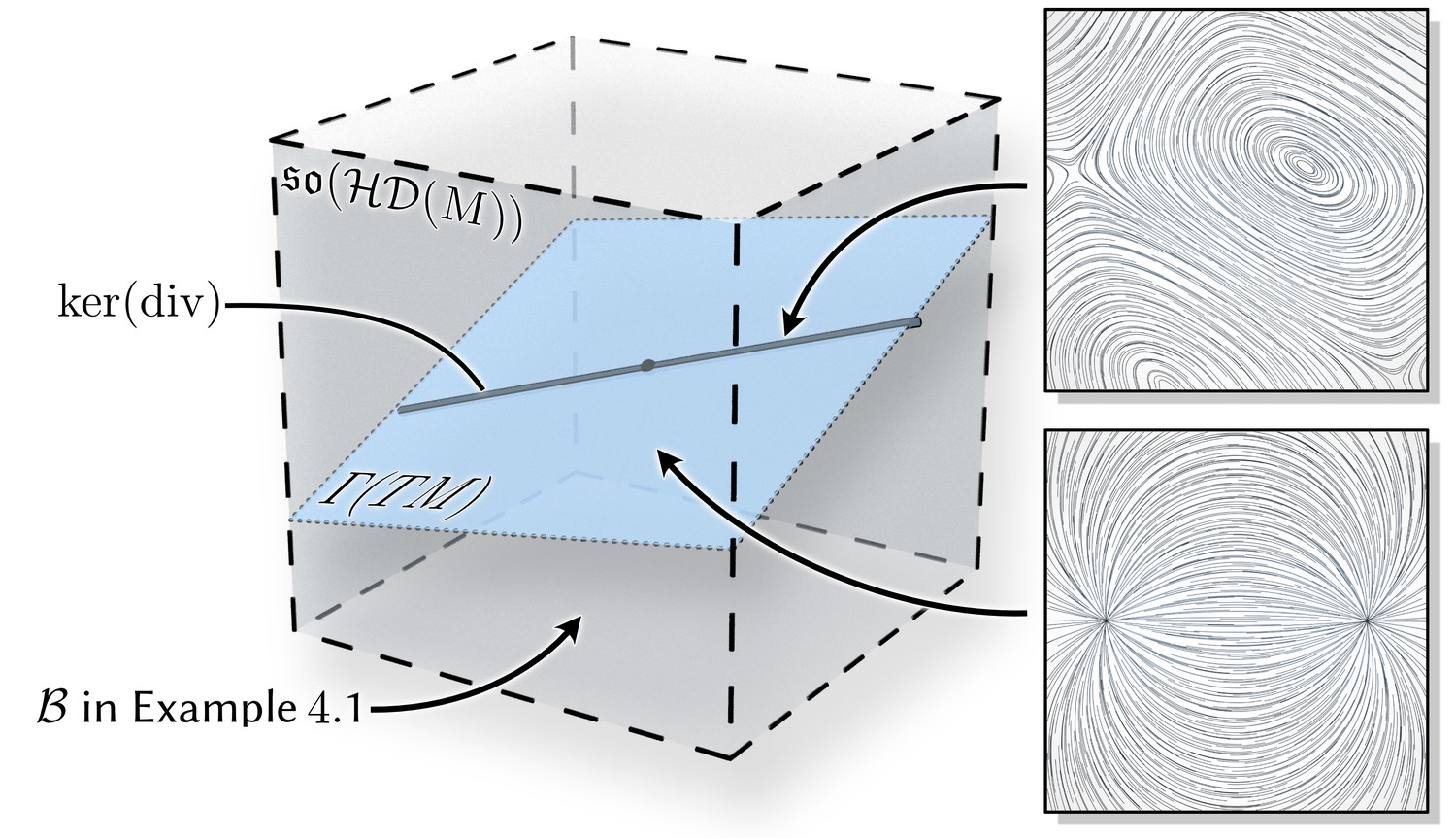}
    \caption{An illustration of the Lie algebra $\so(\HD(M))$.  Each vector field $\vec u\in\Gamma(TM)\cong \diff(M)$ corresponds to a skew-adjoint Lie derivative operator $\adv_{\vec u}=-\LD_{\vec u}\in\so(\HD(M))$, but not every skew-adjoint operator $\cB\in\so(\HD(M))$ takes this form. %
    }
    \label{fig:HDLieAlgebraCartoon}
\end{figure}

The fact that \(\im(\adv)\subsetneq\so(\HD(M))\) and only a small fraction of skew-adjoint operators represent fluid velocities is critical for formulating an appropriate discretization using Koopman representations.  We will next show that generating realistic fluid behavior in the discrete setting requires respecting the ``smallness'' of $\im(\adv)$, even if its homomorphism property cannot be exactly preserved.

\section{Spatial Discretization}
\label{sec:spatial-discretization}
Motivated by the previous section, we will  discretize fluid motions through their action on a finite-dimensional space of rotation matrices arising from a corresponding discretization of half-densities on $M$.  An appropriate least-action principle then provides a path through these matrices representing the incompressible Euler equation \eqref{eq:inc_euler}.  In this way, flow maps describing fluid motions become rotation matrices acting on (the coefficients of) discrete half-densities, and vector fields become skew-symmetric matrices which act infinitesimally.  Much like the image of $\adv$ is a proper subspace of the space $\so(\HD(M))$ of all skew-symmetric operators on half-densities, our finite-dimensional subspace representing velocity fields will also form a proper subspace of skew-symmetric matrices. But, unlike the infinite-dimensional setting, it will prove \textit{not} to be a Lie algebra. 

\subsection{The discrete function space}
First, the space of scalar functions \(C^\infty(M)\) is replaced by a finite dimensional vector space $\cF$ of dimension $F\in\mathbb{N}$, 
\begin{align}
    \cF \subset H^1(M)\subset L^2(M).
\end{align}
 Here, the Sobolev space \(H^1(M)\) contains $L^2(M)$ functions whose first (weak) derivatives also lie in \(L^2(M)\).  This choice is natural, as \(L^2(M)\) is the completion of \(C^\infty(M)\) under the \(L^2\) metric, and first derivatives are required to treat vector fields as discrete derivations. %
An example of \(\cF\) is the piecewise linear finite element space over a triangle mesh with $F$ vertices.

Letting \(\{\varphi_i\}_{i=1}^{F}\) denote a basis for \(\cF\), each function \(f\in \cF\) is represented by a vector \(\bff = (f_1,\ldots,f_{F})^\intercal \in \mathbb{R}^{F}\) of coefficients via its basis expansion
\begin{align}
    f(x) = \sum_{i=1}^{F} f_i \varphi_i(x).
\end{align}
The vector space \(\cF\), and its coefficient space $\RR^{F}$, inherit an inner product structure from the \(L^2\) metric on $M$.  Specifically, define the symmetric positive definite mass matrix \(\bM \in \RR^{F\times F}\) by
\begin{align}\label{eq:mass}
    [\bM]_{ij} = \llangle \varphi_i,\varphi_j\rrangle = \int_M\varphi_i\varphi_j \Vol.
\end{align}
Then, the inner product between finite-dimensional functions \(f,h\in \cF\) is given by
\begin{equation}
	\label{eq:DiscreteL2}
	\llangle f,h\rrangle = \bff^\intercal\bM\bh.	
\end{equation}

\subsection{Discrete half-densities}
Applying the map defined in \teqref{eq:FunctionToHalfDensity} to the basis $\{\varphi_i\}_{i=1}^{F}$ of \(\cF\), our discrete function space is mapped to a corresponding space $\tilde{\cF}$ of discrete half-densities  
with an associated basis \(\{\tilde\varphi_i \coloneqq \varphi_i\Vol^{\frac{1}{2}}\}_{i=1}^{F}\). Note that the coefficient vector \(\bff\) represents both \(f\in\cF\) and \(\tilde f\in \tilde{\cF} \) 
in their respective bases.

Analogous to the continuous setting, the canonical pairing (\teqref{eq:CanonicalPairing}) for discrete half-densities \(\tilde f, \tilde h\in \tilde{\cF}\) is given by 
\begin{equation*}
	\llangle\tilde f,\tilde h\rrangle = \llangle f,h\rrangle = \bff^\intercal\bM\bh.	
\end{equation*}
for \(\tilde f, \tilde h\in \tilde{\cF}\).  It will be useful to change bases for $\tilde{\cF}$ so that this pairing becomes simpler.
To that end, observe that any $\tilde{f}\in\tilde{\cF}$ has the expression
\[\tilde{f}(x) = \sum_{i=1}^{F} f_i\tilde{\varphi}_i(x) = \sum_{i,j=1}^{F} [\bM^{\frac{1}{2}}\bff]_i[\bM^{-\frac{1}{2}}]_{ij}\tilde{\varphi}_j(x) \eqqcolon \sum_{i=1}^{F}\tilde{f}_i\tilde{\Phi}_i(x),\]
in terms of the convenient change of coordinates
\begin{equation*}
	\bff\mapsto \bM^{\frac{1}{2}}\bff \eqqcolon \tbf, \quad \tilde{\varphi}_i \mapsto \sum_{j=1}^{F}[\bM^{-\frac{1}{2}}]_{ij}\tilde{\varphi}_j \eqqcolon \tilde{\Phi}_i.
\end{equation*}
Conveniently, this choice orthonormalizes the coordinates with respect to the new half-density basis \(\{\tilde{\Phi}_i\}_{i=1}^{F}\) and reduces the canonical pairing to the standard Euclidean dot product, \ie, 
\begin{equation*}
	\llangle\tilde f,\tilde h\rrangle = \sum_{i,j=1}^{F} \llangle\tilde{\Phi}_i,\tilde{\Phi}_j\rrangle \tilde{f}_i\tilde{h}_j = \sum_{i,j=1}^{F} \delta_{ij}\tilde{f}_i\tilde{h}_j =  \tbf^\intercal\tbh,	
\end{equation*}
where $\delta_{ij}$ denotes the Kronecker delta tensor which is one when $i=j$ and zero otherwise.

\subsection{Discrete flow maps} 
As mentioned previously, the goal is to use the Koopman representation, developed in the previous section, to implicitly discretize the flow maps of fluid motions by instead discretizing their action on half-densities.  To that end, 
consider $\SO(\tilde{\cF})$, the identity component of the finite-dimensional space of orthogonal transformations
\[\O(\tilde{\cF}) = \big\{\cR\in\GL(\tilde{\cF})\mid \llangle \cR \tilde{f},\cR \tilde{h}\rrangle = \llangle \tilde{f}, \tilde{h}\rrangle\big\}. \]
It is convenient to identify the space $\SO(\tilde{\cF})$ which acts on finite-dimensional half-densities with the space $\SO(F)$ acting on their coefficients. To accomplish this, consider the Lie group isomorphism defined by 
\begin{equation}\label{eq:discrete-flow-map-representation}
    \cR\mapsto \llangle \cR\tilde{\Phi}_i,\tilde{\Phi}_j\rrangle \eqqcolon [\bR]_{ij}.
\end{equation}
which sends a linear transformation $\cR\in\SO(\tilde{\cF})$ to its corresponding matrix representation $\bR\in\SO(F)$. This is valid since $\{\tilde{\Phi}_i\}$ is a basis for $\tilde{\cF}$, so that $\mathcal{R}\tilde{\Phi}_i = \sum_j\llangle\cR\tilde{\Phi}_i,\tilde{\Phi}_j\rrangle\tilde{\Phi}_j = \sum_j{\bR}_{ij}\tilde{\Phi}_j$ for the matrix $\bR$ of coefficients above.
It immediately follows from the definition of $\O(\tilde{\cF})$ that
\begin{align*}
    \llangle \cR\tilde{\Phi}_i,\cR\tilde{\Phi}_j \rrangle &= \sum_{k,l=1}^{F} [\bR]_{ik}[\bR]_{jl} \llangle \tilde{\Phi}_k,\tilde{\Phi}_l\rrangle \\
    &= \sum_{k=1}^F\bR_{ik}\bR_{jk} = \delta_{ij} = \llangle \tilde{\Phi}_i,\tilde{\Phi}_j \rrangle,
\end{align*}
and therefore $\bR\bR^\intercal = \bI$ as expected.  This ensures that $\SO(F)$ provides a meaningful representation space for fluid motions in terms of their action on the coefficients of discrete half-densities.

\subsection{Advection by discrete velocities}
Now that half-densities and flow maps have been discretized, it remains to
discretize the space of velocity fields and describe its action on discrete half-densities. To that end, the space of square-integrable vector fields \(L^2(M; TM)\) satisfying the no-flux boundary condition is replaced by a finite dimensional vector space 
\begin{equation*}
\cV \subset H_0(\div,M)\subset L^2(M; TM),   
\end{equation*}
where $H_0(\div, M)$ denotes the space of vector fields on $M$ with divergence lying in $L^2(M)$ and satisfying the no-flux boundary condition in the trace sense.  An example is given by the usual Raviart-Thomas finite element space with an essential condition enforcing tangentiality to the boundary~\cite{Raviart:1977:MFE}. 

Letting $V=\dim\cV$ and $\{\vec \psi_1, ..., \vec \psi_{V} \}$ be a basis for $\cV$, each velocity field \(\vec u \in \cV\) is represented by a vector \(\bu = (u_1,\dots,u_{V})^\intercal\in\RR^{V}\) via its basis expansion 
\begin{equation*}
    \vec u (x) = \sum_{k=1}^{V} u_k \vec \psi_k (x).
\end{equation*}
Similar to the case for discrete functions, the vector space \(\cV\) inherits an inner product structure from the metric on \(L^2(M; TM)\).  Defining the symmetric positive definite mass matrix \(\bK\in\RR^{V\times V}\) by
\begin{equation}\label{eq:stiffness}
    [\bK]_{kl} = \int_M \langle \vec \psi_k,\vec \psi_l \rangle\Vol,
\end{equation}
the inner product between vector fields \(\vec u,\vec v\in \cV\) is given by
\begin{equation*}
    \llangle \vec u, \vec v\rrangle = \bu^\intercal\bK\bv.
\end{equation*}

How should we represent the infinitesimal advection of half-densities by finite-dimensional vector fields $\vec u \in \cV$? One reasonable choice comes from the previous formula \eqref{thm:AdvectHalfDensityFormula} for the continuous infinitesimal advection $\adv_{\vec u} = -\LD_{\vec u}$.  Specifically, consider the mapping 
\begin{equation*}
    \cA\colon\cV\to\so(F), \quad \vec u \mapsto \llangle \tilde{\Phi}_i, \adv_{\vec u}\tilde{\Phi}_j\rrangle = \cA_{\vec u},
\end{equation*}

\begin{figure*}%
    \centering
    \includegraphics[width=\linewidth]{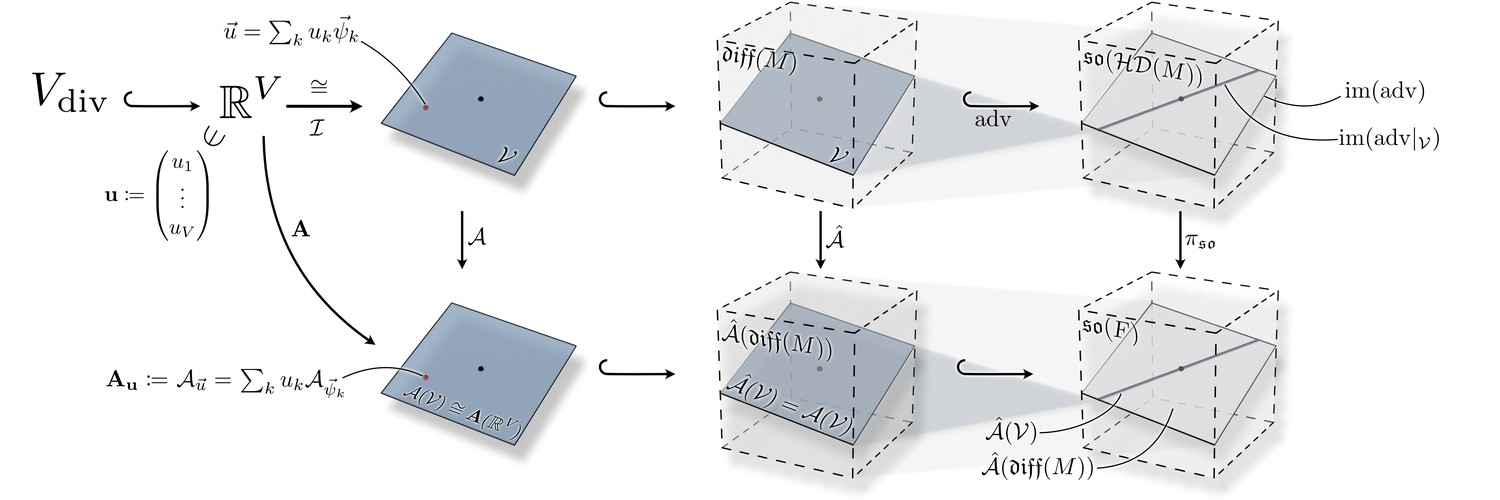}
    \caption{Discretizing the advection operator $\adv:\diff(M)\to\so(\HD(M))$.  A divergence-free field $\vec u\in\cV$ is identified with its coefficients $\bu\in V_{\div}$. The Galerkin projection $\cA_{\vec u}$ of $\adv_{\vec u}$ on the half-density basis defines the fully discrete advection map $\bu\mapsto \bA_{\bu}\in\so(F)$ via $\bA_{\bu} = \cA_{\vec u}$. %
    }
    \label{fig:ContVsDiscretedouble}
\end{figure*}
which sends a vector field $\vec u$ to the Galerkin projection of its negative Lie derivative on the chosen basis $\{\tilde{\Phi}_i\}$ for discrete half-densities.  Similar to the case of $\cF$ before, it is convenient to identify the function space $\cV\cong\mathbb{R}^{V}$ with its coefficients in the $\{\vec\psi_k\}$ basis, so that $\adv_{\vec u}$ is given the alternative representation
\[\bA\colon\RR^{V}\to\so(F), \quad \bu \mapsto \llangle \tilde{\Phi}_i, \adv_{\vec u}\tilde{\Phi}_j\rrangle = \bA_\bu.\]
To compute a useful expression for this mapping, first observe that the right-hand side expands to
\[ -\llangle \tilde{\Phi}_i, \LD_{\vec u}\tilde{\Phi}_j\rrangle = -\sum_{k,l=1}^{V} [\bM^{-\frac{1}{2}}]_{ik}\,\llangle \tilde{\varphi}_{k}, \LD_{\vec u}\tilde{\varphi}_l \rrangle\,[\bM^{-\frac{1}{2}}]_{lj}. \]
Applying \eqref{thm:AdvectHalfDensityFormula} along with integration by parts, the inner pairing simplifies to 
\begin{align*}
    \llangle \tilde{\varphi}_{k}, \LD_{\vec u}\tilde{\varphi}_l \rrangle &= \frac{1}{2}\int_M \varphi_k \big(d\varphi_l(\vec u) + \div(\varphi_l\vec u)\big) \Vol \\
    &= \frac{1}{2}\int_M \big(\varphi_{k}\,d\varphi_l(\vec u) - \varphi_l\,d\varphi_{k}(\vec u)\big)\Vol,
\end{align*}
where the boundary term vanishes since $\vec u \in \diff(M)$ is tangent to $\partial M$.
Putting this together, the discrete advection operator $\bA\colon\RR^{V}\to\so(F)$ takes the computable form 
\begin{equation}\label{eq:discrete-adv-operator}
    \bu\mapsto\bA_\bu, \quad \bA_\bu \coloneqq \sum_{k=1}^{V} u_k \bA_k, 
\end{equation}
in terms of the basis-dependent tensors
\begin{align*}
    \left[\bC_{k}\right]_{ij} &\coloneqq \int_M \varphi_i\, d\varphi_j(\vec \psi_k)\Vol, \\
    {\bA}_{k} &\coloneqq \cA_{\vec{\psi}_k} =  -\tfrac{1}{2}\,\bM^{-\frac{1}{2}}
    \big(\bC_{k}-\bC_{k}^\intercal\big)
    \bM^{-\frac{1}{2}}.
\end{align*}
\begin{wrapfigure}[9]{r}{0.275\columnwidth}
    \centering
    \vspace{0.5\baselineskip}
{\hspace{-1.5em}\includegraphics[width=0.325\columnwidth]{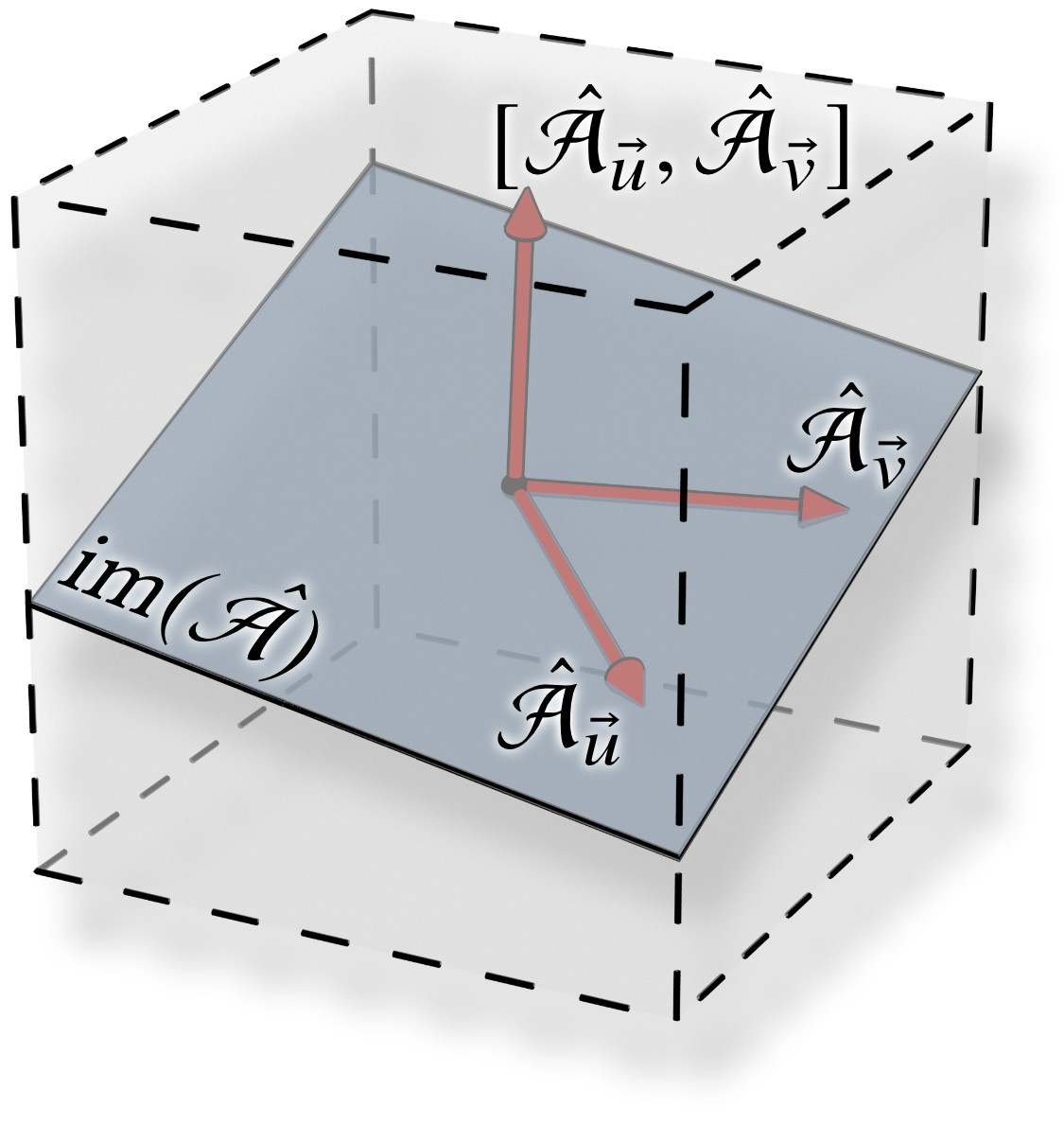}}
\end{wrapfigure}

\begin{table}%
    \centering
    \caption{Discrete and continuous notations. 
    }
    \scriptsize
    \begin{tabular}{lll}
    \toprule
    Continuous Theory & Finite Dimensional Theory & Discrete Theory \\
    \midrule
        \(f\in C^\infty(M)\) & $\sum_i f_i\varphi_i \in \cF$ & \(\bff\in \RR^{F}\) \\
        \(\tilde{f}\in\HD(M)\) & $\sum_i \tilde{f}_i\tilde{\Phi}_i\in\tilde{\cF}$ & \(\tbf\in\RR^{F}\) \\
        \(\Adv_{\phi}\in\O(\HD(M))\) & ${\cR}\in\SO(\tilde{\cF})$ & \(\bR\in\SO(F)\) \\
        \(\vec u \in \diff(M)\) & $\sum_ku_k\vec{\psi}_k\in\cV$ & \(\bu\in \RR^{V}\) \\ 
        $\vec u\in\sdiff(M)$ & $\sum_k u_k\vec\psi_k\in\cV_{\div}$ & $\bu\in V_{\div}\subset \mathbb{R}^{V}$ \\
        \( \adv:\diff(M)\to\so(\HD(M))\) & \(\cA:\cV\to\so(F)\) & \(\bA:\RR^{V}\to\so(F)\) \\
        \( \adv:\sdiff(M)\to\so(\HD(M))\) & \(\bar\cA:\cV_{\div}\to\so(F)\) & \(\bar\bA:V_{\div}\to\so(F)\) \\
         \bottomrule
    \end{tabular}
    \label{tab:Notation}
\end{table}

\begin{remark}\label{rem:nonholonomic}
Similar to the continuous case, one should note that \(\im(\bA) \subsetneq \so(F)\) is a relatively small subset of the ambient Lie algebra. 
Unlike the continuous case, however, the subspace \(\im(\bA)\) is \textit{not} a Lie subalgebra, and, as such, the map \(\bA\) is no longer a Lie algebra homomorphism.  In fact, part of this is a property of discretizing the half-density space $\HD(M)$ (or the function space $C^{\infty}(M)$).  Consider the map $\hat{\cA}\colon\diff(M)\to\so(F)$, analogous to $\cA\colon\cV\to\so(F)$, satisfying \([\hat\cA(\vec u)]_{ij}\coloneqq \llangle\tilde\Phi_i,\adv_{\vec u}\tilde\Phi_j\rrangle\) for each $\vec u \in \diff(M)$.  Then, $\im(\hat{\cA})\subset \so(F)$ is still not a Lie subalgebra, meaning that the composition of $\bar{\cA}$ and the inclusion $\RR^{V}\cong \cV\hookrightarrow \diff(M)$ comprising $\bA$ is not either.  This troublesome fact is further demonstrated in \Cref{ex:LieNonsubalgebra}. 
\end{remark}
\begin{example}[Not a Lie algebra homomorphism]
\label{ex:LieNonsubalgebra}
    Consider two linearly independent vectors $\vec u,\vec v\in\cV$.  It follows that 
    \begin{align*}
        &[\cA_{\vec u},\cA_{\vec v}] = \cA_{\vec u}\cA_{\vec v}-\cA_{\vec v}\cA_{\vec u} \\
        &= \sum_{s} \llangle\tilde{\Phi}_i,\adv_{\vec u}\tilde{\Phi}_s\rrangle\llangle\tilde{\Phi}_s,\adv_{\vec v}\tilde{\Phi}_j\rrangle - \llangle\tilde{\Phi}_i,\adv_{\vec v}\tilde{\Phi}_s\rrangle\llangle\tilde{\Phi}_s,\adv_{\vec u}\tilde{\Phi}_j\rrangle \\
        &= \sum_{s}  \llangle\adv_{\vec v}\tilde{\Phi}_i,\tilde{\Phi}_s\rrangle\llangle\tilde{\Phi}_s,\adv_{\vec u}\tilde{\Phi}_j\rrangle - \llangle\adv_{\vec u}\tilde{\Phi}_i,\tilde{\Phi}_s\rrangle\llangle\tilde{\Phi}_s,\adv_{\vec v}\tilde{\Phi}_j\rrangle \\
        &\neq \llangle \adv_{\vec v}\tilde{\Phi}_i,\adv_{\vec u}\tilde{\Phi}_j\rrangle - \llangle \adv_{\vec u}\tilde{\Phi}_i,\adv_{\vec v}\tilde{\Phi}_j  \rrangle \\
        &= \llangle \tilde{\Phi}_i, [\adv_{\vec u},\adv_{\vec v}]\tilde{\Phi}_j \rrangle = \llangle\tilde{\Phi}_i, \adv_{[\vec{u},\vec{v}]}\tilde{\Phi}_j \rrangle = \cA_{[\vec u,\vec v]},
    \end{align*} 
    where the two separate integrations in $[\cA_{\vec u},\cA_{\vec v}]$ are not equivalent to the one in $\cA_{[\vec u,\vec v]}.$ Therefore,
    $\bA: \mathbb{R}^V\to\so(F)$ is not a Lie algebra homomorphism.
\end{example}

Note that this discussion of advection and its discretization has been general to the entire diffeomorphism group $\Diff(M)$ and its Lie algebra $\diff(M)$.  However, incompressible fluids are restricted to the proper subspace $\SDiff(M)$ and the associated Lie algebra $\sdiff(M)$ of divergence-free vector fields.  To that end, we also consider a finite-dimensional space of incompressible vector fields 
\[\cV_{\div} \coloneqq \{\vec u \in \cV \mid \div\vec u = 0\} \subsetneq \cV,\]
which is contained in the finite-dimensional space $\cV$ of vector fields previously defined.
Identifying $\cV_{\div}$ with set of coefficients similarly to before, a vector field $\vec u\in \cV_{\div}$ can be represented as a vector $\bu\in V_{\div} \subsetneq\mathbb{R}^{V}$. The corresponding advection operators $\bar{\cA}:\cV_{\div}\to \so(F)$ and $\bar{\bA}:V_{\div}\to\so(F)$ are then defined through simple restriction,
\[\bar{\cA} = \cA|_{\cV_{\div}}, \qquad \bar{\bA} = \bA|_{V_{\div}}.\]
For convenience, \Cref{tab:Notation} provides a dictionary between the continuous objects of relevance and their discretizations in the present framework.  The complementary diagram \Cref{fig:ContVsDiscretedouble} displays the relationship between the continuous Koopman representation $\adv:\diff(M)\to\so(\HD(M))$ and its analogy $\bu \mapsto \bA_{\bu}$ in the discrete setting.

\section{Equations of motion}
\label{sec:EquationsOfMotion}
Having established a representation for the configuration space \(\SDiff(M)\) of incompressible fluid motion on the special orthogonal operators in \(\SO(\HD(M))\), it is now prudent to derive the equations of motion governing the discretized fluid. 
It turns out that these vakonomic equations are naturally posed on a \textit{sub-Riemannian} manifold, where the metric is only defined on a part of each tangent space.  These equations admit a variational interpretation leading to a Lax system with provable conservation properties. 

\subsection{Sub-Riemannian structure}
\label{sec:SubRiemannianStructure}

To discuss the variational setting we consider, first observe that the discrete mass matrix $\bK$ on the velocity space $\cV\cong\RR^V$ induces an inner product on the image of the discrete infinitesimal advection operator
\[
\im(\bA)\subsetneq\so(F)=T_{\id}\SO(F),
\]
by identifying elements of $\im(\bA)$ with their corresponding velocity
coefficients.  Explicitly, for $\bu,\bv\in V$, we define the inner product on $\im({\bA})$ through
\begin{equation}
    \label{eq:inner-product-at-identity}
    \langle {\bA}_{\bu}, {\bA}_{\bv} \rangle_{\bK}
    \;\coloneqq\;
    \bu^\intercal \bK \,\bv,
\end{equation}
so that the discrete $\adv$ map
$\bu \mapsto {\bA}_{\bu}$ is an isometry onto its image with respect to $\bK$.  
\begin{remark}
    Note that the $\bK$-metric $\langle\cdot,\cdot\rangle_{\bK}$ is generally quite different from the Frobenius metric $\langle\cdot,\cdot\rangle_{\rm Frob}$. In particular, orthogonal vector fields in $\cV_{\div}$ with coefficients $\bu,\bv\in V_{\div}$ will generally satisfy $\langle\bar\bA_{\bu},\bar\bA_{\bv}\rangle_{\bK}=0$ but $\langle\bar{\bA}_{\bu},\bar{\bA}_{\bv}\rangle_{\rm Frob}\neq 0$.  
\end{remark}

To set up a variational problem on $V_{\div}$, we require access to the $\bK$-metric on any tangent space $T_{\bR}\SO(F)$ to the discrete group.  To that end, consider the set of matrices \(\bX\in\im(\bar\bA)\) 
right-translated by flow map elements \(\bR\in\SO(F)\), \ie,
\begin{equation}
    \cD_{\bR}\coloneqq \{\bX\bR \mid \bX \in \im(\bar\bA)\}\subset T_{\bR}\SO(F). 
\end{equation}
Since the right translation \(\bX\mapsto \bX\bR\) is smooth, this defines a smooth \emph{distribution} $\cD \subset T\SO(F)$ properly contained in the tangent bundle to $\SO(F)$ and illustrated in \Cref{fig:ImADistribution}. This enables an extension of the inner product \(\langle\cdot,\cdot\rangle_\bK\) on \(\im(\bar\bA)\) to each fiber \(\cD_{\bR}\) using right invariance:
for all \(\bX\bR,\bY\bR\in\cD_{\bR}\),
\begin{equation}
    \langle \bX\bR,\bY\bR\rangle_{\bK}\coloneqq\langle \bX,\bY\rangle_{\bK}.
\end{equation}

\begin{figure}%
    \centering
    \includegraphics[width=\linewidth]{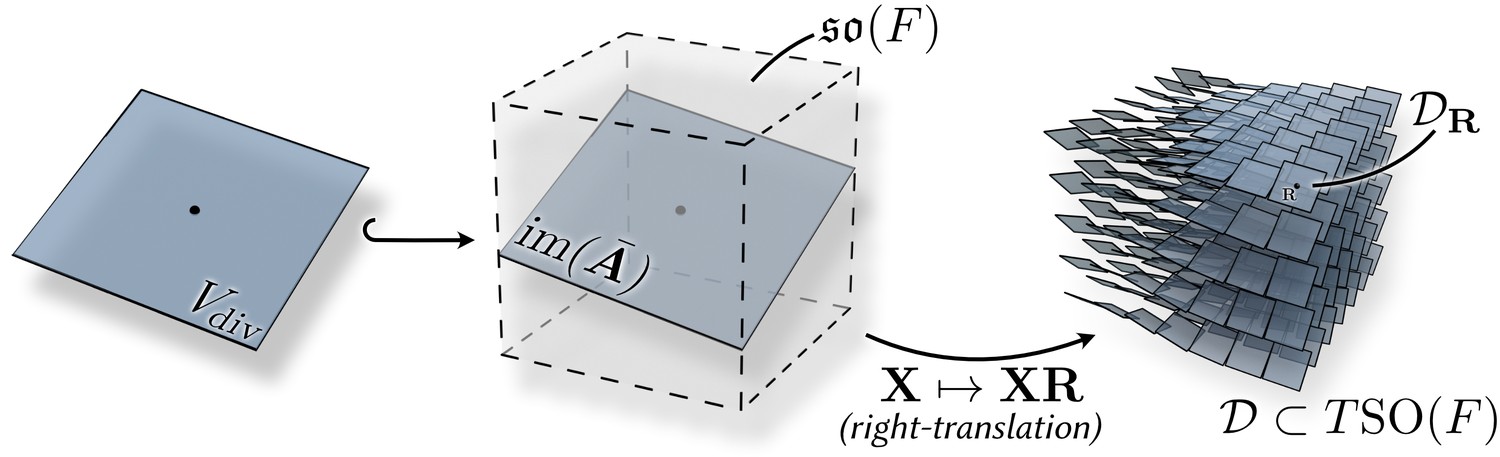}
    \caption{The nonintegrable distribution $\cD$ defined by $\im(\bar{\bA})\subset\so(F).$ Right-translation of $\bX\in\im(\bar\bA)$ yields elements $\bX\bR\in\cD_{\bR}$ of the partial tangent space $\cD_{\bR}\subset T_{\bR}SO(F)$ forming the distributional fiber at $\bR\in\SO(F)$. %
    }
    \label{fig:ImADistribution}
\end{figure}

Together, the right-invariant distribution \(\cD\) and the right-invariant inner product $\langle\cdot,\cdot\rangle_{\bK}$ defined above endow the Lie group \(\SO(F)\) with a \emph{sub-Riemannian} (or Carnot--Carath\'eodory) metric \cite{Gromov:1981:SMV} whose admissible directions are constrained to lie in \(\cD\). 
The triple \((\SO(F),\cD,\langle\cdot,\cdot\rangle_\bK)\) forms a so-called \emph{sub-Riemannian manifold} \cite{Montgomery:2002:TSG}. Accordingly, notions such as the distance and
energy of curves associated to this metric are defined only through admissible paths, \ie, curves whose tangent vectors lie in the distribution \(\cD\) at every point.

\subsubsection{Flat and Sharp}
\label{sec:MusicalIsomorphismsInSubRiemannianSetup}

It remains to specify how $\bK$ restricts 
to the Lie algebra $\sdiff(M)$ of divergence-free vector fields.
Consider the inclusion of $V_{\div}$ in $\mathbb{R}^V$ given by 
\(\iota\colon\rV_{\div}\hookrightarrow\RR^{V}\).
Its dual map is the restriction
\begin{equation*}
\iota^*\colon(\RR^{V})^*\to\rV_{\div}^*,\qquad \iota^*(\hat\bu)=\hat\bu|_{\rV_{\div}},
\end{equation*}
which restricts a covector $\hat{\bu}$ on the ambient coefficient space to act only on the coefficients corresponding to divergence-free vector fields.
The kernel of this restriction operation is equal to the annihilator
\begin{equation*}
\rV_{\div}^\circ\coloneqq \ker(\iota^*)=\{\hat\bu\in(\RR^{V})^*\mid \hat\bu(\bu)=0\ \forall \bu\in\rV_{\div}\},
\end{equation*}
containing those covectors which are zero on all of $V_{\div}$. It follows that the image covector space
\begin{equation*}
\rV_{\div}^* \cong (\RR^{V})^* \big/\rV_{\div}^\circ,
\end{equation*}
is isomorphic to the domain modulo the kernel of $i^*$, providing a discrete analogue of the statement 
\begin{equation*}
\sdiff(M)^\ast \cong \Omega^1(M)\big/d\Omega^0(M).
\end{equation*}

Consider now the restricted advection operator $\bar{\bA} \coloneqq \bA\circ\iota \colon V_{\div}\to\so(F)$.  An identical argument to the above yields the equivalence
\begin{equation*}
\im(\bar\bA)^* \cong \so(F)^* \big/\im(\bar\bA)^\circ,
\end{equation*}
so that covectors dual to elements in \(\im(\bar\bA)\) may be represented by cosets \(\bZ + \im(\bar{\bA})^\circ\subset\so(F)^*\) in the dual space, defined only up to the addition of elements in \(\im(\bar\bA)^\circ\).

The discrete analogue of pressure projection on the covector side is restriction to admissible directions via the adjoint map
\begin{equation*}
\bar\bA^*\colon \so(F)^*\to\rV_{\div}^*, \quad \bZ\mapsto \bar{\bA}^*\bZ.
\end{equation*}

\begin{figure}%
    \centering
    \includegraphics[width=\linewidth]{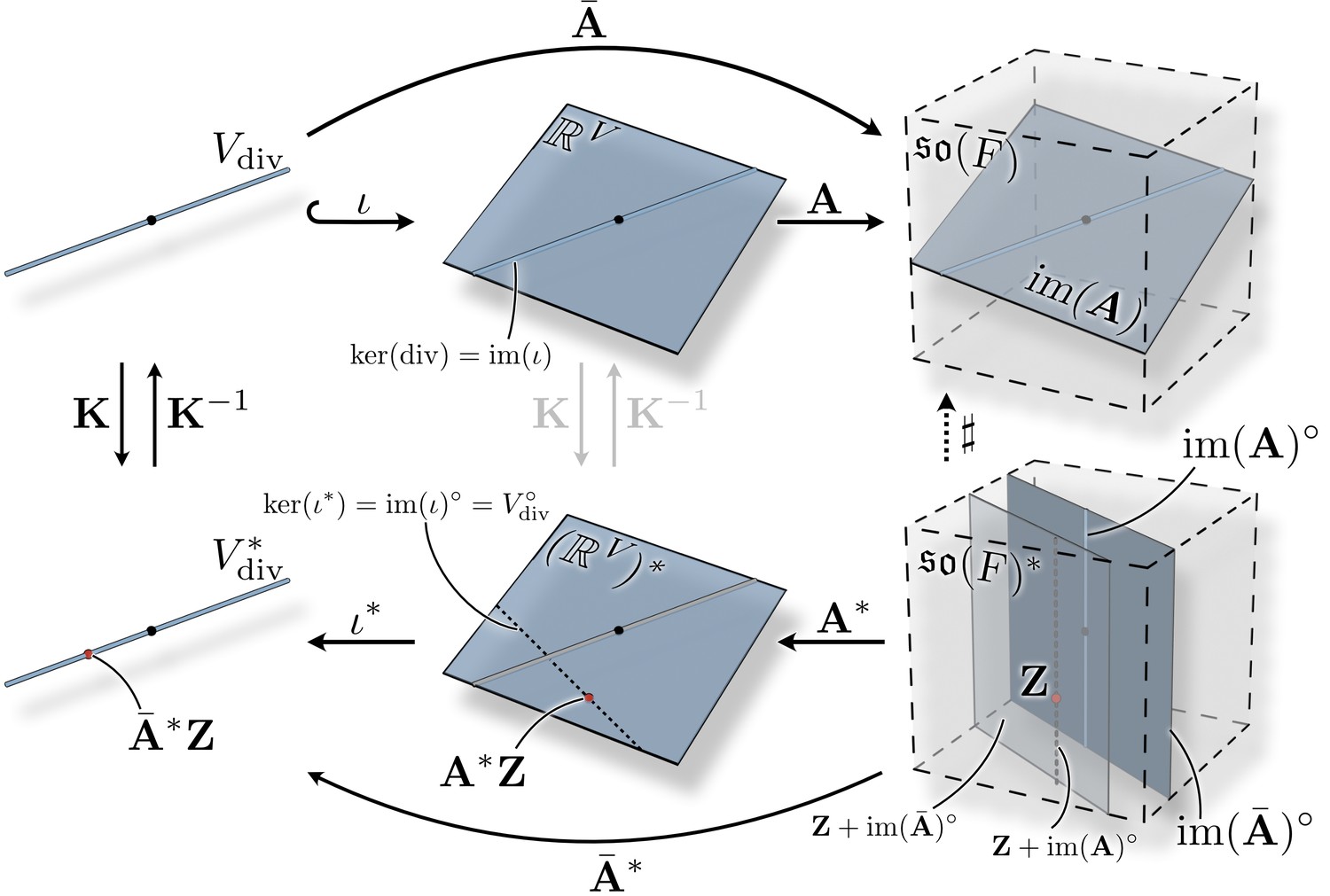}
    \caption{The relationship between the discrete advection $\bar{\bA}$ and its adjoint $\bar{\bA}^*$ as a commutative diagram.  Any coefficient vector $\bu\in V_{\div}$ maps to a unique skew-adjoint operator $\bar{\bA}_{\bu}\in\so(F)$. Similarly, any coset $\bY = \bZ+\im(\bar\bA)^\circ\subset\so(F)^*$ represented by $\bZ\in \so(F)^*/\im(\bar{\bA})^\circ$ maps to a unique element $\bar{\bA}^*\bZ\in V_{\div}^*$. These mappings are related by the sharp operator $\sharp:\so(F)^*\to\im\bar{\bA}$, which maps a coset $\bY$ to the unique $\bA_{\bv}$ associated to $\bv = \bK^{-1}\bar{\bA}^*\bZ$.}
    \label{fig:PressureProjection}
\end{figure}

This is most easily understood as a two step process using $\bar{\bA}^* = \iota^*\circ\bA^*$ (see \Cref{fig:PressureProjection}).  First, $\bA^*$ maps the coset representative $\bZ\in\so(F)^*$ to its image $\bA^*\bZ$ in the dual coefficient space $({\RR^V})^*$. This image is itself part of a coset, \ie, $\bA^*\bZ$ represents an equivalence class defined up to the addition of elements in $V_{\div}^\circ$.  The adjoint $\iota^*(\bA^*\bZ)$ then collapses this class to a covector $\bar{\bA}^*\bZ$ on the space $V^*_{\div}$ where the $\bK$-metric is defined.  Finally, this can be used to define a sharp map in the dual space $\so(F)^*$:
\begin{equation*}
\sharp \coloneqq \bar\bA\bK^{-1}\bar\bA^\ast\colon \so(F)^*\to\im(\bar{\bA})\subsetneq \so(F),
\end{equation*}
where \(\bZ^\sharp\in\im(\bar\bA)\) depends only on the class of \(\bZ\) modulo \(\im(\bar\bA)^\circ\).  Notice that there is no corresponding flat $\flat$ on $\so(F)$, since $\sharp$ is only injective on cosets $\bZ + \im(\bar{\bA})^\circ \subset \so(F)^*$.

Notice that this sharp map also contains the familiar pressure projection used in fluid simulation.  To see this, recall that the metric $\bK$ on $\mathbb{R}^V$ induces a Riesz isomorphism 
between the velocity coefficient space and its dual.  Then, for any $\bu\in V_{\div}$ and any $\bv\in\mathbb{R}^V$ it follows that
\[ \langle\bK\bv\,|\,i(\bu)\rangle = \bv^\intercal\bK\,i(\bu) = (i^\dagger\bv)^\intercal\bK\bu = \llangle \vec w, \vec u\rrangle, \]
where $i^\dagger:\mathbb{R}^V\to V_{\div}$ is the $\bK$-metric adjoint of the linear map $i$ and 
$\vec w\in \cV_{\div}$ is the div-free vector field with coefficients $\bw = i^\dagger\bv$. It follows from $L^2$-orthogonality that $i^\dagger\bv = i^\dagger(\bv + \ba)$ for any coefficient vector $\ba$ corresponding to a gradient field $\nabla f\in\cV$, and therefore $i^\dagger$ is precisely the $\bK$-orthogonal pressure projection on velocity coefficients.  Given a discrete divergence operator $\bD$ whose domain is $\mathbb{R}^V$ and considering its metric pseudoinverse $\bD^+$, this works out to 
\[i^\dagger\bv = (\bI-\bD^+ \bD)\bv = \bv-\bK^{-1}\bD^\intercal(\bD\bK^{-1}\bD^\intercal)^{-1}\bD\bv.\]

\subsection{Vakonomic variational principle}
\label{sec:VakonomicVariationalPrinciple}
The next goal is to construct a  variational problem on the sub-Riemannian manifold \((\SO(F),\cD,\langle\cdot,\cdot\rangle_\bK)\) whose solutions represent discrete incompressible Euler flows.  To determine the relevant equations of motion, we seek stationary paths \(\bR\colon [0,T]\to \SO(F)\) of the constrained action functional
\begin{equation}
    \label{eq:DiscreteAction}
    \cS(\bR) = \int_0^T \frac{1}{2}|\dot\bR|_{\bK}^2\,dt, \quad \text{ s.t.} \quad \dot\bR\in\cD_\bR.
\end{equation}
Observe that the velocity $\dot{\bR}$ is constrained to fiber $\cD_{\bR}$ of the distribution $\cD$ at each point in time, reflecting the sub-Riemannian character of the problem.

Since the action functional \teqref{eq:DiscreteAction} is right-invariant by construction, it has an equivalent expression on the Lie algebra $\so(F)$, 
\begin{equation}
\label{eq:ReducedDiscreteAction}
\cS(\bR) = \int_0^T \frac12|\bX|^2_{\bK}\,dt, \quad \bX\coloneqq\dot\bR\bR^{-1}\in \im(\bar{\bA}).
\end{equation}
Note once again that the image of the discrete infinitesimal advection $\im(\bar\bA) \subsetneq \im(\bA) \subsetneq \so(F)$ is not a Lie subalgebra, so that the reduced variational problem is nonholonomically constrained.

Now, suppose \(\bigmathring\bR\) is an arbitrary variation of \(\bR\colon [0,T]\to\SO(F)\) with
fixed endpoints.  It defines a corresponding reduced variation $\bY\coloneqq\bigmathring\bR\bR^{-1}\in\so(F)$ through right translation, which is not 
\begin{wrapfigure}[8]{r}{0.275\columnwidth}
    \centering
    \vspace{-0.\baselineskip}
{\hspace{-1.5em}\includegraphics[width=0.325\columnwidth]{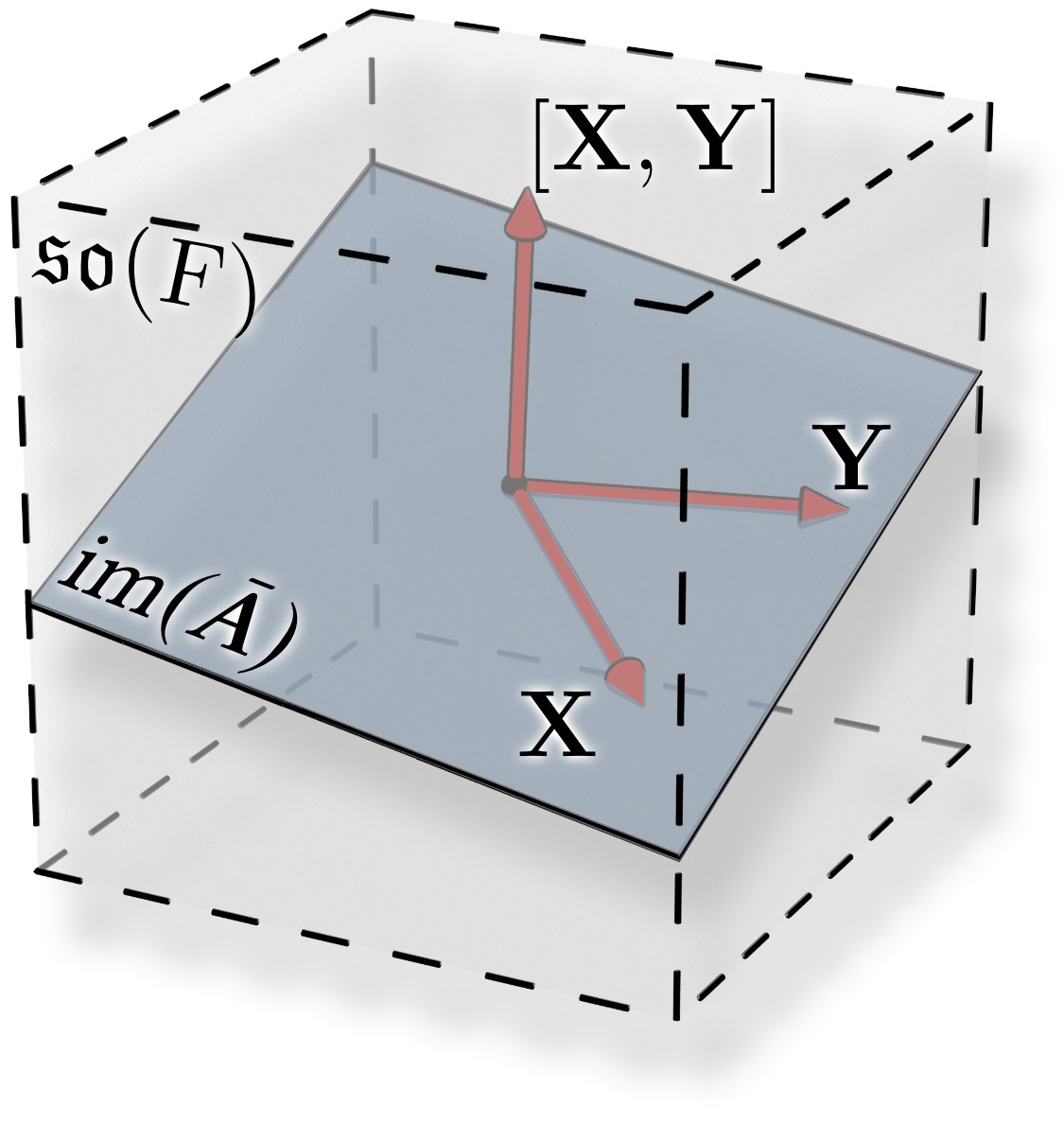}}
\end{wrapfigure}
necessarily constrained to $\im(\bar\bA)$. As before, the induced variation $\bigmathring{\bX}$ of the body velocity $\bX$ is given by the Lin
constraint 
\begin{equation}
\label{eq:DiscreteLinConstraint}
    \displaywidth=\parshapelength\numexpr\prevgraf+2\relax
\bigmathring\bX=\dot\bY-[\bX,\bY],
\end{equation}
which similarly takes values in the full Lie algebra \(\so(F)\). To make sense of expressions such as ``\(\langle\bX,\bigmathring{\bX}\rangle_{\bK}\)'' which will appear in the variation of \(\cS\), 
we extend the inner product \(\langle\cdot,\cdot\rangle_{\bK}\) on \(\im(\bar\bA)\) to an inner product \(\langle\cdot,\cdot\rangle_{\hat\bK}\) on \(\so(F)\) which agrees with \(\langle\cdot,\cdot\rangle_{\bK}\) on \(\im(\bar\bA)\). For instance, choose any subspace \(W\subset\so(F)\) transverse to \(\im(\bar\bA)\), so that \(\so(F)=\im(\bA)\oplus W\) is a direct sum. Then, equip \(W\) with an arbitrary inner product, and declare this decomposition to be orthogonal. It follows from Riesz representation that, for any $\bX\in\im(\bar\bA)$, there exists a covector 
\begin{equation*}
\bX^\flat \coloneqq \langle \bX,\cdot\rangle_{\hat\bK}\in \so(F)^*,
\end{equation*}
depending on the extended metric $\langle\cdot,\cdot\rangle_{\hat\bK}$.  However, observe that the action of $\bX^\flat$ on any $\bigmathring{\bX}\in\so(F)$ is given by 
\[\bX^\flat(\bigmathring{\bX}) = \langle\bX,\bigmathring{\bX}\rangle_{\hat\bK} = \langle\bX, \bigmathring{\bX}_{\parallel}\rangle_{\bK},\]
where $\bigmathring{\bX}_{\parallel}\in\im(\bar\bA)$ denotes the $\hat{\bK}$-orthogonal projection of $\bigmathring{\bX}$ into $\im(\bar\bA)$.  It follows that the map $\bigmathring{\bX} \mapsto \bX^\flat(\bigmathring{\bX})$ is always $\im(\bar\bA)$-valued, since $\bX^\flat$ automatically discards the components of \(\bigmathring\bX\) transverse to \(\im(\bar\bA)\).

While the introduction of an extended metric may seem inconsistent with the vakonomic point-of-view, it will soon be shown that the resulting equations of motion do not depend on any metric information from outside the distribution $\cD$.

\subsubsection{Vakonomic equations of motion}
\label{sec:VakonomicEquationsOfMotion}
\begin{wrapfigure}[9]{r}{0.325\columnwidth}
    \centering
    \includegraphics[width=0.325\columnwidth]{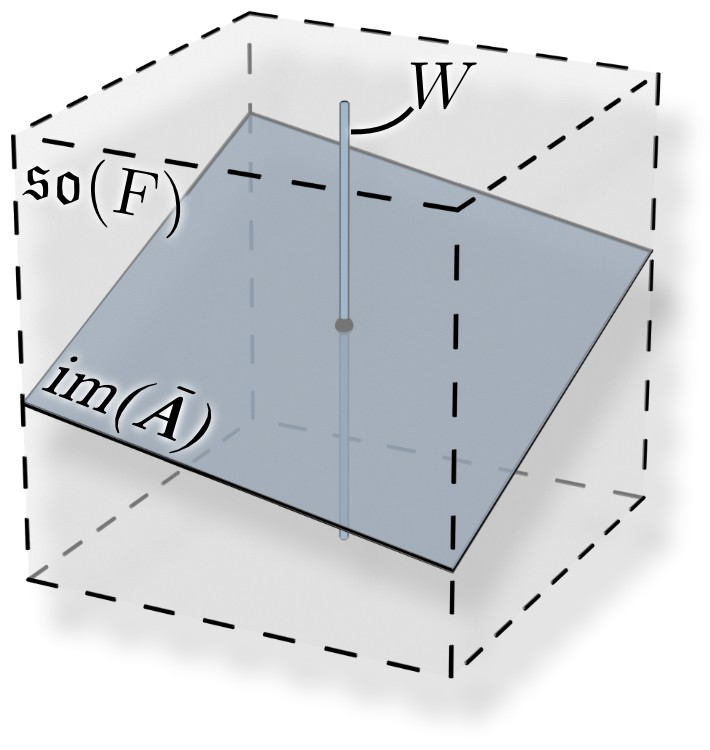}
\end{wrapfigure}
Deriving equations of motion from the constrained reduced action \teqref{eq:ReducedDiscreteAction} requires imposing the constraint \(\bX\in\im(\bar\bA)\).  To that end, consider the $W$-valued linear mapping
\begin{equation*}
\displaywidth=\parshapelength\numexpr\prevgraf+2\relax
	\theta\colon\so(F)\to W
\end{equation*}
which represents orthogonal projection (with respect to the extended metric $\hat\bK$) of elements in the ambient Lie algebra onto \(W\subset\so(F)\).
By definition, it follows that the kernel \(\ker\theta=\im(\bar\bA)\) of $\theta$ is the complementary subspace $\im(\bar\bA)$, and therefore the admissibility condition $\bX\in\im(\bar\bA)$ is equivalent to \(\theta(\bX)=0\). This enables the use of a Lagrange multiplier \(\alpha\colon[0,T]\to W^*\) to formulate an unconstrained equivalent to \teqref{eq:ReducedDiscreteAction}, 
\begin{equation}
\label{eq:DiscreteReducedAugmentedAction}
\tilde{\cS}(\bR,\alpha) = \int_0^T \left( \frac12\langle\bX,\bX\rangle_{\hat\bK} + \langle\alpha\,\vert\,\theta(\bX)\rangle \right)dt, \quad \bX=\dot\bR\bR^{-1},
\end{equation}
which can be directly differentiated to yield the vakonomic equations of motion given in the Theorem below.  

\begin{theorem}
\label{thm:DiscreteVakonomicEOM}
A path \((\bR,\alpha)\colon[0,T]\to\SO(F)\times W^\ast\) is stationary for the augmented action in~\teqref{eq:DiscreteReducedAugmentedAction} under variations with fixed endpoints for \(\bR\) if and only if the pair \((\bX\coloneqq \dot\bR\bR^{-1},\alpha)\),  satisfies the equations
\begin{equation}
\label{eq:vakonomic-eqs}
\begin{cases}
\displaystyle
(\tfrac{\partial}{\partial t}+\bad_{\bX}^*)
(\bX^\flat+\theta^*\alpha)=0,\\
\theta(\bX)=0.
\end{cases}
\end{equation}
Here, \(\theta^*\colon W^*\to\so(F)^*\) is the dual of \(\theta\), and
\(\bad_{\bX}^*\colon \so(F)^*\to \so(F)^*\) is the dual of the adjoint map
\(\bad_{\bX}\coloneqq [\bX,\cdot]\). %
\end{theorem}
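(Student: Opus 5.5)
Vary the two arguments of $\tilde{\cS}$ separately: first the multiplier $\alpha$, then the path $\bR$. In the second case, use the Lin constraint \eqref{eq:DiscreteLinConstraint} to turn the variation of $\bX$ into one of the free reduced variation $\bY=\bigmathring\bR\bR^{-1}$. This is the standard Euler--Poincaré reduction argument applied to the augmented Lagrangian
\[
\ell(\bX,\alpha)=\tfrac12\langle\bX,\bX\rangle_{\hat\bK}+\langle\alpha\,|\,\theta(\bX)\rangle
\]
on $\so(F)\times W^*$.

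\emph{Variation in $\alpha$.} Here $\bR$ is held fixed, so $\bX$ is fixed. Then $\bigmathring{\tilde\cS}=\int_0^T\langle\bigmathring\alpha\,|\,\theta(\bX)\rangle\,dt$. Since $\bigmathring\alpha\colon[0,T]\to W^*$ is arbitrary, the fundamental lemma gives $\theta(\bX)=0$, which is the second equation of \eqref{eq:vakonomic-eqs}.

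\emph{Variation in $\bR$.} Here $\alpha$ is held fixed. The key observation is that right translation identifies variations of $\bR$ with fixed endpoints exactly with curves $\bY\colon[0,T]\to\so(F)$ satisfying $\bY(0)=\bY(T)=0$. Indeed, given such a $\bY$, the family $\bR_\epsilon$ can be built, for example, as $\bR_\epsilon=\exp(\epsilon\bY)\bR$. By bilinearity and the symmetry of $\hat\bK$,
\[
\bigmathring{\tilde\cS}=\int_0^T\langle\bX^\flat+\theta^*\alpha\,|\,\bigmathring\bX\rangle\,dt .
\]
Substitute $\bigmathring\bX=\dot\bY-[\bX,\bY]=\dot\bY-\bad_{\bX}\bY$, integrate the $\dot\bY$ term by parts in time, and use $\bY(0)=\bY(T)=0$ to drop the boundary terms. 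This gives
\[
\bigmathring{\tilde\cS}=-\int_0^T\big\langle(\tfrac{\partial}{\partial t}+\bad^*_{\bX})(\bX^\flat+\theta^*\alpha)\,\big|\,\bY\big\rangle\,dt .
\]
The reduced variation $\bY$ ranges over all of $\so(F)$ and is \emph{not} restricted to $\im(\bar\bA)$; this is exactly the vakonomic feature. So the fundamental lemma yields the first equation of \eqref{eq:vakonomic-eqs}.

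\emph{Converse.} Each step above is an equivalence. If \eqref{eq:vakonomic-eqs} holds, both variations vanish for every admissible $(\bigmathring\alpha,\bY)$. A general variation of $(\bR,\alpha)$ has first variation equal to the sum of the two partial ones, so the pair is stationary.

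\emph{Main obstacle: sign conventions.} The only delicate point is keeping the sign conventions consistent. The Lin constraint as stated here, $\bigmathring\bX=\dot\bY-[\bX,\bY]$, carries a minus sign, which matches the matrix convention $\bX=\dot\bR\bR^{-1}$. Together with $\langle\mu\,|\,\bad_\bX\bY\rangle=\langle\bad^*_\bX\mu\,|\,\bY\rangle$, this minus sign is what makes the result the combination $\tfrac{\partial}{\partial t}+\bad^*_\bX$ rather than its opposite. I would double-check this by differentiating $\partial_\epsilon(\dot\bR\bR^{-1})$ and $\partial_t(\bigmathring\bR\bR^{-1})$ directly and using equality of mixed partials $\partial_\epsilon\dot\bR=\partial_t\bigmathring\bR$.

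A secondary point is smoothness. The curve $\alpha$ should be assumed $C^1$, or else the first equation should be read weakly, so that the integration by parts is valid.
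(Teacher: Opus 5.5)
Your proposal is correct and follows the paper's proof essentially step for step. You vary $\alpha$ to obtain $\theta(\bX)=0$, then vary $\bR$ through $\bY=\bigmathring\bR\bR^{-1}$, substitute the Lin constraint $\bigmathring\bX=\dot\bY-[\bX,\bY]$, integrate by parts using $\bY(0)=\bY(T)=0$, and use that $\bY$ ranges over all of $\so(F)$. The paper leaves implicit the explicit family $\bR_\epsilon=\exp(\epsilon\bY)\bR$ that realizes an arbitrary $\bY$ and the sign check via mixed partials; you spell these out.
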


\begin{proof}
Variations of \teqref{eq:DiscreteReducedAugmentedAction} %
with respect to \(\alpha\) immediately imply stationarity if and only if \(\theta(\bX)=0\), yielding the second equation in \teqref{eq:vakonomic-eqs}. For variations of \(\bR\) with fixed
endpoints, set \(\bY\coloneqq \bigmathring\bR\bR^{-1}\), so that \(\bY(0)=\bY(T)=0\), and
\(\bigmathring\bX=\dot\bY-[\bX,\bY]\) by \teqref{eq:DiscreteLinConstraint}. The first variation of the augmented action 
\eqref{eq:DiscreteReducedAugmentedAction} with respect to $\bR$ becomes
\begin{align*}
    &\quad\left.{d\over d\epsilon}\right|_{\epsilon=0}\int_0^T{\tfrac12}\langle\bX_{t,\epsilon},\bX_{t,\epsilon}\rangle_{\hat\bK} + \langle \alpha \,|\, \theta(\bX_{t,\epsilon}) \rangle \, dt\\
    &=\int_0^T\langle \bX,\bigmathring{\bX}\rangle_{\hat\bK}  + \langle \theta^*\alpha \,|\, \bigmathring\bX \rangle \, dt
    \overset{\eqref{eq:DiscreteLinConstraint}}{=}
    \int_0^T\langle \bX^{\flat} + \theta^*\alpha\,|\,\dot{\bY} - [\bX,\bY]\rangle\, dt\\
    &=\int_0^T \langle -(\tfrac{\partial}{\partial t} + \bad^*_\bX)(\bX^{\flat} + \theta^*\alpha) \,\vert\,\bY\rangle\,dt,
    \label{eq:VariationOfDiscreteAction}
\end{align*}
where the last equality follows from integrating the \(\dot\bY\) term by parts together with the fixed endpoint conditions \(\bY(0)=\bY(T)=0\). Since \(\bY\in\so(F)\) is arbitrary, we obtain stationarity if and only if 
\((\tfrac{\partial}{\partial t}+\bad_{\bX}^*)(\bX^\flat+\theta^*\alpha)=0\) in the dual space $\so(F)^*$.
\end{proof}
Crucially, after making the substitution 
\begin{equation*}
    \bZ = \bX^\flat + \theta^*\alpha,
\end{equation*}
\Cref{thm:DiscreteVakonomicEOM} establishes the \emph{coadjoint equation} on $\so(F)^*$, %
\begin{equation}
\label{eq:coadjoint-Z}
\dot{\bZ} + \bad_{\bX}^* \bZ = 0.
\end{equation}
Note that the difference \(\bZ-\bX^\flat\in \im(\bar\bA)^\circ\) lies in the annihilator of $\im (\bar\bA)$. Consequently, \(\bZ\) and \(\bX^\flat\) represent the same equivalence class in \(\so(F)^*/\im(\bar\bA)^\circ\).  Therefore, it follows from the discussion in \Cref{sec:MusicalIsomorphismsInSubRiemannianSetup} that $\bX$ and $\bZ^{\sharp}$ agree on the image of $\bar\bA$, \ie,
\begin{equation}\label{eq:sharp-operator}
    \bX = \bZ^{\sharp} \in \im(\bar\bA).
\end{equation}
Substituting~\teqref{eq:sharp-operator} into~\teqref{eq:coadjoint-Z} yields a closed vakonomic evolution on \(\so(F)^*\): 
\begin{equation}
\label{eq:Z-vakonomic-final}
\dot{\bZ} + \bad_{\bZ^\sharp}^*\bZ = 0.
\end{equation}

\begin{remark}
     While the definitions of $\bX^\flat$ and $\theta^*\alpha$ depend on the chosen metric extension \(\langle\cdot,\cdot\rangle_{\hat\bK}\), the evolution \teqref{eq:Z-vakonomic-final} for \(\bZ\in\so(F)\) is independent of it. 
     This follows from the definition of the sharp operator \(\sharp = \bar\bA\bK^{-1}\bar\bA^*\) (\Cref{sec:MusicalIsomorphismsInSubRiemannianSetup}) which does not require metric information from outside the subspace $\im(\bar\bA)$, consistent with the vakonomic perspective. 
\end{remark}

We now show that \teqref{eq:Z-vakonomic-final} can be expressed as a matrix Lax equation, yielding a well behaved isospectral evolution.
It is convenient to identify the dual space $\so(F)^*\cong\so(F)$ with the Lie algebra itself through the usual Frobenius pairing.  With this, the following Lemma provides an explicit representation for the coadjoint operator.
\begin{lemma}
\label{lem:CoadjointIsCommutator}
Let \(\so(F)^*\cong \so(F)\) be Riesz identification using the Frobenius inner product, so that \(\langle \bZ \,\vert\, \bY\rangle = \langle \bZ,\bY\rangle_{\rm Frob} \coloneqq \tr(\bZ^\intercal \bY)\) on \(\so(F)\).
Then the coadjoint action \(\bad_{\bX}^*\) associated with \(\bad_{\bX}\coloneqq [\bX,\cdot]\) 
satisfies
\begin{equation*}
\bad_{\bX}^* \bZ = -[\bX,\bZ],
\qquad
\bX,\bZ\in\so(F). 
\end{equation*}
\end{lemma}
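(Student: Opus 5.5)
The plan is to verify the identity directly from the definition of the dual map under the Frobenius identification. By definition, $\bad_{\bX}^*\bZ$ is the unique element of $\so(F)^*\cong\so(F)$ such that
\[
\langle \bad_{\bX}^*\bZ,\bY\rangle_{\rm Frob}=\langle \bZ,\bad_{\bX}\bY\rangle_{\rm Frob}=\tr\big(\bZ^\intercal[\bX,\bY]\big)
\qquad\text{for all }\bY\in\so(F).
\]
It therefore suffices to show that the right-hand side equals $\tr\big((-[\bX,\bZ])^\intercal\bY\big)$ for every $\bY$. Nondegeneracy of the Frobenius pairing on $\so(F)$ then identifies $\bad_{\bX}^*\bZ=-[\bX,\bZ]$. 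Here one must note that $[\bX,\bZ]\in\so(F)$ whenever $\bX,\bZ\in\so(F)$, so the candidate lies in the correct space.

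The computation has three steps. First, expand
\[
\tr(\bZ^\intercal[\bX,\bY])=\tr(\bZ^\intercal\bX\bY)-\tr(\bZ^\intercal\bY\bX).
\]
Second, apply cyclicity of the trace to the second term, writing it as $\tr(\bX\bZ^\intercal\bY)$. This gives
\[
\tr\big((\bZ^\intercal\bX-\bX\bZ^\intercal)\bY\big)=\tr\big([\bZ^\intercal,\bX]\,\bY\big).
\]
Third, use skew-symmetry: $\bZ^\intercal=-\bZ$, so $[\bZ^\intercal,\bX]=-[\bZ,\bX]=[\bX,\bZ]$. Since $[\bX,\bZ]$ is skew, $[\bX,\bZ]=-[\bX,\bZ]^\intercal=(-[\bX,\bZ])^\intercal$. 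Hence the pairing equals $\langle -[\bX,\bZ],\bY\rangle_{\rm Frob}$, as required.

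The only real obstacle is sign bookkeeping, namely keeping track of where the transpose sits in the Frobenius pairing and applying $\bZ^\intercal=-\bZ$ consistently. I would double-check the sign against the ad-invariance of the Killing-type form $\tr(\bA\bB)$ on $\so(F)$. Ad-invariance gives $\langle\bZ,[\bX,\bY]\rangle_{\rm Frob}=-\langle[\bX,\bZ],\bY\rangle_{\rm Frob}$ directly, which confirms that $\bad_{\bX}^*=-\bad_{\bX}$, i.e., that $\bad_{\bX}$ is Frobenius skew-adjoint.
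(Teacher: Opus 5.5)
Your proof is correct and follows essentially the same direct computation as the paper: you pair against an arbitrary test element, expand the commutator, move factors across the Frobenius pairing, and use skew-symmetry and nondegeneracy to identify $\bad_{\bX}^*\bZ=-[\bX,\bZ]$. The only cosmetic difference is in how the factor is moved. The paper transfers $\bX$ via $\langle\bZ,\bX\bW\rangle_{\rm Frob}=\langle\bX^\intercal\bZ,\bW\rangle_{\rm Frob}$ and then uses $\bX^\intercal=-\bX$. You instead use trace cyclicity together with $\bZ^\intercal=-\bZ$ and the skewness of $[\bX,\bZ]$.
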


\begin{proof}
Consider an arbitrary \(\bW\in\so(F)\). Using the definition of the dual map along with $\bX^\intercal=-\bX$, we compute
\begin{align*}
    \langle \bad_{\bX}^*\bZ \,\vert\, \bW\rangle
&=
\langle \bZ \,\vert\, \bad_{\bX}\bW\rangle
=
\langle \bZ, [\bX,\bW]\rangle_{\rm Frob} \\ 
&= 
\langle \bZ, \bX\bW-\bW\bX\rangle_{\rm Frob} = \langle \bX^\intercal\bZ-\bZ\bX^\intercal, \bW\rangle_{\rm Frob} \\
&= -\langle \bX\bZ-\bZ\bX, \bW\rangle_{\rm Frob} = \langle -[\bX,\bZ],\bW\rangle_{\rm Frob} \\ &= \langle -[\bX,\bZ]\mid\bW\rangle.
\end{align*}
It follows immediately that $\bad_{\bX}^*\bZ = -[\bX,\bZ]$, as desired.
\end{proof}

Substituting the result of \lemref{lem:CoadjointIsCommutator} into the evolution equation \teqref{eq:Z-vakonomic-final} yields 
a matrix \emph{Lax equation} on $\so(F)$:
\begin{equation}
	\label{eq:VakonomicLaxEquation}
    \dot{\bZ} = [\bZ^\sharp,\bZ].
\end{equation}

\subsection{Casimirs of motion and Noether charges}
\label{sec:CasimirsAndNoetherCharges}

The vakonomic equation of motion \eqref{eq:Z-vakonomic-final} or \eqref{eq:VakonomicLaxEquation} relates two equivalent sets of conservation laws.  The first describes conservation of the spectrum of \(\bZ\), while the second provides a discrete version of \emph{Kelvin's circulation theorem}, stating that the circulation along every loop transported by the flow is a constant of motion.

\subsubsection{Isospectral Property}
The Lax equation \eqref{eq:VakonomicLaxEquation} is isospectral, \ie, the spectrum of its solutions are constant along in time.  This property is satisfied by any system of the form
\begin{align}\label{eq:GeneralLaxSystem}
    \dot{\bZ}(t)=[\bX(t),\bZ(t)],\quad\overset{\text{(\autoref{lem:CoadjointIsCommutator})}}{\Longleftrightarrow}\quad 
    \dot\bZ(t) = -\bad^*_{\bX(t)}\bZ(t),
\end{align}
where \(\bX\colon [0,T]\to\so(F)\) is an arbitrary path on \(\so(F)\). The solution \(\bZ(t)\) to \eqref{eq:GeneralLaxSystem} is explicitly given by a similarity transformation of \(\bZ(0)\) for $\bX$ independent of $\bZ$:
\begin{align}\label{eq:LaxSolutionAsSimilarityTransform}
    \bZ(t) = \bR(t)\bZ(0)\bR(t)^\intercal,
\end{align}
where \(\bR\colon[0,T]\to \SO(F)\) is the solution to \(\dot\bR(t) = \bX(t)\bR(t)\) with initial condition \(\bR(0) = \id\).
Therefore, the spectrum of \(\bZ\) is preserved over time.

Observe that the conservation of moments of $\bZ(t)$ is both a necessary and sufficient condition for solving the Lax equation \eqref{eq:GeneralLaxSystem}.
This follows since any smooth path \(\bZ(t)\) with conserved moments must also have conserved spectrum, implying that \(\bZ(t)\) is similar to
\(\bZ(0)\) 
(\cf  \eqref{eq:LaxSolutionAsSimilarityTransform}). Thus, \(\bZ(t)\) satisfies \eqref{eq:GeneralLaxSystem} for some \(\bX(t)\).
This equivalence between \eqref{eq:GeneralLaxSystem} and moment conservation can be understood through \emph{Casimir invariants}, which are distinguished phase-space functions constant along coadjoint orbits.
The even moments \[\tr(\bZ^2),\tr(\bZ^4),\ldots, \tr(\bZ^{2\lfloor F/2\rfloor})\] represent a complete set of Casimir invariants on \(\so(F)^*\), with intersections of their level sets equal to the coadjoint orbits defined by \eqref{eq:GeneralLaxSystem}.  That is, the conservation of these Casimirs along a path \(\bZ(t)\) is equivalent to the 
evolution \(\dot\bZ(t) = -\bad_{\bX(t)}^*\bZ(t)\) by coadjoint actions.

\subsubsection{Discrete Kelvin's Circulation Law}

In the continuous setting, Kelvin's circulation law states that the circulation along every material loop is a constant of motion.  The vakonomic evolution equation \eqref{eq:Z-vakonomic-final} (or \eqref{eq:VakonomicLaxEquation}) satisfies a discrete analogue of the circulation law similar to (but stronger than) that in \cite{Pavlov:2011:SPD}.

To state this precisely, recall that 
the path \(\bX\colon[0,T]\to\so(F)\) in \eqref{eq:GeneralLaxSystem} represents the velocity field of a discrete fluid motion if it takes values in the space \(\im(\bar\bA)\subset\so(F)\) of discrete divergence-free fields. 
We define a \emph{discrete loop} to be a discrete divergence-free field \(\bY\in\im(\bar\bA)\subset \so(F)\).
This is analogous to the continuous setting, where a loop is a smooth closed curve, embedded in the fluid domain, that  
can be identified with a closed de Rham current via the limit of a divergence-free vector field.  
A time-dependent discrete loop \(\bY(t)\) is said to be transported by the flow of \(\bX(t)\) if it satisfies the Lax equation \(\dot\bY(t) = [\bX(t),\bY(t)]\).  However, note that \(\im(\bar\bA)\) is not closed under Lie brackets, meaning that 
most loops \(\bY\) flow out of the space \(\im(\bar\bA)\) under this evolution equation.
Therefore, the definition of a discrete loop is extended to include elements of \(\so(F)\), or the Lie-algebraic closure of \(\im(\bar\bA)\).

Given a covector $\bZ\in\so(F)^*$, the \emph{circulation} of $\bZ$ along the discrete loop $\bY\in\so(F)$ is just the evaluation pairing $\langle\bZ|\bY\rangle$. So, a path  \(\bZ\colon[0,T]\to\so(F)^*\)satisfies \emph{Kelvin's circulation law} 
with respect to the flow of \(\bX\colon[0,T]\to\im(\bar\bA)\) 
if the circulation \(\langle\bZ(t)|\bY(t)\rangle = \langle\bZ(0)|\bY(0)\rangle\) is independent of \(t\) for all loops \(\bY\colon[0,T]\to\so(F)\) transported by the flow, \ie, for all loops satisfying the Lax equation
\(\dot\bY(t) = [\bX(t),\bY(t)]\).  This leads to the following result describing how the vakonomic system \eqref{eq:Z-vakonomic-final} obeys this version of Kelvin's circulation theorem.

\begin{theorem}\label{thm:StrongKelvinCirculationTheorem}
    Consider a discrete velocity field \(\bX\colon [0,T]\to\im(\bar\bA)\).  A path \(\bZ\colon[0,T]\to\so(F)^*\) satisfies the Lax equation $\dot\bZ = [\bX,\bZ]$ if and only if it satisfies Kelvin's circulation law with respect to the flow of $\bX$. 
\end{theorem}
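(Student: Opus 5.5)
The plan is to differentiate the circulation $\langle\bZ(t)|\bY(t)\rangle$ in time and use \lemref{lem:CoadjointIsCommutator} to rewrite the bracket term through the coadjoint action. Throughout, $\so(F)^*$ is identified with $\so(F)$ via the Frobenius pairing, as in that lemma. Fix any path $\bY\colon[0,T]\to\so(F)$ with $\dot\bY=[\bX,\bY]$. By the product rule,
\begin{equation*}
\tfrac{d}{dt}\langle\bZ|\bY\rangle = \langle\dot\bZ|\bY\rangle + \langle\bZ|[\bX,\bY]\rangle = \langle\dot\bZ + \bad_{\bX}^*\bZ\,|\,\bY\rangle = \langle\dot\bZ - [\bX,\bZ]\,|\,\bY\rangle.
\end{equation*}
The middle equality uses only the definition of the dual map. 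The last one is \lemref{lem:CoadjointIsCommutator}.

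\emph{Forward direction.} Suppose $\dot\bZ=[\bX,\bZ]$. Then the right-hand side vanishes for every transported loop $\bY$. Hence the circulation is constant in time, which is exactly Kelvin's circulation law.

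\emph{Converse.} This direction needs enough transported loops to separate points. The key observation is that the transport equation $\dot\bY=[\bX,\bY]$ is a linear ODE in $\bY$ with continuous coefficients. So for each $t_0\in[0,T]$ and each $\bW\in\so(F)$ there is a unique transported loop with $\bY(t_0)=\bW$. Concretely, $\bY(t)=\bR(t)\bR(t_0)^\intercal\bW\bR(t_0)\bR(t)^\intercal$, where $\dot\bR=\bX\bR$, as in \eqref{eq:LaxSolutionAsSimilarityTransform}.

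Kelvin's law makes $\langle\bZ|\bY\rangle$ constant, so its derivative at $t_0$ is zero. The displayed identity then gives $\langle\dot\bZ(t_0)-[\bX(t_0),\bZ(t_0)]\,|\,\bW\rangle=0$ for all $\bW\in\so(F)$. The Frobenius pairing is nondegenerate on $\so(F)$, and $[\bX,\bZ]\in\so(F)$, so $\dot\bZ(t_0)=[\bX(t_0),\bZ(t_0)]$. Since $t_0$ was arbitrary, the Lax equation holds on all of $[0,T]$.

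The main obstacle, and the point to state carefully, is the converse's need to use loops in all of $\so(F)$ rather than only in $\im(\bar\bA)$. This is exactly why the paper extended the notion of discrete loop to the Lie-algebraic closure. If loops were restricted to $\im(\bar\bA)$ at the initial time, we would only recover the Lax equation modulo $\im(\bar\bA)^\circ$. I would also note the regularity needed: $\bZ$ differentiable and $\bX$ continuous, so that the ODE solutions exist and the product rule applies.
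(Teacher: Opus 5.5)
Your proof is correct and follows the paper's argument: differentiate \(\langle\bZ|\bY\rangle\) along a transported loop to obtain \(\langle\dot\bZ+\bad^*_{\bX}\bZ\,|\,\bY\rangle\), then apply \lemref{lem:CoadjointIsCommutator}. Your explicit construction of a transported loop through an arbitrary \(\bW\in\so(F)\) at an arbitrary time \(t_0\) fills in the converse direction. The paper states that direction in one line (``vanishes for all \(\bY\)''), without checking that transported loops take every value in \(\so(F)\) at each time.
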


\begin{proof}
    Let $\dot{\bY} = [\bX,\bY]$ be a loop advected by the flow.   Observe that the derivative of the evaluation pairing is
    \begin{align*}
        \tfrac{d}{dt}\langle\bZ\,|\,\bY\rangle &= \langle\dot{\bZ}\,|\,\bY\rangle + \langle\bZ\,|\,\dot{\bY}\rangle \\
        &= \langle\dot{\bZ}\,|\,\bY\rangle + \langle \bZ\,|\,[\bX,\bY]\rangle = \langle \dot{\bZ} + \bad_{\bX}^*\bZ\,|\,\bY\rangle.
    \end{align*}
    This quantity vanishes for all $\bY$ if and only if $\bZ$ satisfies $(i)$, in which case $\bZ$ also satisfies $(ii)$.
\end{proof}

\begin{remark}
    Note that \Cref{thm:StrongKelvinCirculationTheorem} is strictly stronger than the corresponding result in \cite[Theorem 2]{Pavlov:2011:SPD}. There, it is proved that $\langle\bZ|\bY\rangle$ is constant in time for all loops $\bY$ that are weakly advected by the flow, \ie, such that $\langle\bB\,|\,\dot{\bY}\rangle = \langle \bB\,|\,[\bX,\bY]\rangle$ for all $\bB\in\im(\bar\bA)^*$.  This can be interpreted as a weak version of Kelvin's circulation law, with weakness stemming from the LdA formulation that solves only a projected Lax equation.  
\end{remark}

Notice that \Cref{thm:StrongKelvinCirculationTheorem} can also be understood as a Noether theorem associated with the discrete particle-relabeling symmetry, in direct analogy with the continuous Euler equations.  More precisely, the vakonomic action \eqref{eq:DiscreteAction} is invariant under the action of the relabeling group $\SO(F)$ on discrete fluid configurations.  By Noether's theorem, this symmetry yields a conserved quantity that can be identified with the momentum $\bR^{-1}\bZ\bR$.  In this case, it is straightforward to check that the evaluation pairing $\langle\bR^{-1}\bZ\bR|\bR^{-1}\bY\bR\rangle = \langle\bZ|\bY\rangle$ is conserved for any $\bY$ advected by the flow.

\begin{remark}
    The equivalency in \Cref{thm:StrongKelvinCirculationTheorem} implies that the even moments \(\tr(\bZ^2),\ldots,\tr(\bZ^{2\lfloor F/2\rfloor})\) are time-independent.  The converse is almost true: when the even moments are constant in time, the flow velocity \(\bX\) is guaranteed to take values in \(\so(F)\) but not necessarily in \(\im(\bar\bA)\).
\end{remark}

\subsection{Hamiltonian Formulation}
\label{subsec:hamiltonian}

Another noteworthy advantage of the vakonomic equation \eqref{eq:VakonomicLaxEquation} is its admission of a Hamiltonian formulation. 
To see this, recall the modern Hamiltonian formulation of the general Euler--Arnold system, describing geodesic flows 
on a Lie group \(G\) equipped with a right-invariant metric.  Making an appropriate variable transformation, the geodesic equations on $G$ 
can be written as a Hamiltonian flow on the dual space \(\mathfrak{g}^*\) of the Lie algebra \(\mathfrak{g} = T_eG\) with respect to the canonical Lie--Poisson structure: 
\begin{equation}\label{eq:LiePoisson}
    \{F,G\}_{\bZ} = \langle \bZ\,|\,[dF_{\bZ},dG_{\bZ}]\rangle, \quad \bZ\in\mathfrak{g}^*,\,\,F,G\in C^{\infty}(\mathfrak{g}^*).
\end{equation}
This flow, given with some abuse of notation by $\dot{\bZ}=\{\bZ,H\}$ or, equivalently, $\dot{\bZ} + \bad^*_{dH}\bZ = 0$ for a distinguished function $H\in C^{\infty}(\mathfrak{g}^*)$ called the \emph{Hamiltonian},
is understood as the result of Poisson reduction, \ie, the phase-space analogue of Euler-Poincar\'e reduction by symmetry (see, e.g., \cite{Marsden:1997:IMS}).  Moreover, the fact that Hamilton's equations of motion generate coadjoint orbits implies that solutions always preserve the distinguished phase-space Casimir invariants discussed previously.
Concretely, recall that Casimirs are those functions $C\in C^{\infty}(\mathfrak{g}^*)$ that Poisson-commute with all other functions, \ie, $\{C,F\}=\{F,C\}=0$ for all $F\in C^{\infty}(\mathfrak{g}^*)$.  It follows that under Hamiltonian evolution,
\begin{align*}
    \dot{C}_\bZ &= \langle \dot\bZ\,|\,dC_{\bZ}\rangle = -\langle \bad^*_{dH_{\bZ}}\bZ\,|\,dC_{\bZ}\rangle \\
    &= -\langle\bZ \,|\,[dH_{\bZ},dC_{\bZ}]\rangle = \{C,H\}_{\bZ} = 0,
\end{align*}
and any Casimir invariant is preserved along the flow. In particular, these Casimirs satisfy 
\[\bad_{dC_{\bZ}}^*\bZ = 0 \quad\overset{\text{(\autoref{lem:CoadjointIsCommutator})}}{\Longleftrightarrow}\quad [dC_{\bZ},\bZ] = 0,\]
so that their derivatives commute with $\bZ$.  Therefore, the Casimirs of the Lie-Poisson structure are precisely smooth functions of the spectral invariants discussed before.

Specializing this discussion to the current vakonomic system, observe that the coadjoint equation \eqref{eq:coadjoint-Z} coincides with Hamilton's equations for the Hamiltonian
\begin{equation}\label{eq:HamiltonianVakonomic}
    H(\bZ) = \frac{1}{2}\langle\bZ\,|\,\bZ^{\sharp}\rangle, \quad \bZ\in\so(F)^*.
\end{equation}
In particular,  \eqref{eq:HamiltonianVakonomic} is simply the kinetic energy \({1\over 2}\langle\dot\bR,\dot\bR\rangle_{\bK}\) expressed in the momentum variable \(\bZ\in\so(F)^*\).

\begin{remark}
    More geometric intuition for the Hamiltonian \eqref{eq:HamiltonianVakonomic} can be gained by noting its expression as the pullback \(H = h\circ\bar\bA^*\), where 
    \(h\in C^\infty(V_{\div}^*)\) is the energy functional \(h(\bv) = {1\over 2}\bv^\intercal\bK^{-1}\bv\).  In particular, \(H\) is constant along translations in \(\ker(\bar\bA^*) = \im(\bar\bA)^\circ\) and descends to a nondegenerate quadratic form \(h\) under the quotient by \(\ker(\bar\bA^*)\).
\end{remark}

Finally, we note that the equivalence between curves that lift to Hamiltonian flows, curves that are critical points of the action $\cS$ over the space of admissible curves, and curves computed using the vakonomic variational principle by Lagrange multipliers holds quite generally, for all regular curves on arbitrary sub-Riemannian manifolds \citep{Piccione:2001:VAG}.

\subsubsection{Vakonomic as limit of holonomic}
It is worth mentioning that, despite the degeneracy of the right-invariant metric $\bK$, the right-invariance of the distribution \(\cD\) can be used to admit an interpretation of the vakonomic Hamiltonian system as the limit of a holonomic Hamiltonian system (\cf, \cite{Abanov:2025:IDN}).

To see this, extend the metric \(\bK\) on \(\cD\) to a one-parameter family of right-invariant metrics \(\hat\bK^{\varepsilon}\) on \(\SO(F)\) as follows. For each \(\varepsilon>0\), define the metric \(\langle\cdot,\cdot\rangle_{\hat\bK^\varepsilon}\) to agree with \(\langle\cdot,\cdot\rangle_{\bK}\) on the distribution \(\cD\) and satisfy \(\langle\bY\bR,\bY\bR\rangle_{\hat\bK^\varepsilon}\to\infty\) as \(\varepsilon\to 0\) for any \(\bY\bR\notin \cD_\bR\).  Equivalently, it is enough that the $\hat{\bK}^\varepsilon$ and $\bK$ metrics agree on ${\rm im}\,\bar{\bA}$ and \(\langle\bY,\bY\rangle_{\hat\bK^\varepsilon}\to\infty\) for \(\bY\notin\im(\bar\bA)\).  
In the limit \(\varepsilon\to 0\), any path \(\bR\) with finite length is therefore forced to satisfy \(\dot\bR\in\cD_\bR\).

For \(\varepsilon\neq 0\), the (holonomic) Euler--Arnold equation for geodesics on \((\SO(F),\langle\cdot,\cdot\rangle_{\hat\bK^\varepsilon})\) takes the form 
\begin{align}
\label{eq:EulerArnoldWithEpsilon}
    \dot\bZ + \bad_{\bZ^{\sharp_\varepsilon}}^*\bZ = 0, \quad \bZ\in\so(F)^*,
\end{align}
with \(\sharp_{\varepsilon}\colon T_{\bR}^*\SO(F)\to T_{\bR}\SO(F)\) denoting the sharp operator for the extended metric \(\hat\bK^\varepsilon\).
The corresponding group trajectory \(\bR\) is obtained from \(\dot\bR = \bZ^{\sharp_{\varepsilon}}\bR\).
Notice that \eqref{eq:EulerArnoldWithEpsilon} is a Hamiltonian flow on \(\so(F)^*\) with respect to the standard Lie--Poisson bracket \eqref{eq:LiePoisson}
Specifically, it can be written as 
\(\dot\bZ = -{\rm sgrad}\,{H^\varepsilon}({\bZ})\) where the Hamiltonian vector field is given by the symplectic gradient 
\({\rm sgrad}\,{H^\varepsilon}\coloneqq \{H^\varepsilon,\cdot\}\) and the Hamiltonian is
\begin{align}
\label{eq:HamiltonianForEulerArnoldWithEpsilon}
    H^\varepsilon(\bZ) = {1\over 2}\langle \bZ\, |\, \bZ^{\sharp_\varepsilon}\rangle,
\end{align}
in analogy with the nonholonomic Hamiltonian formulation just derived.

An advantage of this Hamiltonian formulation is that the singular limit \(\varepsilon\to 0\) is well behaved.  
Although the metric \(\langle\cdot,\cdot\rangle_{\hat\bK^\varepsilon}\) and associated flat operator $\flat_{\varepsilon}$ diverge in this limit, the sharp operator \(\sharp_\varepsilon\) converges to a well-defined object
\begin{align}
    \sharp_0 = \bar\bA\bK^{-1}\bar\bA^* = \sharp,
\end{align}
which agrees with the vakonomic sharp.
Consequently, the Hamiltonian \eqref{eq:HamiltonianForEulerArnoldWithEpsilon} also converges to the vakonomic version \eqref{eq:HamiltonianVakonomic}, showing that the  vakonomic equation \eqref{eq:Z-vakonomic-final} arises as the singular limit of the (holonomic) Euler--Arnold equation \eqref{eq:EulerArnoldWithEpsilon}.

\section{Comparison to Other Structure Preserving Methods}\label{sec:comparison}

Before discussing how the vakonomic equations of motion \eqref{eq:Z-vakonomic-final} are simulated, it is worth discussing their relationship to other structure-preserving discrete fluid schemes in the current literature.  This section discusses the equations arising via the complementary Lagrange--d'Alembert (LdA) principle in some detail before briefly surveying other closely related work.

\subsection{Comparison to the Lagrange--d'Alembert principle}
\label{sec:ComparisonToLdA}

Recall that the LdA approach to fluid simulation enforces the non-holonomic constraint $\dot\bR\in\cD_{\bR}$ on the configuration variable
by restricting the admissible
virtual displacements \(\bigmathring\bR\in \cD\subset T\SO(F)\) to lie in the constraint distribution.  After reduction, this means that $\bY\coloneqq \bigmathring{\bR}\bR^{-1}\in\im(\bar\bA)$ must lie in the image of the discrete advection operator.

Concretely, let \(\bR:[0,1]\to\SO(F)\) have reduced body velocity \(\bX\coloneqq \dot\bR\bR^{-1}\in\im(\bar{\bA})\).
A variation \(\bigmathring\bR\) with fixed endpoints induces a reduced variation $\bigmathring{\bX}$ satisfying the Lin constraint \eqref{eq:DiscreteLinConstraint} in terms of $\bY= \bigmathring{\bR}\bR^{-1}$. Enforcing that $\bY \in \im(\bar\bA)$ in accordance with the LdA principle, 
the first variation of the reduced kinetic energy computes to 
\begin{align}
    &\quad\left.\frac{d}{d\epsilon}\right|_{\epsilon=0}\int_0^T \frac{1}{2}\lvert\bX_{t,\epsilon}\rvert_{\hat\bK}^2\,dt
    =\int_0^T \langle \bX,\bigmathring\bX\rangle_{\hat\bK}\,dt \nonumber\\
    &=\int_0^T \langle \bX^\flat \,\vert\, \dot{\bY}-[\bX,\bY]\rangle\,dt
    =\int_0^T \langle -(\tfrac{\partial}{\partial t}+\bad_\bX^*)\bX^\flat \,\vert\, \bY\rangle\,dt,
\label{eq:VariationOfDiscreteActionLdA}
\end{align}
where $\flat$ is understood with respect to the extended metric $\hat{\bK}$ from before.  
Since \(\bY\in\im(\bar\bA)\subset\so(F)\) is no longer arbitrary but constrained to a subspace,
\eqref{eq:VariationOfDiscreteActionLdA} implies only that the one-form 
\((\frac{\partial}{\partial t}+\bad_\bX^*)\bX^\flat\) vanishes on \(\im(\bar\bA)\).  Equivalently, there exists a reaction force \(\bF\in\im(\bA)^\circ\) in the annihilator subspace satisfying
\begin{equation}
\label{eq:DiscreteLdAlembertEOM}
\displaystyle
\dot\bX^\flat + \bad_\bX^*\bX^\flat = \bF.
\end{equation}
Notice that this reaction force is a distinguished element of $\im(\bar\bA)^\circ$, specifically chosen to project the coadjoint term $\bad_{\bX}^*\bX^\flat$ back into the dual space $\im(\bar\bA)^*$.  As such, it depends explicitly on the metric extension $\hat{\bK}$ of $\bK$, as does the covector $\bX^\flat$ that 
exists on the entirety of $\so(F)^*$. 
Collapsing this degree of freedom with the globally defined sharp operator $\sharp=\bar\bA\bK^{-1}\bar\bA^*:\so(F)^*/\im(\bar\bA)^\circ\to\im(\bar\bA)$ leads to 
the corresponding LdA evolution for $\bX\in\im(\bar\bA)$:
\begin{equation}
    \label{eq:DiscreteLdAlembertProjected}
    \dot{\bX} + \big(\bad^*_{\bX}\bX^\flat\big)^\sharp = 0.
\end{equation}
Clearly, \eqref{eq:DiscreteLdAlembertProjected} still relies on metric information away from $\im(\bar\bA)$ since $\bX^\flat$ appears explicitly.
Using the matrix representation of the coadjoint action in \Cref{lem:CoadjointIsCommutator}, induced by the Frobenius inner product, these equations also have the alternative expression \eqref{eq:LdA-overview} claimed in \Cref{sec:overview}.  To see this, let $\bZ \in \im(\bar{\bA})$ denote the (Frobenius) Riesz representation of $\bX^\flat$ and denote $\hat{\bK}$-orthogonal projection onto the constraint distribution $\cD$ by $\pi_{\cD}\colon\so(F)\to\im(\bar\bA)$.  Then, the LdA equations become
\begin{equation}
\label{eq:FinalMatrixLagrangeDAlembert}
\displaystyle
\dot{\bZ}=\pi_{\cD}([\bX,\bZ]).
\end{equation}

Note that the vakonomic equations \eqref{eq:Z-vakonomic-final} and the LdA equations \eqref{eq:DiscreteLdAlembertEOM} are genuinely different.  The vakonomic evolution 
evolves a dense dual variable \(\bZ\in\so(F)^*\) in the ambient skew-symmetric matrix space
whose horizontal sharp \(\bX=\bZ^\sharp\in\im(\bar\bA)\) lies in the sparse admissible subspace \(\im(\bar\bA)\). In contrast, the LdA equation evolves only the sparse variable \(\bX^\flat\in\im(\bar\bA)^*\) and repeatedly projects the right-hand side, discarding the components of the commutator that lie outside this admissible subspace.

\begin{remark}\label{rem:FEMtrick}
    Note that a choice of metric extension is required for computing quantities such as 
    $\langle\bA_k, [\bA_l, \bA_m]\rangle_{\hat{\bK}}$ appearing in the LdA equations, since the Lie bracket $[\bA_l,\bA_m]\notin \im(\bar{\bA})$ is not contained where the $\bK$-metric is defined.  Previous work \cite{Pavlov:2011:SPD} sidesteps this issue by arguing for the existence of an extension $\hat{\bK}$ satisfying
    \[ \langle\bA_k, [\bA_l, \bA_m]\rangle_{\hat{\bK}} \approx \int\langle\vec\psi_k, [\vec\psi_l,\vec\psi_m]\rangle\Vol,\]
    where approximation holds to second order in the grid spacing.  
    The right-hand side expression is then easily computable on the space $\cV_{\div}$ without explicitly constructing $\hat{\bK}$. 
\end{remark}

This difference between vakonomic and LdA is closely related to the sub-Riemannian picture discussed previously.  Vakonomic solutions \(\bZ_t\) satisfying \eqref{eq:Z-vakonomic-final} or equivalently \eqref{eq:VakonomicLaxEquation} integrate through the linear system \(\dot\bR = \bZ^\sharp\bR\) into critical paths \(\bR\colon [0,T]\to\SO(F)\) for the constrained action  \eqref{eq:DiscreteAction}.  Crucially, these paths are \emph{sub-Riemannian geodesics}: since the action \eqref{eq:DiscreteAction} integrates the kinetic energy in time, its extremizers are geodesics analogous to extrema of the time-integrated kinetic energy on a Riemannian manifold. 
Measured in the $\bK$-metric, the path \(\bR\) has constant speed, \({d\over dt}|\dot\bR(t)|_{\bK}=0\) and extremizes the total arc length
among all paths tangent to the distribution \(\cD\). 
Moreover, this property of being a sub-Riemannian geodesic depends only on the sub-Riemannian structure \((\SO(F),\cD,\langle\cdot,\cdot\rangle_{\bK})\) and is therefore independent of any particular choice of auxiliary metric extension \(\langle\cdot,\cdot\rangle_{\hat\bK}\).  In contrast, the Lagrange--d'Alembert equations describe geodesics of a projected connection which is not generally metrizable \cite{Fernandez:2008:EDN}.  Therefore, these trajectories cannot minimize path length in any meaningful sense.
\begin{remark}
    Interestingly, the LdA system can also be obtained as an appropriate singular limit of a holonomic system, see e.g., \cite{Abanov:2025:IDN}.  In this case, the kinetic energy is augmented with a Rayleigh dissipation potential $\frac{1}{\alpha}R(\bX)$, so that the resulting Euler--Arnold equations contain an infinitely strong frictional force as $\alpha\to 0$.
\end{remark}
To finish this discussion, it is worth noting more explicitly how the Hamiltonian formulation of the vakonomic equations \eqref{eq:Z-vakonomic-final} breaks down in the LdA case \eqref{eq:DiscreteLdAlembertEOM}.  This can easily be observed by considering the same canonical Lie-Poisson structure and computing the evolution of a Casimir $C\in C^{\infty}(\so(F)^*)$: 
\begin{align*}
    \dot{C}_\bZ &= \langle \dot\bZ\,|\,dC_{\bZ}\rangle = \langle (\bF-\bad^*_{dH_{\bZ}})\bZ\,|\,dC_{\bZ}\rangle \\
    &= \langle\bF\bZ \,|\,dC_{\bZ}\rangle - \{C,H\}_{\bZ} = \langle\bF\bZ \,|\,dC_{\bZ}\rangle \neq 0.
\end{align*}
Therefore, Casimirs are not preserved along the flow of the LdA system due to the reaction force $\bF$, meaning that solutions do not remain on coadjoint orbits and no Hamiltonian formulation is possible.  In fact, considering the natural ``Lie bracket'' $[\bU,\bV]_{\cD} = \pi_{\cD}([\bU,\bV])$ given by the $\hat\bK$-orthogonal projection $\pi_{\cD}:\so(F)\to\im(\bar\bA)$, the induced ``Lie--Poisson bracket'' $\{F,G\}_{\cD} = \langle\bZ\,|\,[dF_{\bZ},dG_{\bZ}]_{\cD}\rangle$ defined on $\im(\bar\bA)^*$ is almost Poisson.  That is, it is both bilinear and skew-symmetric but fails to satisfy the Jacobi identity captured by the vanishing of the Jacobiator,
\[{\rm Jac}(\bU,\bV,\bW) = [\bU,[\bV,\bW]] + [\bV,[\bW,\bU]] + [\bW,[\bU,\bV]], \]
where $[\cdot,\cdot]$ denotes whatever bracket is under consideration.  The key point is that the normal projection $\pi_{\cD}^\perp([\bV,\bW]) \neq 0 $ is nontrivial even along solutions to the LdA equations \eqref{eq:FinalMatrixLagrangeDAlembert}.  Conversely, ${\rm Jac}\equiv 0$ vanishes identically in the space where the vakonomic equations \eqref{eq:Z-vakonomic-final} are defined, retaining the Lie--Poisson character of the continuous system.  More details on the almost-Poisson brackets of nonholonomic mechanics can be found in 
\cite{vanderSchaft:1994:HFN, Koom:1997:HLA, Cantrijn:1999:APS}.

\subsection{Other Related Methods}

Here we record related approaches from the current literature along with their references.

\subsubsection{The vakonomic principle}
The vakonomic (\textbf{va}riational \textbf{k}ind of nonhol\textbf{onomic}) principle was coined by Vladimir Arnold \cite{Arnold:2006:MAC} and first appeared in the work of Valery Kozlov \cite{Kozlov:1983:DSN}. It represents a variational paradigm for treating nonholonomic systems 
through the imposition of constraints at the formulational level, \textit{before} taking variations. 
As mentioned, this is contrasted by the Lagrange--d'Alembert (LdA) model, where the nonholonomic constraints are applied \emph{after} taking unconstrained variations, and the resulting Euler--Lagrange equations are corrected through a virtual force that vanishes along the constraint directions.  This choice is somewhat contentious, and the textbook by Bloch et al.\ \cite{Bloch:2015:NMC} summarizes the long-standing debate over which of these models better describes physical systems observed in reality. 

The distinction between vakonomic and LdA is especially pronounced when the nonholonomic constraints arise from friction---as in a ball rolling without slipping, or the Chaplygin sleigh. In these cases, the equations of motion derived from the (non-variational) LdA principle agree better with physical measurements \cite{Bloch:2015:NMC}. Consequently, the vakonomic principle has largely been regarded as ``unphysical'' and confined to the setting of optimal control, where Hamiltonian structure is paramount.  Conversely, the present work deals with nonholonomic constraints that are inherently unphysical: they
do not originate from friction or any other physical source but
instead arise from the choice of Koopman discretization (\cf, \Cref{rem:nonholonomic}).  
Therefore, it is reasonable to believe that the vakonomic perspective could yield benefits over LdA in the case of simulating incompressible fluids.

\subsubsection{Discretization as a nonholonomic constraint: the Lagrange--d'Alembert tradition.}
Due to its widespread success in modeling physical systems, the LdA principle is now routinely applied to nonholonomic constraints arising numerically as well.
The observation that Koopman discretization imposes a nonholonomic constraint on fluid dynamics was first made in \cite{Pavlov:2011:SPD}. 
There, the LdA principle is used 
within a discrete exterior calculus (DEC) framework, effectively incorporating the lack of Lie-algebraic closure but also limiting the accuracy of the method to first 
order. Remarkably, the earlier work 
\cite{Mullen:2009:EPI} had already used these same equations of motion for a computer-graphics realization of this theory. This alignment is not a coincidence: a (Petrov-)Galerkin discretization of the standard weak-form equations for an incompressible fluid naturally obeys the LdA principle with a particular choice of ambient metric. 
More precisely, enforcing orthogonality of the trial-space residual against a finite-dimensional space of test functions satisfying the nonholonomic constraints enforces the principle of virtual work and generates a force orthogonal to the discretization space.
Building on this correspondence, \cite{Gawlik:2011:GVD} extended this framework to the more general semidirect-product Lie groups, enabling modeling of additional physics such as compressible fluids, while \cite{Natale:2018:VFD} carried the earlier work from DEC to general finite element settings. Many later works build on these ideas, 
yet none exploit the information available in the Lie algebra of skew-symmetric matrices. Because 
information is only required from ``one Lie bracket away,'' 
these works obtain it directly from the FEM construction (\cf, \Cref{rem:FEMtrick}), so that in the end no structure from the Lie algebra is used. 
Conversely, the present work formulates 
equations of motion directly in the Lie algebra
and the resulting Lax equations respect the structure of 
this matrix space and its nonholonomic constraint.

\subsubsection{Two-dimensional vorticity formulations.}
Variational principles are certainly not the only way to derive fluid simulators.  In two dimensions, incompressible fluid dynamics can be formulated in terms of a scalar vorticity function that is advected by the flow. 
Moreover, when finite elements are used, the vorticity is simply rotated by the divergence-free velocity at each time step. 
This program was carried out in \cite{Azencot:2014:FFS}, following the functional-operator 
formalism
of \cite{Azencot:2013:OAT}, which amounts to a mixed finite element method with tailored bases. The main drawback of this approach is that the resulting advection operator fails to be skew-symmetric even for divergence-free velocities, a consequence of choosing discrete function spaces that lack de Rham compatibility.  
Moreover, because the  time stepping employed is not exactly Cayley integration, it would not produce updates remaining in the space of rotation matrices even if the matrix were skew-symmetric. 
Conversely, the present formulation avoids these issues through the use of 
de Rham compatible Whitney bases on triangles and FEEC IGA B-splines on tensor grids, along with (Cayley) Lie group integration in time.

\subsubsection{Structure-preserving approaches.}
Beyond the purely Eulerian methods discussed so far, a number of fluid simulation works aim explicitly at some form of structure preservation. For instance, \cite{Elcott:2007:SCP} combines a vorticity formulation on closed loops with semi-Lagrangian backtracing to advance the fluid solution, unfortunately dissipating a good deal of information during the necessary transfer between the traced readings and the underlying grid.
Later, \cite{Nabizadeh:2022:CF} 
revisited this approach 
from the perspective of 
the velocity covector, making 
the semi-Lagrangian advection 
circulation-preserving. Together with 
back-and-forth error compensation and correction (BFECC)) \cite{Dupont:2003:BFECC}, this approach achieved enough error correction to keep this structure intact---though energy is still dissipated and circulation is ultimately lost. In \cite{Nabizadeh:2024:FIP}, the authors formulated discrete Euler equations that fit naturally into a hybrid grid--particle framework; combined with IGA FEEC B-splines, this yields a high-order structure-preserving scheme that preserves Casimir invariants along with the energy on the grid.
Finally, \cite{Nabizadeh:2025:thesis} showed that this setup is in fact a vakonomic formulation, consistent with the sub-Riemannian theory.

\subsubsection{Matrix hydrodynamics.}
An advantage of the hybrid methods in
\cite{Nabizadeh:2024:FIP,Nabizadeh:2025:thesis}
is that 
sub-Riemannian geodesics 
can be simulated directly on 
the infinite-dimensional Lie algebra 
$\sdiff(M)$
of divergence-free vector fields through a particle discretization. In contrast, Eulerian methods like the present work 
are restricted to the finite-dimensional discrete Lie algebra $\so(n)$ of skew-symmetric $n \times n$ matrices. 
While this setting is necessarily incomplete from the perspective of the continuous Euler equations,
it offers a very concrete and precise way of relating the discrete dynamics to geodesic flow on an appropriate (sub-)Riemannian manifold. 
In fact, The vakonomic discrete Euler equations in Lax form \eqref{eq:VakonomicLaxEquation} are also closely related to the \emph{Euler--Zeitlin} equations~\cite{Zeitlin:1991:FMA}, which have brought about a simulation paradigm known as matrix hydrodynamics (MHD).
Instead of a finite-dimensional approximation to the fluid representation space $\SDiff(M)$, these approaches utilize a finite-dimensional approximation to the Lie algebra $\sdiff(M)$ generated through geometric quantization.  The advantage of this is that no nonholonomic constraint is required: the distribution where discrete fluid velocities live is integrable by construction.  This leads to an Euler--Arnold system with Lie-Poisson structure, improved statistical behavior, and more correct energy cascading over time.  On the other hand, this quantized formulation is currently limited to spheres and tori~\cite{Cifani:2023:EGM,Modin:2024:TDF}, prone to excessive dispersion, and yields only $\mathcal{O}(n^{-1/2})$ convergence with increasing matrix size.  On the flat torus, the quantization breaks the translational symmetry and introduces artifacts; on the sphere, the rotational symmetry is preserved, enabling a more robust program. Consequently, the recent work of \cite{Cifani:2023:EGM,Modin:2024:TDF,Modin:2025:MHD} has focused on $\mathbb{S}^2$, using it to study long-time turbulent cascades with exact Casimir conservation; 3D extensions remain partial and limited to symmetric cases. 
A more detailed description of matrix hydrodynamics can be found in \Cref{app:MHD}.

\subsubsection{Isospectral flows.}
Besides the related approaches to fluid simulation already discussed, it is worth mentioning the extensive literature on the simulation of isospectral flows \cite{Calvo:1996:RKO,Calvo:1997:NSI,Modin:2020:LPM}. The MHD equations of motion, along with the vakonomic equations \eqref{eq:VakonomicLaxEquation} presented here,
are isospectral as a consequence of their Lax formulation, which implies motion on a single coadjoint orbit along with conservation of Casimirs as discussed in \Cref{subsec:hamiltonian}.
Generic time integrators destroy this property: \cite{Calvo:1996:RKO,Calvo:1997:NSI} therefore develop Runge--Kutta-type methods that preserve isospectrality, while \cite{Modin:2020:LPM} provides Lie--Poisson integrators for isospectral flows that, in addition to keeping the flow on a single coadjoint orbit, preserve the symplectic structure on that orbit. 
The Cayley-based update discussed in \Cref{sec:time-integration} plays an analogous role in the present setting, keeping the flow on the orthogonal group and preserving the Lax structure of the equations of motion.

\section{A Low-Rank Parameterization}
\label{sec:LowRankParameterization}

While the vakonomic equation \eqref{eq:Z-vakonomic-final} enjoys many benefits, simulating $\bZ$ requires evolving $O(F^2)$ degrees of freedom and can rapidly become untenable in realistic scenarios.  Thankfully, there is a useful low-rank ansatz that solves the Lax equation with a much smaller computational cost of $O(mF)$, where $m\ll F$ represents a number of dynamically evolving vector variables.
First, note the following result.

\begin{theorem}
\label{thm:lowrankansatz}
    Let $m\in\mathbb{Z}$ and let $\{\blambda_a,\bmu_a\}_{a=1}^m$ be $2m$ discrete fields in $\mathbb{R}^F$ satisfying the evolution equation
    \begin{equation}
    \label{eq:VakonomicLowRank}
        \dot{\blambda}_a = \bA_\bu \blambda_a, \quad
        \dot{\bmu}_a = \bA_\bu \bmu_a, \quad \bA_\bu = \left(\frac12 \sum_{a=1}^m\blambda_a\bmu_a^\intercal-\bmu_a\blambda_a^\intercal\right)^\sharp.
    \end{equation}
    Then, the matrix 
    \begin{equation}
        \bZ = \frac{1}{2} \sum_{a=1}^m \blambda_a\bmu_a^\intercal-\bmu_a\blambda_a^\intercal \eqqcolon \sum_{a=1}^m\bS_a,
        \label{eq:Z-Clebsch}
    \end{equation}
    satisfies the vakonomic Lax equation $\dot\bZ = [\bZ^\sharp,\bZ]$.

\end{theorem}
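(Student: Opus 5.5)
The argument is a direct differentiation of the Clebsch ansatz \eqref{eq:Z-Clebsch} using the product rule. Write $\bX\coloneqq\bA_\bu=\bZ^\sharp$, which by definition of $\sharp=\bar\bA\bK^{-1}\bar\bA^*$ lies in $\im(\bar\bA)\subset\so(F)$. In particular $\bX^\intercal=-\bX$, and this skew-symmetry is the only structural property of $\bX$ the computation needs. Note that $\bX$ depends on time through $\bZ$. This causes no difficulty: the evolution \eqref{eq:VakonomicLowRank} is linear in the $\blambda_a,\bmu_a$ at each instant, with the coefficient matrix determined by the current $\bZ$, so we may differentiate pointwise in time.

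First compute the time derivative of each rank-two block. We have
\[
2\dot\bS_a=\dot\blambda_a\bmu_a^\intercal+\blambda_a\dot\bmu_a^\intercal-\dot\bmu_a\blambda_a^\intercal-\bmu_a\dot\blambda_a^\intercal .
\]
Substitute $\dot\blambda_a=\bX\blambda_a$ and $\dot\bmu_a=\bX\bmu_a$. For the transposed factors use $(\bX\bmu_a)^\intercal=\bmu_a^\intercal\bX^\intercal=-\bmu_a^\intercal\bX$, and likewise for $\blambda_a$. This gives
\[
2\dot\bS_a=\bX(\blambda_a\bmu_a^\intercal-\bmu_a\blambda_a^\intercal)-(\blambda_a\bmu_a^\intercal-\bmu_a\blambda_a^\intercal)\bX,
\]
that is, $\dot\bS_a=[\bX,\bS_a]$. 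Summing over $a$ and using linearity of the commutator yields $\dot\bZ=[\bX,\bZ]=[\bZ^\sharp,\bZ]$, which is exactly \eqref{eq:VakonomicLaxEquation}.

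There is no real obstacle here. The one point to check carefully is the sign produced by the transpose: it is skew-symmetry of $\bZ^\sharp$, and not merely the fact that it lies in $\im(\bar\bA)$, that turns the right multiplication into $-\bS_a\bX$.

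It is also worth remarking why the argument closes. The matrix $\bX$ driving the fields is recomputed from the same $\bZ$ that the fields reconstruct. So the $2m$ vector equations form a closed system, and no separate projection onto $\im(\bar\bA)$ is needed beyond the one already built into $\sharp$.
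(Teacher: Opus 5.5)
Your proof is correct and is essentially the paper's argument. The paper first shows $\dot\bB_a=[\bA_\bu,\bB_a]$ for $\bB_a=\blambda_a\bmu_a^\intercal$ and then antisymmetrizes to get $\dot\bS_a=[\bA_\bu,\bS_a]$, whereas you expand all four product-rule terms at once; in both versions the only ingredients are the skew-symmetry of $\bA_\bu=\bZ^\sharp$ and linearity of the commutator when summing over $a$.
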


\begin{proof}
    Let $\bB_a = \blambda_a\bmu_a^\intercal,$ so that $\bS_a = \frac{1}{2} \left(\bB_a - \bB_a^\intercal\right)$. Observe that $\bB_a$ evolves according to
    \begin{align*}
        \dot{\bB}_a &= \dot{\blambda}_a\bmu_a^\intercal + \blambda_a\dot{\bmu}_a^\intercal = \bA_\bu\blambda_a\bmu_a^\intercal + \blambda_a\bmu_a^\intercal\bA_{\bu}^\intercal \\
        &= \bA_\bu\bB_a-\bB_a\bA_\bu = [\bA_\bu,\bB_a].
    \end{align*}
    It follows that the corresponding evolution of $\bS_a$ is 
    \begin{align*}
        \dot{\bS}_a &= \tfrac{1}{2} (\dot{\bB}_a-\dot{\bB}_a^\intercal) = \tfrac{1}{2} ([\bA_\bu,\bB_a]-[\bA_{\bu},\bB_a]^\intercal) \\
        &= [\bA_\bu,\frac{1}{2}(\bB_a-\bB_a^\intercal)] = [\bA_\bu,\bS_a].
    \end{align*}
    Therefore, the evolution of $\bZ = \sum_a\bS_a$ is 
    \[ \dot\bZ = [\bA_{\bu}, \bZ] = [\bZ^\sharp, \bZ], \]
    as desired. 
\end{proof}

\Cref{thm:lowrankansatz} enables rapid simulation of the isospectral flow \eqref{eq:VakonomicLaxEquation} by restricting the dynamical variable $\bZ$ to a space of low-rank matrices.  In fact, it will soon be shown that there is no need to construct the dense matrix $\bA_{\bu}$ at all (\cf\,\eqref{eq:cheap-momentum-map}), and therefore it is not necessary to perform any operations which scale with $O(F^2)$.  This is a consequence of interpreting $\{(\blambda_a,\bmu_a)\}$ as a momentum map representation of the dynamics, where these pairs are identified with (discrete) \emph{Clebsch variables} \cite{Clebsch:1859:IHG,Marsden:1983:COV}, \ie, coefficients in the half-density basis for $\tilde{\cF}$. 
In the usual continuous setting where functions are considered rather than half-densities, this Clebsch representation is $\vec u = \sum_{a=1}^m\vec\mu_a\nabla\lambda_a$ where each $\lambda_a$ can be identified with a component of the inverse flow map $\phi^{-1}\in\Diff(M)$ and each $\vec\mu_a$ is associated with a corresponding component of $\vec u_0\circ\phi^{-1}$. This is equivalent to the pullback of 1-forms $\eta_t = (\phi^{-1})^* \eta_0$, i.e., $\vec u_t = \nabla\phi^{-\top} (\vec u_0 \circ \phi^{-1})$ in three dimensional Euclidean space, where $\eta$ is the velocity covector \cite{Nabizadeh:2022:CF}.  Besides providing a scalable representation of the desired Euler dynamics, this also provides a simple initialization strategy that will be discussed in \Cref{subsec:resetting}.

\begin{remark}
    Since $\bA_\bu$ is a skew-symmetric operator, it preserves the norm of each $\blambda_i$ and $\bmu_i$.  Indeed, for any function
    $\bff\in\RR^{F}$ that evolves as $\dot\bff=\bA_\bu\bff$, we have
    $$\frac{d}{dt} \| \bff\|^2 = \llangle \dot{\bff},\bff\rrangle + \llangle \bff,\dot{\bff}\rrangle =  \bff^\intercal(\bA_\bu^\intercal + \bA_\bu)\bff = 0.$$
\end{remark}

The momentum map representation \eqref{eq:Z-Clebsch} in terms of Clebsch variables provides a natural choice of $m$, and 
also enables an efficient computation of its spectral invariants, \ie, Casimirs.  Since it is known that $d$-dimensional fluids require at most $(d-1)$ Clebsch components \cite{Yoshida:2009:CPB}, we choose $m=d$ where $d=2,3$ is the dimension of the ambient space where the simulation takes place, providing enough flexibility to generate the realistic-looking fluid trajectories seen in \cref{sec:experiments}. 
For the computation of Casimirs, consider the block matrices
\[
\bU :=
[\blambda_1\,\bmu_1\!\cdots\!\blambda_m\,\bmu_m]
\in\RR^{F\times 2m},
\quad
\bJ_{2m}:=\tfrac12\mathrm{blkdiag}(\bJ_2,\dotsc,\bJ_2),
\]
where $ \bJ_2=\begin{psmallmatrix}0&1\\-1&0\end{psmallmatrix}$ is the canonical symplectic matrix of dimension two. It follows that $\bZ$ has the alternative expression
\[
\bZ = \bU\,\bJ_{2m}\,\bU^\intercal,
\]
and, in particular, $\rank(\bZ)\le 2m$. Spectral invariants of $\bZ$ can be efficiently computed by observing that
 $\operatorname{tr}(\bZ^{2k}) = \operatorname{tr}\big((\bU\bJ_{2m}\bU^\intercal)^{2k}\big) = \operatorname{tr}\big((\bJ_{2m}\bU^\intercal\bU)^{2k}\big)$ where $\bJ_{2m}\bU^\intercal\bU \in \mathbb{R}^{2m\times 2m}$ is small.  In fact, observe that each of $\blambda_i,\bmu_i$ evolves according to the orthogonal operator $\bR = {\rm exp}(t\bA_{\bu})\in SO(F)$, and therefore all entries $[\bU^\intercal\bU]_{ij} = \blambda_i\cdot\bmu_j$ of the ``covariance matrix'' $\bU^\intercal\bU$ are constant along solutions to \eqref{eq:VakonomicLowRank}.  In particular, this leads to Casimirs of the form 
\begin{equation}\label{eq:CasimirForm}
     C_{2k}(\bZ) = \mathrm{tr}(\bZ^{2k}) = \tr(\bB^{2k}),
 \end{equation}
 where $\bB = \bJ_{2m}\bU^\intercal\bU\in\mathbb{R}^{2m\times 2m}$ is defined as the block matrix
 \begin{equation}
    \bB = [\bB_{ij}]_{i,j=1}^m, \quad \bB_{ij} = \tfrac 12 \begin{pmatrix}\blambda_i\cdot\bmu_j & \bmu_i\cdot\bmu_j \\ -\blambda_i\cdot\blambda_j & -\blambda_i\cdot\bmu_j \end{pmatrix}.
    \label{eq:blkdiagBwithLambdaAndMu}
\end{equation}

\subsection{Symplectic space properties}
As mentioned, it is useful to view the Clebsch parameterization and \Cref{thm:lowrankansatz} as a momentum map representation of the dynamics induced by canonical Hamiltonian evolution on a discrete symplectic vector space.   
To see this, 
consider the continuous setting and the infinite-dimensional vector space $\cP := \cH\cD(M) \times \cH\cD(M)$ whose elements are pairs $(\lambda,\mu)$ of smooth half density fields on $M$.  Each pair thus generates a single component of the representation $\vec u = \sum_a\nabla\lambda_a\mu_a$.  
The canonical Liouville $1$-form on $\cP$ is defined by
\begin{equation*}
    \big\llangle \vartheta({\lambda},{\mu}) \,\big|\, (\bigmathring{\lambda},\bigmathring{\mu})\big\rrangle = \int_M \mu\,\bigmathring{\lambda}
\end{equation*}
for any tangent vector $(\bigmathring{\lambda},\bigmathring{\mu})\in T_{(\lambda,\mu)}\cP\cong\cP$.
Using the standard identity for the exterior derivative of a $1$-form,
\begin{equation*}
    \langle d\alpha\,|\,X,Y\rangle = X\langle\alpha\,|\,Y\rangle - Y\langle\alpha\,|\,X\rangle - \langle \alpha\,|\,[X,Y]\rangle
\end{equation*}
together with the fact that $\cP$ is a vector space and hence there are no nontrivial Lie brackets, 
the symplectic $2$-form $\sigma = d\vartheta$ is given by:
\begin{equation*}
    \big\llangle \sigma\,\big|\,(\dot{\lambda},\dot{\mu}),(\bigmathring{\lambda},\bigmathring{\mu})\big\rrangle
    =
    \int_M \big(\dot{\mu}\,\bigmathring{\lambda} - \bigmathring{\mu}\,\dot{\lambda}\big),
\end{equation*}
for tangent vectors $(\dot\lambda,\dot\mu)$ and $(\bigmathring{\lambda},\bigmathring{\mu})$.
It is straightforward to verify that $\sigma$ is closed, non-degenerate, skew-symmetric, and bilinear. Therefore, $\sigma$ defines a canonical symplectic structure on $\cP$, hence a canonical symplectic structure on the product $\mathcal{P}^m$ through the direct sum $\sigma\oplus ... \oplus \sigma$.

To bring this to the discrete setting, consider the projection of $\sigma$ onto the subspace $\tilde{\cF}\subset\HD(M)$, which has the following representation in terms of coefficients in $P \cong \RR^F\times \RR^F$:
\begin{equation*}
    \big\llangle \sigma\,\big|\,(\dot{\lambda},\dot{\mu}),(\bigmathring{\lambda},\bigmathring{\mu})\big\rrangle
    =
    (\dot{\blambda},\dot{\bmu})^\intercal \bJ(\bigmathring{\blambda},\bigmathring{\bmu}), \quad \bJ = \begin{bmatrix} 0 & \bI \\ -\bI & 0\end{bmatrix}.
\end{equation*}
The symplectic matrix $\bJ$ is invariant under orthogonal transformations.
In particular, if $\bR\in\SO(F)$ satisfies $\bR^\intercal \bR = \bI,$ then the induced left action on the symplectic space 
$$
    (\blambda,\bmu)\mapsto(\bR\blambda,\bR\bmu),
$$
preserves the discrete symplectic structure on $P$.  Observe that the infinitesimal generator of this action is given by the vector field 
\begin{equation*}
    \xi_{\bA}(\blambda,\bmu) = (\bA\blambda,\bA\bmu), \quad \bA\in\so(F).
\end{equation*} 
Conversion between this auxiliary symplectic structure and the primary one on $\so(F)$ is facilitated by the \textit{momentum map} $J:P\to\so(F)^*\cong\so(F)$, which is defined (up to constants) through the equality
\[\mathrm{d}\langle J(\blambda,\bmu)\,|\bA\rangle = \sigma\big(\xi_\bA(\blambda,\bmu),\cdot\big) = (\blambda,\bmu)^\intercal\bJ(\bA,\bA),\]
where $(\bA,\bA)(\blambda,\bmu) = (\bA\blambda,\bA\bmu) = \xi_{\bA}(\blambda,\bmu)$ and the final equality uses that the block matrices $\bJ$ and $(\bA,\bA)^\intercal=-(\bA,\bA)$ commute.  Since $\bJ(\bA,\bA)$ is a symmetric matrix, it follows that the action of the momentum map has a simple representation in terms of a quadratic form.

\begin{proposition}
The momentum map $J\colon P\to\so(F)^*\cong\so(F)$ has the equivalent representation
\begin{equation}
    \label{eq:momentum-map-def}
    \langle J(\blambda,\bmu)\,|\, \bA \rangle
    =
    \tfrac12(\blambda,\bmu)^\intercal\bJ\, \xi_{\bA}(\blambda,\bmu),
    \quad
    \forall \bA\in\so(F).
\end{equation}
In particular, $J(\blambda,\bmu) = \tfrac12(\blambda\bmu^\intercal - \bmu\blambda^\intercal)$.
\end{proposition}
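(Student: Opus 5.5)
The plan is to integrate the defining relation $d\langle J\,|\,\bA\rangle = \sigma(\xi_\bA,\cdot)$ directly. Because it is linear in the phase point, the momentum map component is a homogeneous quadratic form. Afterwards, the Frobenius identification of \Cref{lem:CoadjointIsCommutator} converts that quadratic form into a skew-symmetric matrix.

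\emph{Step 1.} Fix $\bA\in\so(F)$ and write $\bx=(\blambda,\bmu)\in P$ and $\bL=\mathrm{blkdiag}(\bA,\bA)$, so that $\xi_\bA(\bx)=\bL\bx$. The defining relation says that the differential of $\langle J(\bx)\,|\,\bA\rangle$ is the linear form $\by\mapsto \sigma(\bL\bx,\by)=\bx^\intercal\bS\by$, up to the sign convention for $\sigma$. Here $\bS=\pm\bL^\intercal\bJ$.

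Since $\bL^\intercal=-\bL$ and $\bL\bJ=\bJ\bL$ (both statements hold blockwise), we get $\bS^\intercal=\bJ^\intercal\bL=-\bJ\bL=-\bL\bJ=\bL^\intercal\bJ=\bS$. So $\bS$ is symmetric. Any function whose gradient is $\bS\bx$ equals $\tfrac12\bx^\intercal\bS\bx$ plus a constant. Choosing the constant to be zero, so that $J(0)=0$, fixes the ambiguity mentioned in the definition. This gives \eqref{eq:momentum-map-def}, namely $\langle J(\bx)\,|\,\bA\rangle=\tfrac12\bx^\intercal\bJ\,\xi_\bA(\bx)$, with the paper's orientation of $\bJ$.

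\emph{Step 2.} Expand the quadratic form. We have $\bJ\,\xi_\bA(\bx)=(\bA\bmu,-\bA\blambda)$, hence
\[
\tfrac12\bx^\intercal\bJ\,\xi_\bA(\bx)=\tfrac12\big(\blambda^\intercal\bA\bmu-\bmu^\intercal\bA\blambda\big)=\blambda^\intercal\bA\bmu,
\]
where the last equality uses skew-symmetry of $\bA$. Writing $\blambda^\intercal\bA\bmu=\tr(\bmu\blambda^\intercal\bA)$ and projecting onto $\so(F)$ gives
\[
\blambda^\intercal\bA\bmu=\tfrac12\tr\big((\bmu\blambda^\intercal-\blambda\bmu^\intercal)\bA\big)=\big\langle \tfrac12(\blambda\bmu^\intercal-\bmu\blambda^\intercal),\bA\big\rangle_{\rm Frob}.
\]
Since $\bA\in\so(F)$ is arbitrary and the pairing is nondegenerate on $\so(F)$, this identifies $J(\blambda,\bmu)=\tfrac12(\blambda\bmu^\intercal-\bmu\blambda^\intercal)$. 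This is consistent with the Clebsch matrices $\bS_a$ in \Cref{thm:lowrankansatz}.

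\emph{Main obstacle.} The only delicate point is bookkeeping of signs and factors of $2$. These come from three places: the orientation convention for $\sigma$ versus $\bJ$, the factor $\tfrac12$ produced by integrating a quadratic form, and the transpose in the Frobenius pairing $\tr(\bZ^\intercal\bA)$. I would check the final formula against equivariance, $J(\bR\blambda,\bR\bmu)=\bR J\bR^\intercal$. I would also check it on a rank-two example against $\bS_a$, choosing the sign of $\sigma$ so that the result agrees with \eqref{eq:Z-Clebsch}.
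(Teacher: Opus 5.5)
Your proposal is correct and follows essentially the same route as the paper, which likewise uses the symmetry of $\bJ\,\mathrm{blkdiag}(\bA,\bA)$ to check that $\tfrac12(\blambda,\bmu)^\intercal\bJ\,\xi_{\bA}(\blambda,\bmu)$ has the required differential (you integrate where the paper simply differentiates), and then reads off $J(\blambda,\bmu)=\tfrac12(\blambda\bmu^\intercal-\bmu\blambda^\intercal)$ from the same Frobenius-pairing computation. Your caution about signs is warranted: the paper's integral formula for $\sigma$ and its coordinate form via $\bJ$ differ by a sign, and the stated formula for $J$ is the one consistent with the integral convention.
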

\begin{proof}
It is easy to check that differentiating \eqref{eq:momentum-map-def} satisfies the defining relationship for the momentum map.  To compute an explicit representation for $J$, observe that 
\begin{align*}
    2\langle J(\blambda,\bmu), \bA\rangle_{\rm Frob} &= \langle (\blambda,\bmu), \bJ(\bA\blambda,\bA\bmu)\rangle = -\langle\bJ(\blambda,\bmu),(\bA\blambda,\bA\bmu)\rangle \\
    &= \langle(-\bmu,\blambda),(\bA\blambda,\bA\bmu)\rangle = \langle \blambda\bmu^\intercal-\bmu\blambda^\intercal, \bA\rangle_{\rm Frob},
\end{align*}
which yields that $J(\blambda,\bmu) = \tfrac12(\blambda\bmu^\intercal - \bmu\blambda^\intercal)$.
\end{proof}

Recall that the discrete representation in \Cref{thm:lowrankansatz} has $m$ components.  Therefore, the momentum map with respect to the direct sum symplectic structure on $P^m$ is given (with some abuse of notation) by 
\[J(\blambda,\bmu) = \sum_{a=1}^m J(\blambda_a,\bmu_a) = \frac12\sum_{a=1}^m\blambda_a\bmu_a^\intercal-\bmu_a\blambda_a^\intercal.\]
It is also necessary to compute the adjoint of this momentum map with respect to the mapping $\bu \mapsto\bar\bA_{\bu}$ between the coefficient space $V_{\div}$ and its image $\im(\bar\bA)\subset \so(F)$ in the ambient Lie algebra,
\begin{equation*}
\begin{split}
    2\langle J(\blambda,\bmu)\,|\,\bar{\bA}_{\bu}\rangle &= \sum_{a=1}^m\sum_{k=1}^{V}u_k\langle\bA_k, \blambda_a\bmu_a^\intercal-\bmu_a\blambda_a^\intercal\rangle \\
    &= \sum_{a,k}u_k\big(\blambda_a^\intercal\bA_k\bmu_a - \bmu_a^\intercal\bA_k\blambda_a\big) \\
    &= \big\langle \sum_{a}\blambda_a^\intercal\bA(\cdot)\bmu_a - \bmu_a^\intercal\bA(\cdot)\blambda_a, \bu\big\rangle \\
    &= 2\langle \bar{\bA}^*J(\blambda,\bmu)\,|\,\bu\rangle.
\end{split}
\end{equation*}
It follows that the adjoint is given by 
\begin{equation}\label{eq:cheap-momentum-map}
    \bar{\bA}^*J(\blambda,\bmu) = \tfrac12\sum_{a=1}^m\blambda_a^\intercal\bA(\cdot)\bmu_a-\bmu_a^\intercal\bA(\cdot)\blambda_a = \sum_{a=1}^m\blambda_a^\intercal\bA(\cdot)\bmu_a,
\end{equation}
where the final equality used that $\bA_\bu$ is skew-symmetric for all $\bu\in V_{\div}$. Therefore, the action of the momentum map $J$ on the Lie algebra element $\bar{\bA}_\bu$, along with its adjoint operation, can be computed through combinations of matrix-vector products without requiring expensive matrix-matrix operations.  For notational convenience, the following sections will refer to the following reconstruction routine that computes the element of $V_{\div}$ associated to the Clebsch variables $\{(\blambda_a,\bmu_a)\}$: 
\begin{equation}\label{eq:VEL}
\begin{split}
    \textsc{Vel}(\blambda,\bmu) &= J(\blambda,\bmu)^\sharp = \bK^{-1}\bar\bA^*J(\blambda,\bmu) \\
    &= \sum_{a=1}^m \bK^{-1}(\blambda_a^\intercal\bA(\cdot)\bmu_a).
\end{split}
\end{equation}
Here, as before, $\bA(\cdot)$ is used to indicate the (linear) map $\bu \mapsto \bA_\bu$.

\section{Time integration}\label{sec:time-integration}

The vakonomic equations of motion \eqref{eq:VakonomicLaxEquation} describe the time-continuous motion of an ideal fluid approximately represented on a space of discrete half-densities. To form a tractable computational model useful for numerical simulation, it remains to discretize in time so that the delicate geometric structure of the vakonomic Lax equation is preserved.  In particular, we must ensure that the temporal discretization remains Hamiltonian.  Under the low-rank parameterization in \cref{sec:LowRankParameterization}, this will translate to rigid-body dynamics in $F$ dimensions, where a structure-preserving integrator is given by a mild extension of the method developed for $F=3$ in \cite{Engo:2001:NILP}.

Methods that preserve the structure of coadjoint orbits belong to the class of Lie group integrators (\cf, \cite{Iserles:2000:LGM, Celledoni:2014:LGI}) and produce discrete trajectories guaranteed to respect this structure.
This work considers a particular class known as Runge--Kutta--Munthe-Kaas (RKMK) \cite{Munthe:1999:HOR} methods that evolve Lax systems such as \eqref{eq:VakonomicLaxEquation} via coadjoint actions and admit arbitrarily high-order integration schemes.  Specifically, we apply the Lie-trapezoidal rule from Eng\o{} \& Faltinsen \cite{Engo:2000:CGI, Engo:2001:NILP}.

More precisely, the approximate solution \(\bZ^{(n)}\)  at \(t = n\Deltait t\) is updated according to the conjugation \begin{align}
\label{eq:TimeIntegration}
    \bZ^{(n+1)} = \bR^{(n)} \bZ^{(n)} \bR^{(n)\intercal},\quad \bR^{(n)} = \textstyle\widetilde{\exp}\left(\Deltait t (\frac{\bZ^{(n)} + \bZ^{(n+1)}}{2})^\sharp\right)
\end{align}
where \(\widetilde{\exp}\colon \so(F)\to \SO(F)\) is either the matrix exponential or a (computationally cheaper) analytic function that approximates it. Here, a useful approximation is provided by the Cayley map \(\Cay\colon\so(F)\to\SO(F)\):
\begin{align}
    \textstyle\Cay\left({1\over 2}\bX\right) = \textstyle \left(\bI - {1\over 2}\bX\right)^{-1}\left(\bI + {1\over 2}\bX\right),
\end{align}
which is a second-order approximation to $\exp(\bX)$.
The assumption that the algebra-to-group map \(\widetilde{\exp}\) is analytic ensures that the vector field \(\bX\) commutes with its pseudo-exponential \(\widetilde{\exp}(\bX)\).
In fact, the formula \eqref{eq:TimeIntegration} is enough to establish the following  important property of Lie-trapezoidal time integration. 
\begin{theorem}\label{thm:TimeIntegration}
    The map \eqref{eq:TimeIntegration} from \(\bZ^{(n)}\) to \(\bZ^{(n+1)}\) preserves the coadjoint orbit of $\bZ^{(n)}$ and the Casimirs of the Lie-Poisson bracket \eqref{eq:LiePoisson}. Moreover, the Hamiltonian 
    \[H^{(n)} = H(\bZ^{(n)}) = {1\over 2}\big\langle\bZ^{(n)}|\bZ^{(n)\sharp}\big\rangle,\] is conserved: \(H^{(n+1)} = H^{(n)}\).
\end{theorem}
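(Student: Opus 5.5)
The plan is to prove both claims using only the conjugation structure of the update \eqref{eq:TimeIntegration}. Write $\bR\coloneqq\bR^{(n)}$ and $\bW\coloneqq\tfrac12\big(\bZ^{(n)}+\bZ^{(n+1)}\big)$, so that $\bR=\widetilde{\exp}(\Deltait t\,\bW^\sharp)$. The case of the Cayley map is included, since $\Cay(\tfrac12\bX)$ is a rational analytic function of $\bX$.

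\textbf{Orbit and Casimirs.} First I check that $\bR\in\SO(F)$. For the matrix exponential this is standard. For the Cayley map, skew-symmetry of $\bX=\Deltait t\,\bW^\sharp\in\im(\bar\bA)\subset\so(F)$ gives
\[
\Cay(\tfrac12\bX)^\intercal=(\bI-\tfrac12\bX)(\bI+\tfrac12\bX)^{-1},
\]
and this is the inverse of $\Cay(\tfrac12\bX)$ because the factors commute. The matrix $\bI-\tfrac12\bX$ is invertible since a real skew matrix has purely imaginary spectrum. The determinant is $1$ by connectedness to $\bI$.

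Then $\bZ^{(n+1)}=\bR\bZ^{(n)}\bR^\intercal$ is a coadjoint action of $\SO(F)$ on $\so(F)^*\cong\so(F)$ under the Frobenius identification of \lemref{lem:CoadjointIsCommutator}. Hence $\bZ^{(n+1)}$ lies on the orbit of $\bZ^{(n)}$. The Casimirs are smooth functions of $\tr(\bZ^{2k})$ by \Cref{subsec:hamiltonian}, and these traces are invariant under orthogonal similarity. This settles the first claim. Note that it holds for any implicit solution of the scheme.

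\textbf{Energy.} I use the quadratic structure of $H(\bZ)=\tfrac12\langle\bZ\,|\,\bZ^\sharp\rangle$. The operator $\sharp=\bar\bA\bK^{-1}\bar\bA^*$ is self-adjoint with respect to the Frobenius pairing, so
\[
H^{(n+1)}-H^{(n)}=\big\langle \bZ^{(n+1)}-\bZ^{(n)}\,\big|\,\bW^\sharp\big\rangle .
\]
Now use that $\bX=\Deltait t\,\bW^\sharp$ commutes with $\bR=\widetilde{\exp}(\bX)$, since $\bR$ is an analytic function of $\bX$. Together with orthogonality and cyclicity of the trace this gives
\[
\langle \bR\bZ^{(n)}\bR^\intercal\,|\,\bX\rangle=\tr\big(\bZ^{(n)\intercal}\bR^\intercal\bX\bR\big)=\tr\big(\bZ^{(n)\intercal}\bX\big)=\langle\bZ^{(n)}\,|\,\bX\rangle .
\]
Hence $\langle\bZ^{(n+1)}-\bZ^{(n)}\,|\,\bX\rangle=0$. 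Dividing by $\Deltait t$ yields $H^{(n+1)}=H^{(n)}$.

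\textbf{Main obstacle.} The delicate point is the energy identity. It needs two facts.
\begin{itemize}
\item The sharp used inside $\widetilde{\exp}$ is evaluated at the midpoint $\bW$. This is exactly what turns the difference of quadratics into a pairing against $\bW^\sharp$, and an explicit (non-midpoint) rule would not give this.
\item $\sharp$ is symmetric. This follows because $\bK^{-1}$ is symmetric and $\bar\bA^*$ is the Frobenius adjoint of $\bar\bA$.
\end{itemize}
I would verify the symmetry of $\sharp$ explicitly and also note that it is well defined on cosets. I would also remark that the argument presumes a solution $\bZ^{(n+1)}$ of the implicit equation exists, for instance via a fixed-point iteration for small $\Deltait t$; the theorem concerns properties of that solution, not its existence.
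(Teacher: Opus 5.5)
Your proof is correct and follows the paper's argument. The update is an orthogonal conjugation, which gives orbit and Casimir preservation. The energy identity comes from the symmetry of $\sharp$, which turns $H^{(n+1)}-H^{(n)}$ into a pairing against the midpoint $\bW^\sharp$, together with $\bR^\intercal\bX\bR=\bX$ because $\bR=\widetilde{\exp}(\bX)$ commutes with $\bX$. Your extra checks (orthogonality of the Cayley map, self-adjointness of $\sharp$, and the remark that the result concerns any solution of the implicit equation) make explicit steps the paper leaves implicit.
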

\begin{proof}
    This map takes the form \(\bZ\mapsto \bR\bZ\bR^\intercal\) for some \(\bR\in\SO(F)\).  Therefore, it is a coadjoint action and also  isospectral, implying the first two properties.  To establish Hamiltonian preservation, consider the midpoint element \(\bX = {1\over 2}(\bZ^{(n)}+\bZ^{(n+1)})^\sharp\in\so(F)\).
    Then, \(\bR^{(n)}= \widetilde{\exp}(\Deltait t\bX) \in \SO(F)\) commutes with \(\bX\), and therefore 
    \begin{align*}
        H^{(n+1)} - H^{(n)} &= {1\over 2}\big\langle \bZ^{(n+1)}-\bZ^{(n)}\mid(\bZ^{(n+1)}+\bZ^{(n)})^\sharp\big\rangle \\
        &= \big\langle \bR^{(n)}\bZ^{(n)}\bR^{(n)\intercal} - \bZ^{(n)} \mid \bX\big\rangle \\
        &= \big\langle \bZ^{(n)}\mid\bR^{(n)\intercal}\bX\bR^{(n)} - \bX\big\rangle = 0,
    \end{align*}
    yielding the conclusion.
\end{proof}
Despite its geometric utility, directly implementing \eqref{eq:TimeIntegration} in the present setting is not practical, as it requires assembling the dense matrix $\bZ$ at every time step.  Thankfully, the momentum map representation from \cref{sec:LowRankParameterization} removes this requirement, enabling the equivalent Lie-trapezoidal integration scheme for the variables \(\blambda_a^{(n)},\bmu_a^{(n)} \in \mathbb{R}^F\) at time \(t = n\Deltait t\) in terms of the mapping \eqref{eq:VEL}, 
\begin{align}
\label{eq:ClebschTimeIntegration}
\begin{aligned}
     &\blambda_a^{(n+1)} = \bR^{(n)}\blambda_a^{(n)},\quad 
    \bmu_a^{(n+1)} = \bR^{(n)}\bmu_a^{(n)},\\
    &\bR^{(n)} = 
    \textstyle
    \widetilde{\exp}\left({\Deltait t\over 2}\bar{\bA}\left(\textsc{Vel}(\blambda^{(n)},\bmu^{(n)})+\textsc{Vel}(\blambda^{(n+1)},\bmu^{(n+1)})\right)\right).
\end{aligned}
\end{align}
Of course, there is an equivalent guarantee on the desired structural properties of \eqref{eq:ClebschTimeIntegration} as a Corollary to \Cref{thm:TimeIntegration}.
\begin{corollary}
    The map \eqref{eq:ClebschTimeIntegration} from $\bZ^{(n)}$ to $\bZ^{(n+1)}$ preserves the Hamiltonian $H^{(n)}$ along with the Casimirs \eqref{eq:CasimirForm}.
\end{corollary}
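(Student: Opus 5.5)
The corollary follows from \Cref{thm:TimeIntegration} once we show that the Clebsch update \eqref{eq:ClebschTimeIntegration} is the Lie-trapezoidal update \eqref{eq:TimeIntegration} written in Clebsch variables. Set $\bZ^{(n)} \coloneqq J(\blambda^{(n)},\bmu^{(n)}) = \tfrac12\sum_a \blambda_a^{(n)}\bmu_a^{(n)\intercal}-\bmu_a^{(n)}\blambda_a^{(n)\intercal}$, as in \eqref{eq:Z-Clebsch}. The plan is to verify the two ingredients of \eqref{eq:TimeIntegration}, the conjugation relation and the midpoint generator, and then invoke the theorem.

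First, the conjugation. Because $(\blambda_a,\bmu_a)\mapsto(\bR\blambda_a,\bR\bmu_a)$ is linear in each factor, we have
\[
\bR\blambda_a\bmu_a^\intercal\bR^\intercal - \bR\bmu_a\blambda_a^\intercal\bR^\intercal = \bR\bigl(\blambda_a\bmu_a^\intercal-\bmu_a\blambda_a^\intercal\bigr)\bR^\intercal .
\]
Summing over $a$ shows that $J$ is $\SO(F)$-equivariant, $J(\bR\blambda,\bR\bmu)=\bR J(\blambda,\bmu)\bR^\intercal$. Hence \eqref{eq:ClebschTimeIntegration} implies $\bZ^{(n+1)}=\bR^{(n)}\bZ^{(n)}\bR^{(n)\intercal}$.

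Second, the generator. By \eqref{eq:VEL}, $\bar\bA(\textsc{Vel}(\blambda,\bmu)) = \bar\bA\bK^{-1}\bar\bA^*J(\blambda,\bmu) = J(\blambda,\bmu)^\sharp$. The sharp operator $\sharp=\bar\bA\bK^{-1}\bar\bA^*$ is linear, so
\[
\tfrac{\Deltait t}{2}\,\bar\bA\bigl(\textsc{Vel}^{(n)}+\textsc{Vel}^{(n+1)}\bigr) = \Deltait t\,\Bigl(\tfrac{\bZ^{(n)}+\bZ^{(n+1)}}{2}\Bigr)^\sharp .
\]
This is exactly the argument of $\widetilde{\exp}$ in \eqref{eq:TimeIntegration}. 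The same linearity shows that the Clebsch equations and \eqref{eq:TimeIntegration} are the same implicit equation for $\bR^{(n)}$: any solution $(\blambda^{(n+1)},\bmu^{(n+1)},\bR^{(n)})$ of \eqref{eq:ClebschTimeIntegration} yields a pair $(\bZ^{(n+1)},\bR^{(n)})$ solving \eqref{eq:TimeIntegration}. This identification is the step needing the most care. I expect it to be routine, because we only need the forward implication from the Clebsch scheme to the $\bZ$ scheme, not existence or uniqueness of the implicit solve.

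With both facts in hand, \Cref{thm:TimeIntegration} gives $H(\bZ^{(n+1)})=H(\bZ^{(n)})$ and conservation of the Casimirs $\tr(\bZ^{2k})$. For the Casimirs in the form \eqref{eq:CasimirForm} there is also a direct check, which is a useful consistency test. Since $\bR^{(n)}$ is orthogonal, $\bU^{(n+1)}=\bR^{(n)}\bU^{(n)}$ gives $\bU^{(n+1)\intercal}\bU^{(n+1)}=\bU^{(n)\intercal}\bU^{(n)}$. Therefore $\bB=\bJ_{2m}\bU^\intercal\bU$ is unchanged, and so is every $\tr(\bB^{2k})$.

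Finally, we need $H^{(n)}$ to be well defined as a function of the Clebsch variables. This holds because $H(\bZ)=\tfrac12\langle\bZ\,|\,\bZ^\sharp\rangle$ and, using \eqref{eq:cheap-momentum-map}, it equals $\tfrac12\,\textsc{Vel}^\intercal\bK\,\textsc{Vel}$. It therefore depends on $(\blambda,\bmu)$ only through $J$.
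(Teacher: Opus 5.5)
Your proposal is correct and follows the route the paper intends: the paper gives no written proof and presents the result as an immediate consequence of \Cref{thm:TimeIntegration}, and your two checks (the equivariance $J(\bR\blambda,\bR\bmu)=\bR\,J(\blambda,\bmu)\,\bR^\intercal$ and the linearity of $\sharp$, which turns the Clebsch midpoint generator into the one in \eqref{eq:TimeIntegration}) are exactly what that reduction needs. Your direct Casimir check, via the invariance of $\bU^\intercal\bU$ under the orthogonal update, is the same observation the paper makes in \Cref{sec:LowRankParameterization}.
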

With this, \eqref{eq:ClebschTimeIntegration} provides an efficient Lie group 
integration scheme for the low-rank vakonomic equations \eqref{eq:VakonomicLowRank} implemented in the following section.

\begin{remark}\label{rmk:TimeIntegrationPoissonAutomorphism}
    In fact, the map  \(\bZ\mapsto \bR\bZ\bR^\intercal = \Ad_{\bR^\intercal}^*\bZ\) is a Poisson automorphism.  That is, besides preserving the coadjoint orbits of the Lie--Poisson structure \eqref{eq:LiePoisson}, its restriction to each of these orbits is a symplectomorphism. To see this, recall that Hamiltonian flows are Poisson automorphisms, and express  \(\bR = \exp(T\bX)\) for some \(\bX\in\so(F)\) and \(T\in\RR\).  Then, \(\bZ\mapsto\bR\bZ\bR^{\intercal}\) is the flow map from \(t=0\) to \(t=T\) corresponding to the linear Hamiltonian \(H_{\bX}(\bZ) = \langle\bX|\bZ\rangle\).
\end{remark}
\begin{remark}
    \Cref{thm:TimeIntegration} and \Cref{rmk:TimeIntegrationPoissonAutomorphism} appear to suggest that the time integrator \eqref{eq:TimeIntegration} is both energy-preserving and symplectic.  However, this is unlikely: the well known result of Zhong and Marsden \cite{Zhong:1988:LPI} states that an energy-preserving and symplectic integrator must generate exact solutions when the Hamiltonian system is maximally reduced (i.e., no other functions Poisson-commute with the energy). This is because a symplectic integrator is a one-parameter (\(\Deltait t\)) family of symplectic maps and therefore locally a Hamiltonian flow; energy-preservation and maximal reduction imply that the Hamiltonian of the integrator coincides with the energy of the system, forcing exactness in the integration.
    \Cref{thm:TimeIntegration} does not contradict this result
    since \(\bZ\mapsto \bR\bZ\bR^{\intercal}\) is only symplectic for a fixed \(\bR\in\SO(F)\) independent of $\bZ$:  the overall update map $\bZ^{(n)} \mapsto \bZ^{(n + 1)}$ in \eqref{eq:TimeIntegration} is \textit{not} a symplectomorphism  because the choice of $\bR^{(n)}$ also depends on $\bZ^{(n)}$.  Accordingly, the vakonomic integrator \eqref{eq:ClebschTimeIntegration} used here is energy-preserving but not symplectic.
\end{remark}

\begin{figure}[h]
    \centering
    \includegraphics[width=\linewidth]{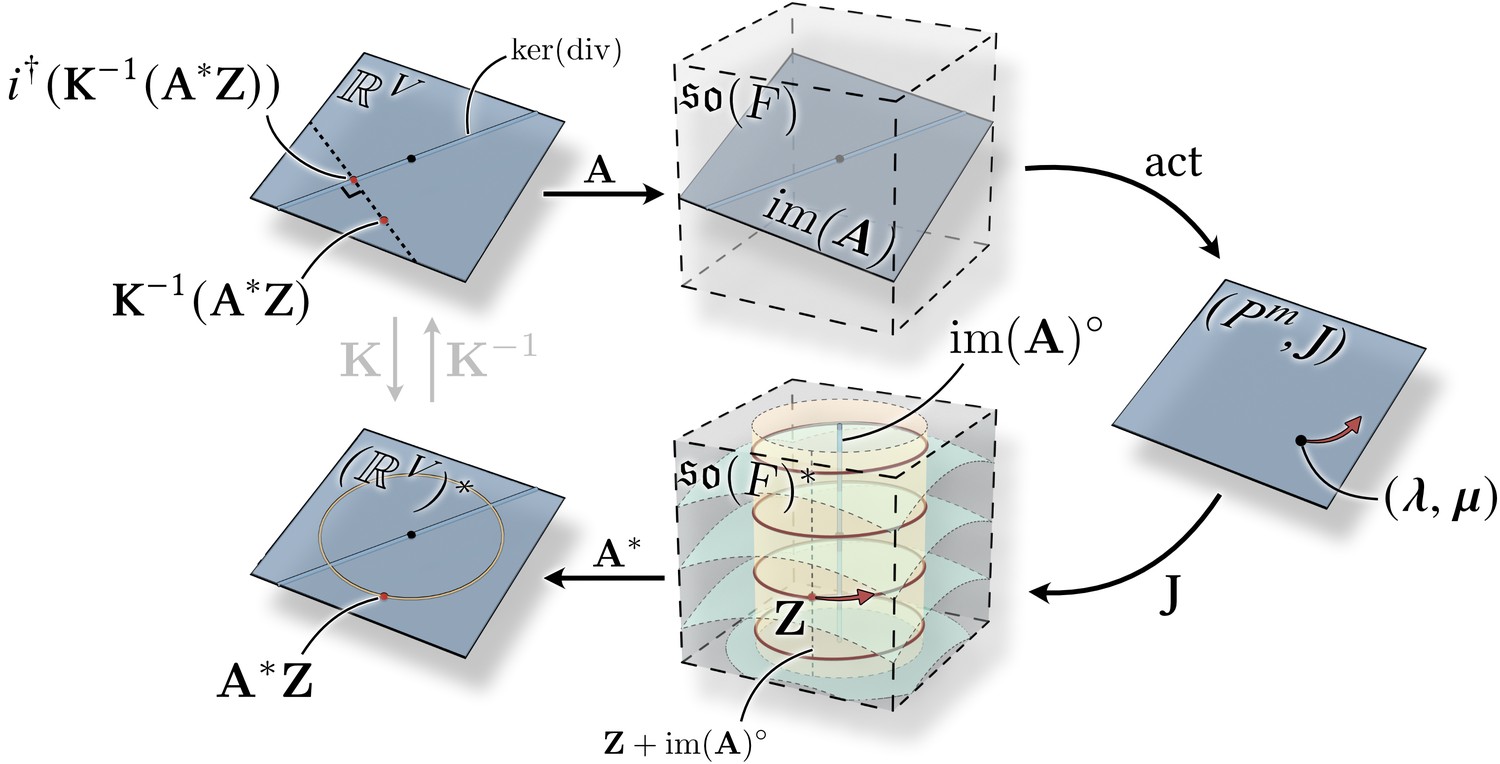}
    \caption{%
    The pipeline of our method, Vakonomic Fluids. The auxiliary symplectic space $(P^m,\mathbf{J})$ %
    is identified with the dual Lie algebra $\so(F)^*$, on which the equations of motion take an identical form. There, the dynamics is a Lie--Poisson system evolving along the intersection of a pulled-back energy level set by $\bA^*$ in the discrete velocity space $(\RR^V)^*$ with the coadjoint orbits induced by the matrix Lie group structure. Consequently, the symplectic variables evolve under the action ${\rm act}(\bA) = \xi_{\bA}$ of matrices in the subspace $\im(\bA)$. 
    }
    \label{fig:SimLoop}
\end{figure}

\section{Algorithm and Implementation}\label{sec:algorithm}

This section provides a detailed description of the vakonomic fluid simulation (\secref{sec:algorithm}), including a discussion of the end-to-end simulation pipeline and critical implementation details.  Precisely, the proposed algorithm evolves the Clebsch variables $(\blambda,\bmu)$ representing the discrete incompressible  velocity $\bu\in V_{\div}$ using Eqn.~\eqref{eq:ClebschTimeIntegration}.  This ensures that the discrete incompressible Euler equations are solved within the discrete Koopman framework outlined in \Cref{tab:Notation} and in a way that is consistent with the continuous theory in \Cref{sec:background}.  In particular, the solutions to \algref{alg:sim_loop_no_reset} retain their interpretation as geodesics on a (sub-)Riemannian manifold of discrete diffeomorphisms.

\subsection{Algorithm}

The general structure of each timestep is summarized in \Cref{fig:SimLoop}.  Each time update starts from a set of Clebsch variables \((\blambda^{(n)},\bmu^{(n)})\in\RR^{F}\times\mathbb{R}^F\), and solves a fixed point problem, that iterates the following two steps: first, the Clebsch variables are used to recover a velocity $\bu^{(n)} = \textsc{Vel}\big(\blambda^{(n)},\bmu^{(n)}\big)$, and then, that velocity is used to advect the Clebsch variables using the implicit Lie-trapezoidal integrator, resulting in $(\blambda^{(n+1)},\bmu^{(n+1)})$. Altogether, each time step is described in \algref{alg:step}.

Implementation separates the recovery of velocity into two steps,  \(\textsc{Vel} = \textsc{Project}\circ\textsc{Reconstruct}\), where $\textsc{Reconstruct}$ corresponds to the momentum map adjoint $\bA^* J : P \to (\RR^V)^*$, and $\textsc{Project}$ is the familiar pressure projection operation that extracts the divergence-free component of a vector field.  Because we choose an $H(\div)$-conforming finite element space for $\cV$, we are able to use the discrete Hodge decomposition machinery of Finite Element Exterior Calculus (FEEC) \citep{Arnold:2006:FEEC, Arnold:2010:FEEC}, and perform a streamfunction solve to recover the divergence-free component of $\bu$ (see \secref{sec:FEECandVdiv}). 

Altogether, our algorithm only requires implementations of \textsc{Advect}, \textsc{Reconstruct}, and \textsc{Project}. The precise implementation of these routines only requires a choice of two function spaces $\cF$ and $\cV$. In the next section, we describe our implementations with two such choices: Whitney forms on triangle meshes, and B-Splines from isogeometric analysis. 

\vspace{1pc}

\begin{algorithm}
    \caption{Vakonomic Fluids loop}
    \begin{algorithmic}[1]
        \State Initialize $\blambda^{(0)}$ and $\bmu^{(0)}$
        \For{$n \in 1, 2, \dots$}
            \State $(\blambda, \bmu)^{(n)} \gets \textsc{Step}\left((\blambda, \bmu)^{(n-1)}\right)$
        \EndFor
    \end{algorithmic}
    \label{alg:sim_loop_no_reset}
\end{algorithm}

\begin{algorithm}
    \caption{$\textsc{Step} (\blambda^{(n)}, \bmu^{(n)})$: Step of implicit trapezoidal integrator.
    }
    \begin{algorithmic}[1]
        \Require Discrete Clebsch variables $(\blambda, \bmu)^{(n)} \in P^m$
        \State $\bu^{(n)} = \bu^\star \gets \textsc{Project} (\textsc{Reconstruct} (\blambda, \bmu)^{(n)})$
        \Repeat \Comment{Fixed point iteration on $\bu^\star$}
            \State $(\blambda, \bmu)^{(n + 1)} \gets \textsc{Advect}\left(\bu^\star, (\blambda, \bmu)^{(n)}\right)$
            \State $ \hat{\bu} \gets \textsc{Reconstruct}\left((\blambda, \bmu)^{(n + 1)}\right) $
            \State $\bu^{(n + 1)} \gets \textsc{Project}(\hat{\bu}) $
            \State $\bu^\star \gets \frac12 (\bu^{(n)} + \bu^{(n + 1)}) $
        \Until $\bu^\star$ not converged 
        \Ensure $(\blambda, \bmu)^{(n + 1)}$
    \end{algorithmic}
    \label{alg:step}
\end{algorithm}

\begin{algorithm}
\caption{\textsc{Reconstruct}$(\blambda,\bmu)^{(n)}$}
\begin{algorithmic}[1]
    \Require Discrete Clebsch variables $(\blambda, \bmu)^{(n)} \in P^m$
    \For{$k \in 1, \dots, V$}
        \State $\hat{\bu}_k \gets \displaystyle \sum_{a=0}^m\blambda_a^\intercal\bA_k\bmu_a$
    \EndFor
    \Ensure $\hat{\bu}$
\end{algorithmic}
\label{alg:reconstruct}
\end{algorithm}
\begin{algorithm}
\caption{\textsc{Advect}$(\bu, \bs)$ }
\begin{algorithmic}[1]
\Require Velocity $\bu \in \RR^V$, quantities to advect $\bs \in (\RR^F)^s$
  \State  $\bA_\bu \gets \textrm{Eqn.}~\eqref{eq:discrete-adv-operator}$ %
  \For{$a \in 1, ..., s$}
  \State  $\overline{\bs_a} \gets \widetilde{\exp}(\Delta t \bA_\bu) \bs_a$ 
  \EndFor
  \Ensure $\overline{\bs}$
\end{algorithmic}
\label{alg:advect-exp}
\end{algorithm}

\begin{algorithm}
\caption{\textsc{Project}$(\hat{\bu})$ --- Leray projection via streamform--vorticity solve}
\label{alg:project-stream}
\begin{algorithmic}[1]
\Require Dual velocity $\hat{\bu} \in \RR^V$
  \State Solve $\ \mathbf{curl}^{\!\top}\bK\,\mathbf{curl}\,\bpsi
        = \mathbf{curl}^{\!\top}\hat\bu$
  \State $\bu \gets \mathbf{curl}\,\bpsi$
  \If{$b_{n-1}(M) > 0$}
    \Comment{if $M$ is not simply connected}
    \State $\bu \gets \bu + \Call{HarmonicProj}{\hat\bu}$
  \EndIf
  \Ensure $\bu$
\end{algorithmic}
\end{algorithm}

\subsection{Implementation Details}\label{subsec:implementation}
In this section we discuss two FEEC-based implementations based on the algorithms in \secref{sec:algorithm}. 
The first employs B-spline bases suitable for high-order simulations on grids, while the second uses first-order Whitney forms applicable to arbitrary triangle meshes. Both implementations provide a \emph{mimetic} discretization where the discrete differential operators reproduce the identities of the continuous de Rham complex exactly, so that the resulting velocity field is pointwise (\ie, strongly)
divergence free. We will see how this allows for an efficient computational algorithm based on function representations of the Clebsch variables driving our simulations. 

\subsubsection{FEEC and $V_{\div}$}
\label{sec:FEECandVdiv}

The FEEC construction, developed by Arnold, Falk, and Winther \cite{Arnold:2006:FEEC, Arnold:2010:FEEC, Arnold:2018:FEEC}, provides a structure-preserving framework for discretizing partial differential equations posed on differential forms. The central object is the discrete de Rham complex that is related to its continuous counterpart by a commuting diagram of bounded projection operators:
\[
\begin{tikzcd}[column sep=large, row sep=large]
\Lambda^0 \arrow[r, "\mathrm{grad}"] \arrow[d, "\Pi^0"'] 
    & \Lambda^1 \arrow[r, "\mathrm{curl}"] \arrow[d, "\Pi^1"'] 
    & \Lambda^2 \arrow[r, "\mathrm{div}"] \arrow[d, "\Pi^2"'] 
    & \Lambda^3 \arrow[d, "\Pi^3"'] \\
\Lambda^0_h \arrow[r, "\mathrm{grad}"'] 
    & \Lambda^1_h \arrow[r, "\mathrm{curl}"'] 
    & \Lambda^2_h \arrow[r, "\mathrm{div}"'] 
    & \Lambda^3_h
\end{tikzcd}
\]
That is, each square commutes: $\Pi^{k+1} \circ \mathrm{d}_k = \mathbf{d}_k \circ \Pi^k$, where $\mathrm{d}$ denotes the relevant exterior derivative ($\mathrm{grad}$, $\mathrm{curl}$, or $\mathrm{div}$) and $\mathbf{d}$ denotes its discrete version. This commuting-diagram property guarantees that key identities of vector calculus, such as $\mathrm{curl}\,\mathrm{grad} = 0$ and $\mathrm{div}\,\mathrm{curl} = 0$, hold exactly at the discrete level.  In the present case, the primary role of this compatibility is to ensure that the space $\cV = \Lambda_h^2$ has a discrete Hodge decomposition, and the pressure projection is guaranteed to produce a discrete vector $\bu\in V_{\div}$ that is pointwise divergence-free.

More precisely, denote by $\cI_{d-1}\colon \mathbb{R}^V \to \cV$ the interpolation operator defined by $\cI_{d-1}\bu = *\vec u^\flat$, which maps discrete $(d-1)$-forms to their continuous counterparts in terms of $*$, the continuous Hodge star operator.
It follows that the coefficient space $V_{\div}$ has the alternative expression
\[
    V_{\div} = \{\bu \in \mathbb{R}^V \mid \div(\cI_{d-1}\bu) = 0\}.
\]
Often, mixed finite element discretizations only enforce the divergence-free constraint on $\bu$ weakly, but not strongly (i.e. pointwise). The FEEC machinery guarantees divergence-freeness holds strongly; it can be restated in the language of \emph{mimetic interpolations}. Similarly, there exists a commuting property $\cI_{k+1}\mathbf{d}_k = \mathrm{d}_k \cI_k$ between the interpolation operators and the discrete/continuous exterior derivatives for each degree $k$. Letting $\mathbf{div} = \mathbf{d}_{d-1}$ denote the discrete divergence operator acting on the graph induced by the connectivity of the mesh or grid defining $M$ yields the fully discrete description of $V_{\div}$ given by
\[
    V_{\div} = \{\bu \in \mathbb{R}^V \mid \mathbf{div}(\bu) = 0\}.
\]
In practice, this constraint is efficiently realized through a Hodge decomposition which reconstructs the divergence-free part of the velocities. This is detailed in \secref{sec:discreteHodgeDecomposition}. Before that, we next detail the setup of our two instantiations of the FEEC machinery that allow us to concretely choose the space chosen computationally to represent spaces $\cF$ and $\cV$.

\begin{remark}
Numerical methods for ideal fluid simulation that do not employ compatible discretizations, such as those not based on FEEC, do not satisfy the above construction. In particular, the matrix
$\bC_{\bu}=\sum_{k=1}^{V}u_k\bC_k$,
which governs the advection of functions, is generally not skew-symmetric. Indeed, $\bC_{\bu}$ is skew-symmetric only when $\bu\in V_{\div}$, whereas the discrete incompressibility condition $\mathbf{div}(\bu)=0$ provided by the divergence operation in a non-compatible discretization does not generally imply $\bu\in V_{\div}$.
One example is Azencot et al.\ \cite{Azencot:2014:FFS}. Following the construction presented there yields infinitesimal operators that do not advect functions orthogonally with respect to the pulled-back $L^2$ inner product, i.e., the inner product induced by the mass matrix $\bM$.
It is worth emphasizing, however, that the advection operators $\bA_{\bu}$ acting on half-densities remain skew-symmetric regardless of the choice of finite element or interpolation basis. Consequently, the Lax equations in our formulation can, in principle, still be simulated using non-compatible bases. The drawback is computational rather than theoretical: doing so requires repeated evaluation of the matrix square root $\bM^{1/2}$, which is generally prohibitively expensive. We leave the development of efficient algorithms for this setting as future work.
\end{remark}

\subsubsection{B-spline and Whitney implementation details.}
We consider two instances of the FEEC framework: B-splines on tensor-product grids and Whitney forms on simplicial (triangle) meshes. We begin with the B-spline construction, following the isogeometric analysis (IGA) literature \cite{Hughes:2005:IGA,Cottrell:2009:IGA}, which employs the spline bases of computer-aided geometric design as the trial and test functions in a Galerkin discretization. This provides $C^{p-1}$ continuity across elements, rather than the $C^0$ continuity of standard nodal finite elements, leading to improved accuracy per degree of freedom. This construction is compatible with the FEEC framework \cite{Buffa:2010:IGA,Buffa:2011:IGD,Buffa:2016:IGA}; throughout this manuscript, we refer to it as \emph{IGA--FEEC}.
Following \cite{Nabizadeh:2024:FIP}, we adopt this discretization for both the representation space $\cF$ carrying $(\blambda,\bmu)$ and the velocity space $\cV$ carrying $\bu$. Writing $S_p$ for the univariate B-spline space of degree $p$ on a given knot vector and $S_{p-1} = \tfrac{d}{dx}S_p$, the two-dimensional spaces are
\begin{align*}
    \cF &= S_p \otimes S_p \\
    \cV &= (S_p \otimes S_{p-1}) \times (S_{p-1} \otimes S_p),
\end{align*}
\ie, $\cF$ is the $0$-form space $\Lambda^0_h$ and $\cV$ the flux space $\Lambda^{n-1}_h$, each component carrying degree $p$ in its own coordinate and $p-1$ in the others; the three-dimensional spaces follow the same pattern. Note that reducing the degree in exactly the differentiated direction makes $\mathbf{div}\circ\mathbf{curl} = 0$ hold exactly. We refer the reader to \cite{Nabizadeh:2024:FIP} for the remaining operators and further details. We additionally note that in this implementation, we choose to use the Cayley transform for advection, $\widetilde{\exp} = \mathrm{Cay}$ (see \secref{sec:time-integration}). 

Whitney forms provide an analogous construction on triangle meshes, giving a first-order FEEC discretization closely tied to discrete exterior calculus (DEC) \cite{Hirani:2003:DEC,Crane:2013:DEC}. Suppose the domain $M$ is a 2-dimensional surfaces, discretized by a triangle mesh $\sfM = (\sfP, \sfE, \sfF)$,
\begin{align*}
    \cF &= \spanset \{ \phi_\sfp   : \sfp \in \sfP \} \\ 
    \cV &= \spanset \{ \vec\psi_{\sfe} = (\phi_\sfp * d \phi_\sfq - \phi_\sfq * d \phi_\sfp)^{\sharp}: (\sfp, \sfq) = \sfe \in \sfE \}, 
\end{align*}
contain Whitney 0-forms and (the $L^2$-sharps of) Whitney flux $1$-forms, respectively, and $*$ the Hodge star operator on 1-forms (or equivalently, the 90-degree rotation operator on the corresponding vector fields). The components of $\bC_k$ can be readily computed in this basis by evaluating the functions $\phi, \psi$ on each triangle, which can be used to construct $\bA_k$ for $\textsc{Advect}$ and $\textsc{Reconstruct}$. For \textsc{Advect}, we opt to use a direct computation of matrix exponential as implemented in SciPy \citep{AlMohy:2011:Exp, Scipy}. The operator $\textsc{Project}$ is implemented using the streamfunction solve detailed previously.

Only in the Whitney form implementation, we also row-sum the mass matrix $\bM$, matching the standard diagonal barycentric Hodge star $\star_0$ from DEC. Using the diagonal operator does not change the conservation properties of the system (so long as Casimirs are also measured by the row-summed $\bM$), and in practice makes little difference to the simulation outputs, so we choose to adopt it for efficiency. This row-sum is not performed in the IGA--FEEC case.

Finally, these implementations store only function space coefficients $\bar{\blambda} := \bM^{-\tfrac12} \blambda$ and $\bar{\bmu} := \bM^{-\tfrac12} \bmu$ rather than half-density coefficients written in the algorithms above. This 
avoids costly computations involving the dense matrices $\bM^{\pm\tfrac12}$. Indeed, the routine \textsc{Reconstruct} requires the contraction $$\blambda^\intercal \bA_k \bmu_k = \big(\bM^{\frac12} \bar{\blambda}\big)^\intercal \Big(\bM^{-\tfrac12} \bC_k \bM^{-\tfrac12}\Big) \bM^{\tfrac12} \bar{\bmu}_k = \bar{\blambda}_k^\intercal \bC_k \bar{\bmu}_k,$$ which is algebraically equivalent. Similarly, \textsc{Advect} on function coefficients simplifies to  $\bM^{-\tfrac12} \bA_\bu \bM^{\tfrac12} \bar{\blambda} = \bM^{-1} \bC_\bu \blambda$. Thus, factorizing mass matrices is never required. 

\subsubsection{Discrete Hodge decomposition for $\mathbb{R}^V$}
\label{sec:discreteHodgeDecomposition}
To satisfy the discrete constraint $\mathbf{div}(\bu)=0$ in the space $V_{\div}$, we leverage the compatibility of the FEEC machinery together with a streamform--vorticity solver (see \algref{alg:project-stream}). Owing to the FEEC construction, the discrete identity $\mathbf{div}\circ\mathbf{curl}=0$ 
holds pointwise, so that $\mathbf{curl}\,\bpsi$ corresponds to a pointwise divergence-free 
velocity field. The problem of pressure projection therefore reduces to a single symmetric positive definite Poisson solve for $\bpsi$, which is amenable to standard CG or multigrid methods as in \cite{Nabizadeh:2024:FIP}.  This approach was seen empirically to be more efficient than alternatives, including a mixed finite element formulation requiring an iterative saddle-point solve \cite{Elman:1994:IPU}, as well as a pressure Poisson formulation with a mass-matrix inverse in the Schur complement that required an inner CG solve nested within an outer iteration.  

To describe this procedure in detail, note that $\mathbf{curl}\,\bpsi$ is already the Leray projection \cite{Leray:1934:SML} of $\hat\bu$ on a simply connected domain; otherwise the discrete Hodge decomposition carries a harmonic summand. The velocity is discretized as a flux $(d-1)$-form so the relevant harmonic space is
\begin{equation}
\cH = \ker \Delta_{d-1}, \qquad \dim\cH = b_{d-1}(M),
\end{equation}
where $\Delta_{d-1} = \mathbf{d}\,\bdelta + \bdelta\,\mathbf{d}$ is the Hodge Laplacian and $b_{d-1}$ is the $(d-1)$-st Betti number of the domain. Since the topology and mesh are fixed, we compute a $\bK$-orthonormal basis $\{\bh_1,\dots,\bh_{b}\}$ of $\cH$ once at initialization as the zero-frequency eigenspace of $\Delta_{d-1}$, and recover the harmonic component by one inner product per basis field,
\begin{equation}
\Call{HarmonicProj}{\hat\bu} = \sum_{i=1}^{b} \left(\bh_i^\top \hat\bu\right) \bh_i .
\end{equation}
Since $\cH$ is $\bK$-orthogonal to the image of $\mathbf{curl}$, this component is untouched by the Poisson solve. On the periodic box, the harmonic fields are the constant vector fields $b = d$ and this step reduces to restoring the mean flow.

\subsection{Resetting}\label{subsec:resetting}

\begin{figure}[tbp]
    \begin{center}
    \includegraphics[width=\columnwidth]{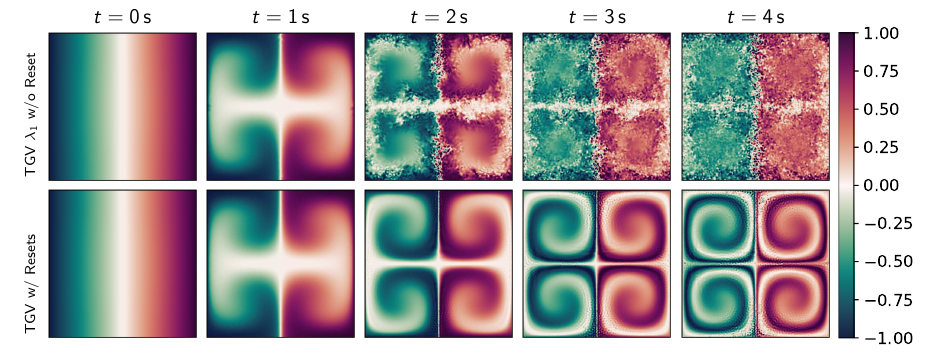}
    \end{center}
    \caption{Evolution of the $x$-component of $\blambda$ over time for the Taylor-Green Vortex initial condition. \textbf{Above:} Without resetting, $\blambda$ develops dispersive artifacts emerging from the inability to resolve the shearing present in the flow map. \textbf{Lower:} With resetting every 120 frames (0.3 s), 
    the dye advects with far fewer dispersive artifacts. 
    }
    \label{fig:tgv_reset_cmp}
\end{figure}

As discussed in \Cref{sec:LowRankParameterization}, the low-rank parameterization of $\bZ$ in terms of 
discrete Clebsch variables $(\blambda,\bmu)$ represents the inverse flow map $\phi^{-1}$ of the fluid together with the components of the velocity field $\vec u$, carried from a static earlier frame to the current frame through $\vec u_0(\phi^{-1})$. Consequently, the advection of these fields can cause large amounts of distortion,
including the stretching computed in the finite-time Lyapunov exponent (FTLE) fields (see \secref{sec:FTLE}) and visible in \Cref{fig:tgv_reset_cmp}.
Indeed, the vakonomic algorithm in \secref{sec:algorithm} is
built on a discrete analogue of Kelvin's circulation theorem and can therefore only rotate the finite-element coefficients (\ie, labels)   
associated to $\blambda$ and $\bmu$. While the advantage of this is exact preservation of the
Casimirs associated to powers of $\bZ$, its lack of mechanistic dissipation 
also produces visually dispersive artifacts that manifest as a blow-up in enstrophy over time.

To alleviate this and enable less dispersive simulations, it is occasionally useful to relax strict Casimir preservation 
and perform a so-called \emph{reset} of these labels.  This ensures that excess dispersion does not accumulate in the auxilliary symplectic space $P^m$ containing the discrete Clebsch variables.

To describe this more precisely, recall that the discrete Clebsch data consist of $m$ (or $m-1$) pairs of scalar labels $\{\blambda_a,\bmu_a\}_{a=1}^m$, where $m$ is the dimension of the ambient space.   
The label $\blambda_a$ holds the $a$-th coordinate of the inverse flow map $\phi^{-1}$, while $\mu_a$ holds the $a$-th Cartesian component of the pulled-back velocity $\vec u_0\circ\phi^{-1}$, stored as a scalar field. It is convenient to overload notation as before, so that 
the undecorated $\blambda$ and $\bmu$ without 
indices denote the full collection of these variables over $a=1,\dots,m$. 
Additionally, $\cI_\cV$ denotes the interpolation operator $\bu\mapsto\vec u$ sending the coefficients $\bu\in\mathbb{R}^{V}$ of the velocity field to its finite-element representation $\vec u \in \cV$, and $\cI_\cF:\mathbb{R}^F\to\cF$ denotes the corresponding interpolation operator on scalar functions. In particular, $\cI_\cV^+$ and $\cI_\cF^+$ denote the corresponding Moore--Penrose pseudoinverses (\ie, least-squares fits back to coefficients).

With this, resetting the labels $\blambda$ 
encoding the inverse flow map is straightforward: the components $\blambda_a$ are simply set to the identity,
\ie, to the current position of each quadrature point of the finite element space $\cF$.  Conversely, there are two reasonable ways to reset the labels $\bmu$ associated with the velocity field.
On one hand, resetting $\bmu$ (for fixed $\blambda$) can be 
accomplished via the adjoint $\bar{\bA}^*J$ of the momentum map $J$ 
computed in
\eqref{eq:cheap-momentum-map}. More precisely, consider the momentum vector
$$\hat\bu^\top \bv = \sum_{a=1}^m\blambda_a^\intercal\bA_\bv\bmu_a,
\quad\text{equivalently}\quad
\bu = \,\cI_\cV^+\,(\nabla\cI_\cF\blambda)^\top(\cI_\cF\bmu).$$

It follows that $\nabla\cI_\cF\blambda=\mathrm{id}$ since $\blambda$ 
was just 
re-initialized to the position field,
and therefore the least-squares problem that recovers the velocity in the $\cF$ basis
reduces to
$$\operatorname*{argmin}_{\bmu}\ \tfrac12\big\lVert \cI_\cV^+\cI_\cF\bmu - \bu\big\rVert_\bK^2 .$$
Writing $\bV=\cI_\cV^+\cI_\cF$, the corresponding normal equations give
$$\bmu = \left(\bV^\intercal\bK\bV\right)^{-1}\bV^\intercal\bK\bu,$$
although this requires inverting the matrix $\bV^\intercal\bK\bV$.
Alternatively, one could relax the requirement that $\bmu$ project orthogonally onto the finite
element basis spanned by $\cI_\cV$, instead seeking the
$\bmu$ whose interpolation is the (component-wise) orthogonal projection of $\bu$ onto the $\cI_\cF$ space.
This corresponds to a different least-squares problem
$$\operatorname*{argmin}_{\bmu}\ \tfrac12\big\lVert \bmu - \cI_\cF^+\cI_\cV\bu\big\rVert_\bM^2 ,$$
whose solution is given by 
\begin{equation}\label{eq:reset}
    \bmu = \cI_\cF^+\cI_\cV\bu = \bF\bu, 
\end{equation}
for $\bF=\cI_\cF^+\cI_\cV$.
This second formulation is considerably more computationally tractable, since it avoids assembling and
inverting $\bV^\top\bK\bV$. 
The trade-off is that the discrete velocity
$\bu$ and its representation $\bmu$ are no longer related through orthogonal projection onto the velocity subspace $\cI_\cV$.  Despite this mild drawback, we observed little difference between these two procedures in practice, and therefore all experiments in \Cref{sec:experiments} use the cheaper reset option \eqref{eq:reset}.

\vspace{1pc}

\begin{algorithm}
    \caption{Vakonomic Fluids loop with Resetting}
    \begin{algorithmic}[1]
        \State $(\blambda,\bmu)^{(0)} \gets \textsc{Reset}(\bu^{(0)})$ 
        \For{$n \in 1, 2, \dots$}
            \State $(\blambda, \bmu)^{(n)} \gets \textsc{Step}\left((\blambda, \bmu)^{(n-1)}\right)$
            \If{$n \mod p = 0$}
                \State $(\blambda, \bmu)^{(n)} \gets \textsc{Reset}(\bu^{(n)})$
            \EndIf
        \EndFor
    \end{algorithmic}
    \label{alg:sim_loop_w_reset}
\end{algorithm}

\vspace{1pc}

\begin{algorithm}
\caption{\textsc{Reset}$(\bu)$,  re-anchor flow map to identity}
\begin{algorithmic}[1]
\Procedure{Reset}{$\bu$}
  \State $\blambda_a \gets \bM^{\frac12}\cI_\cF^{+}\,x_a,\qquad a=1,\dots,d$
  \State $\bmu_a \gets \bM^{\frac12}\cI_\cF^{+}\cI_\cV\bu_a,\qquad a=1,\dots,d$
  \State \Return $(\blambda,\bmu)$ %
\EndProcedure
\end{algorithmic}
\label{alg:reset}
\end{algorithm}

\subsubsection{Automatic Resetting by Finite-Time Lyapunov Exponents} 
\label{sec:FTLE}

It is worth mentioning an automated heuristic for selecting the reset period, thereby eliminating this hyperparameter from the vakonomic fluid simulation pipeline.  Recall that the cumulative finite-time Lyapunov exponent (FTLE) is given by 
$$ S_\sff = \ln \parens*{ \sigma_{\max} (F_\sff) } = - \ln \parens*{ \sigma_{\min} (\Psi_\sff) },  $$
where $\Psi_\sff = [\grad \blambda]_\sff$ is the piecewise-constant gradient of the inverse flow map $\lambda$ over each face, $F_\sff = \Psi_\sff^{\dagger}$ is the deformation gradient (that uses a pseudo-inverse, to accomodate embeddings of 2D surfaces in ambient 3D space), and
$\sigma_{\min}$ computes the minimal nonzero singular value.  The quantity $S_{\sff}/T$ where $T$ is the time since the last reset is the FTLE, and $S_{\sff}$ measures the maximal stretching that has occurred since $\blambda$ was last reset. 
Following Nabizadeh et al.\ \cite{Nabizadeh:2024:FIP}, we find that selecting the reset frequency so that $S_{\sff}\leq 1$ consistently leads to good results.

\subsubsection{Impact of resetting}
We find that occasional resetting within the vakonomic fluid \algref{alg:sim_loop_w_reset} is a necessary ingredient for producing accurate long-time
simulations with minimal dispersion.  However, it should be noted that this operation is not Hamiltonian structure-preserving, meaning that Casimirs will change discontinuously upon reset.  More precisely, recall that the Casimirs are given by traces of powers of the dual matrix $\bZ$, as in \eqref{eq:blkdiagBwithLambdaAndMu}. Because the advection of the labels is orthogonal, the blocks associated with $|\blambda_i|^2$ are unchanged by advection, but the blocks associated with $\bmu$ are effectively perturbed. Since resetting changes these labels discontinuously, it follows that the Casimirs exhibit a jump as well.  
While the impact of resetting on
energy and Casimir preservation is examined numerically later in \secref{sec:verification}, we observe that \algref{alg:sim_loop_w_reset} is quite stable with the inclusion of resetting and capable of producing long-term simulations with minimal dispersion (see \Cref{fig:laplacian_eigenfunction_visual_comparison}).

\section{Numerical Experiments}

\label{sec:experiments}

In this section we assess our method in two stages. We first verify it through convergence tests against the Taylor--Green vortex and examine the behavior of the solver in preserving invariants of the flow (\secref{sec:verification}). We then validate it against established benchmarks: mollified point vortex systems (\secref{sec:MollifiedPointVortexSystems}), flows on surfaces (\secref{sec:FlowsOnSurfaces}), and a trefoil knot in three dimensions (\secref{sec:ThreeDimensionalTrefoilKnotExp}). Finally, we demonstrate the extension of the solver to stratified flow under the Boussinesq approximation (\secref{sec:Buoynacy}).
B-spline experiments on tensor grids were run on a workstation with an Intel Core Ultra~9 285K processor (24 cores) and 64\,GB of RAM, while the triangle-mesh experiments with the Whitney basis were run on a MacBook Pro with a 12-core Apple M4 Pro and 24 GB of RAM.  See \tabref{tab:experimentdetails} for a full breakdown of experiment configurations and timings. Our C++ implementation of the B-spline discretization builds on the codebase released with \cite{Nabizadeh:2024:FIP}, while the triangle-mesh implementation was developed from scratch. For baselines, we implement Functional Fluids \citep{Azencot:2014:FFS} as a fluid solver on surfaces, as well as a modified version that uses the IGA--FEEC basis and to preserve energy exactly via the same trapezoidal time integrator we use for our method (which we refer to as \emph{Vorticity FEEC} in the figures that follow). These baselines share a similar Koopman operator discretization as our method; the Koopman operators are used to update the vorticity directly, and have function spaces similar to the implementations of our method, making a fair comparison. In \Cref{fig:shieldedTaylorVortices} we additionally compare against Zeitlin's method, commonly known as matrix hydrodynamics after the original work \cite{Zeitlin:1991:FMA}, which we implement in MATLAB.

\begin{table}
    \centering 
    \scriptsize
    \caption{Details regarding each experiment.}
    \setlength{\tabcolsep}{3pt} %
    \begin{tabular}{lccc}\toprule
         \textbf{Experiment} & \textbf{Mesh Size} & \textbf{$\Delta t$ ($\mathrm{s}$)} & \textbf{Reset Period ($\mathrm{s}$)} \\
         \midrule
         \multicolumn{4}{l}{\textbf{IGA BSplines on Cartesian Grids}} \\ 
         \quad Taylor-Green Vortex (\figref{fig:convergence_studies}) & $128\times128$ & $1/48$ & $2.29$ \\
         \quad Eigenfunction Vortex (\figref{fig:laplacian_eigenfunction_visual_comparison}) & $128\times128$ & $1/4$ & $25.00$ \\
         \quad Shielded Taylor Vortex (\figref{fig:shieldedTaylorVortices}) & $128\times128$ & $1/48$ & $0.21$ \\
         \quad Leapfrogging Vortex (\figref{fig:leapfrogVortices}) & $128\times128$ & $1/48$ & $0.21$\\
         \quad Rayleigh-Taylor Instability (\figref{fig:buoyancy_demo}) & $256\times512$ & $1/96$ & $0.05$\\
         \quad Trefoil Knot (\figref{fig:trefoilknot}) & $64\times64\times64$ & $1/72$ & $0.21$ \\
         \midrule
         \multicolumn{4}{l}{\textbf{Whitney Forms on Triangle Meshes}} \\
         \quad Taylor-Green Vortex (\figref{fig:tgv_reset_cmp}) & 11.6k & $2.5\times10^{-3}$ & 0.3  \\
         \quad Shielded Taylor Vortex (\figref{fig:shielded-vorts}) & 89.3k & $1.5 \times 10^{-3}$ & 0.36 \\
         \quad Vortices on Oblate Spheroid (\figref{fig:six-point-vorts}) & 41.0k & $2.5 \times 10^{-2}$ & FTLE--Automated \\
         \quad Double Shear Layer (\figref{fig:shear-layer}) & 175k & $5 \times 10^{-3}$ & 0.18 \\
         \quad Sphere 1 Flow (\figref{fig:sphere-flow}) & 41k &   $2.5 \times 10^{-3}$ & 0.2 \\
         \quad Sphere 2 Flow (\figref{fig:sphere-flow}) & 164k & $1 \times 10^{-3}$ & 0.2 \\
         \quad Helicoid Flow (\figref{fig:helicoid-flow}) & 56.8k & $1.25 \times 10^{-3}$ &  0.2 \\
         \quad Bunny Flow (\figref{fig:bunny-flow}) & 85.7k & $3.33 \times 10^{-4}$ & 0.12  \\
          \bottomrule
    \end{tabular}
    \label{tab:experimentdetails}
\end{table}

\subsection{Verification}
\label{sec:verification}

The first set of numerical experiments deals with verification.  Precisely, we verify that \algref{alg:sim_loop_no_reset} preserves energy and Casimirs.

\begin{figure}
    \centering
    \begin{minipage}{0.47\columnwidth}
        \centering
        \definecolor{mycolor1}{rgb}{0.06600,0.44300,0.74500}%
\definecolor{mycolor2}{rgb}{0.12941,0.12941,0.12941}%
\begin{tikzpicture}

\begin{loglogaxis}[%
width=0.8\linewidth,
height=0.5\linewidth,
at={(0,0)},
scale only axis,
xmin=16,
xmax=512,
log basis x=2,
log basis y=10,
ymin=1e-06,
ymax=1e-2,
xminorticks=true,
yminorticks=true,
ymajorticks=true,
xminorgrids=true,
xmajorgrids=true,
yminorgrids=true,
ymajorgrids=true,
max space between ticks=1,
xlabel={\scriptsize \sffamily Spatio-temporal resolution},
ylabel={\scriptsize \sffamily Simulation accuracy},
xticklabel style={font=\scriptsize},
yticklabel style={font=\scriptsize},
legend pos = south west,
legend style={
  at={(0.015,0.015)}, anchor=south west,
  font=\tiny\sffamily,
  draw=black, fill=white, fill opacity=0.85, text opacity=1,
  inner xsep=2pt, inner ysep=1.5pt,
  nodes={inner sep=1pt},
  row sep=-1.5pt,
  /tikz/every even column/.append style={column sep=2pt},
},
legend cell align={left},
legend image code/.code={\draw[mark repeat=2,mark phase=2,#1]
    plot coordinates {(0cm,0cm)(0.15cm,0cm)(0.3cm,0cm)};},
]
\addplot [color=mycolor1, line width=1.2pt, mark size=1.0pt, mark=o, mark options={solid, mycolor1}]
  table[row sep=crcr]{%
16	0.00939104\\
32	0.00139492\\
64	0.000239827\\
128	4.47832e-05\\
256	9.04317e-06\\
512	1.95875e-06\\
};
\addlegendentry{\tiny \sffamily $L^2$ error}

\addplot [color=mycolor1, dashed, line width=1.2pt, mark size=1.0pt, mark=o, mark options={solid, mycolor1}]
  table[row sep=crcr]{%
16	0.00449973\\
32	0.00127648\\
64	0.000334061\\
128	8.40473e-05\\
256	2.11404e-05\\
512	5.24973e-06\\
};
\addlegendentry{\tiny \sffamily $L^\infty$ error}

\addplot [color=white!60!black, line width=1.0pt]
  table[row sep=crcr]{%
512	7.96247496736576e-07\\
16	0.000924069447019463\\
};

\end{loglogaxis}

\end{tikzpicture}%
    \end{minipage}\hfill
    \begin{minipage}{0.47\columnwidth}
        \centering
        \definecolor{mycolor1}{rgb}{0.06600,0.44300,0.74500}%
\definecolor{mycolor2}{rgb}{0.12941,0.12941,0.12941}%
\begin{tikzpicture}

\begin{loglogaxis}[%
width=0.8\linewidth,
height=0.5\linewidth,
at={(0,0)},
scale only axis,
xmin=16,
xmax=512,
log basis x=2,
log basis y=10,
ymin=1e-06,
ymax=1e-2,
xminorticks=true,
yminorticks=true,
ymajorticks=true,
xminorgrids=true,
xmajorgrids=true,
yminorgrids=true,
ymajorgrids=true,
max space between ticks=1,
xlabel={\scriptsize \sffamily Spatio-temporal resolution},
ylabel={\scriptsize \sffamily Simulation accuracy},
xticklabel style={font=\scriptsize},
yticklabel style={font=\scriptsize},
legend pos = south west,
]
\addplot [color=mycolor1, line width=1.2pt, mark size=1.0pt, mark=o, mark options={solid, mycolor1}]
  table[row sep=crcr]{%
16	0.00429524\\
32	0.000455707\\
64	0.000105681\\
128	2.60159e-05\\
256	6.48402e-06\\
512	1.66489e-06\\
};

\addplot [color=mycolor1, dashed, line width=1.2pt, mark size=1.0pt, mark=o, mark options={solid, mycolor1}]
  table[row sep=crcr]{%
16	0.00155474\\
32	0.000156159\\
64	3.49243e-05\\
128	8.24087e-06\\
256	2.00442e-06\\
512	8.82991e-06\\
};

\addplot [color=white!60!black, line width=1.0pt]
  table[row sep=crcr]{%
512	4.64275015329112e-08\\
16	0.0000616109149748735\\
};

\addplot [color=white!40!black, line width=1.0pt]
  table[row sep=crcr]{%
512	3.96152093813773e-06\\
16	0.191507361980362\\
};
\end{loglogaxis}

\end{tikzpicture}%
    \end{minipage}
    \begin{picture}(0,0)(0,0)
    \put(-47,35){\sffamily \tiny \rotatebox{-36.5}{Third-order slope}}
    \put(-72,4){\sffamily \tiny \rotatebox{-26}{Second-order slope}}
    \end{picture}
    \caption{Convergence of the method with FEEC--IGA B-spline discretization spaces, with quadratic 
    ($p=2$, left) and cubic ($p=3$, right) spatial accuracy. Observe that temporal discretization error dominates over longer time horizons, yielding second-order convergence in both cases.
}
    \label{fig:convergence_studies}
\end{figure}

\begin{figure}
    \centering
    \includegraphics[width=\linewidth]{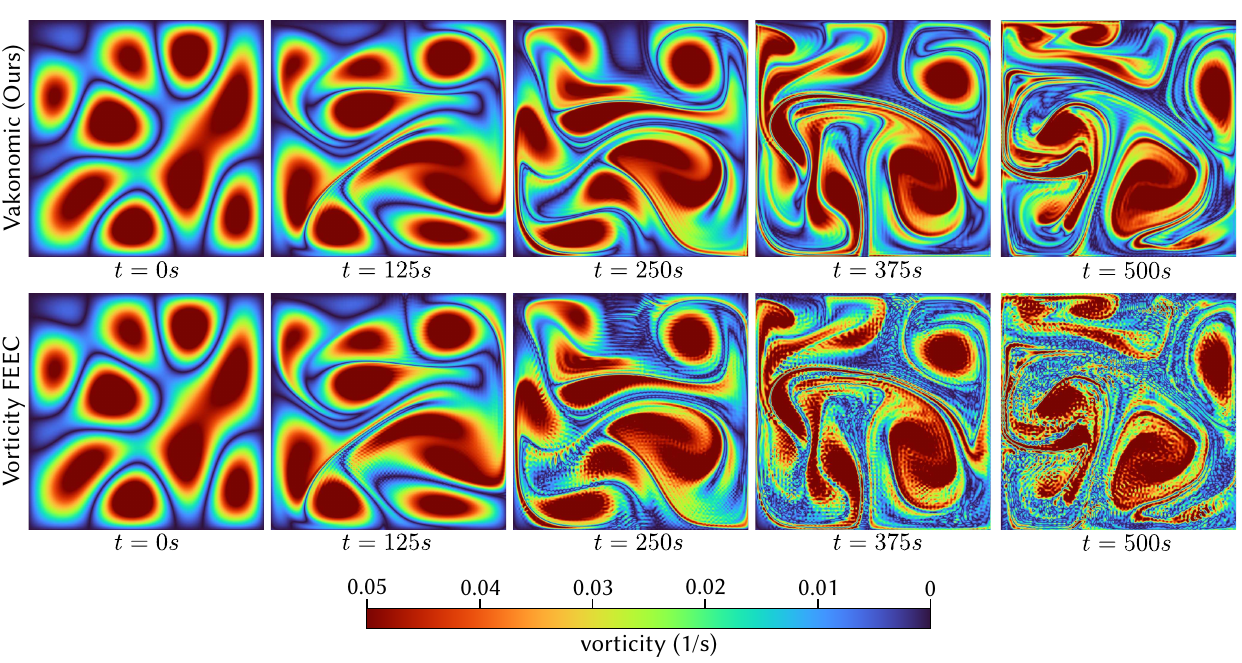}
    \caption{Vorticity initialized as a random superposition of the first 20 eigenfunctions of the scalar Laplacian, with i.i.d.\ Gaussian mode amplitudes. As the flow evolves, the finite-element functional-fluids-on-surfaces baseline (top) accumulates heavy dispersive artifacts, whereas our method (bottom) exhibits substantially less numerical dispersion. We attribute this to structure preservation: our scheme enforces a discrete Kelvin circulation theorem, while the baseline preserves only a weak notion of enstrophy.}
    \label{fig:laplacian_eigenfunction_visual_comparison}
\end{figure}

\begin{figure}
    \centering
    \begin{minipage}{0.47\columnwidth}
        \input{figs/bsplines_basis/plots/casimir2_comparison}
    \end{minipage}\hfill
    \begin{minipage}{0.47\columnwidth}
        \input{figs/bsplines_basis/plots/casimir4_comparison}
    \end{minipage}
    \begin{minipage}{0.47\columnwidth}
        \definecolor{mycolor1}{rgb}{0.06600,0.44300,0.74500}%
\definecolor{mycolor2}{rgb}{0.86600,0.32900,0.00000}%
\definecolor{mycolor3}{rgb}{0.12941,0.12941,0.12941}%
\begin{tikzpicture}

\begin{axis}[%
width=0.8\linewidth,
height=0.4\linewidth,
at={(0,0)},
scale only axis,
xmin=0,
xmax=25,
xtick={0,5,10,15,20,25},
xlabel style={font=\color{mycolor3}},
xlabel={\scriptsize \sffamily Time ($\mathrm{s}$)},
ymin=-1.5e-08,
ymax=0,
ytick={-1.5e-8,-1e-8,-5e-9,0},
ylabel style={font=\color{mycolor3}},
ylabel={\scriptsize \sffamily Relative error vs. $t=0$},
axis background/.style={fill=white},
title style={font=\scriptsize \sffamily \color{mycolor3}},
title={Energy},
axis x line*=bottom,
axis y line*=left,
xmajorgrids,
ymajorgrids,
legend style={legend cell align=left, align=left},
xticklabel style={font=\scriptsize},
yticklabel style={font=\scriptsize}
]
\addplot [color=mycolor1, line width=1.2pt]
  table[row sep=crcr]{%
0	0\\
0.25	-6.60548146850048e-11\\
0.5	-1.45378951481701e-10\\
0.75	-2.44995777071618e-10\\
1	-2.7244363590435e-10\\
1.25	-3.60503516419587e-10\\
1.5	-4.51906783386695e-10\\
1.75	-5.18611929553634e-10\\
2	-5.8537497777379e-10\\
2.25	-7.30960395787361e-10\\
2.5	-7.44284420101962e-10\\
2.75	-8.02055888952648e-10\\
3	-8.89271313733847e-10\\
3.25	-9.86188532279383e-10\\
3.5	-1.07787841440992e-09\\
3.75	-1.13643110385758e-09\\
4	-1.22565832024046e-09\\
4.25	-1.32552701976187e-09\\
4.5	-1.43149890044012e-09\\
4.75	-1.53324819770053e-09\\
5	-1.65259514792451e-09\\
5.25	-1.89280852014796e-09\\
5.5	-1.90293193228402e-09\\
5.75	-2.01113867984266e-09\\
6	-2.14816861237179e-09\\
6.25	-2.27533218740638e-09\\
6.5	-2.41577011288886e-09\\
6.75	-2.55397164775271e-09\\
7	-2.69178782921501e-09\\
7.25	-2.87003655731802e-09\\
7.5	-2.14753915610378e-09\\
7.75	-2.21448611952981e-09\\
8	-2.2825426691441e-09\\
8.25	-2.35004991112193e-09\\
8.5	-2.41788963278429e-09\\
8.75	-2.48054483507628e-09\\
9	-2.54858980427992e-09\\
9.25	-2.61030150785209e-09\\
9.5	-2.67664034734552e-09\\
9.75	-2.8051091565587e-09\\
10	-2.87166817622818e-09\\
10.25	-2.94110111859698e-09\\
10.5	-3.00842337875217e-09\\
10.75	-3.07575386709386e-09\\
11	-3.14200219653567e-09\\
11.25	-3.21085200427904e-09\\
11.5	-3.27758610147259e-09\\
11.75	-3.34383382141911e-09\\
12	-3.40989427393561e-09\\
12.25	-3.47855818561339e-09\\
12.5	-3.54214012550451e-09\\
12.75	-3.60872047750938e-09\\
13	-3.67720367383158e-09\\
13.25	-3.74299000583824e-09\\
13.5	-3.80624510387633e-09\\
13.75	-3.87060948335502e-09\\
14	-3.93789243106354e-09\\
14.25	-4.00115392880225e-09\\
14.5	-4.06463591146465e-09\\
14.75	-4.12548578868683e-09\\
15	-4.18597834929111e-09\\
15.25	-4.25337999620874e-09\\
15.5	-4.3115313330033e-09\\
15.75	-4.37277540130883e-09\\
16	-4.44189018713258e-09\\
16.25	-4.49725445430764e-09\\
16.5	-4.55972147492666e-09\\
16.75	-4.61920440657142e-09\\
17	-4.68005900738214e-09\\
17.25	-4.7371777067697e-09\\
17.5	-4.80468510112136e-09\\
17.75	-4.86116093534438e-09\\
18	-4.92126859966859e-09\\
18.25	-4.97866675264985e-09\\
18.5	-5.04830677104582e-09\\
18.75	-5.09700209305457e-09\\
19	-5.1562593590656e-09\\
19.25	-5.21668807503752e-09\\
19.5	-5.27902557790593e-09\\
19.75	-5.33414757070041e-09\\
20	-5.39282779703898e-09\\
20.25	-5.45750195250238e-09\\
20.5	-5.51658292199874e-09\\
20.75	-5.5723573795087e-09\\
21	-5.6361383196143e-09\\
21.25	-5.69033250819288e-09\\
21.5	-5.74266301252696e-09\\
21.75	-5.81789758825439e-09\\
22	-5.8619293569979e-09\\
22.25	-5.92655185773488e-09\\
22.5	-5.9879051780724e-09\\
22.75	-6.0475787293713e-09\\
23	-6.09410089589848e-09\\
23.25	-6.15753183332976e-09\\
23.5	-6.20868585936675e-09\\
23.75	-6.27433368356716e-09\\
24	-6.32000164243555e-09\\
24.25	-6.37743926023729e-09\\
24.5	-6.45307716947753e-09\\
24.75	-6.49727715892302e-09\\
25	-6.54429621649111e-09\\
};

\addplot [color=mycolor2, line width=1.2pt]
  table[row sep=crcr]{%
0	0\\
0.25	-5.83149964737082e-12\\
0.5	-8.52549980585754e-11\\
0.75	-2.3522393245894e-10\\
1	-3.64596493434646e-10\\
1.25	-4.68216197497761e-10\\
1.5	-5.89500906842133e-10\\
1.75	-7.02493126473426e-10\\
2	-8.35099212938847e-10\\
2.25	-9.6511235358369e-10\\
2.5	-1.06914803825973e-09\\
2.75	-1.18662842966896e-09\\
3	-1.27124360992765e-09\\
3.25	-1.37704317430158e-09\\
3.5	-1.42913994629436e-09\\
3.75	-1.49923862215272e-09\\
4	-1.59722536969617e-09\\
4.25	-1.57263116354574e-09\\
4.5	-1.58922208555208e-09\\
4.75	-1.63364348192788e-09\\
5	-1.73801423670203e-09\\
5.25	-2.46049363170115e-09\\
5.5	-2.58950918074423e-09\\
5.75	-2.72449428014358e-09\\
6	-2.79495956748717e-09\\
6.25	-2.92032652467071e-09\\
6.5	-3.04861174584647e-09\\
6.75	-3.17668623398645e-09\\
7	-3.30135912827919e-09\\
7.25	-3.4254277078793e-09\\
7.5	-3.55765849754999e-09\\
7.75	-3.6857378616532e-09\\
8	-3.81317542709646e-09\\
8.25	-3.93862359954253e-09\\
8.5	-4.06417279585173e-09\\
8.75	-4.1973278853019e-09\\
9	-4.32412349971137e-09\\
9.25	-4.45972952255309e-09\\
9.5	-4.58382674843717e-09\\
9.75	-4.72081800195733e-09\\
10	-4.85431608494585e-09\\
10.25	-5.00125765567676e-09\\
10.5	-5.10834127449213e-09\\
10.75	-5.2431755237791e-09\\
11	-5.3675686596816e-09\\
11.25	-5.50737654383544e-09\\
11.5	-5.63678186518909e-09\\
11.75	-5.79618914066121e-09\\
12	-5.91748192581968e-09\\
12.25	-6.0661652820401e-09\\
12.5	-6.17268709846724e-09\\
12.75	-6.31276270347709e-09\\
13	-6.44477392242864e-09\\
13.25	-6.57929534819974e-09\\
13.5	-6.7149859385287e-09\\
13.75	-6.86511303698629e-09\\
14	-6.97282535031043e-09\\
14.25	-7.12194342912743e-09\\
14.5	-7.25836983278234e-09\\
14.75	-7.40208534434222e-09\\
15	-7.53702899805413e-09\\
15.25	-7.71183928899107e-09\\
15.5	-7.82484049867956e-09\\
15.75	-7.95970080389503e-09\\
16	-8.08697700543021e-09\\
16.25	-8.28360357484202e-09\\
16.5	-8.37749771312445e-09\\
16.75	-8.52711146407839e-09\\
17	-8.6560294938569e-09\\
17.25	-8.79306493579381e-09\\
17.5	-8.9378527021424e-09\\
17.75	-9.11804636123124e-09\\
18	-9.22737002802853e-09\\
18.25	-8.78128597999854e-09\\
18.5	-8.90612039057322e-09\\
18.75	-9.02894422818542e-09\\
19	-9.15581084880941e-09\\
19.25	-9.28709890146464e-09\\
19.5	-9.4092908447172e-09\\
19.75	-9.53541083347183e-09\\
20	-9.6559496728161e-09\\
20.25	-9.78544794221879e-09\\
20.5	-9.91872203564651e-09\\
20.75	-1.00458626244544e-08\\
21	-1.01634071652548e-08\\
21.25	-1.02889019640993e-08\\
21.5	-1.04172680957899e-08\\
21.75	-1.05400835528403e-08\\
22	-1.06675621068494e-08\\
22.25	-1.07989036427263e-08\\
22.5	-1.09224287048003e-08\\
22.75	-1.10744571241693e-08\\
23	-1.11768083159238e-08\\
23.25	-1.13071917250263e-08\\
23.5	-1.1428132678963e-08\\
23.75	-1.15944496514514e-08\\
24	-1.16858712192555e-08\\
24.25	-1.18238533599279e-08\\
24.5	-1.19527160779986e-08\\
24.75	-1.20741623036249e-08\\
25	-1.21963959973126e-08\\
};

\end{axis}

\end{tikzpicture}%
    \end{minipage}\hfill
    \begin{minipage}{0.47\columnwidth}
        \definecolor{mycolor1}{rgb}{0.06600,0.44300,0.74500}%
\definecolor{mycolor2}{rgb}{0.86600,0.32900,0.00000}%
\definecolor{mycolor3}{rgb}{0.12941,0.12941,0.12941}%
\begin{tikzpicture}

\begin{axis}[%
width=0.8\linewidth,
height=0.4\linewidth,
at={(0,0)},
scale only axis,
xmin=0,
xmax=25,
xtick={0,5,10,15,20,25},
xlabel style={font=\color{mycolor3}},
xlabel={\scriptsize \sffamily Time ($\mathrm{s}$)},
ymin=-2e-6,
ymax=6e-6,
ytick={-2e-6,-1e-6,0,1e-6,2e-6,3e-6,4e-6,5e-6,6e-6},
ylabel style={font=\color{mycolor3}},
axis background/.style={fill=white},
title style={font=\scriptsize \sffamily\color{mycolor3}},
title={Enstrophy},
axis x line*=bottom,
axis y line*=left,
xmajorgrids,
ymajorgrids,
legend style={legend cell align=left, align=left},
legend pos= north west,
xticklabel style={font=\scriptsize},
yticklabel style={font=\scriptsize},
legend style={
  at={(0.015,0.615)}, anchor=south west,
  font=\tiny\sffamily,
  draw=black, fill=white, fill opacity=0.85, text opacity=1,
  inner xsep=2pt, inner ysep=1.5pt,
  nodes={inner sep=1pt},
  row sep=-1.5pt,
  /tikz/every even column/.append style={column sep=2pt},
},
legend cell align={left},
legend image code/.code={\draw[mark repeat=2,mark phase=2,#1]
    plot coordinates {(0cm,0cm)(0.15cm,0cm)(0.3cm,0cm)};},
]
\addplot [color=mycolor1, line width=1.2pt]
  table[row sep=crcr]{%
0	0\\
0.25	8.88547078843282e-09\\
0.5	1.72245681372174e-08\\
0.75	2.50221840897999e-08\\
1	3.2323585790722e-08\\
1.25	3.9058274198396e-08\\
1.5	4.52553463853142e-08\\
1.75	5.09310917537523e-08\\
2	5.60686121850586e-08\\
2.25	6.06256307319962e-08\\
2.5	6.47308698352228e-08\\
2.75	6.82775590226767e-08\\
3	7.1278416549543e-08\\
3.25	7.37529930702102e-08\\
3.5	7.56954376022114e-08\\
3.75	7.71033782507023e-08\\
4	7.79963752561561e-08\\
4.25	7.83586530634901e-08\\
4.5	7.82007502505703e-08\\
4.75	7.75285292039988e-08\\
5	7.63319703166999e-08\\
5.25	7.45851893420387e-08\\
5.5	7.24045542408729e-08\\
5.75	6.96917106276412e-08\\
6	6.64659861976249e-08\\
6.25	6.27376704557663e-08\\
6.5	5.85149077726485e-08\\
6.75	5.38035056628029e-08\\
7	4.86112014187693e-08\\
7.25	4.29069888497122e-08\\
7.5	3.67793842086166e-08\\
7.75	3.0141352457855e-08\\
8	2.30327307208406e-08\\
8.25	1.54579692547386e-08\\
8.5	7.4231991870761e-09\\
8.75	-1.06509995002534e-09\\
9	-1.0006915430212e-08\\
9.25	-1.93905889217631e-08\\
9.5	-2.92147825047592e-08\\
9.75	-3.95089438048916e-08\\
10	-5.01979322975466e-08\\
10.25	-6.13133060345369e-08\\
10.5	-7.28440945683546e-08\\
10.75	-8.478622992828e-08\\
11	-9.71331147394611e-08\\
11.25	-1.09880261061504e-07\\
11.5	-1.23020652045713e-07\\
11.75	-1.36547211178281e-07\\
12	-1.50453839230205e-07\\
12.25	-1.64737747807789e-07\\
12.5	-1.79384580592312e-07\\
12.75	-1.94394224534087e-07\\
13	-2.09761394560195e-07\\
13.25	-2.25474482087391e-07\\
13.5	-2.41527104105508e-07\\
13.75	-2.57914062973333e-07\\
14	-2.74632370146333e-07\\
14.25	-2.91668463793874e-07\\
14.5	-3.09018412794858e-07\\
14.75	-3.26766220835555e-07\\
15	-3.44816907501648e-07\\
15.25	-3.63161768132435e-07\\
15.5	-3.81789794762455e-07\\
15.75	-4.00700065640085e-07\\
16	-4.19885670685781e-07\\
16.25	-4.39327235265501e-07\\
16.5	-4.59028235657154e-07\\
16.75	-4.78981312226881e-07\\
17	-4.99175641458711e-07\\
17.25	-5.19611179058324e-07\\
17.5	-5.40277242326028e-07\\
17.75	-5.61161831797712e-07\\
18	-5.82263293543215e-07\\
18.25	-6.03571051631998e-07\\
18.5	-6.25093158540452e-07\\
18.75	-6.46787899098959e-07\\
19	-6.68682446386706e-07\\
19.25	-6.90756036283211e-07\\
19.5	-7.13004913852999e-07\\
19.75	-7.35417425557394e-07\\
20	-7.57985347031587e-07\\
20.25	-7.80714262831002e-07\\
20.5	-8.0358640713105e-07\\
20.75	-8.26592833635581e-07\\
21	-8.497538202192e-07\\
21.25	-8.73006703968698e-07\\
21.5	-8.96385877111399e-07\\
21.75	-9.19893968548384e-07\\
22	-9.4349012850209e-07\\
22.25	-9.67192364101997e-07\\
22.5	-9.909860835092e-07\\
22.75	-1.01487749437463e-06\\
23	-1.03882275325259e-06\\
23.25	-1.06285444886427e-06\\
23.5	-1.08694567604502e-06\\
23.75	-1.11110749697127e-06\\
24	-1.13530904114485e-06\\
24.25	-1.15956787159509e-06\\
24.5	-1.1838759219623e-06\\
24.75	-1.20820291096198e-06\\
25	-1.23255543216759e-06\\
};
\addlegendentry{\tiny \sffamily Vorticity FEEC}

\addplot [color=mycolor2, line width=1.2pt]
  table[row sep=crcr]{%
0	0\\
0.25	1.71221452279002e-08\\
0.5	3.47485501133082e-08\\
0.75	5.28191537783006e-08\\
1	7.13890197818919e-08\\
1.25	9.04861693717713e-08\\
1.5	1.10058256310004e-07\\
1.75	1.30134232023895e-07\\
2	1.50695412265923e-07\\
2.25	1.7175515770767e-07\\
2.5	1.93335713492142e-07\\
2.75	2.15480806490346e-07\\
3	2.37952420960847e-07\\
3.25	2.60983524420819e-07\\
3.5	2.8455132234612e-07\\
3.75	3.08551555452204e-07\\
4	3.33045914579923e-07\\
4.25	3.58137769327101e-07\\
4.5	3.83661574662556e-07\\
4.75	4.09632846802029e-07\\
5	4.36072885007585e-07\\
5.25	4.62921250247256e-07\\
5.5	4.90381918944406e-07\\
5.75	5.18349278690731e-07\\
6	5.46805508976925e-07\\
6.25	5.75758408073198e-07\\
6.5	6.05202123516839e-07\\
6.75	6.35148382215127e-07\\
7	6.65597960474682e-07\\
7.25	6.96551515602598e-07\\
7.5	7.2799139666389e-07\\
7.75	7.59938901102574e-07\\
8	7.92397798228581e-07\\
8.25	8.25359163573424e-07\\
8.5	8.58828673910241e-07\\
8.75	8.92794091460183e-07\\
9	9.27284458883521e-07\\
9.25	9.62267311622553e-07\\
9.5	9.97781705155853e-07\\
9.75	1.03379537358881e-06\\
10	1.07032679457255e-06\\
10.25	1.10734707320455e-06\\
10.5	1.14497195354565e-06\\
10.75	1.18305906935811e-06\\
11	1.22169957692192e-06\\
11.25	1.26082656842403e-06\\
11.5	1.30051990659375e-06\\
11.75	1.34068143616101e-06\\
12	1.38144258119203e-06\\
12.25	1.42269777076308e-06\\
12.5	1.46455500272706e-06\\
12.75	1.5069054599524e-06\\
13	1.54980868996628e-06\\
13.25	1.59326125384773e-06\\
13.5	1.63725868465212e-06\\
13.75	1.68177440565375e-06\\
14	1.72693146698843e-06\\
14.25	1.77259079589553e-06\\
14.5	1.81882710837493e-06\\
14.75	1.86561563864231e-06\\
15	1.91298976391563e-06\\
15.25	1.9608354441411e-06\\
15.5	2.00942117695356e-06\\
15.75	2.05852812702102e-06\\
16	2.10823335017149e-06\\
16.25	2.15841493155768e-06\\
16.5	2.2093870057931e-06\\
16.75	2.26079663863465e-06\\
17	2.3129049545397e-06\\
17.25	2.3655910243737e-06\\
17.5	2.41889549308526e-06\\
17.75	2.47272309461358e-06\\
18	2.52733467128801e-06\\
18.25	2.58299509555511e-06\\
18.5	2.63880261282022e-06\\
18.75	2.69525648757055e-06\\
19	2.75234885559031e-06\\
19.25	2.8100917578133e-06\\
19.5	2.8684942438228e-06\\
19.75	2.92755033223993e-06\\
20	2.98727649606967e-06\\
20.25	3.04766797169495e-06\\
20.5	3.10873727770044e-06\\
20.75	3.17049551393205e-06\\
21	3.23294719774393e-06\\
21.25	3.29609148374363e-06\\
21.5	3.35993863316096e-06\\
21.75	3.42450467010824e-06\\
22	3.48977668363292e-06\\
22.25	3.55578614927444e-06\\
22.5	3.62252090792585e-06\\
22.75	3.68998833491323e-06\\
23	3.75821927226882e-06\\
23.25	3.82719861779605e-06\\
23.5	3.89694118825791e-06\\
23.75	3.96743964088792e-06\\
24	4.03874393680993e-06\\
24.25	4.11081000273212e-06\\
24.5	4.18367563108532e-06\\
24.75	4.25736380332208e-06\\
25	4.33183891180979e-06\\
};
\addlegendentry{\tiny \sffamily Vakonomic (Ours)}

\end{axis}

\end{tikzpicture}%
    \end{minipage}
    \caption{Conserved quantities for the 2D Laplacian-eigenfunction vortices, each plotted as relative error against the initial value at $t = 0$. \emph{Top row:} the quadratic and quartic Casimir power traces. Our method conserves both exactly, up to the tolerance of the implicit Cayley timestep solver, whereas they drift under the functional-fluids-on-surfaces baseline. \emph{Bottom row:} energy and enstrophy. Both our method and the baseline conserve energy up to the tolerance of the fixed-point iteration in the implicit trapezoidal update. Although our method does not explicitly target the weak enstrophy, it is preserved well here: the vortices are well resolved by the available degrees of freedom, and little enstrophy is lost to discretization---more accurately than the baseline, which conserves it only up to the tolerance of its Cayley solver.}
    \label{fig:Laplacian_eigenfunctions_plot_comparisons}
\end{figure}

\begin{figure}
    \centering
    \begin{minipage}{0.47\columnwidth}
        \input{figs/bsplines_basis/plots/reset_sweep_casimir2}
    \end{minipage}\hfill
    \begin{minipage}{0.47\columnwidth}
        \input{figs/bsplines_basis/plots/reset_sweep_casimir4}
    \end{minipage}
    \begin{minipage}{0.47\columnwidth}
        \input{figs/bsplines_basis/plots/reset_sweep_energy}
    \end{minipage}\hfill
    \begin{minipage}{0.47\columnwidth}
        \input{figs/bsplines_basis/plots/reset_sweep_vort2}
    \end{minipage}
    \caption{ %
    Reset frequency, denoted as $\alpha$, swept from every 50 to every 200 steps, probing how our solver behaves under different reset rates. \emph{Top row:} the quadratic and quartic Casimir power traces. As resets become more frequent, the Casimir behavior converges toward the finite-element Lagrange--d'Alembert case, as expected---that limit corresponds to resetting infinitely often. \emph{Bottom row:} energy and enstrophy. There is a clear trade-off: resetting less often lets instability grow but dissipates energy and enstrophy less, while resetting too often dissipates these quantities more. 
    }
    \label{fig:Laplacian_eigenfunctions_reset_sweep}
\end{figure}

\subsubsection{2D Taylor--Green Vortex}
\begin{figure}[tbp]
    \includegraphics[width=\columnwidth]{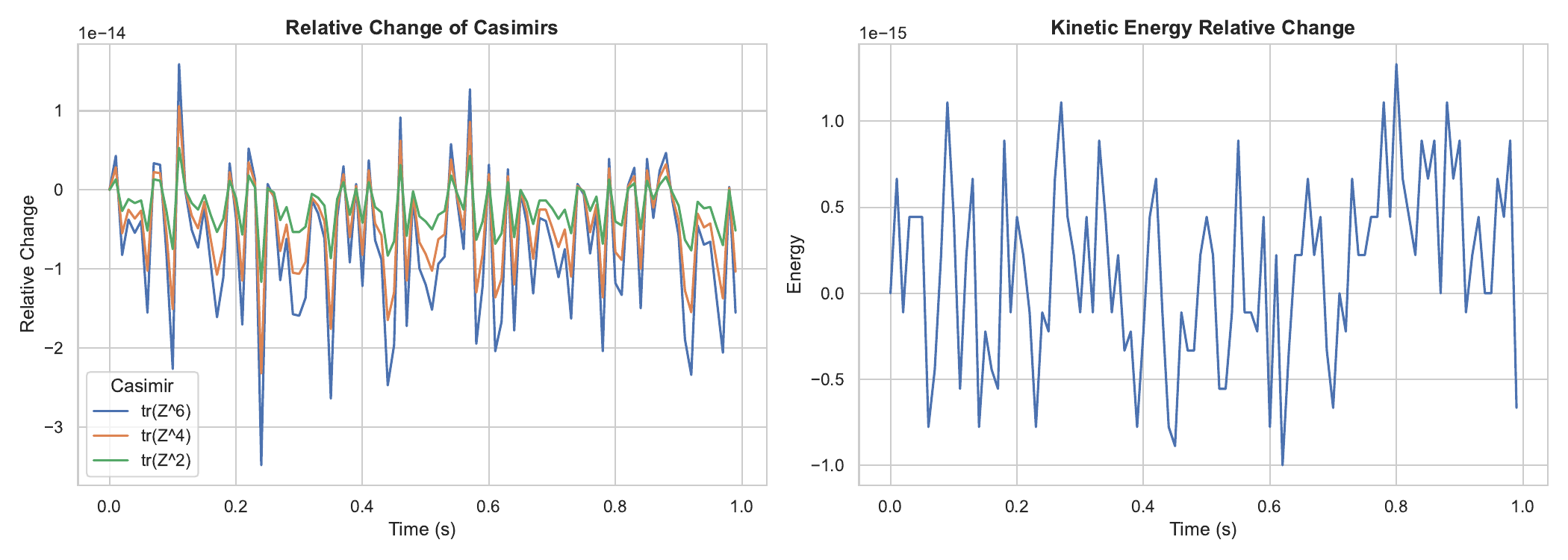}
    \includegraphics[width=\columnwidth]{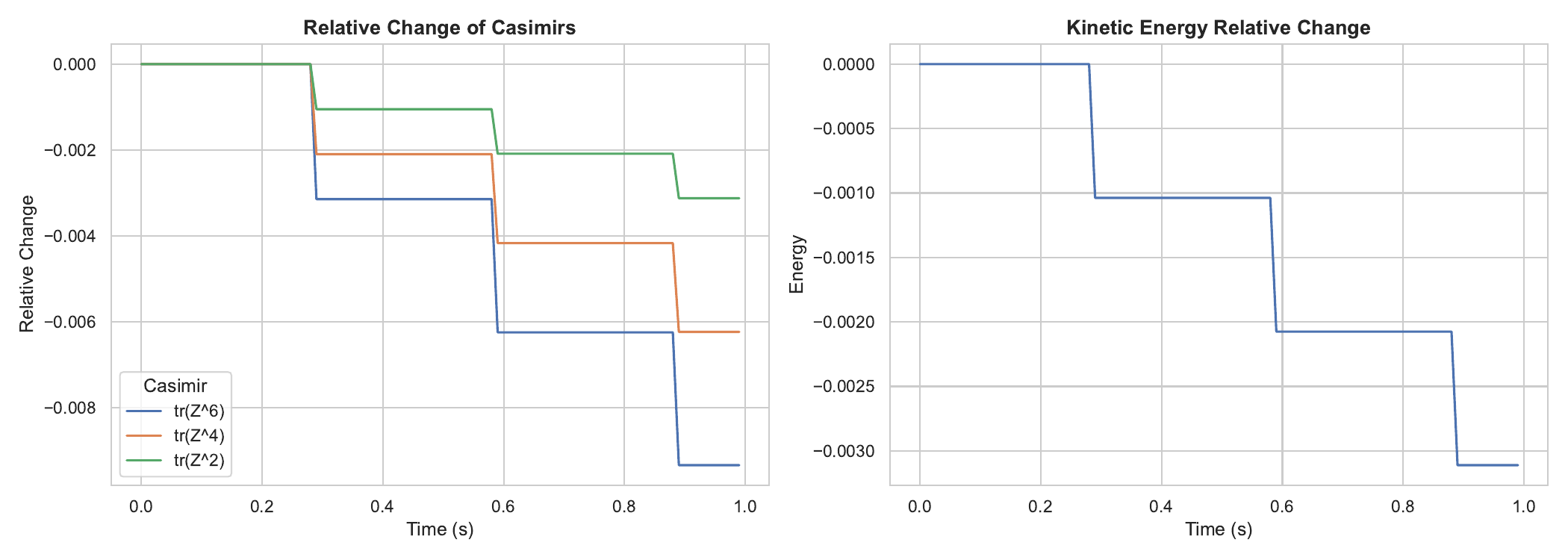}
    \caption{Casimirs $\tr(\bZ^k)$ and energy for the first 400 frames of Taylor-Green Vortex simulation. \textbf{Above:} No resetting. \textbf{Below:} Resetting every 120 frames.}
    \label{fig:tgv_noreset_energycasimirs}
\end{figure}

The Taylor--Green vortex is a steady-state solution to the incompressible Euler equations
\eqref{eq:inc_euler} on the domain $M = [-1, 1]^2$,
\begin{align}
    \bu_\mathrm{TGV} (x, y) = \begin{pmatrix}
        \sin \pi x \cos \pi y \\
        - \cos \pi x \sin \pi y
    \end{pmatrix}.
\end{align}
The field $\bu(x, y, t) = \bu_\mathrm{TGV}(x, y)$ solves the incompressible Euler equations for all $t$: the velocity, and hence the vorticity, remains constant in time. This provides a natural setting in which to study the convergence of our algorithm, which we report for our choice of IGA--FEEC basis (see \Cref{fig:convergence_studies}).

The same experiment serves to study the invariants of motion. Although the velocity is steady, the flow map is not, and because we track the inverse flow map explicitly in $\blambda$, this makes for a more demanding test than the stationarity of $\bu$ suggests. We run the simulation on a Delaunay triangulation of the domain with 11.6k vertices and a timestep of $\Delta t = 0.0025\,\mathrm{s}$, and contrast the evolution of an advected dye, initialized to the $x$-component of position, with and without resetting (\Cref{fig:tgv_reset_cmp}). Without resetting, the $x$-component of $\blambda$ develops dispersive discretization artifacts as the shearing accumulates, and the method loses accuracy. \Cref{fig:tgv_noreset_energycasimirs} plots the relative change in energy and Casimirs without resetting over the first $1\,\mathrm{s}$ (400 frames), before these artifacts appear; both the Casimirs $\tr(\bZ^k)$ and the energy are conserved to machine precision, on the order of $10^{-14}$. With resetting every $0.3\,\mathrm{s}$ (120 frames), each reset produces a small jump in the conserved quantities, visible in \Cref{fig:tgv_noreset_energycasimirs}. These losses accumulate slowly: over the full $5\,\mathrm{s}$ simulation the energy drops by only about $1.6\%$.

\subsubsection{Eigenfunction Vortices}
To probe how each method handles multi-scale vorticity fields, we initialize the flow with a random streamfunction drawn from the low-frequency spectrum of the scalar Laplacian: we take the first 20 eigenfunctions and assign each an independent, identically distributed Gaussian coefficient, yielding a smooth but spatially complex initial vorticity field. This construction's band-limited spectrum makes the onset of spurious high-frequency content easy to spot. As the flow evolves, the Vorticity FEEC baseline develops pronounced dispersive artifacts, with energy leaking into oscillatory modes that were not present initially. Our method remains markedly cleaner over the same horizon. See \Cref{fig:laplacian_eigenfunction_visual_comparison}. We attribute the difference to the structure each scheme preserves: our discretization admits a discrete analogue of Kelvin's circulation theorem, tying the evolution to a conserved geometric quantity along advected discrete loops, whereas the baseline enforces only a weak notion of enstrophy.

To evaluate the traces of powers of the dual variable $\bZ$ (i.e., its spectrum) for the LdA methods implemented in FEEC, we apply $\textsc{Reset}$ to obtain corresponding $(\blambda,\bmu)$ fields and then $\textsc{Reconstruct}$ to assemble a matrix $\bZ$ whose spectrum we can inspect. This lets us compare the Casimir-preservation behavior of the LdA methods against our solver (see \Cref{fig:Laplacian_eigenfunctions_plot_comparisons}).

To investigate how resetting affects the preservation of Casimir invariants, \Cref{fig:Laplacian_eigenfunctions_reset_sweep} presents a parameter sweep of the reset interval $\alpha$ from 50 to 200 steps. The results reveal a trade-off: while less frequent resets (larger $\alpha$) better preserve energy and enstrophy by reducing dissipation, they allow numerical instabilities to grow. Conversely, more frequent resets suppress these instabilities at the cost of higher dissipation.  This is consistent with the suggestion that enstrophy conservation is perhaps undesirable in discrete fluid models due to its generation of excessive background noise; see, \eg, \cite[Section 5]{Modin:2025:MHD}.

\subsubsection{Verification of Noether charges}

To verify the conservation of the discrete circulation discussed in \Cref{thm:StrongKelvinCirculationTheorem}, consider an initial divergence-free velocity field \(\bu\) and define the corresponding infinitesimal generator \(\bY = \bA_{\bu}\).
Verification of Kelvin's circulation theorem consists of monitoring the invariance of the  pairing \(\langle \bZ \mid \bY \rangle\) as \(\bZ\) evolves by the coadjoint dynamics and \(\bY\) evolves by the adjoint dynamics.
Importantly, recall that \(\bZ\) is represented in low-rank form via \eqref{eq:Z-Clebsch}, enabling an efficient evaluation of the pairing without explicitly forming full-rank matrices.  While a direct adjoint update of \(\bY = \bA_{\bu}\) would require evolving a sparse but full-rank matrix, the structure of the pairing allows for simple advection of the action of $\bY$ on the  vectors appearing in the low-rank representation of \(\bZ\).

Concretely, let \(\blambda_i(t)\) and \(\bmu_i(t)\) denote the vectors defining \(\bZ\),
so that the vectors
\[
\hat\blambda_i(t) = \bY(t)\,\blambda_i(t) %
\qquad
\hat\bmu_i(t) = \bY(t)\,\bmu_i(t)%
\]
capture the action of \(\bY\) on the evolving basis of \(\bZ\) with their values at time zero equal to \[
\hat\blambda_i(0) = \bA_{\bu}\,\blambda_i(0),
\qquad
\hat\bmu_i(0) = \bA_{\bu}\,\bmu_i(0).
\]
Since $\frac{d}{dt} \langle \bZ\mid \bY \rangle = 0$, it can be verified that $\{(\hat\blambda,\hat\bmu)\}_{i=1}^m$ must also evolve the same way as $\{(\blambda,\bmu)\}_{i=1}^m$; i.e., through a rotation.
The Noether charge can then be evaluated as the combination
\[
\langle \bZ \mid \bY \rangle(t)
=
\sum_i
\bmu_i(t)^\top \hat\blambda_i(t)
-
\blambda_i(t)^\top \hat\bmu_i(t).
\]
This formulation avoids explicitly advecting the full matrix \(\bY\), instead requiring only the evolution of the vectors \(\hat\blambda_i(t)\) and \(\hat\bmu_i(t)\). 
In practice, we observe in \Cref{fig:noethercharge} that a random charge initialized from a divergence-free velocity field is conserved up to the tolerance of the orthogonal advection function (in this case, the Cayley transform).

\begin{figure}
    \centering
    \definecolor{mycolor1}{rgb}{0.06600,0.44300,0.74500}%
\definecolor{mycolor2}{rgb}{0.12941,0.12941,0.12941}%
\begin{tikzpicture}

\begin{axis}[%
width=0.8\linewidth,
height=0.4\linewidth,
at={(0,0)},
scale only axis,
xmin=475.25,
xmax=500,
xtick={475.25, 480, 485, 490, 495, 500},
xticklabels={475, 480, 485, 490, 495, 500},
xlabel style={font=\color{mycolor2}},
xlabel={\small \sffamily Time ($\mathrm{s}$)},
ymin=-9e-8,
ymax=0,
ytick={0,-1e-8,-2e-8,-3e-8,-4e-8,-5e-8,-6e-8,-7e-8,-8e-8,-9e-8},
ylabel style={font=\color{mycolor2}},
ylabel={\small \sffamily Relative error vs. $t=475\,\mathrm{s}$},
axis background/.style={fill=white},
title style={font=\small \sffamily\color{mycolor2}},
title={Random Noether charge},
axis x line*=bottom,
axis y line*=left,
xmajorgrids,
ymajorgrids,
legend style={legend cell align=left, align=left},
legend pos= south west,
xticklabel style={font=\small},
yticklabel style={font=\small}
]
\addplot [color=mycolor1, line width=1.4pt, forget plot]
  table[row sep=crcr]{%
475.25	-0\\
475.5	-4.92373221820688e-10\\
475.75	-1.0792390946039e-09\\
476	-1.79466296992367e-09\\
476.25	-2.96107975650472e-09\\
476.5	-3.9761678563929e-09\\
476.75	-5.0208854101374e-09\\
477	-6.09427417886895e-09\\
477.25	-7.20551780805298e-09\\
477.5	-8.19152178935402e-09\\
477.75	-9.27211545235448e-09\\
478	-1.02786833524293e-08\\
478.25	-1.12835278707593e-08\\
478.5	-1.22430357025302e-08\\
478.75	-1.31863241144747e-08\\
479	-1.41480385507148e-08\\
479.25	-1.50324881996465e-08\\
479.5	-1.59289152712473e-08\\
479.75	-1.69588939686979e-08\\
480	-1.76641828789561e-08\\
480.25	-1.8359128602306e-08\\
480.5	-1.85011070982388e-08\\
480.75	-1.88064065075598e-08\\
481	-1.919144040195e-08\\
481.25	-1.95763108083988e-08\\
481.5	-2.00306402575197e-08\\
481.75	-2.05911734687142e-08\\
482	-2.13451621540904e-08\\
482.25	-2.21687884959329e-08\\
482.5	-2.31495138764166e-08\\
482.75	-2.43058029390358e-08\\
483	-2.57483215724544e-08\\
483.25	-2.68093771391482e-08\\
483.5	-2.79331779617228e-08\\
483.75	-2.92115767470787e-08\\
484	-3.06201761382252e-08\\
484.25	-3.21018107098491e-08\\
484.5	-3.35956598903879e-08\\
484.75	-3.50638351885035e-08\\
485	-3.64602965654793e-08\\
485.25	-3.78634300526945e-08\\
485.5	-3.92104762338953e-08\\
485.75	-4.05166027993952e-08\\
486	-4.19153570303841e-08\\
486.25	-4.33835455233926e-08\\
486.5	-4.49367866869665e-08\\
486.75	-4.65462736561829e-08\\
487	-4.80265533375836e-08\\
487.25	-4.93409994439767e-08\\
487.5	-5.05214221266418e-08\\
487.75	-5.14656956499814e-08\\
488	-5.23012012573629e-08\\
488.25	-5.30180067345184e-08\\
488.5	-5.36752886015364e-08\\
488.75	-5.43384769531782e-08\\
489	-5.47953193857452e-08\\
489.25	-5.5200133068261e-08\\
489.5	-5.55512604325321e-08\\
489.75	-5.60516649994752e-08\\
490	-5.66824531306629e-08\\
490.25	-5.7514742885181e-08\\
490.5	-5.80725093526108e-08\\
490.75	-5.88179134728424e-08\\
491	-5.90389490087327e-08\\
491.25	-5.93255872993856e-08\\
491.5	-5.96677818820699e-08\\
491.75	-6.00414203772989e-08\\
492	-6.04181596723618e-08\\
492.25	-6.12592927406562e-08\\
492.5	-6.20502618834596e-08\\
492.75	-6.27391487372639e-08\\
493	-6.32738561638534e-08\\
493.25	-6.36201469562196e-08\\
493.5	-6.38213724522787e-08\\
493.75	-6.404767999154e-08\\
494	-6.44010047882206e-08\\
494.25	-6.48814203797115e-08\\
494.5	-6.54721599190438e-08\\
494.75	-6.61458425552091e-08\\
495	-6.68432312647073e-08\\
495.25	-6.75967333481791e-08\\
495.5	-6.83623145515434e-08\\
495.75	-6.9211215067102e-08\\
496	-6.96328763750774e-08\\
496.25	-7.02429073305842e-08\\
496.5	-7.10706281120621e-08\\
496.75	-7.21007981354599e-08\\
497	-7.28051938158334e-08\\
497.25	-7.36196418223504e-08\\
497.5	-7.47393786179086e-08\\
497.75	-7.57251030927586e-08\\
498	-7.69502947597612e-08\\
498.25	-7.83916358294442e-08\\
498.5	-7.94756239591024e-08\\
498.75	-8.02001918952089e-08\\
499	-8.08950890227337e-08\\
499.25	-8.13969516253434e-08\\
499.5	-8.19435191687858e-08\\
499.75	-8.25376565539558e-08\\
500	-8.29952271193328e-08\\
};
\end{axis}
\end{tikzpicture}%
    \caption{A random divergence-free velocity field $\bu_0$ is selected to initialize matrix vectors $\{(\hat\blambda, \hat\bmu)\}_{i=1}^m$ through the matrix $\bY(0)= \bA_{\bu_0}$. Then the variables $\{(\hat\blambda, \hat\bmu)\}_{i=1}^m$ are advected alongside with the Clebsch variables used during the simulation $\{(\blambda, \bmu)\}_{i=1}^m$. The Noether charge, here $\langle \bZ | \bY\rangle(t)$ remains preserved up to the tolerance of the Cayley solver used to compute the rotations used for each timestep update of these variables.}
    \label{fig:noethercharge}
\end{figure}

\subsection{Mollified Point Vortex Systems}
\label{sec:MollifiedPointVortexSystems}
Another class of interesting solutions to the Euler equations is provided by the evolution of mollified point vortices, where $\omega$ is a sum of Gaussian functions.
It will now be shown that the vakonomic method can approximate the motion of such vortices. 

\subsubsection{2D Shielded Taylor vortices}

Following \cite{Nabizadeh:2022:CF}, we initialize a pair of identical vortices on the square domain $[-\pi,\pi]^2$, centered at $(\pm\tfrac12 d, 0)$ for a separation distance $d$. For our method and the finite-element functional-fluids-on-surfaces baseline, the domain is closed, with homogeneous Dirichlet conditions imposed on the vorticity at the boundary; the domain is chosen large enough relative to the vortex cores that the boundary has negligible influence on the interaction.
Each vortex is prescribed through its vorticity using the radial profile
\[
\omega(r)=\frac{1}{a}\Big(2-\frac{r^2}{a^2}\Big)\,
\exp\!\left(\tfrac12\Big(1-\frac{r^2}{a^2}\Big)\right),\qquad a=0.3,
\]
where $r$ is the radial distance to the vortex center. This profile has a positive core and, beyond $r=\sqrt{2}\,a$, a surrounding annulus of opposite-sign vorticity. The two contributions cancel exactly, so each vortex carries zero net circulation, $\int\omega\,\mathrm{d}A=0$. This ``shielding'' makes the far field decay faster than a bare monopole and localizes the interaction between the two vortices. As the cores share the same sign, the like-signed pair co-rotates and merges, probing the scheme's handling of filamentation and small-scale vorticity transport.

For comparison, we also run the matrix-hydrodynamics scheme of Zeitlin on this benchmark. Unlike the two methods above, this baseline is formulated on the periodic flat torus, so its instance of the experiment uses the doubly periodic domain $[-\pi,\pi]^2$ rather than the closed domain with Dirichlet boundaries.

\begin{figure}
    \centering
    \includegraphics[width=\columnwidth]{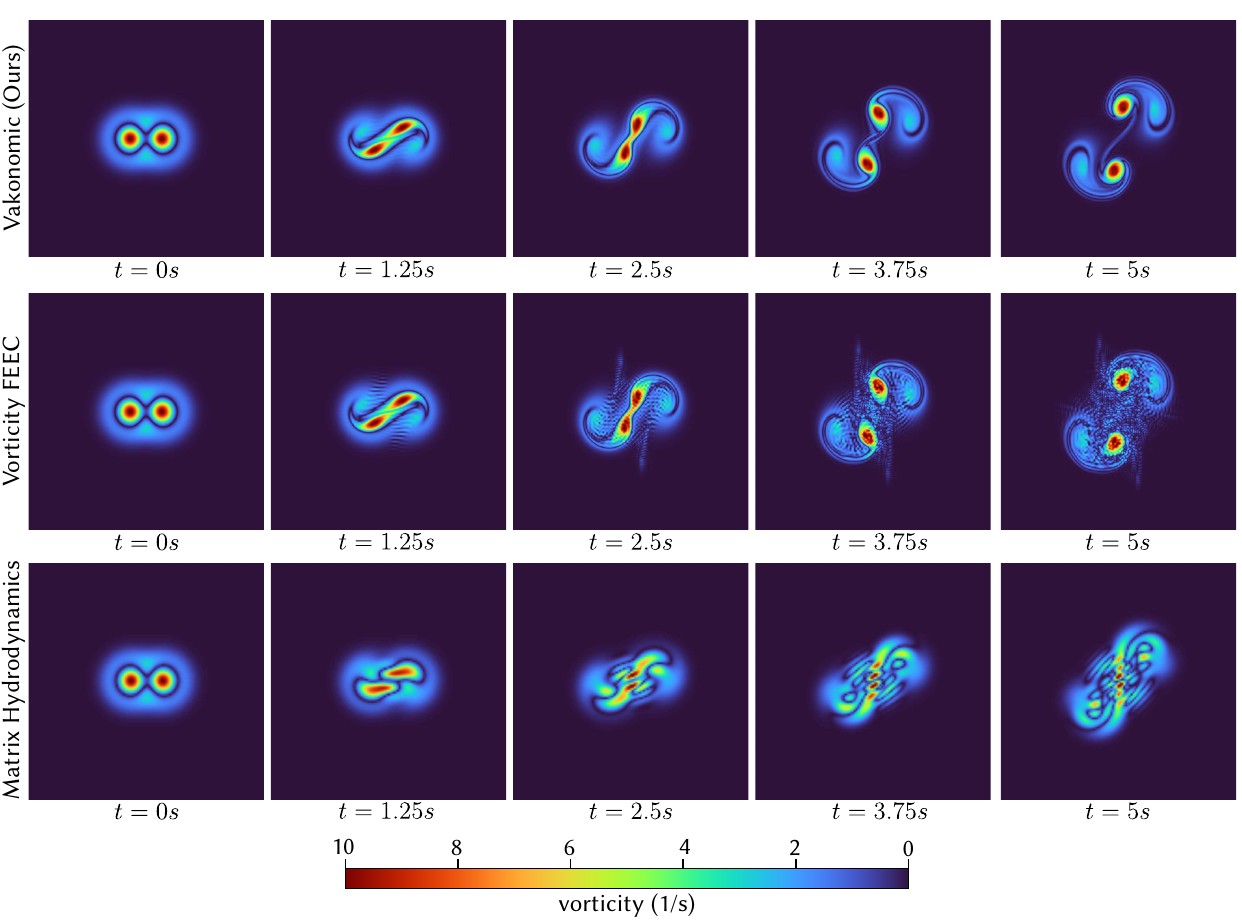}
    \caption{2D shielded Taylor-vortex experiment, comparing our vakonomic formulation against the functional-fluids-on-surfaces FEEC baseline. Both methods are run at $128\times128$ resolution with timestep $\Delta t = 1/48\,\mathrm{s}$ and quadratic spatial accuracy; for our method, the flow map is reset every 10 frames (roughly every $10/48\,\mathrm{s}$). Note the lack of dispersive effects in the vakonomic method.}
    \label{fig:shieldedTaylorVortices}
\end{figure}

\begin{figure}
    \includegraphics[width=\columnwidth]{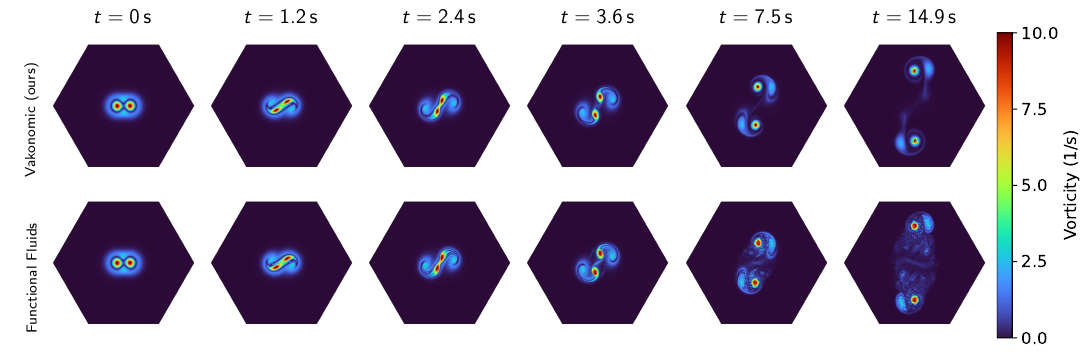}
    \caption{Evolution of the Shielded Taylor Vortices over time.} %
   \label{fig:shielded-vorts}
\end{figure}

\Cref{fig:shieldedTaylorVortices} compares the performance of the proposed vakonomic \algref{alg:sim_loop_w_reset} against the vorticity FEEC and matrix hydrodynamics (MHD) baselines, using the B-spline discretization at resolution $128^2$.  Observe that the vakonomic approach produces more realistic behavior as the vortices evolve, with limited dispersion despite the use of a relatively low resolution grid.  This is a consequence of the discrete relabeling symmetry satisfied by the vakonomic model, leading to correct geometric behavior independent of resolution. Interestingly, MHD produces the least physical evolution despite being the only model that is Lie-algebraically closed, perhaps due to its substantially slower convergence with grid resolution. \Cref{fig:shielded-vorts} additionally displays the same experiment using the Whitney form construction 
with a longer integration in time.

\subsubsection{2D Leapfrogging vortices}
Long-term behavior of our method can be studied 
following \cite{Nabizadeh:2024:FIP}, by superposing four regularized (Lamb--Oseen--type) \cite{Lamb:1895:HD, Batchelor:2000:IFD} point vortices on the horizontal centerline
of $[-\pi,\pi]^2$.  Each point vortex is modeled with a Gaussian vorticity blob
\[\omega_i(\bx)=\pm\frac{\Gamma}{2\pi}\exp\left(-\frac{|\bx-\bx_i|^2}{2\sigma^2}\right),\]
of core width $\sigma=9\pi/256\,\mathrm{m}$ and strength $\Gamma=100\,\mathrm{s}^{-1}$.
The four centers form two coaxial, oppositely signed dipoles of separations $d_a$ and $d_b$:
$(+,-)$ at $(\mp\tfrac12 d_a,0)$ and $(+,-)$ at $(\mp\tfrac12 d_b,0)$. 
Each $(+,-)$ dipole self-propels perpendicular to its connecting axis. Because both dipoles translate in the same direction, the narrower pair accelerates, contracts, and slips through the wider pair, which in turn expands and slows. The roles then alternate, producing the characteristic leapfrogging motion.
As the total circulation produced is zero, this test stresses invariant preservation and the long-term accuracy
of mutual induction between coherent structures.
\begin{figure}
    \centering
    \includegraphics[width=\linewidth]{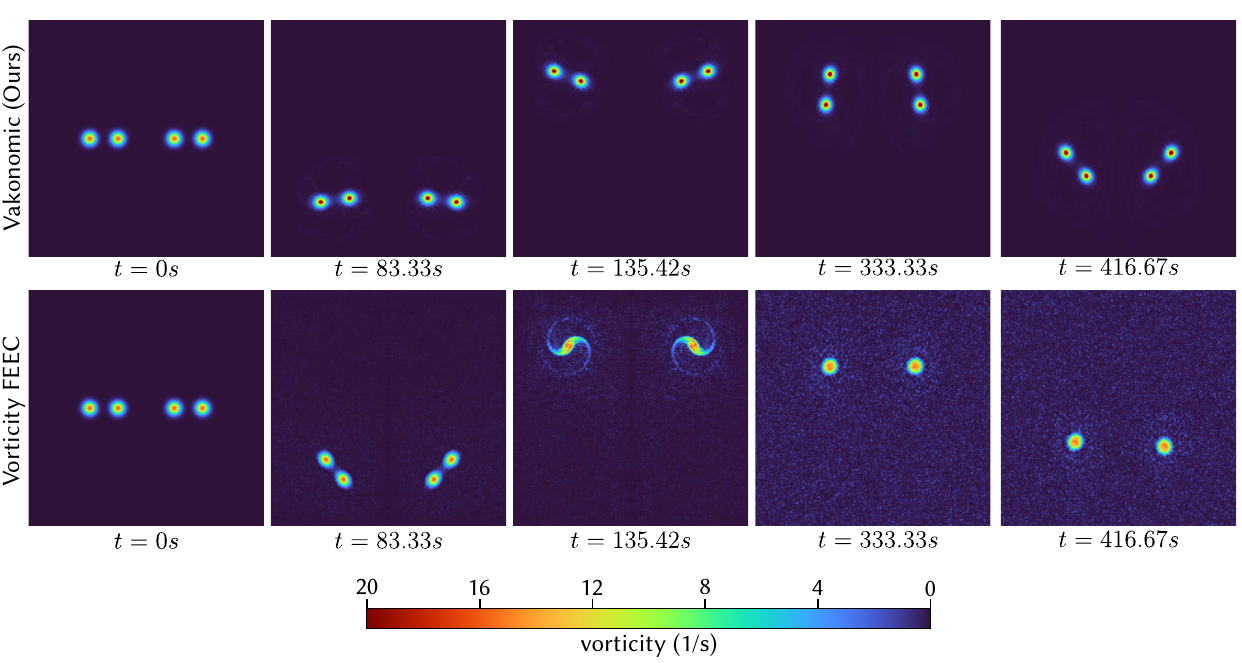}
    \caption{%
    2D vortex leapfrogging marathon experiment, comparing our vakonomic formulation against the functional-fluids-on-surfaces FEEC baseline. Both methods are run at $128\times128$ resolution with timestep $\Delta t = 1/48\,\mathrm{s}$ and quadratic spatial accuracy; for our method, the flow map is reset every 10 frames (roughly every $10/48\,\mathrm{s}$). Note the un-physical collision of vortices in the baseline that is absent from the vakonomic approach.}
    \label{fig:leapfrogVortices}
\end{figure}

\begin{figure}
    \centering
    \begin{minipage}{0.47\columnwidth}
        \input{figs/bsplines_basis/plots/leapfrog_casimir2}
    \end{minipage}\hfill
    \begin{minipage}{0.47\columnwidth}
        \input{figs/bsplines_basis/plots/leapfrog_casimir4}
    \end{minipage}
    \begin{minipage}{0.47\columnwidth}
        \input{figs/bsplines_basis/plots/leapfrog_energy}
    \end{minipage}\hfill
    \begin{minipage}{0.47\columnwidth}
        \input{figs/bsplines_basis/plots/leapfrog_vort2}
    \end{minipage}
    \caption{ %
    Conserved quantities for the leapfrogging vortices, each plotted as relative error against the initial value at $t = 0$. \emph{Top row:} the quadratic and quartic Casimir power traces. \emph{Bottom row:} energy and enstrophy. Here the Casimirs follow the same trend as the kinetic energy. This is expected: in the trace of powers of $\bZ$, the covariance-like structure lets the inverse-flow-map rows and columns be normalized, leaving the remaining rows and columns associated with the kinetic energy measured in $\cF$. With resets applied often (every 10 frames), the Casimirs therefore dissipate in step with the energy. Although the plot may suggest that structure is lost, our algorithm still follows a motion informed by a discrete particle-relabeling symmetry and the underlying Lie--Poisson structure, yielding visually more accurate results.  
    }
    \label{fig:leapfrog_vortices_plot_comparisons}
\end{figure}

\Cref{fig:leapfrogVortices} displays the results of an extended vortex leapfrogging simulation, comparing the vakonomic approach to a vorticity FEEC baseline.  Observe that the proposed method maintains appropriate vortex separation for all time with no change in the circulation (recall that this quantity is a discrete invariant). Conversely, the baseline method merges vortices un-physically after about 135s, leading to incorrect results in the long-term.

\Cref{fig:leapfrog_vortices_plot_comparisons} plots the relative error in the conserved quantities for the leapfrogging vortices. Notably, the dissipation in Casimirs (due to resetting) seen in the vakonomic case closely mimics the 
dissipation trend of the kinetic energy. This coupling arises because the covariance-like structure seen in \eqref{eq:blkdiagBwithLambdaAndMu} normalizes the components representing the inverse-flow map, leaving the Casimirs dominated by the kinetic energy metric in $\mathcal{F}$. Thus, frequent resetting (every 10 frames) drives down both energy and Casimirs simultaneously while minimizing overall dispersion. Crucially, this decay is solely a consequence of the resetting procedure employed to make this instance of the vakonomic formulation numerically tractable. Without resets, the system preserves its geometric nature, evolving along a path governed by the discrete particle-relabeling symmetry and Lie–Poisson structure (cf. \Cref{fig:tgv_noreset_energycasimirs}.
Designing alternative resetting schemes that avoid these dissipative drawbacks, akin to the hybrid grid-particle methods explored in \cite{Nabizadeh:2024:FIP},  remains an exciting direction for future research.\subsubsection{Six Vortices on an Oblate Spheroid}
Another interesting test monitors the behavior of multiple point vortices on a surface with positive sectional curvature \citep{Vankerschaver:2014:PVD}.  Following Azencot et al.\ \cite{Azencot:2014:FFS}, 
six Gaussian vortices are instantiated on an oblate spheroid (with aspect ratio $2 : 1$). The vorticity field is initialized as 
\begin{align*}
    \omega(\bx) &= \sum_{k=1}^{6} A \exp\!\left(-\frac{\lVert \mathbf{x}-\mathbf{m}_k\rVert^2}{2r^2}\right), \\ 
    \mathbf{m}_k &= \left( 1.1,\, \rho\cos\frac{2\pi k}{6},\, \rho\sin\frac{2\pi k}{6} \right), \quad 1\leq k\leq 6,
\end{align*}
where the parameters $A=2, r=0.1$, and $\rho = 0.835$
are chosen so that the points $\bm_k$ lie on the oblate spheroid mesh of 41k vertices within the slice where $x = 1.1$. 

Running the vakonomic \algref{alg:sim_loop_w_reset} with $\Delta t = 0.025$ and resetting automatically using the FTLE-based approach described above (roughly every $1.5s$)
yields the results shown in \Cref{fig:six-point-vorts}. As expected, the vortices remain separated and equally spaced while rotating for a substantial amount of time before eventually pairing together and merging due to the curvature of the spheroid. Observe that minimal dispersion is produced.  It is worth noting that while the Functional Fluids baseline only tracks the vorticity data, our approach tracks an approximation of the inverse flow map, which is a much more numerically challenging requirement.

We additionally remark that this oblate spheroid mesh is identical to the lower resolution icosphere mesh used before, stretched in $x$-axis by a factor of 2 -- as a result, the triangles are stretched out and the mesh fails to be Delaunay. This is not an issue when using the full Galerkin Hodge stars, but the mass lumped Hodge stars can fail to be invertible on this mesh, so we avoid using them.

\begin{figure}[tbp]
    \includegraphics[width=\columnwidth]{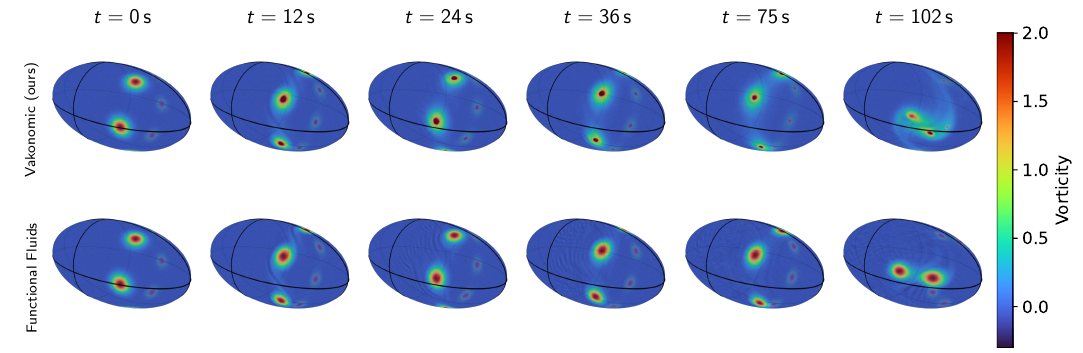}
    \caption{Point Vortices on an Oblate Spheroid. The sphere is made $50\%$ transparent to illustrate vortices on the reverse side.}
    \label{fig:six-point-vorts}
\end{figure}

\subsection{3D Trefoil Knot}
\label{sec:ThreeDimensionalTrefoilKnotExp}
As a further validation, we consider a three-dimensional trefoil knot experiment: a $(2,3)$-torus knot is embedded in the periodic box and assigned a constant circulation along its core, which is then mollified into a vortex tube of finite thickness \cite{Nabizadeh:2024:FIP}. Under minimal dissipation, the knotted filament is expected to deform, undergo vortex reconnection, and separate into distinct rings \cite{Kleckner:2013:CDK}. Our method reproduces this reconnection without loss of vorticity strength, as indicated by the coloring of the filaments in \Cref{fig:trefoilknot}, even at the coarse resolution of $64^3$ with $\Delta t = 1/72\,\mathrm{s}$ and resets every 15 frames (approximately $5/24\,\mathrm{s}$).

\subsection{Flows on Surfaces}
\label{sec:FlowsOnSurfaces}
Finally, to illustrate our method's ability to perform larger scale fluid simulation experiments, we run a number of examples of more complex flows on surfaces. 

\subsubsection{Double Shear Layer}
We begin with the double shear layer on a hyperboloid-like surface from Azencot et al.\ \cite{Azencot:2014:FFS} and San and Maulik \cite{San:2013:CGP}. This experiment initializes the vorticity to be the sum of two layers of opposite sign, as well as a small perturbation as one rotates around the hyperboloid.
\begin{align*}
\omega(x, y, z)
&=
\sum_{j=1}^2 s_j\,\sigma\,\operatorname{cosh}^{-2}\left(\sigma(z-z_j)\right)
\;+\;
\delta \cos\theta, \\
\theta&=\operatorname{atan2}(y,x),
\end{align*}
with parameters
\begin{align*}
(z_1,z_2)&=(0.5,-0.5), \quad (s_1,s_2)=(1,-1), \\
\sigma&=30, \quad \delta=\frac{0.05}{\pi}.
\end{align*}

We run this experiment with a timestep size of $\Delta t = 0.005\,\mathrm{s}$, and with a reset performed every $0.18\,\mathrm{s}$ for the vakonomic experiment, on a mesh with 175k vertices. We compare our results to our implementation of Azencot et al.\ \cite{Azencot:2014:FFS}, and show the results in \Cref{fig:shear-layer}. Our results have notably fewer noisy artifacts, and cleanly illustrate the ``roll-up'' phenomenon theoretically expected in this experiment.

\begin{figure}[tbp]
    \includegraphics[width=\columnwidth]{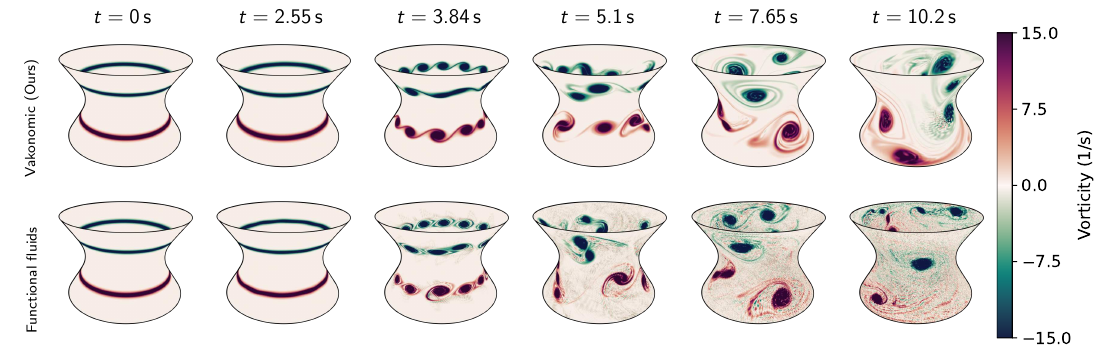}
    \caption{Vorticity evolution over time for the double shear layer experiment. Our method (above) is compared to Azencot et al.\ \cite{Azencot:2014:FFS} (below). }
    \label{fig:shear-layer}
\end{figure}

\subsubsection{Flow on a Sphere}
\label{sec:sphere-flow}
Next, we illustrate some examples of flows on various surfaces. The initial is formed from the Arnold-Beltrami-Childress (ABC) flow, sampled at the barycenter of each face and orthogonally projected to the tangent space of that face, before being averaged to vertices. The ABC flow is initialized by the velocity field
\begin{align*}
    \bu_{A B C} (x, y, z) = \begin{pmatrix}
    \sin a z + \cos a y \\
    \sin a x + \cos a z \\
    \sin a y + \cos a x
    \end{pmatrix},
\end{align*}
where $a = 10$ is a scale parameter. 

We initialize this projected ABC flow on two unit icosphere meshes (i.e., a subdivided icosahedron projected to a sphere), with 6 and 7 subdivisions respectively (i.e., with 41k and 164k vertices, respectively), and run our method with timestep sizes of $\Delta t = 10^{-3}\,\mathrm{s}$ for the higher resolution mesh and $\Delta t = 2.5 \times 10^{-3}\,\mathrm{s}$ for the lower resolution mesh, performing a reset every $0.2$ s. The evolution of the vorticity over time is shown in \Cref{fig:sphere-flow}, along with . %
the finite-time Lyapunov exponent computed using the accumulated $\blambda$ from the previous reset.

\begin{figure}[tbp]
    \includegraphics[width=\columnwidth]{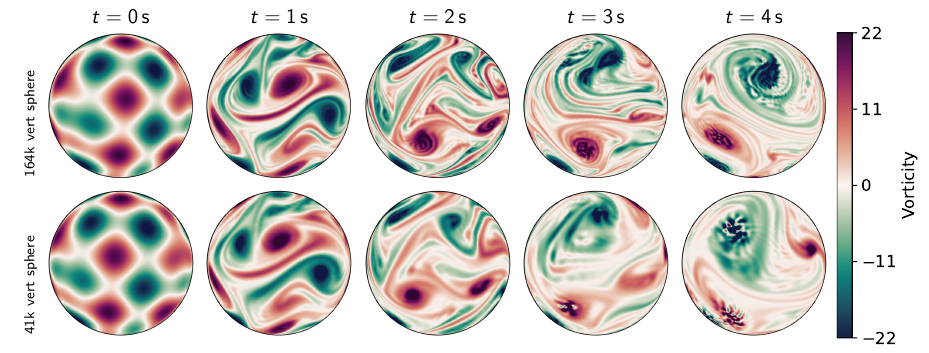}
    \includegraphics[width=0.99\columnwidth]{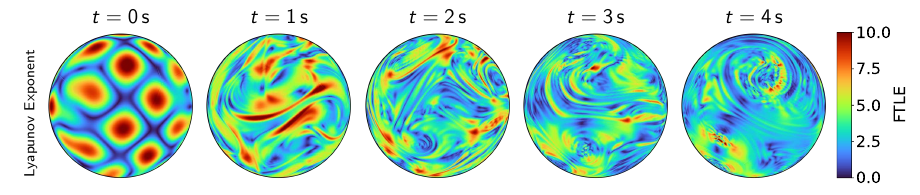}
    \caption{\textbf{Above:} Vorticity evolution on two spheres of different resolution, initialized by a projected ABC flow. \textbf{Below:} The finite-time Lyapunov exponent of the flow visualized on the higher resolution sphere.}
    \label{fig:sphere-flow}
\end{figure}

\subsubsection{Flow on a Helicoid}
Similarly, we run a simulation on a helicoid, a minimal surface, once again initialized with an ABC-flow projected to the tangent space of the surface, and compare our results to Azencot et al.\ \cite{Azencot:2014:FFS}. As before, \Cref{fig:helicoid-flow} confirms that that fewer noisy artifacts are produced by the vakonomic approach compared to the functional fluids method. 

\begin{figure}[tbp]
    \includegraphics[width=\columnwidth]{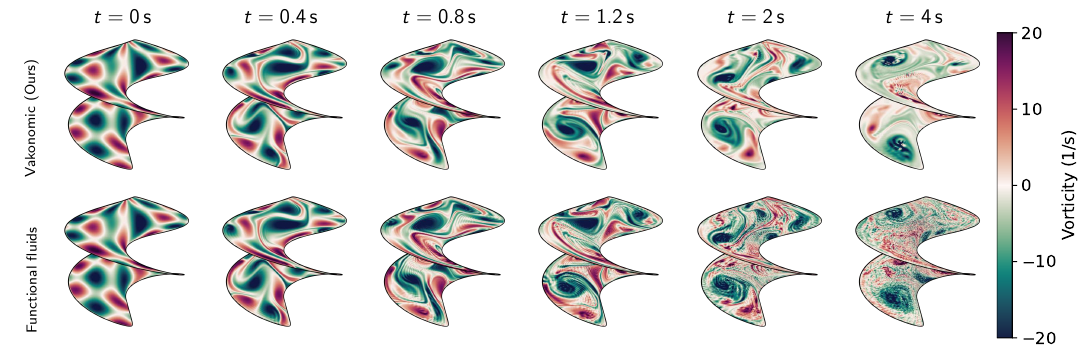}
    \caption{A flow on a helicoid; initialized by a projected ABC flow.}
    \label{fig:helicoid-flow}
\end{figure}

\subsubsection{Flow on a Bunny}
\begin{figure}[htbp]
    \includegraphics[width=\columnwidth]{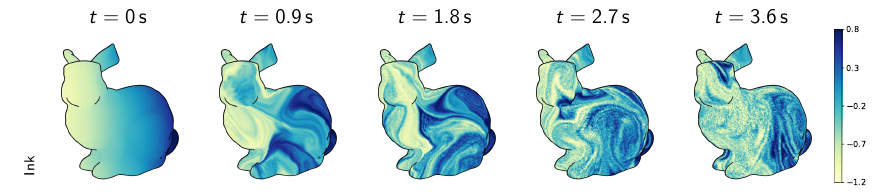}
    \caption{A flow on a Stanford bunny; initialized once again by a projected ABC flow.
    }
    \label{fig:bunny-flow}
\end{figure}

Finally, to conclude, we run a fluid simulation on a Stanford Bunny mesh with 85.7k vertices; this mesh has more intricate geometry than our other examples. We initialize a flow using the same projected ABC flow as in \secref{sec:sphere-flow}, running the simulation with $\Delta t = \frac{1}{3} \times 10^{-3}\,\mathrm{s}$ and performing a reset every $0.12\,\mathrm{s}$. The simulation runs stably for a significant amount of time. In \Cref{fig:bunny-flow}, we plot the evolution of a passively advected dye initialized as the $x$-coordinate function. The results demonstrate mixing behavior characteristic of turbulent phenomena.

\subsection{Buoyancy: stratified flow in the Boussinesq approximation}
\label{sec:Buoynacy}
\begin{figure}[t]
    \centering
    \includegraphics[width=\linewidth]{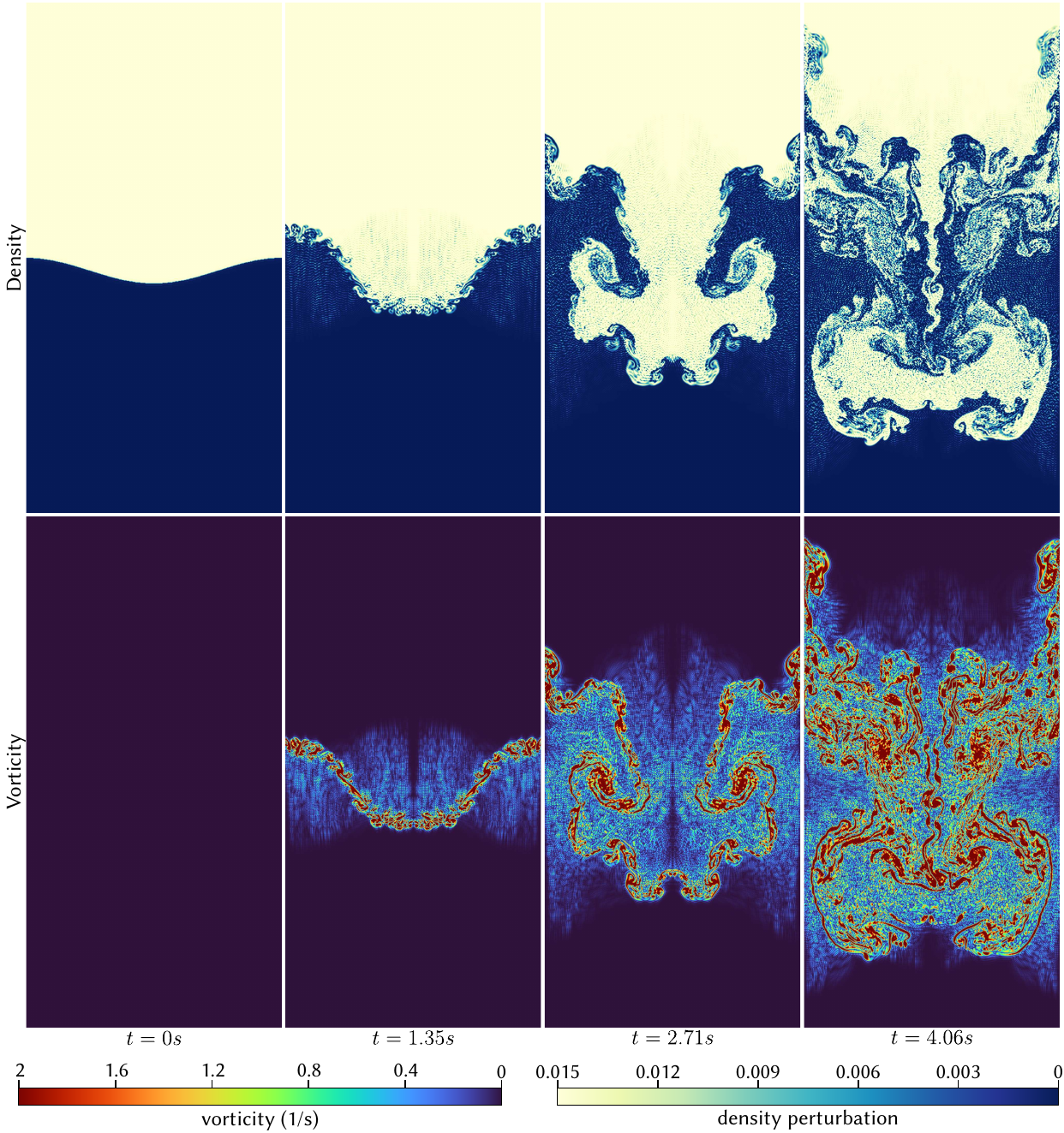}
    \caption{Rayleigh--Taylor instability of a stratified fluid under the Boussinesq approximation. A heavier fluid (white ink) sits atop a lighter fluid (blue) and falls into it, developing the characteristic plumes and intricate mixing patterns as the flow evolves. Resolution $256\times512$, base timestep $\Delta t = 1/96\,\mathrm{s}$ with adaptive timestepping capped at CFL $0.45$.
    }
    \label{fig:buoyancy_demo}
\end{figure}

Besides the purely incompressible case, we also simulate density-stratified flow in the Boussinesq approximation \cite{Tritton:2012:PFD}, where a density perturbation $\delta\brho$ about a constant reference state is advected passively by the incompressible velocity, entering the momentum balance only through gravity.  For a gravitational force $g$ along $-\hat{\mathbf y}$, these equations are
\begin{equation}
\partial_t\vec u + (\vec u\!\cdot\!\nabla)\vec u = -\nabla p - g\,\delta\rho\,\hat{\mathbf y},
\quad \nabla\!\cdot\!\vec u = 0,
\quad \partial_t\,\delta\rho + (\vec u\!\cdot\!\nabla)\delta\rho = 0 .
\end{equation}
Adjoining a passively advected scalar to the fluid configuration places the system in the semidirect-product Euler--Poincar\'e setting \cite{Marsden:1984:SPR,Marsden:1997:IMS}: the volume-preserving diffeomorphism group $\mathrm{SDiff}(M)$ acts on the space of advected scalars $\mathcal F$, and the reduced Lagrangian depends on $\delta\rho$ as an advected parameter.
Structure-preserving discretizations of such systems, in which the auxiliary field is advected consistently with the fluid bracket, are treated by Desbrun et al. \cite{Desbrun:2014:VDR}. The buoyancy coupling is generated by the potential energy given as the density perturbation weighted by height along gravity,
\begin{equation}
\mathcal V[\delta\rho] = \int_M g\,\delta\rho\,y \,\Vol ,
\end{equation}
whose variation gives the force $-g\,\delta\rho\,\hat{\mathbf y}$. 

\vspace{0.6em}
\begin{algorithm}
\caption{$\textsc{StepBoussinesq}(\blambda,\bmu,\delta\brho,\bg)^{(n)}$: Implicit Trapezoidal Step with Buoyancy}
\label{alg:step-boussinesq}
\begin{algorithmic}[1]
    \Require Discrete Clebsch variables $(\blambda,\bmu)^{(n)} \in P^m$, Density Perturbation $\delta\brho^{(n)}$,
             Accumulated gravity field $\bg^{(n)}$
    \State $\bs_g \gets -\tfrac12 g\,\by$ \Comment{constant source; $\by$ the fixed height field}
    \State $\bu^{(n)} = \bu^\star \gets \textsc{Project}\big(\textsc{Reconstruct}\big(\left\{(\blambda,\bmu)^{(n)}, (\delta\brho,\bg)^{(n)}\right\}\big)\big)$
    \Repeat \Comment{Fixed point iteration on $\bu^\star$}
        \State $(\blambda,\bmu)^{(n+1)} \gets \textsc{Advect}\big(\bu^\star,(\blambda,\bmu)^{(n)}\big)$
        \State $\delta\brho^{(n+1)} \gets \textsc{Advect}\big(\bu^\star,\delta\brho^{(n)}\big)$
        \State $\bg^\star \gets \textsc{Advect}\big(\bu^\star,\bg^{(n)}\big)$
        \State $\bg^{(n+1)} \gets \bg^\star + \Delta t\,\bs_g$
            \Comment{accumulate $\int g\,y$ along paths}
        \State $\hat\bu \gets \textsc{Reconstruct}\big(\left\{(\blambda,\bmu)^{(n+1)}, (\delta\brho,\bg)^{(n+1)}\right\}\big)$
        \State $\bu^{(n+1)} \gets \textsc{Project}(\hat\bu)$
        \State $\bu^\star \gets \tfrac12\big(\bu^{(n)} + \bu^{(n+1)}\big)$
    \Until{$\bu^\star$ not converged}
    \Ensure $(\blambda,\bmu,\delta\brho,\bg)^{(n+1)}$
\end{algorithmic}
\end{algorithm}
\vspace{0.6em}

This potential is accommodated within the machinery of \secref{sec:algorithm} by introducing one additional advected half-density $\bg$, the \emph{accumulated gravity half-density field}, initialized to zero and advected alongside the Clebsch labels. These equations are
\begin{equation}
\bg^\star = \textsc{Advect}\,\!\big(\bu^\star, \bg^{n}\big), \qquad
\bg^{n+1} = \bg^\star + \Delta t\,\bs_g ,
\end{equation}
with the constant source $\bs_g = -\tfrac12 g\,\by$ accumulated after each advection and where $\by$ is the fixed $y$-direction half-density field.  In this way, $\bg$ accumulates the time integral of the gravitational potential along particle paths. Contracting $(\delta\brho,\bg)$ against the tensor $\bA$ --- with $\delta\brho$ in a $\blambda$-slot and $\bg$ in a $\bmu$-slot --- then yields the buoyancy force on the grid, leaving $\textsc{Advect}$ and $\bA$ unchanged. The resulting update step is given in \Cref{alg:step-boussinesq}, and the primary loop of \algref{alg:sim_loop_w_reset} applies verbatim with $\textsc{Step}$ replaced by $\textsc{StepBoussinesq}$, initializing $\bg \gets \mathbf 0$ and $\delta\brho \gets \delta\brho_0$. At each reset, $\textsc{Reset}$ re-anchors $(\blambda,\bmu)$ and restarts $\bg \gets \mathbf 0$, with $\delta\brho$ carried forward as a material invariant and the accumulated impulse already folded into the reconstructed velocity. \Cref{fig:buoyancy_demo} demonstrates how this procedure recovers the characteristic chaotic, fractal-like structure of the Rayleigh--Taylor instability. Notably, despite the reduced dispersion of the vakonomic method relative to LdA approaches (cf. \cite[Figure 2]{Gawlik:2021:VFD}, some dispersive effects remain, appearing in the vorticity and producing graininess in the advected ink.

\section{Conclusion}
\label{sec:conclusion}

A vakonomic variational approach to discretizing the incompressible Euler equations has been considered and implemented on various benchmark examples.  Using a spatial discretization based on the Koopman representation of diffeomorphisms, the resulting nonholonomic constraint has been treated in a vakonomic way that differs substantially from the standard choice of Lagrange--d'Alembert.  Remarkably, the resulting equations of motion are fully Hamiltonian, exhibiting a discrete Lie--Poisson structure that guarantees an analogue of Kelvin's circulation theorem along with the preservation of Casimir invariants key to physical fidelity.
To facilitate computational efficiency, a momentum map representation based on discrete Clebsch variables has been proposed and used to solve the Lax equation governing the vakonomic dynamics, yielding a simulation algorithm that does not require the manipulation of matrix variables. Ultimately, it was shown that the proposed vakonomic fluids behave stably and consistently even at low grid resolutions, leading to increased robustness and physical realism in the long term.

Many avenues for future work in this area are available and worth pursuing.  On the
theoretical side, it is interesting to consider whether the Lie algebra $\so(F)$ underlying the vakonomic discretization actually converges to $\sdiff(M)$ in the continuum limit, and if so at what rate.  Similarly, it is unclear at present how this notion of convergence compares with the analogous convergence considered by  \cite{Zeitlin:1991:FMA}, along with \cite{Modin:2025:MHD}, in matrix hydrodynamics.
Closely tied to this is the role played by the rank of the discrete system, which appears to shape the dynamics in ways not fully understood; while only $d-1$ Clebsch pairs are needed in theory, there is some variance in practice, and a sharper understanding here would clarify both the fidelity and cost of the method. At a more foundational level, it is natural to ask whether the Clebsch representation proposed here is most appropriate for discretizing these flows, or whether alternative symplectic (or Poisson) spaces might furnish more faithful (or more economical) descriptions.

On the numerical side, the main drawback of the proposed approach is its dispersive nature, controlled here through periodic resetting of the discrete Clebsch fields.  Better methods for dispersion-limiting are possible and certainly desired.  A particularly appealing option involves the inclusion of double-bracket dissipation \cite{Bloch:2026:Arxiv}, which injects diffusion only along coadjoint orbits and without affecting Casimir-preservation.  
Ideally, any alternative to the present approach will  limit dispersion without harming the Lie--Poisson structure of the discrete vakonomic equations or reintroducing other undesirable artifacts. It is also worth mentioning that the half-density parameterization introduced here, and hence the directional derivative tensor $\bA$, is not restricted to representing divergence-free velocity fields.  Therefore, the same strategy can be used to transport general velocity fields, and it is interesting to consider extensions of the present approach beyond incompressible flow.  While the vakonomic equations in this case will still take a Lax form, the Casimirs will no longer coincide with those of the Euler equations, and it is unclear how this will affect the dynamics.  
Finally, the semidirect-product perspective used to incorporate buoyancy is extendable to a broad family of physics. In fact, the same construction should accommodate shallow-water flow, compressible fluids, magnetohydrodynamics, and complex fluids, each recovered by adjoining appropriate advected quantities to the fluid configuration variable.  Regardless, we find the vakonomic approach to fluid simulation particularly exciting.

\section*{Acknowledgements}
The work is supported by NSF CAREER Award 2239062.
A.G. acknowledges support received through the U.S. Department of Energy, Office of Science, Office of Advanced Scientific Computing Research's Applied Mathematics Competitive Portfolios program as well as their Mathematical Multifaceted Integrated Capability Centers (MMICCS) program, under Field Work Proposal 22025291 and the Multifaceted Mathematics for Predictive Digital Twins (M2dt) project.  M. Nabizadeh acknowledges the generous support of the Wistron Corporation, from ARO grant W911NF-26-1-A154, and from the MIT Generative AI Impact Consortium. Additional support was provided by SideFX Software.
Sandia National Laboratories is a multimission laboratory managed and operated by National Technology \& Engineering Solutions of Sandia, LLC, a wholly owned subsidiary of Honeywell International Inc., for the U.S.~Department of Energy's National Nuclear Security Administration under contract DE-NA0003525.
This paper describes objective technical results and analysis. Any subjective views or opinions that might be expressed in the paper do not necessarily represent the views of the U.S.~Department of Energy or the United States Government.

\appendix

\section{Quantum Formulation}

Here, we note how the vakonomic Euler--Arnold system \eqref{eq:Z-vakonomic-final} mirrors a von Neumann quantum mechanical formulation.  

\subsection{Observables and States}
In 
algebraic quantum mechanics, a quantum system is described by a unital $C^*$-algebra (or von Neumann algebra) $\mathcal{A}$, and the collection of all observables is its self-adjoint part, 
\[A = \{O\in\mathcal{A}\mid O^*=O\}. \]
Moreover, the space 
\[\ii A = \{\ii O\mid O\in A\},\]
formed by multiplying all observables in $A$ with the imaginary element $\ii$ consists of skew self-adjoint operators and forms a Lie algebra under the commutator bracket:
\[ [\ii f, \ii g] = -[f,g] = \ii\big(\ii[f,g]\big), \]
where $\ii[f,g]\in A$ is self-adjoint when $f,g\in A$ are self-adjoint.

Similar to how the dual space of a Lie algebra is the state space for the standard Euler--Arnold coadjoint equation (\Cref{sec:background}), the dual space of a \(C^*\)-algebra is dynamically interesting. Note that 
the dual $\mathcal{A}^*$ contains the space of quantum states as a convex subset:
\[S = \{\omega\in\mathcal{A}^*\mid \omega\geq 0,\,\, \omega(\id) = 1\}.\]
In finite dimensions, states may be represented by density operators (matrices) $\omega$, in which case their pairing with observables is
given by the Frobenius inner product
\(\langle \omega\,|f\rangle = \langle\omega,f\rangle = \tr(f\omega), \) 
and the normalization condition becomes $\langle\omega|\,\id\rangle = \tr(\omega) = 1.$
Pure states are those density operators with rank 1, i.e. $\omega = qq^*$ for some unit vector $q$.  Any other states are mixed states, lying in the convex hull of the set of pure states.

A canonical example of this structure is given by classical probability theory, where the space of observables contains measurable functions defined over some sample space.  This forms a commutative von Neumann algebra, where each observable can be considered an ``infinite diagonal matrix'' over the 
points of the sample space.  Each state is a similar object, but carries the additional conditions of entry-wise non-negativity and diagonal values which sum to one.  Clearly, this is just the space of probability measures defined over the sample space.  The dual pairing in this case is given by the expectation of the observable over the probabilistic state.

\subsection{Quantum Hydrodynamics}
In the smooth theory of hydrodynamics, the 
space of ``quantum observables'' is given by the reverse of the construction from before:
\[-\ii\sdiff(M) = \{-\ii\LD_X\,|\, X\in\sdiff(M)\}.\]
Each \(-\ii\LD_X\) is a Hermitian operator on $L^2(M)$ strongly analogous to the usual momentum operator \(\hat p = -\ii\hbar {d\over dx}\).
This ``velocity'' observable generates advection: given a scalar function \(\psi\in C^\infty(M;\mathbb{C})\), the Schr\"odinger-like evolution 
\[\ii{\partial\over\partial t}\psi = -\ii\LD_X\psi,\]
simply advects $\psi$ along the vector field $X\in\sdiff(M)$.
Note that the corresponding space of dynamical states is (up to multiplication by \(\ii\)) the usual dual space \(\sdiff(M)^*\) containing the coadjoint orbits of the Euler--Arnold equation. 

It is interesting to observe the contrast between this quantum system and the classical probability theory.  The observable space $-\ii\sdiff(M)$ is not a commutative algebra, and the associated density operators give rise to linear functionals on $-\ii\sdiff(M)$ evolving by the coadjoint equation, von Neumann equation, Heisenberg equation, Lax equation, etc.  To describe this more precisely, recall that $\sdiff(M)^* \cong \Omega^1(M) \mathbin{/}d\Omega^0(M)$ is the space of one-forms modulo exact differentials (\Cref{sec:EquationsOfMotion}).
A pure state, represented by the wavefunction \(\psi\in C^\infty(M;\CC)\), gives rise to a density matrix as the ``outer product'' of the vector \(\psi\), i.e., $\rho_{\psi} = \psi\psi^*$, considered as an operator on $L^2(M;\mathbb{C})$ with the inner product $\langle\psi, \phi\rangle = \int_M \bar\psi\phi\,dV$.  
Under the Schr\"{o}dinger-like evolution above, the density evolves by the Lax equation 
\[\ii\frac{\partial}{\partial t}{\rho}_{\psi} = -[\ii\LD_{X},\rho_{\psi}].\]
Indeed, for any test function $\phi\in C^{\infty}(M;\CC)$ it follows from integration by parts that 
\begin{align*}
    \dot{\rho}_{\psi}\phi &= \dot\psi\langle\psi,\phi\rangle + \psi\langle\dot\psi,\phi\rangle = (-\LD_X\psi)\langle\psi,\phi\rangle-\psi\langle\LD_X\psi,\phi\rangle \\
    &= -\LD_{X}(\rho_{\psi}\phi) + \psi\langle\psi,\LD_X\phi\rangle = [\rho_{\psi},\LD_X]\phi,
\end{align*}
and therefore the claim follows from multiplication by $\ii$.
Moreover, the dual pairing between a density operator $\rho_{\psi}$ and an observable $-\ii\LD_{X}$ is given as before:
\[\langle\rho_{\psi}\,|\,-\ii\LD_{X}\rangle = \int_M \bar\psi(-\ii\LD_X)\psi\,dV.\]

Of course, there is also the usual dual pairing between an arbitrary ``state'' $[\eta]\in\sdiff(M)^*$ and an ``observable'' $X\in\sdiff(M)$, \[\langle [\eta]\, |\, {X}\rangle = \int_M \eta(X)\, dV.\]
It is instructive to compute the element of $\sdiff(M)^*$ associated to the pure state density operator $\rho_{\psi}$. 
Observing that $\LD_X\psi = d\psi(X)$ is the ordinary directional derivative, it follows that
\[\langle\rho_{\psi}\,|\,-\ii\LD_X \rangle = \langle [-\ii\bar\psi d\psi]\,|\, X\rangle,\]
where the pairings on the left and right are defined as above. 
Now, recall that the differential of the modulus satisfies
\[d|\psi|^2 = d\bar{\psi}\psi + \bar\psi d\psi = 2\Re(\bar\psi d\psi) = 2\Im(\ii\bar\psi d\psi).\]
Therefore, the expression whose equivalence class is desired becomes 
\[-\ii\bar\psi d\psi = \Re(-\ii\bar\psi d\psi) + \Im(-\ii\bar\psi d\psi) = \Re(-\ii\bar\psi d\psi) - \tfrac{1}{2}d|\psi|^2, \]
showing that its representative in $\sdiff(M)^*$ is given by 
\[[\rho_{\psi}] = \Re(-\ii\bar\psi d\psi).\]
Remarkably, this is just the Clebsch representation in classical hydrodynamics (\Cref{sec:LowRankParameterization}). Choosing $\psi = \lambda + \ii\mu$, it follows that 
\[-\ii\bar\psi d\psi = -\ii(\lambda-\ii\mu) d(\lambda+\ii\mu) = (\lambda d\mu - \mu d\lambda) - \ii (\lambda d\lambda + \mu d\mu),\]
and so $[\rho_{\psi}] = \lambda d\mu - \mu d\lambda$.
Therefore, the Euler equations can be considered as a ``nonlinear Schr{\"o}dinger'' system 
where the Schr{\"o}dinger operator $-\ii\LD_X$ depends on the density matrix $X = \sum_i [\rho_{\psi^i}] = \sum_i\lambda_id\mu_i-\mu_id\lambda_i$ represented by Clebsch variables.

\section{Matrix Hydrodynamics}\label{app:MHD}
In contrast to the present approach that discretizes a space of velocities to generate directional derivatives, matrix hydrodynamics discretizes the Lie algebra $\sdiff(M)$ directly.  The advantage of this is that no constraint on admissible velocities is necessary: the approximation space is automatically closed under the Lie bracket.  In particular, each element \(\bv\in V_{\div}\) can be represented by a matrix \(\bL_\bv\), and the image \(\bL(V_{\div})\) is closed under matrix commutation.  This means there is no distinction between vakonomic and Lagrange--d'Alembert (since there is no nonholonomic constraint), and the equation of motion becomes a simple Lax equation for the involved matrices.

\subsection{Continuous System}
To carry out this program in 2D, it is natural to design the Lie algebra structure on \(V_{\div}\) through a frequency-space approximation of the Lie algebra structure of \(\sdiff(M)\).  Then, $V_{\div}$ can be parameterized by streamfunctions, and the 2D Euler equation can be written as 
\(\dot w = \{\Delta^{-1}w,w\}\), where \(w\in C^\infty(M)\) is the vorticity function, \(\Delta\) is the Laplacian, the function \(p = \Delta^{-1}w\) is  the streamfunction, and \(\{\cdot,\cdot\}\) is the Poisson bracket induced by the symplectic structure (area form) on the 2D domain \(M\).  Explicitly, \(\{f,\cdot\} = \adv_{J\nabla f} = \nabla f \cdot J\nabla \) where \(J\) denotes clockwise 
\(90^\circ\) rotation and \(\nabla f\) is the gradient of the smooth function \(f\in C^{\infty}(M)\).

The Poisson structure \(\{\cdot,\cdot\}\) on smooth functions \(C^\infty(M)\) describes a Lie algebra that can be conveniently expressed in terms of the frequency components of its elements.  
For this purpose, \(M\) in matrix hydrodynamics is usually taken as a flat torus or the sphere, on which Fourier or spherical harmonic expansions are readily available.  Considering the flat torus \(M = \TT^2=\RR^2/\ZZ^2\), for example, each function \(f\colon\TT^2\to\RR\) is expressed by its Fourier series \[f(x,y) = \sum_{(j_1,j_2)\in\ZZ^2}\hat f_{j_1,j_2}e^{2\pi\ii (j_1x + j_2y)},\] where the Fourier coefficients \(\hat f_{\bj}\in\CC\) satisfy the reality condition \(\hat f_{\bj} = \bar{\hat{f}}_{-\bj}\). 
The normalized advection operator 
\(\cL_f = (-4\pi^2)^{-1}\{f,\cdot\}\)
can then be expressed in terms of these Fourier coefficients as 
\[\cL_{f} = \sum_{\bj\in\ZZ^2}\hat f_{\bj}\cL_{\bj}, \quad \cL_{\bj}= \frac{1}{2\pi\ii} e^{2\pi\ii\bj\cdot\bx}\bj\cdot J\nabla.\] 
Observe that the operators \(\cL_{\bj}\) for $\bj\neq 0$ form a basis for \(\sdiff(\TT^2)\) (modulo the constant harmonic fields) with the structural relation 
\begin{align}
\label{eq:FrequencyStructuralEquation}
    [\cL_{\bj},\cL_{\bk}] = (\bj\times\bk)\cL_{\bj+\bk},
\end{align}
where \(\bj\times\bk = j_1k_2 - j_2k_1\).  

In this representation, the vorticity equation \(\dot w = \{\Delta^{-1}w,w\}\) can be expressed as an evolution of the advection operator $\cL_w\in\sdiff(\TT^2)$:
\begin{align}
\label{eq:VorticityEquationOperatorForm}
    \dot\cL_{w} = -4\pi^2[\cL_{p},\cL_{w}],\quad \hat p_{\bj} = -{1\over |2\pi\bj|^2}\hat w_{\bj},
\end{align}
where $\hat{p}_{\bj}$ are the coefficients of the inverse Laplacian in frequency space.  

\subsection{Discrete System}
In 1991, Zeitlin employed a finite-dimensional analogue of \eqref{eq:FrequencyStructuralEquation} and thereby introduced a finite-dimensional analogue of the vorticity equation \eqref{eq:VorticityEquationOperatorForm}.
First, the frequency space \(\ZZ^2\) is truncated regularly into \(\ZZ_N^2\), where \(\ZZ_N = \ZZ/(N\ZZ) = \{0,1,\ldots, N-1\}\). 
The space of real stream functions with these frequencies, given by 
\begin{align*}
&\bigg\{f(\bx) = {1\over N^2}\sum_{\bj\in\ZZ_N^2}\hat f_\bj\, e^{\frac{2\pi \ii\bj\cdot\bx}{N}}\bigg\} \\
&\qquad \cong\left\{ \hat f\colon \ZZ_N^2\to\CC\,\middle|\,\hat f_\bj = \bar{\hat{f}}_{-\bj}\right\},
\end{align*}
then forms the \(V_{\div}\) space (modulo the two-dimensional constant harmonic vector fields), where the \(1/N^2\) factor follows the discrete Fourier transform convention.
Remarkably, it turns out that
there are \(N\times N\) complex matrices \(\bL_{\bj}\) for each frequency \(\bj\in\ZZ_N^2\) obeying the reality relation \(\bar\bL_{\bj}^\intercal = \bL_{-\bj}\) as well as the \emph{sine bracket relation}
\begin{align}
\label{eq:SineBracket}
    [\bL_{\bj},\bL_{\bk}] = \tfrac{N}{2\pi}\sin\left(\tfrac{2\pi}{N}(\bj\times\bk)\right)\bL_{\bj+\bk},
\end{align}
where expressions such as \(\bj+\bk\) are understood modulo \(N\) and the relation \eqref{eq:SineBracket} approximates \eqref{eq:FrequencyStructuralEquation} as \(N\to\infty\).  Provided these matrices can be explicitly constructed, this provides a Lie algebra homomorphism $f\mapsto\bL_{f}$ between $V_{\div}$ and the space $\su(N)$ given by
\[f(\bx) = {1\over N^2}\sum_{\bj\in\ZZ_N^2}\hat f_\bj\, e^{\frac{2\pi \ii\bj\cdot\bx}{N}}\mapsto {1\over N^2}\sum_{\bj\in\ZZ_N^2}\hat f_{\bj}\,\bL_\bj\eqqcolon\bL_{f}.\]

Letting \(\rho = e^{4\pi\ii/N}\) be the squared \(N\)-th root of unity, an explicit construction for \(\bL_\bj\) that satisfies \eqref{eq:SineBracket} proceeds as follows.  Find any two matrices \(\bS,\bT\in\CC^{N\times N}\) satisfying the commuting relation \(\bT\bS = \rho\bS\bT\).  Then 
\[\bL_\bj = \ii{N\over 2\pi} \rho^{j_1j_2\over 2}\bS^{j_1}\bT^{j_2},\]
satisfies the sine bracket relation \eqref{eq:SineBracket}.  
Writing \(\bW = \bL_w\) and \(\bP = \bL_p\), this enables a statement of Zeitlin's discrete Euler equation that forms the basis for matrix hydrodynamics:
\begin{align}
\label{eq:ZeitlinEquation}
    \dot\bW = -[\bP,\bW],\quad \hat p_{\bj} = -{1\over |2\pi\bj|^2}\hat w_{\bj}.
\end{align}
In practice, it is more convenient to express this in terms of the spectral coefficient \(\hat w\),
\begin{align}
    \dot{\hat w}_{\bn} = \sum_{\bj\in\ZZ_N^2}{1\over 2\pi}\sin\left({2\pi\over N}(\bj\times(\bn - \bj))\right){1\over |2\pi\bj|^2}\hat w_{\bj}\hat w_{\bn - \bj},
\end{align}
yielding a solvable system that does not require explicit construction of the operators $\bL_{\bj}$.

\begin{remark}
    An useful example of matrices $\bS,\bT$ which parameterize the operators $\bL_{\bj}$ is given by the following. Let \(\bS = \operatorname{diag}(1,\rho,\rho^2,\ldots,\rho^{N-1})\), and \((\bT\bu)_i = u_{i+1}\).
    Using this explicit construction, \(\bL_{f}\) is given by \[(\bL_f)_{i,j} = {\ii\over 2\pi}(\tilde f)_{i+j,j-i}, \quad \tilde f_{m,n} = \sum_{\ell\in\ZZ_N}f_{m,\ell}e^{-\frac{2\pi \ii n\ell}{N}}.\] 
    Notice that $\tilde f$ is the discrete Fourier transform of \(f\) applied only in the \(y\) direction.
\end{remark}

\bibliographystyle{amsplain-nodash}
\bibliography{vakonomic}

@article{Abanov:2025:IDN,
	title = {Infinite-{Dimensional} {Nonholonomic} and {Vakonomic} {Systems}},
	volume = {36},
	issn = {1432-1467},
	url = {https://doi.org/10.1007/s00332-026-10286-4},
	doi = {10.1007/s00332-026-10286-4},
	number = {3},
	journal = {Journal of Nonlinear Science},
	author = {Abanov, Alexander G. and Khesin, Boris},
	month = jun,
	year = {2026},
	pages = {67},
}

@article{AlMohy:2011:Exp,
	title = {Computing the {Action} of the {Matrix} {Exponential}, with an {Application} to {Exponential} {Integrators}},
	volume = {33},
	url = {https://doi.org/10.1137/100788860},
	doi = {10.1137/100788860},
	number = {2},
	journal = {SIAM Journal on Scientific Computing},
	author = {Al-Mohy, Awad H. and Higham, Nicholas J.},
	year = {2011},
	pages = {488--511}
}

@book{Arnold:2018:FEEC,
	address = {Philadelphia, PA},
	title = {Finite {Element} {Exterior} {Calculus}},
	url = {https://epubs.siam.org/doi/abs/10.1137/1.9781611975543},
	doi = {10.1137/1.9781611975543},
	publisher = {Society for Industrial and Applied Mathematics},
	author = {Arnold, Douglas N.},
	year = {2018}
}

@article{Arnold:2006:FEEC,
	title = {Finite element exterior calculus, homological techniques, and applications},
	volume = {15},
	doi = {10.1017/S0962492906210018},
	journal = {Acta Numerica},
	author = {Arnold, Douglas N. and Falk, Richard S. and Winther, Ragnar},
	year = {2006},
	pages = {1--155},
}

@article{Arnold:1966:GDG,
	title = {Sur la géométrie différentielle des groupes de {Lie} de dimension infinie et ses applications à l'hydrodynamique des fluides parfaits},
	volume = {16},
	url = {http://eudml.org/doc/73896},
	language = {fre},
	number = {1},
	journal = {Annales de l'institut Fourier},
	publisher = {Association des Annales de l'Institut Fourier},
	author = {Arnold, Vladimir},
	year = {1966},
	pages = {319--361},
}

@article{Azencot:2013:OAT,
	title = {An {Operator} {Approach} to {Tangent} {Vector} {Field} {Processing}},
	volume = {32},
	url = {https://onlinelibrary.wiley.com/doi/abs/10.1111/cgf.12174},
	doi = {https://doi.org/10.1111/cgf.12174},
	number = {5},
	journal = {Computer Graphics Forum},
	author = {Azencot, Omri and Ben-Chen, Mirela and Chazal, Frédéric and Ovsjanikov, Maks},
	year = {2013},
	pages = {73--82},
}

@article{Azencot:2015:DDV,
	address = {New York, NY, USA},
	title = {Discrete {Derivatives} of {Vector} {Fields} on {Surfaces} -- {An} {Operator} {Approach}},
	volume = {34},
	issn = {0730-0301},
	url = {https://doi.org/10.1145/2723158},
	doi = {10.1145/2723158},
	number = {3},
	journal = {ACM Trans. Graph.},
	publisher = {Association for Computing Machinery},
	author = {Azencot, Omri and Ovsjanikov, Maks and Chazal, Frédéric and Ben-Chen, Mirela},
	month = may,
	year = {2015},
}

@article{Azencot:2014:FFS,
	title = {Functional {Fluids} on {Surfaces}},
	volume = {33},
	url = {https://onlinelibrary.wiley.com/doi/abs/10.1111/cgf.12449},
	doi = {https://doi.org/10.1111/cgf.12449},
	number = {5},
	journal = {Computer Graphics Forum},
	author = {Azencot, Omri and Weißmann, Steffen and Ovsjanikov, Maks and Wardetzky, Max and Ben-Chen, Mirela},
	year = {2014},
	pages = {237--246},
}

@book{Batchelor:2000:IFD,
	series = {Cambridge {Mathematical} {Library}},
	title = {An {Introduction} to {Fluid} {Dynamics}},
	publisher = {Cambridge University Press},
	author = {Batchelor, G. K.},
	year = {2000},
}

@article{Bauer:2017:TGV,
	title = {Towards a geometric variational discretization of compressible fluids: {The} rotating shallow water equations},
	volume = {6},
	issn = {2158-2491},
	url = {https://www.aimsciences.org/article/doi/10.3934/jcd.2019001},
	doi = {10.3934/jcd.2019001},
	number = {1},
	journal = {Journal of Computational Dynamics},
	author = {Bauer, Werner and Gay-Balmaz, François},
	month = jun,
	year = {2019},
	pages = {1--37},
}

@article{Bauer:2017:VIA,
	title = {Variational integrators for anelastic and pseudo-incompressible flows},
	volume = {11},
	issn = {1941-4889},
	url = {https://www.aimsciences.org/article/doi/10.3934/jgm.2019025},
	doi = {10.3934/jgm.2019025},
	number = {4},
	journal = {Journal of Geometric Mechanics},
	author = {Bauer, Werner and Gay-Balmaz, François},
	month = nov,
	year = {2019},
	pages = {511--537},
}

@book{Bloch:2015:NMC,
  title={Nonholonomic Mechanics and Control},
  author={Bloch, Anthony M and Baillieul, John and Crouch, Peter E and Marsden, Jerrold E and Zenkov, Dmitry},
  series={Interdisciplinary Applied Mathematics},
  year={2015},
  address   = {New York},
  publisher={Springer New York}
}

@article{Brecht:2019:VIR,
	title = {Variational integrator for the rotating shallow-water equations on the sphere},
	volume = {145},
	url = {https://rmets.onlinelibrary.wiley.com/doi/abs/10.1002/qj.3477},
	doi = {https://doi.org/10.1002/qj.3477},
	number = {720},
	journal = {Quarterly Journal of the Royal Meteorological Society},
	author = {Brecht, Rüdiger and Bauer, Werner and Bihlo, Alexander and Gay-Balmaz, François and MacLachlan, Scott},
	year = {2019},
	pages = {1070--1088},
}

@article{Brunton:2022:MKT,
	title = {Modern {Koopman} {Theory} for {Dynamical} {Systems}},
	volume = {64},
	url = {https://doi.org/10.1137/21M1401243},
	doi = {10.1137/21M1401243},
	number = {2},
	journal = {SIAM Review},
	author = {Brunton, Steven L. and Budišić, Marko and Kaiser, Eurika and Kutz, J. Nathan},
	year = {2022},
	pages = {229--340},
}

@article{Buffa:2011:IGD,
	title = {Isogeometric {Discrete} {Differential} {Forms} in {Three} {Dimensions}},
	volume = {49},
	url = {https://doi.org/10.1137/100786708},
	doi = {10.1137/100786708},
	number = {2},
	journal = {SIAM Journal on Numerical Analysis},
	author = {Buffa, A. and Rivas, J. and Sangalli, G. and Vázquez, R.},
	year = {2011},
	pages = {818--844},
}

@article{Buffa:2010:IGA,
	title = {Isogeometric analysis in electromagnetics: {B}-splines approximation},
	volume = {199},
	issn = {0045-7825},
	url = {https://www.sciencedirect.com/science/article/pii/S0045782509004010},
	doi = {https://doi.org/10.1016/j.cma.2009.12.002},
	number = {17},
	journal = {Computer Methods in Applied Mechanics and Engineering},
	author = {Buffa, A. and Sangalli, G. and Vázquez, R.},
	year = {2010},
	pages = {1143--1152},
}

@article{Calvo:1997:NSI,
	title = {Numerical {Solution} of {Isospectral} {Flows}},
	volume = {66},
	issn = {00255718, 10886842},
	url = {http://www.jstor.org/stable/2153680},
	number = {220},
	urldate = {2026-07-28},
	journal = {Mathematics of Computation},
	publisher = {American Mathematical Society},
	author = {Calvo, Mari Paz and Iserles, Arieh and Zanna, Antonella},
	year = {1997},
	pages = {1461--1486},
}

@article{Calvo:1996:RKO,
	title = {Runge-{Kutta} methods for orthogonal and isospectral flows},
	volume = {22},
	issn = {0168-9274},
	url = {https://www.sciencedirect.com/science/article/pii/S0168927496000293},
	doi = {10.1016/S0168-9274(96)00029-3},
	number = {1},
	journal = {Special Issue Celebrating the Centenary of Runge-Kutta Methods},
	author = {Calvo, M.P. and Iserles, A. and Zanna, A.},
	month = nov,
	year = {1996},
	pages = {153--163},
}

@article{Cantrijn:1999:APS,
	title = {On almost-{Poisson} structures in nonholonomic mechanics},
	volume = {12},
	issn = {0951-7715},
	url = {https://doi.org/10.1088/0951-7715/12/3/316},
	doi = {10.1088/0951-7715/12/3/316},
	number = {3},
	journal = {Nonlinearity},
	author = {{F Cantrijn} and {M de León} and {D Martín de Diego}},
	month = may,
	year = {1999},
	pages = {721},
}

@article{Celledoni:2014:LGI,
	title = {An introduction to {Lie} group integrators -- basics, new developments and applications},
	volume = {257},
	issn = {0021-9991},
	url = {https://www.sciencedirect.com/science/article/pii/S0021999113000041},
	doi = {10.1016/j.jcp.2012.12.031},
	journal = {Physics-compatible numerical methods},
	author = {Celledoni, Elena and Marthinsen, Håkon and Owren, Brynjulf},
	month = jan,
	year = {2014},
	pages = {1040--1061},
}

@article{Cifani:2023:EGM,
	title = {An efficient geometric method for incompressible hydrodynamics on the sphere},
	volume = {473},
	issn = {0021-9991},
	url = {https://www.sciencedirect.com/science/article/pii/S002199912200835X},
	doi = {10.1016/j.jcp.2022.111772},
	journal = {Journal of Computational Physics},
	author = {Cifani, P. and Viviani, M. and Modin, K.},
	month = jan,
	year = {2023},
	pages = {111772},
}

@Article{Clebsch:1859:IHG,
  author = 	 {A. Clebsch},
  title = 	 {Ueber die Integration der hydrodynamischen Gleichungen},
  journal = 	{Journal f\"ur die reine und angewandte Mathematik},
  year = 	 {1859},
  volume =	 {56},
  OPTnumber = 	 {},
  pages =	 {1--10},
  note={English translation by {D.\ H.\ Delphenich}, \url{http://www.neo-classical-physics.info/uploads/3/4/3/6/34363841/clebsch_-_clebsch_variables.pdf}
},
}

@inproceedings{Crane:2013:DEC,
	address = {New York, NY, USA},
	series = {{SIGGRAPH} '13},
	title = {Digital geometry processing with discrete exterior calculus},
	isbn = {978-1-4503-2339-0},
	url = {https://doi.org/10.1145/2504435.2504442},
	doi = {10.1145/2504435.2504442},
	booktitle = {{ACM} {SIGGRAPH} 2013 {Courses}},
	publisher = {Association for Computing Machinery},
	author = {Crane, Keenan and de Goes, Fernando and Desbrun, Mathieu and Schröder, Peter},
	year = {2013},
}

@article{Elcott:2007:SCP,
	address = {New York, NY, USA},
	title = {Stable, circulation-preserving, simplicial fluids},
	volume = {26},
	issn = {0730-0301},
	url = {https://doi.org/10.1145/1189762.1189766},
	doi = {10.1145/1189762.1189766},
	number = {1},
	journal = {ACM Trans. Graph.},
	publisher = {Association for Computing Machinery},
	author = {Elcott, Sharif and Tong, Yiying and Kanso, Eva and Schröder, Peter and Desbrun, Mathieu},
	month = jan,
	year = {2007},
	pages = {4--es},
}

@article{Dupont:2003:BFECC,
	title = {Back and forth error compensation and correction methods for removing errors induced by uneven gradients of the level set function},
	volume = {190},
	issn = {0021-9991},
	url = {https://www.sciencedirect.com/science/article/pii/S0021999103002766},
	doi = {10.1016/S0021-9991(03)00276-6},
	number = {1},
	journal = {Journal of Computational Physics},
	author = {Dupont, Todd F. and Liu, Yingjie},
	month = sep,
	year = {2003},
	pages = {311--324},
}

@article{Desbrun:2014:VDR,
	title = {Variational discretization for rotating stratified fluids},
	volume = {34},
	issn = {1078-0947},
	url = {https://www.aimsciences.org/article/doi/10.3934/dcds.2014.34.477},
	doi = {10.3934/dcds.2014.34.477},
	number = {2},
	journal = {Discrete and Continuous Dynamical Systems},
	author = {Desbrun, Mathieu and Gawlik, Evan S. and Gay-Balmaz, François and Zeitlin, Vladimir},
	month = aug,
	year = {2013},
	pages = {477--509},
}

@article{Elman:1994:IPU,
	title = {Inexact and {Preconditioned} {Uzawa} {Algorithms} for {Saddle} {Point} {Problems}},
	volume = {31},
	url = {https://doi.org/10.1137/0731085},
	doi = {10.1137/0731085},
	number = {6},
	journal = {SIAM Journal on Numerical Analysis},
	author = {Elman, Howard C. and Golub, Gene H.},
	year = {1994},
	pages = {1645--1661},
}

@article{Engo:2000:CGI,
	title = {On the {Construction} of {Geometric} {Integrators} in the {RKMK} {Class}},
	volume = {40},
	issn = {1572-9125},
	url = {https://doi.org/10.1023/A:1022362117414},
	doi = {10.1023/A:1022362117414},
	number = {1},
	journal = {BIT Numerical Mathematics},
	author = {Engø, Kenth},
	month = mar,
	year = {2000},
	pages = {41--61},
}

@article{Engo:2001:NILP,
	title = {Numerical {Integration} of {Lie}--{Poisson} {Systems} {While} {Preserving} {Coadjoint} {Orbits} and {Energy}},
	volume = {39},
	url = {https://doi.org/10.1137/S0036142999364212},
	doi = {10.1137/S0036142999364212},
	number = {1},
	journal = {SIAM Journal on Numerical Analysis},
	author = {Engø, Kenth and Faltinsen, Stig},
	year = {2001},
	pages = {128--145},
}

@article{Favretti:1998:EDN,
	title = {Equivalence of {Dynamics} for {Nonholonomic} {Systems} with {Transverse} {Constraints}},
	volume = {10},
	issn = {1572-9222},
	url = {https://doi.org/10.1023/A:1022667307485},
	doi = {10.1023/A:1022667307485},
	number = {4},
	journal = {Journal of Dynamics and Differential Equations},
	author = {Favretti, Marco},
	month = oct,
	year = {1998},
	pages = {511--536},
}

@article{Fernandez:2008:EDN,
	title = {Equivalence of the dynamics of nonholonomic and variational nonholonomic systems for certain initial data},
	volume = {41},
	issn = {1751-8121},
	url = {https://doi.org/10.1088/1751-8113/41/34/344005},
	doi = {10.1088/1751-8113/41/34/344005},
	number = {34},
	journal = {Journal of Physics A: Mathematical and Theoretical},
	author = {Fernandez, Oscar E and Bloch, Anthony M},
	month = aug,
	year = {2008},
	pages = {344005},
}

@article{Gawlik:2020:CFE,
	title = {A conservative finite element method for the incompressible {Euler} equations with variable density},
	volume = {412},
	issn = {0021-9991},
	url = {https://www.sciencedirect.com/science/article/pii/S0021999120302138},
	doi = {10.1016/j.jcp.2020.109439},
	journal = {Journal of Computational Physics},
	author = {Gawlik, Evan S. and Gay-Balmaz, François},
	month = jul,
	year = {2020},
	pages = {109439},
}

@article{Gawlik:2021:SPE,
	title = {A structure-preserving finite element method for compressible ideal and resistive magnetohydrodynamics},
	volume = {87},
	doi = {10.1017/S0022377821000842},
	number = {5},
	journal = {Journal of Plasma Physics},
	author = {Gawlik, Evan S. and Gay-Balmaz, François},
	year = {2021},
	pages = {835870501},
}

@article{Gawlik:2021:VFD,
	title = {A {Variational} {Finite} {Element} {Discretization} of {Compressible} {Flow}},
	volume = {21},
	issn = {1615-3383},
	url = {https://doi.org/10.1007/s10208-020-09473-w},
	doi = {10.1007/s10208-020-09473-w},
	number = {4},
	journal = {Foundations of Computational Mathematics},
	author = {Gawlik, Evan S. and Gay-Balmaz, François},
	month = aug,
	year = {2021},
	pages = {961--1001},
}

@article{Gawlik:2022:FEM,
	title = {A finite element method for {MHD} that preserves energy, cross-helicity, magnetic helicity, incompressibility, and div {B} = 0},
	volume = {450},
	issn = {0021-9991},
	url = {https://www.sciencedirect.com/science/article/pii/S0021999121007427},
	doi = {10.1016/j.jcp.2021.110847},
	journal = {Journal of Computational Physics},
	author = {Gawlik, Evan S. and Gay-Balmaz, François},
	month = feb,
	year = {2022},
	pages = {110847},
}

@article{Gawlik:2024:VTC,
	title = {Variational and thermodynamically consistent finite element discretization for heat conducting viscous fluids},
	volume = {34},
	issn = {0218-2025},
	url = {https://doi.org/10.1142/S0218202524500027},
	doi = {10.1142/S0218202524500027},
	number = {02},
	urldate = {2026-07-28},
	journal = {Mathematical Models and Methods in Applied Sciences},
	publisher = {World Scientific Publishing Co.},
	author = {Gawlik, Evan S. and Gay-Balmaz, François},
	month = feb,
	year = {2024},
	pages = {243--284},
}

@article{Gawlik:2025:SPT,
	title = {Structure-preserving and thermodynamically consistent finite element discretization for visco-resistive {MHD} with thermoelectric effect},
	volume = {542},
	issn = {0021-9991},
	url = {https://www.sciencedirect.com/science/article/pii/S0021999125006187},
	doi = {10.1016/j.jcp.2025.114336},
	journal = {Journal of Computational Physics},
	author = {Gawlik, Evan S. and Gay-Balmaz, François and Manach-Pérennou, Bastien},
	month = dec,
	year = {2025},
	pages = {114336},
}

@article{Gawlik:2011:GVD,
	title = {Geometric, variational discretization of continuum theories},
	volume = {240},
	issn = {0167-2789},
	url = {https://www.sciencedirect.com/science/article/pii/S0167278911001989},
	doi = {10.1016/j.physd.2011.07.011},
	number = {21},
	journal = {Physica D: Nonlinear Phenomena},
	author = {Gawlik, E.S. and Mullen, P. and Pavlov, D. and Marsden, J.E. and Desbrun, M.},
	month = oct,
	year = {2011},
	pages = {1724--1760},
}

@phdthesis{Hirani:2003:DEC,
	address = {Pasadena, CA},
	type = {{PhD} {Thesis}},
	title = {Discrete {Exterior} {Calculus}},
	url = {https://thesis.caltech.edu/1885/},
	school = {California Institute of Technology},
	author = {Hirani, Anil Nirmal},
	year = {2003},
}

@article{Hughes:2005:IGA,
	title = {Isogeometric analysis: {CAD}, finite elements, {NURBS}, exact geometry and mesh refinement},
	volume = {194},
	issn = {0045-7825},
	url = {https://www.sciencedirect.com/science/article/pii/S0045782504005171},
	doi = {10.1016/j.cma.2004.10.008},
	number = {39},
	journal = {Computer Methods in Applied Mechanics and Engineering},
	author = {Hughes, T.J.R. and Cottrell, J.A. and Bazilevs, Y.},
	month = oct,
	year = {2005},
	pages = {4135--4195},
}

@book{huke1931historical,
	address = {Chicago, Illinois, USA},
	title = {An {Historical} and {Critical} {Study} of the {Fundamental} {Lemma} in the {Calculus} of {Variations}},
	publisher = {The University of Chicago Press},
	author = {Huke, Aline},
	year = {1931},
}

@article{Kleckner:2013:CDK,
	title = {Creation and dynamics of knotted vortices},
	volume = {9},
	issn = {1745-2481},
	url = {https://doi.org/10.1038/nphys2560},
	doi = {10.1038/nphys2560},
	number = {4},
	journal = {Nature Physics},
	author = {Kleckner, Dustin and Irvine, William T. M.},
	month = apr,
	year = {2013},
	pages = {253--258},
}

@article{Koom:1997:HLA,
	title = {The {Hamiltonian} and {Lagrangian} approaches to the dynamics of nonholonomic systems},
	volume = {40},
	issn = {0034-4877},
	url = {https://www.sciencedirect.com/science/article/pii/S0034487797856170},
	doi = {10.1016/S0034-4877(97)85617-0},
	number = {1},
	journal = {Reports on Mathematical Physics},
	author = {Koon, Wang Sang and Marsden, Jerrold E.},
	month = aug,
	year = {1997},
	pages = {21--62},
}

@article{Koopman:1931:HST,
	title = {Hamiltonian {Systems} and {Transformation} in {Hilbert} {Space}},
	volume = {17},
	url = {https://doi.org/10.1073/pnas.17.5.315},
	doi = {10.1073/pnas.17.5.315},
	number = {5},
	urldate = {2026-07-28},
	journal = {Proceedings of the National Academy of Sciences},
	publisher = {Proceedings of the National Academy of Sciences},
	author = {Koopman, B. O.},
	month = may,
	year = {1931},
	pages = {315--318},
}

@article{Iserles:2000:LGM,
	title = {Lie-group methods},
	volume = {9},
	issn = {0962-4929},
	url = {https://www.cambridge.org/core/product/856125FF1EAF7762DEF6E37EEBA9CA5F},
	doi = {10.1017/S0962492900002154},
	journal = {Acta Numerica},
	publisher = {Cambridge University Press},
	author = {Iserles, Arieh and Munthe-Kaas, Hans Z. and Nørsett, Syvert P. and Zanna, Antonella},
	year = {2000},
	note = {Edition: 2001/03/21},
	pages = {215--365},
}

@article{Kuiper:1965:HTU,
	title = {The homotopy type of the unitary group of {Hilbert} space},
	volume = {3},
	issn = {0040-9383},
	url = {https://www.sciencedirect.com/science/article/pii/0040938365900674},
	doi = {10.1016/0040-9383(65)90067-4},
	number = {1},
	journal = {Topology},
	author = {Kuiper, Nicolaas H.},
	month = jan,
	year = {1965},
	pages = {19--30},
}

@book{Lamb:1895:HD,
	address = {Cambridge, UK},
	title = {Hydrodynamics},
	publisher = {Cambridge University Press},
	author = {Lamb, Horace},
	year = {1895},
}

@article{Lax:1968:INE,
	title = {Integrals of nonlinear equations of evolution and solitary waves},
	volume = {21},
	issn = {0010-3640},
	url = {https://doi.org/10.1002/cpa.3160210503},
	doi = {10.1002/cpa.3160210503},
	number = {5},
	urldate = {2026-07-28},
	journal = {Communications on Pure and Applied Mathematics},
	publisher = {John Wiley \& Sons, Ltd},
	author = {Lax, Peter D.},
	month = sep,
	year = {1968},
	pages = {467--490},
}

@article{Leray:1934:SML,
	title = {Sur le mouvement d'un liquide visqueux emplissant l'espace},
	volume = {63},
	issn = {1871-2509},
	url = {https://doi.org/10.1007/BF02547354},
	doi = {10.1007/BF02547354},
	number = {1},
	journal = {Acta Mathematica},
	author = {Leray, Jean},
	month = dec,
	year = {1934},
	pages = {193--248},
}

@article{Liu:2015:MVFS,
	address = {New York, NY, USA},
	title = {Model-reduced variational fluid simulation},
	volume = {34},
	issn = {0730-0301},
	url = {https://doi.org/10.1145/2816795.2818130},
	doi = {10.1145/2816795.2818130},
	number = {6},
	journal = {ACM Trans. Graph.},
	publisher = {Association for Computing Machinery},
	author = {Liu, Beibei and Mason, Gemma and Hodgson, Julian and Tong, Yiying and Desbrun, Mathieu},
	month = nov,
	year = {2015},
}

@article{Marsden:1983:COV,
	title = {Coadjoint orbits, vortices, and {Clebsch} variables for incompressible fluids},
	volume = {7},
	issn = {0167-2789},
	url = {https://www.sciencedirect.com/science/article/pii/0167278983901343},
	doi = {10.1016/0167-2789(83)90134-3},
	number = {1},
	journal = {Physica D: Nonlinear Phenomena},
	author = {Marsden, Jerrold and Weinstein, Alan},
	month = may,
	year = {1983},
	pages = {305--323},
}

@article{Marsden:1984:SPR,
	title = {Semidirect {Products} and {Reduction} in {Mechanics}},
	volume = {281},
	issn = {00029947},
	url = {http://www.jstor.org/stable/1999527},
	doi = {10.2307/1999527},
	number = {1},
	urldate = {2026-07-28},
	journal = {Transactions of the American Mathematical Society},
	publisher = {American Mathematical Society},
	author = {Marsden, Jerrold E. and Ratiu, Tudor and Weinstein, Alan},
	year = {1984},
	pages = {147--177},
}

@article{Marsden:1997:IMS,
  title={Introduction to mechanics and symmetry},
  author={Marsden, Jerrold E and Ratiu, Tudor S and Hermann, Robert},
  journal={SIAM Review},
  volume={39},
  number={1},
  pages={152--152},
  year={1997},
  publisher={Philadelphia, Society for Industrial and Applied Mathematics.}
}

@article{Marsden:2001:DMV,
	title = {Discrete mechanics and variational integrators},
	volume = {10},
	issn = {0962-4929},
	url = {https://www.cambridge.org/product/C8F45478A9290DEC24E63BB7FBE3CEB5},
	doi = {10.1017/S096249290100006X},
	journal = {Acta Numerica},
	publisher = {Cambridge University Press},
	author = {Marsden, J. E. and West, M.},
	year = {2001},
	note = {Edition: 2003/01/09},
	pages = {357--514},
}

@article{Modin:2020:LPM,
	title = {Lie--{Poisson} {Methods} for {Isospectral} {Flows}},
	volume = {20},
	issn = {1615-3383},
	url = {https://doi.org/10.1007/s10208-019-09428-w},
	doi = {10.1007/s10208-019-09428-w},
	number = {4},
	journal = {Foundations of Computational Mathematics},
	author = {Modin, Klas and Viviani, Milo},
	month = aug,
	year = {2020},
	pages = {889--921},
}

@article{Modin:2024:TDF,
	title = {Two-{Dimensional} {Fluids} {Via} {Matrix} {Hydrodynamics}},
	volume = {250},
	issn = {1432-0673},
	url = {https://doi.org/10.1007/s00205-025-02154-4},
	doi = {10.1007/s00205-025-02154-4},
	number = {1},
	journal = {Archive for Rational Mechanics and Analysis},
	author = {Modin, Klas and Viviani, Milo},
	month = jan,
	year = {2026},
	pages = {10},
}

@book{Montgomery:2002:TSG,
	address = {Providence, Rhode Island},
	title = {A tour of subriemannian geometries, their geodesics and applications},
	number = {91},
	publisher = {American Mathematical Society},
	author = {Montgomery, Richard},
	year = {2002},
}

@article{Modin:2025:MHD,
	title = {A brief introduction to matrix hydrodynamics},
	volume = {14},
	issn = {2158-2491},
	url = {https://www.aimsciences.org/article/doi/10.3934/jcd.2026001},
	doi = {10.3934/jcd.2026001},
	number = {0},
	journal = {Journal of Computational Dynamics},
	author = {Modin, Klas and Viviani, Milo},
	month = dec,
	year = {2025},
	pages = {17--35},
}

@article{Mullen:2009:EPI,
	address = {New York, NY, USA},
	title = {Energy-preserving integrators for fluid animation},
	volume = {28},
	issn = {0730-0301},
	url = {https://doi.org/10.1145/1531326.1531344},
	doi = {10.1145/1531326.1531344},
	number = {3},
	journal = {ACM Trans. Graph.},
	publisher = {Association for Computing Machinery},
	author = {Mullen, Patrick and Crane, Keenan and Pavlov, Dmitry and Tong, Yiying and Desbrun, Mathieu},
	month = jul,
	year = {2009},
}

@article{Munthe:1999:HOR,
	address = {NLD},
	title = {High order {Runge}-{Kutta} methods on manifolds},
	volume = {29},
	issn = {0168-9274},
	number = {1},
	journal = {Appl. Numer. Math.},
	publisher = {Elsevier Science Publishers B. V.},
	author = {Munthe-Kaas, Hans},
	month = jan,
	year = {1999},
	pages = {115--127},
}

@phdthesis{Nabizadeh:2025:thesis,
	type = {{PhD} {Thesis}},
	title = {Fluid {Dynamics}: {From} {Geometric} {Formulations} to {Structure}-{Preserving} {Simulations}},
	isbn = {9798286494286},
	url = {https://escholarship.org/uc/item/26f253tx},
	language = {English},
	author = {Nabizadeh, Mohammad},
	year = {2025},
	note = {Publication Title: ProQuest Dissertations and Theses},
}

@article{Nabizadeh:2024:FIP,
	address = {New York, NY, USA},
	title = {Fluid {Implicit} {Particles} on {Coadjoint} {Orbits}},
	volume = {43},
	issn = {0730-0301},
	url = {https://doi.org/10.1145/3687970},
	doi = {10.1145/3687970},
	number = {6},
	journal = {ACM Trans. Graph.},
	publisher = {Association for Computing Machinery},
	author = {Nabizadeh, Mohammad Sina and Roy-Chowdhury, Ritoban and Yin, Hang and Ramamoorthi, Ravi and Chern, Albert},
	month = nov,
	year = {2024},
}

@article{Nabizadeh:2022:CF,
	address = {New York, NY, USA},
	title = {Covector fluids},
	volume = {41},
	issn = {0730-0301},
	url = {https://doi.org/10.1145/3528223.3530120},
	doi = {10.1145/3528223.3530120},
	number = {4},
	journal = {ACM Trans. Graph.},
	publisher = {Association for Computing Machinery},
	author = {Nabizadeh, Mohammad Sina and Wang, Stephanie and Ramamoorthi, Ravi and Chern, Albert},
	month = jul,
	year = {2022},
}

@article{Natale:2018:VFD,
	title = {A variational {$\boldsymbol{H}(\mathrm{div})$} finite-element discretization approach for perfect incompressible fluids},
	volume = {38},
	issn = {0272-4979},
	url = {https://doi.org/10.1093/imanum/drx033},
	doi = {10.1093/imanum/drx033},
	number = {3},
	urldate = {2026-07-28},
	journal = {IMA Journal of Numerical Analysis},
	author = {Natale, Andrea and Cotter, Colin J},
	month = jul,
	year = {2018},
	pages = {1388--1419},
}

@article{Pavlov:2011:SPD,
	title = {Structure-preserving discretization of incompressible fluids},
	volume = {240},
	issn = {0167-2789},
	url = {https://www.sciencedirect.com/science/article/pii/S0167278910002873},
	doi = {10.1016/j.physd.2010.10.012},
	number = {6},
	journal = {Physica D: Nonlinear Phenomena},
	author = {Pavlov, D. and Mullen, P. and Tong, Y. and Kanso, E. and Marsden, J.E. and Desbrun, M.},
	month = mar,
	year = {2011},
	pages = {443--458},
}

@article{Piccione:2001:VAG,
	title = {Variational aspects of the geodesics problem in sub-{Riemannian} geometry},
	volume = {39},
	issn = {0393-0440},
	url = {https://www.sciencedirect.com/science/article/pii/S0393044001000110},
	doi = {10.1016/S0393-0440(01)00011-0},
	number = {3},
	journal = {Journal of Geometry and Physics},
	author = {Piccione, Paolo and Tausk, Daniel V.},
	month = sep,
	year = {2001},
	pages = {183--206},
}

@article{Putnam:1952:OGH,
	title = {The {Orthogonal} {Group} in {Hilbert} {Space}},
	volume = {74},
	issn = {00029327, 10806377},
	url = {http://www.jstor.org/stable/2372068},
	doi = {10.2307/2372068},
	number = {1},
	urldate = {2026-07-28},
	journal = {American Journal of Mathematics},
	publisher = {The Johns Hopkins University Press},
	author = {Putnam, Calvin R. and Wintner, Aurel},
	year = {1952},
	pages = {52--78},
}

@article{San:2013:CGP,
	title = {A coarse-grid projection method for accelerating incompressible flow computations},
	volume = {233},
	issn = {0021-9991},
	url = {https://www.sciencedirect.com/science/article/pii/S0021999112005347},
	doi = {10.1016/j.jcp.2012.09.005},
	journal = {Journal of Computational Physics},
	author = {San, Omer and Staples, Anne E.},
	month = jan,
	year = {2013},
	pages = {480--508},
}

@book{Tritton:2012:PFD,
  title={Physical fluid dynamics},
  author={Tritton, David J},
  year={2012},
  publisher={Springer Science \& Business Media}
}

@article{vanderSchaft:1994:HFN,
	title = {On the {Hamiltonian} formulation of nonholonomic mechanical systems},
	volume = {34},
	issn = {0034-4877},
	url = {https://www.sciencedirect.com/science/article/pii/0034487794900388},
	doi = {10.1016/0034-4877(94)90038-8},
	number = {2},
	journal = {Reports on Mathematical Physics},
	author = {Van Der Schaft, A.J. and Maschke, B.M.},
	month = aug,
	year = {1994},
	pages = {225--233},
}

@article{Vankerschaver:2014:PVD,
	title = {A {Novel} {Formulation} of {Point} {Vortex} {Dynamics} on the {Sphere}: {Geometrical} and {Numerical} {Aspects}},
	volume = {24},
	issn = {1432-1467},
	url = {https://doi.org/10.1007/s00332-013-9182-5},
	doi = {10.1007/s00332-013-9182-5},
	number = {1},
	journal = {Journal of Nonlinear Science},
	author = {Vankerschaver, Joris and Leok, Melvin},
	month = feb,
	year = {2014},
	pages = {1--37},
}

@article{Scipy,
	title = {{SciPy} 1.0: {Fundamental} {Algorithms} for {Scientific} {Computing} in {Python}},
	volume = {17},
	doi = {10.1038/s41592-019-0686-2},
	journal = {Nature Methods},
	author = {Virtanen, Pauli and Gommers, Ralf and Oliphant, Travis E. and Haberland, Matt and Reddy, Tyler and Cournapeau, David and Burovski, Evgeni and Peterson, Pearu and Weckesser, Warren and Bright, Jonathan and van der Walt, Stéfan J. and Brett, Matthew and Wilson, Joshua and Millman, K. Jarrod and Mayorov, Nikolay and Nelson, Andrew R. J. and Jones, Eric and Kern, Robert and Larson, Eric and Carey, C J and Polat, İlhan and Feng, Yu and Moore, Eric W. and VanderPlas, Jake and Laxalde, Denis and Perktold, Josef and Cimrman, Robert and Henriksen, Ian and Quintero, E. A. and Harris, Charles R. and Archibald, Anne M. and Ribeiro, Antônio H. and Pedregosa, Fabian and van Mulbregt, Paul and {SciPy 1.0 Contributors}},
	year = {2020},
	pages = {261--272},
}

@article{Zeitlin:1991:FMA,
	title = {Finite-mode analogs of {2D} ideal hydrodynamics: {Coadjoint} orbits and local canonical structure},
	volume = {49},
	issn = {0167-2789},
	url = {https://www.sciencedirect.com/science/article/pii/016727899190152Y},
	doi = {10.1016/0167-2789(91)90152-Y},
	number = {3},
	journal = {Physica D: Nonlinear Phenomena},
	author = {Zeitlin, V.},
	month = apr,
	year = {1991},
	pages = {353--362},
}

@article{Yoshida:2009:CPB,
	title = {Clebsch parameterization: {Basic} properties and remarks on its applications},
	volume = {50},
	issn = {0022-2488},
	url = {https://doi.org/10.1063/1.3256125},
	doi = {10.1063/1.3256125},
	number = {11},
	urldate = {2026-07-28},
	journal = {Journal of Mathematical Physics},
	author = {Yoshida, Z.},
	month = nov,
	year = {2009},
	pages = {113101},
}

@article{Zhong:1988:LPI,
	title = {Lie-{Poisson} {Hamilton}-{Jacobi} theory and {Lie}-{Poisson} integrators},
	volume = {133},
	issn = {0375-9601},
	url = {https://www.sciencedirect.com/science/article/pii/0375960188907736},
	doi = {10.1016/0375-9601(88)90773-6},
	number = {3},
	journal = {Physics Letters A},
	author = {Zhong, Ge and Marsden, Jerrold E.},
	month = nov,
	year = {1988},
	pages = {134--139},
}

@article{Kozlov:1983:DSN,
	title = {Dynamics of systems with nonintegrable constraints. {III}},
	number = {3},
	journal = {Vestnik Moskovskogo Universiteta. Seriya 1. Matematika. Mekhanika},
	publisher = {Lomonosov Moscow State University},
	author = {Kozlov, Valery Vasil'evich},
	year = {1983},
	pages = {102--111},
}

@book{Aalembert:1743:TDD,
	address = {Paris},
	title = {Traité de dynamique},
	publisher = {David l'aîné},
	author = {d'Alembert, Jean Le Rond},
	year = {1743},
}

@incollection{Arnold:2006:MAC,
	address = {Berlin, Heidelberg},
	title = {Basic {Principles} of {Classical} {Mechanics}},
	isbn = {978-3-540-48926-9},
	url = {https://doi.org/10.1007/978-3-540-48926-9_1},
	doi = {10.1007/978-3-540-48926-9_1},
	booktitle = {Mathematical {Aspects} of {Classical} and {Celestial} {Mechanics}: {Third} {Edition}},
	publisher = {Springer Berlin Heidelberg},
	author = {Arnold, Vladimir I. and Kozlov, Valery V. and Neishtadt, Anatoly I.},
	year = {2006},
	pages = {1--60},
}

@book{Cottrell:2009:IGA,
author = {Cottrell, J. Austin and Hughes, Thomas J. R. and Bazilevs, Yuri},
title = {Isogeometric Analysis: Toward Integration of CAD and FEA},
year = {2009},
isbn = {0470748737},
publisher = {Wiley Publishing},
address   = {Chichester, UK},
edition = {1st}
}

@article{Arnold:2010:FEEC,
  title={Finite element exterior calculus: from Hodge theory to numerical stability},
  author={Arnold, Douglas and Falk, Richard and Winther, Ragnar},
  journal={Bulletin of the American mathematical society},
  volume={47},
  number={2},
  pages={281--354},
  year={2010}
}

@misc{Bloch:2026:Arxiv,
      title={Discrete variational calculus for double-bracket dissipation}, 
      author={Anthony Bloch and Sebastián J. Ferraro and David Martín de Diego and Shreyas Bharadwaj},
      year={2026},
      eprint={2604.26049},
      archivePrefix={arXiv},
      primaryClass={math.NA},
      url={https://arxiv.org/abs/2604.26049}, 
}

@book{Buffa:2016:IGA,
  title={Isogeometric analysis: a new paradigm in the numerical approximation of PDEs: Cetraro, Italy 2012},
  author={Buffa, Annalisa and Sangalli, Giancarlo},
  volume={2161},
  year={2016},
  publisher={Springer},
  address   = {Cham, Switzerland},
}

@book{Gromov:1981:SMV,
    author = {Gromov, M.},
    editor = {Lafontaine, J. and Pansu, P.},
    title = {Structures métriques pour les variétés Riemanniennes},
    volume = {1},
    year = {1981},
    publisher = {Textes Mathématiques},
    address = {Paris}
}

@InProceedings{Raviart:1977:MFE,
author="Raviart, P. A.
and Thomas, J. M.",
editor="Galligani, Ilio
and Magenes, Enrico",
title="A mixed finite element method for 2-nd order elliptic problems",
booktitle="Mathematical Aspects of Finite Element Methods",
year="1977",
publisher="Springer Berlin Heidelberg",
address="Berlin, Heidelberg",
pages="292--315",
isbn="978-3-540-37158-8"
}

@article{chetaev1932gauss,
  title = {O printsipe {G}aussa (On the {G}auss principle)},
  author = {Chetaev, Nikolai Guryevich},
  journal = {Izvestiya Kazanskogo Fiziko-Matematicheskogo Obshchestva},
  series = {3},
  volume = {6},
  pages = {323--326},
  year = {1932--1933},
  note = {In Russian}
}

\end{document}